\newtheorem{thm}{Theorem}
\newtheorem{lem}{Lemma}
\newtheorem{prop}{Proposition}
{\theoremstyle{definition} \newtheorem{hyp}{Hypothesis}}
\newcommand{\Z}{\mathds Z}
\newcommand{\R}{\mathds R}
\newcommand{\C}{\mathds C}
\newcommand{\la}{\langle}
\newcommand{\ra}{\rangle}
\newcommand{\F}{\mathcal{F}}
\newcommand{\G}{\mathcal{G}}
\newcommand{\Fhat}{\widehat{\mathcal{F}}}
\newcommand{\Gdual}{\Gamma^\#}
\newcommand{\eps}{\varepsilon}
\title{Asymptotics for Fermi curves: small magnetic potential}
\author{Gustavo de Oliveira \\
{\small Department of Mathematics,} \\
{\small University of British Columbia, Canada} \\
{\small goliveira5d@gmail.com}}
\date{March 1, 2010}
\begin{document}

\maketitle

\begin{abstract}
We consider complex Fermi curves of electric and magnetic periodic
fields. These are analytic curves in~$\C^2$ that arise from the study of
the eigenvalue problem for periodic Schr\"odinger operators. We
characterize a certain class of these curves in the region of~$\C^2$
where at least one of the coordinates has ``large'' imaginary part. The
new results in this work extend previous results in the absence of
magnetic field to the case of ``small'' magnetic field. Our theorems can
be used to show that generically these Fermi curves belong to a class of
Riemann surfaces of infinite genus.
\end{abstract}


\section{Introduction}

In~\cite{FKT2}, the authors introduced a class of Riemann surfaces of
infinite genus that are ``asymptotic to'' a finite number of complex
lines joined by infinite many handles. These surfaces are constructed by
pasting together a compact submanifold of finite genus, plane domains,
and handles. All these components satisfy a number of geometric/analytic
hypotheses stated in~\cite{FKT2} that specify the asymptotic holomorphic
structure of the surface. The class of surfaces obtained in this way
yields an extension of the classical theory of compact Riemann surfaces
that has analogues of many theorems of the classical theory. It was
proven in~\cite{FKT2} that this new class includes quite general
hyperelliptic surfaces, heat curves (which are spectral curves
associated to a certain ``heat-equation''), and Fermi curves with zero
magnetic potential. In order to verify the geometric/analytic hypotheses
for the latter the authors proved two ``asymptotic'' theorems similar to
the ones we prove below. This is the main step needed to verify these
hypotheses. In this work we extend their results to Fermi curves with
``small'' magnetic potential.

There are two immediate applications of our results. First, as we have
already mentioned, one can use our theorems for verifying the
geometric/analytic hypotheses of~\cite{FKT2} for Fermi curves with small
magnetic potential. This would show that these curves belong to the
class of Riemann surfaces mentioned above. Secondly, one can prove that
a class of these curves are irreducible (in the usual
algebraic-geometrical sense). Both these applications were done in
\cite{FKT2}~for Fermi curves with zero magnetic potential.

Complex Fermi curves (and other similar spectral curves) have been
studied, in different perspectives, in the absence of magnetic field
\cite{FKT2, GKT, KT, Kr, Mc}, and in the presence of magnetic
field~\cite{FKT1}. Some results on the real Fermi curve in the
high-energy region were obtained in~\cite{Ka}. There one also finds a
short description of the existing results on periodic magnetic
Schr\"odinger operators. An even more general review is presented
in~\cite{E}. To our knowledge our work provides new results on complex
Fermi curves with magnetic field. At this moment we are only able to
handle the case of ``small'' magnetic potential. The asymptotic
characterization of Fermi curves with arbitrarily large magnetic
potential remains as an open problem. In order to prove our theorems we
follow the same strategy as~\cite{FKT2}. The presence of magnetic field
makes the analysis considerably harder and requires new estimates. As it
was pointed out in~\cite{Ka, E}, the study of an operator with magnetic
potential is essentially more complicated than the study of the operator
with just an electric potential. This seems to be the case in this
problem as well.

Before we outline our results let us introduce some definitions.
Let~$\Gamma$ be a lattice in~$\R^2$ and let~$A_1$, $A_2$ and~$V$ be
real-valued functions in~$L^2(\R^2)$ that are periodic with respect to
$\Gamma$. Set~$A \coloneqq (A_1,A_2)$ and define the operator
$$ H(A,V) \coloneqq (i \nabla + A)^2 + V $$
acting on $L^2(\R^2)$, where~$\nabla$ is the gradient operator
in~$\R^2$. For $k \in \R^2$ consider the following
eigenvalue-eigenvector problem in~$L^2(\R^2)$ with boundary conditions,
\begin{align*}
H(A,V) \varphi & = \lambda \varphi, \\
\varphi(x+\gamma) & = e^{i k \cdot \gamma} \varphi(x)
\end{align*}
for all $x \in \R^2$ and all $\gamma \in \Gamma$. Under suitable
hypotheses on the potentials $A$ and $V$ this problem is self-adjoint
and its spectrum is discrete. It consists of a sequence of real
eigenvalues
$$ E_1(k,A,V) \leq E_2(k,A,V) \leq \cdots \leq E_n(k,A,V) \leq \cdots $$
For each integer $n \geq 1$ the eigenvalue $E_n(k,A,V)$ defines a
continuous function of $k$. From the above boundary condition it is easy
to see that this function is periodic with respect to the dual lattice
$$ \Gdual \coloneqq \{ b \in \R^2 \; | \; b \cdot \gamma \in 2\pi \Z \,
\text{ for all } \gamma \in \Gamma \}, $$
where $b \cdot \gamma$ is the usual scalar product on~$\R^2$. It is
customary to refer to $k$ as the crystal momentum and to $E_n(k,A,V)$ as
the $n$-th band function. The corresponding normalized eigenfunctions
$\varphi_{n,k}$ are called Bloch eigenfunctions.

The operator $H(A,V)$ (and its three-dimensional counterpart) is
important in solid state physics. It is the Hamiltonian of a single
electron under the influence of magnetic field with vector
potential~$A$, and electric field with scalar potential~$V$, in the
independent electron model of a two-dimensional solid~\cite{RS4}. The
classical framework for studying the spectrum of a differential operator
with periodic coefficients is the Floquet (or Bloch) theory \cite{RS4,
K, MW}. Roughly speaking, the main idea of this theory is to
``decompose'' the original eigenvalue problem, which usually has
continuous spectrum, into a family of boundary value problems, each one
having discrete spectrum. In our context this leads to decomposing the
problem $H(A,V) \varphi = \lambda \varphi$ (without boundary conditions)
into the above $k$-family of boundary value problems.

Let $U_k$ be the unitary transformation on~$L^2(\R^2)$ that acts as
$$ U_k \, : \, \varphi(x) \longmapsto e^{ik \cdot x} \varphi(x). $$
By applying this transformation we can rewrite the above problem and put
the boundary conditions into the operator. Indeed, if we define
$$ H_k(A,V) \coloneqq U_k^{-1} \, H(A,V) \, U_k \qquad \text{and} \qquad
\psi \coloneqq U_k^{-1} \varphi, $$
then the above problem is unitarily equivalent to
$$ H_k(A,V) \psi = \lambda \psi \qquad \text{for} \qquad \psi \in
L^2(\R^2 / \Gamma). $$
Furthermore, a simple (formal) calculation shows that
$$ H_k(A,V) = (i \nabla + A - k)^2 + V. $$

The real ``lifted'' Fermi curve of $(A,V)$ with energy $\lambda \in \R$
is defined as
$$ \Fhat_{\lambda, \R}(A,V) \coloneqq \{ k \in \R^2 \; | \; (H_k(A,V) -
\lambda) \varphi = 0 \, \text{ for some } \varphi \in
\mathcal{D}_{H_k(A,V)} \setminus \{0\} \}, $$
where $\mathcal{D}_{H_k(A,V)} \subset L^2(\R^2/\Gamma)$ denotes the
(dense) domain of $H_k(A,V)$. The adjective ``lifted'' indicates that
$\widehat{\mathcal{F}}_{\lambda, \R}(A,V)$ is a subset of $\R^2$ rather
than $\R^2 / \Gdual$. As we may replace $V$ by $V-\lambda$, we only
discuss the case $\lambda=0$ and write $\Fhat_\R(A,V)$ in place of
$\Fhat_{0,\R}(A,V)$ to simplify the notation. Let $|\Gamma| \coloneqq
\int_{\R^2 / \Gamma} dx$ and $\hat{A}(0) \coloneqq |\Gamma|^{-1}
\int_{\R^2 / \Gamma} A(x) \, dx$. Since $H_k(A,V)$ is equal to
$H_{k-\hat{A}(0)}(A-\hat{A}(0),V)$, if we perform the change of
coordinates $k \to k + \hat{A}(0)$ and redefine $A-\hat{A}(0) \to A$ we
may assume, without loss of generality, that $\hat{A}(0)=0$. The dual
lattice $\Gdual$ acts on $\R^2$ by translating $k \mapsto k+b$ for $b
\in \Gdual$. This action maps $\Fhat_\R(A,V)$ to itself because for each
$n \geq 1$ the function $k \mapsto E_n(k,A,V)$ is periodic with respect
to $\Gdual$. In other words, the real lifted Fermi curve ``is periodic''
with respect to $\Gdual$. Define
$$ \F_\R(A,V) \coloneqq \Fhat_\R(A,V) / \Gdual. $$
We call $\F_\R(A,V)$ the real Fermi curve of $(A,V)$. It is a curve in
the torus $\R^2/\Gdual$.

The above definitions and the real Fermi curve have physical meaning. It
is useful and interesting, however, to study the ``complexification'' of
these curves. Knowledge about the complexified curves may provide
information about the real counterparts. For complex-valued functions
$A_1$, $A_2$ and $V$ in $L^2(\R^2)$ and for $k \in \C^2$ the above
problem is no longer self-adjoint. Its spectrum, however, remains
discrete. It is a sequence of eigenvalues in the complex plane. From the
boundary condition in the original problem it is easy to see that the
family of functions $k \mapsto E_n(k,A,V)$ remains periodic with respect
to $\Gdual$. Moreover, the transformation $U_k$ is no longer unitary but
it is still bounded and invertible and it still preserves the spectrum,
that is, we can still rewrite the original problem in the form $H_k(A,V)
\psi = \lambda \psi$ for $\psi \in L^2(\R^2/\Gamma)$ without modifying
the eigenvalues. Thus, it makes sense to define
\begin{align*} \Fhat(A,V) & \coloneqq \{ k \in \C^2 \; | \; H_k(A,V)
\varphi = 0 \, \text{ for some } \varphi \in \mathcal{D}_{H_k(A,V)}
\setminus \{0\} \}, \\ \F(A,V) & \coloneqq \Fhat(A,V) / \Gdual.
\end{align*}
We call $\Fhat(A,V)$ and $\F(A,V)$ the complex ``lifted'' Fermi curve
and the complex Fermi curve, respectively. When there is no risk of
confusion we refer to either simply as Fermi curve.

We are now ready to outline our results. When $A$ and $V$ are zero the
(free) Fermi curve can be found explicitly. It consists of two copies of
$\mathds{C}$ with the points $-b_2 + i b_1$ (in the first copy) and $b_2
+ i b_1$ (in the second copy) identified for all $(b_1,b_2) \in \Gdual$
with $b_2 \neq 0$. In this work we prove that in the region of $\C^2$
where $k \in \C^2$ has ``large'' imaginary part the Fermi curve (for
nonzero $A$ and $V$) is ``close to'' the free Fermi curve. In a compact
form, our main result (that will be stated precisely in Theorems
\ref{t:reg} and \ref{t:hand}) is essentially the following.

\vspace{0.3cm}

\noindent {\bf Main result.} \emph{ Suppose that $A$ and $V$ have some
regularity and assume that (in a suitable norm) $A$ is smaller than a
constant given by the parameters of the problem. Write $k$ in
$\mathds{C}^2$ as $k = u + iv$ with $u$ and $v$ in $\mathds{R}^2$ and
suppose that $|v|$ is larger than a constant given by the parameters of
the problem. (Recall that the free Fermi curve is two copies of
$\mathds{C}$ with certain points in one copy identified with points in
the other one.) Then, in this region of $\mathds{C}^2$, the Fermi curve
of $A$ and $V$ is very close to the free Fermi curve, except that
instead of two planes we may have two deformed planes, and
identifications between points can open up to handles that look like $\{
(z_1,z_2) \in \mathds{C}^2 \; | \; z_1 z_2 = \text{constant} \}$ in
suitable local coordinates.}

\vspace{0.3cm}

The proof of our results has basically three steps:

\begin{itemize}

\item We first derive very detailed information about the free Fermi
curve (which is explicitly known). Then, to compute the interacting
Fermi curve we have to find the kernel of $H$ in $L^2(\R^2)$ with the
above boundary conditions.

\item In the second step of the proof we derive a number of estimates
for showing that this kernel has finite dimension for small $A$ and $k
\in \C^2$ with large imaginary part. Our strategy here is similar to the
Feshbach method in perturbation theory \cite{GS}.  Indeed, we prove that
in the complement of the kernel of $H$ in $L^2(\R^2)$, after a suitable
invertible change of variables in $L^2(\R^2)$, the operator $H$
multiplied by the inverse of the operator that implements this change of
variables is a compact perturbation of the identity that is invertible
for such $A$ and $k$. This reduces the problem of finding the kernel to
finite dimension and thus we can write local defining equations for the
Fermi curve.

\item In the third step of the proof we use these equations to study the
Fermi curve. A few more estimates and the implicit function theorem
gives us the deformed planes. The handles are obtained using a
quantitative Morse lemma from \cite{deO} that is available in the
Appendix \ref{s:morse}.

\end{itemize}

Steps two and three contain most of the novelties in this work. The
critical part of the proof is the second step. The main difficulty
arises due to the presence of the term $A \cdot i \nabla$ in the
Hamiltonian $H(A,V)$. When $A$ is large, taking the imaginary part of $k
\in \C^2$ arbitrarily large is not enough to control this term---it is
not enough to make its contribution small and hence have the interacting
Fermi curve as a perturbation of the free Fermi curve. (The term $V$ in
$H(A,V)$ is easily controlled by this method.) However, the proof can be
implemented by assuming that $A$ is small.

This work is organized as follows. In \S\ref{s:free} we collect some
properties of the free Fermi curve and in \S\ref{s:tubes} we define
$\eps$-tubes about it. In \S\ref{s:main} we state our main results and
in \S\ref{s:idea} we describe the general strategy of analysis used to
prove them.  Subsequently, we implement this strategy by proving a
number of lemmas and propositions in \S\ref{s:inv} to \S\ref{s:der},
which we put together later in \S\ref{s:reg} and \S\ref{s:hand} to prove
our main theorems. The proof of the estimates of \S\ref{s:coeff} and
\S\ref{s:der} are left to the Appendices \ref{s:app2} and \ref{s:app3}.

\vspace{0.3cm}

{\bf Acknowledgments.} I would like to thank Professor Joel Feldman for
suggesting this problem and for the many discussions I have had with him. I
am also grateful to Alessandro Michelangeli for useful comments about the
manuscript. This work is part of the author's Ph.D. thesis \cite{deO}
defended at the University of British Columbia in Vancouver, Canada.


\section{The free Fermi curve} \label{s:free}

When the potentials $A$ and $V$ are zero the curve $\Fhat(A,V)$ can be
found explicitly. In this section we collect some properties of this
curve. For $\nu \in \{1,2\}$ and $b \in \Gdual$ set
\begin{align*}
N_{b,\nu}(k) & \coloneqq (k_1 + b_1) + i(-1)^\nu(k_2 + b_2), \\
\mathcal{N}_\nu(b) & \coloneqq \{ k \in \C^2 \; | \; N_{b,\nu}(k)=0\},\\
N_b(k) & \coloneqq N_{b,1}(k) N_{b,2}(k), \\
\mathcal{N}_b & \coloneqq \mathcal{N}_1(b) \cup \mathcal{N}_2(b), \\
\theta_\nu(b) & \coloneqq \tfrac{1}{2} ((-1)^\nu b_2 + i b_1).
\end{align*}
Observe that $\mathcal{N}_\nu(b)$ is a line in $\C^2$. The free lifted
Fermi curve is an union of these lines. Here is the precise statement.

\begin{prop}[The free Fermi curve] \label{p:free}
The curve $\Fhat(0,0)$ is the locally finite union
$$ \bigcup_{b\in\Gdual} \bigcup_{\nu\in\{1,2\}} \mathcal{N}_\nu(b).$$
In particular, the curve $\F(0,0)$ is a complex analytic curve in $\C^2
/ \Gdual$.
\end{prop}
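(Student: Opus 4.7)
The plan is to diagonalize the operator $H_k(0,0) = (i\nabla - k)^2$ on $L^2(\mathbb{R}^2/\Gamma)$ by Fourier series. The plane waves $e_b(x) = |\Gamma|^{-1/2} e^{ib\cdot x}$, indexed by $b \in \Gdual$, form an orthonormal basis of $L^2(\mathbb{R}^2/\Gamma)$, and a direct computation gives $H_k(0,0) e_b = \bigl[(k_1+b_1)^2 + (k_2+b_2)^2\bigr] e_b$. So for $\psi = \sum_b c_b e_b$ in the domain, $H_k(0,0)\psi = 0$ forces $\bigl[(k_1+b_1)^2+(k_2+b_2)^2\bigr] c_b = 0$ for every $b$, and a nontrivial kernel exists precisely when some $b \in \Gdual$ satisfies $(k_1+b_1)^2+(k_2+b_2)^2 = 0$.

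The next step is to match this condition to the given description of $\mathcal{N}_\nu(b)$. The complex factorization $z^2+w^2 = (z+iw)(z-iw)$ with $z = k_1+b_1$ and $w=k_2+b_2$ yields
\[
(k_1+b_1)^2+(k_2+b_2)^2 = N_{b,1}(k)\, N_{b,2}(k) = N_b(k),
\]
so the vanishing condition is equivalent to $k \in \mathcal{N}_1(b) \cup \mathcal{N}_2(b) = \mathcal{N}_b$. Taking the union over $b \in \Gdual$ gives the claimed identity $\widehat{\mathcal{F}}(0,0) = \bigcup_{b,\nu} \mathcal{N}_\nu(b)$.

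For local finiteness I would write $k = u+iv$ with $u,v \in \mathbb{R}^2$ and separate real and imaginary parts of $N_{b,\nu}(k) = 0$. Since $b \in \Gdual \subset \mathbb{R}^2$, this forces $u_1 + b_1 = (-1)^\nu v_2$ and $u_2+b_2 = -(-1)^\nu v_1$, so $b$ is determined uniquely by $(k,\nu)$. On a compact set $K \subset \mathbb{C}^2$ the components $u,v$ are bounded, hence only finitely many $b \in \Gdual$ can satisfy these relations, and only finitely many lines $\mathcal{N}_\nu(b)$ meet $K$.

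For the final assertion, each $\mathcal{N}_\nu(b)$ is the zero set of a single nonzero affine holomorphic function on $\mathbb{C}^2$ and therefore is a smooth complex analytic curve; a locally finite union of such curves is an analytic subvariety of $\mathbb{C}^2$ of pure complex dimension one. Because $\Gdual$ acts by holomorphic translations (and the union is $\Gdual$-invariant by the shift $b \mapsto b+\beta$), the quotient $\mathcal{F}(0,0)$ inherits the structure of a complex analytic curve in $\mathbb{C}^2/\Gdual$. I do not anticipate any genuine obstacle here; the most delicate point is simply being careful about the operator domain when applying $H_k(0,0)$ term-by-term in Fourier series, and stating precisely what "complex analytic curve" means in the presence of the transverse intersections $\mathcal{N}_\nu(b) \cap \mathcal{N}_{\nu'}(b')$ (these are isolated singular points of the subvariety, not a genuine obstruction to the statement).
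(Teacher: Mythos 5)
Your proposal is correct and follows essentially the same route as the paper: diagonalize $H_k(0,0)$ in the Fourier basis $\{e^{ib\cdot x}\}_{b\in\Gdual}$, identify the eigenvalue with $N_b(k)=N_{b,1}(k)N_{b,2}(k)$, and read off the union of lines. The additional points you address (local finiteness via the real/imaginary parts of $N_{b,\nu}(k)=0$ determining $b$, and the passage to the quotient) are exactly the parts the paper delegates to \cite{deO}, and your sketches of them are sound.
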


The proof of this proposition is straightforward. It can be found in
\cite{deO}. Here we only give its first part.

\begin{proof}[Proof of Proposition \ref{p:free} (first part)]
For all $k \in \C^2$ the functions $\{ e^{ib \cdot x} \; | \; b \in
\Gdual \}$ form a complete set of eigenfunctions for $H_k(0,0)$ in
$L^2(\R^2/\Gamma)$ satisfying
$$ H_k(0,0) e^{ib \cdot x} = (i\nabla-k)^2 e^{ib \cdot x} = (b+k)^2
e^{ib \cdot x} = N_b(k) e^{ib \cdot x}. $$
Hence, 
$$ \Fhat(0,0) = \{ k \in \C^2 \; | \; N_b(k)=0 \, \text{ for some } b
\in \Gdual \} = \bigcup_{b \in \Gdual} \mathcal{N}_b = \bigcup_{b \in
\Gdual} \bigcup_{\nu \in \{1,2\}} \mathcal{N}_\nu(b). $$
This is the desired expression for $\Fhat(0,0)$.
\end{proof}
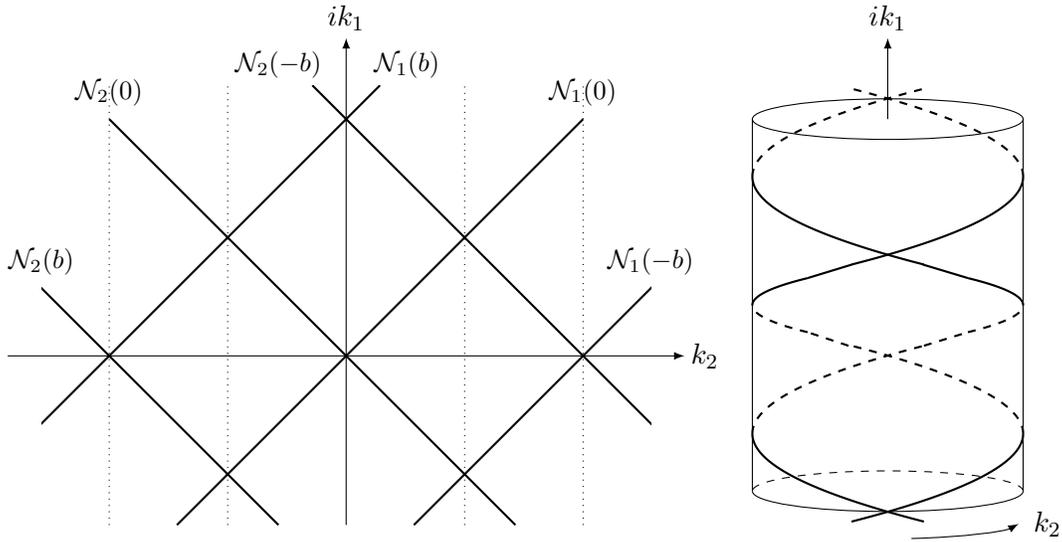
\begin{figure}[htb]
\begin{center}
\begin{tikzpicture}[scale=0.9]
\draw (1.5,6.4) node {{\small $\mathcal{N}_2(0)$}};
\draw (0.5,3.9) node {{\small $\mathcal{N}_2(b)$}};
\draw (9.5,3.9) node {{\small $\mathcal{N}_1(-b)$}};
\draw (8.5,6.4) node {{\small $\mathcal{N}_1(0)$}};
\draw (5.9,6.8) node {{\small $\mathcal{N}_1(b)$}};
\draw (4,6.8) node {{\small $\mathcal{N}_2(-b)$}};
\draw[-,dashed] (15,0.5) arc (0:180:2cm and 0.3cm);
\draw (15,0.5) arc (0:-180:2cm and 0.3cm);
\draw[-latex] (13.35,-0.2) arc (-80:-15: 2cm and 0.3cm) node {$\qquad k_2$};
\draw (13,6) ellipse (2 and 0.3);
\draw[-latex] (13,6)--(13,7.2);
\draw (13,7.5) node {$i k_1$};
\draw (15,0.5)--(15,6) (11,0.5)--(11,6);
\draw[-latex] (0,2.5)--(10,2.5);
\draw (10.3,2.5) node {$k_2$};
\draw[-latex] (5,0)--(5,7.2); 
\draw (5,7.5) node {$i k_1$};
\draw[thick,dashed,domain=0:1.3,smooth,variable=\t] 
plot(15-1.5*\t*\t,1.35+\t);
\draw[thick,domain=-1.3:0,smooth,variable=\t] 
plot(15-1.5*\t*\t,1.35+\t);
\draw[thick,dashed,domain=0:1.3,smooth,variable=\t] 
plot(11+1.5*\t*\t,1.35+\t);
\draw[thick,domain=-1.3:0,smooth,variable=\t] 
plot(11+1.5*\t*\t,1.35+\t);
\draw[thick,dashed,domain=0:1.3,smooth,variable=\t] 
plot(15-1.5*\t*\t,5.15+\t);
\draw[thick,domain=-1.3:0,smooth,variable=\t] 
plot(15-1.5*\t*\t,5.15+\t);
\draw[thick,dashed,domain=0:1.3,smooth,variable=\t] 
plot(11+1.5*\t*\t,5.15+\t);
\draw[thick,domain=-1.3:0,smooth,variable=\t] 
plot(11+1.5*\t*\t,5.15+\t);
\draw[thick,domain=0:0.4,smooth,variable=\t] 
plot(15-5.1*\t*\t,3.25+\t);
\draw[thick,dashed,domain=-0.4:0,smooth,variable=\t] 
plot(15-5.1*\t*\t,3.25+\t);
\draw[thick,domain=0:0.4,smooth,variable=\t] 
plot(11+5.1*\t*\t,3.25+\t);
\draw[thick,dashed,domain=-0.4:0,smooth,variable=\t] 
plot(11+5.1*\t*\t,3.25+\t);
\draw[thick,dashed] (12.42,2.65)--(11.816,2.85);
\draw[thick,dashed] (13.535,2.65)--(14.184,2.85);
\draw[thick] (14.184,3.65)--(13.52,3.85);
\draw[thick] (11.816,3.65)--(12.47,3.85);
\clip (0.5,0) rectangle (9.5,6.5);
\draw[-,dotted]
(1.5,0)--(1.5,6.5) (3.25,0)--(3.25,6.5)  (6.75,0)--(6.75,6.5)
(8.5,0)--(8.5,6.5);
\draw[-,thick]
(6,0)--(10.5,4.5) (4,0)--(0,4)  (2.5,0)--(8.5,6)
(7.5,0)--(1.5,6) (10.5,0.5)--(4.5,6.5)  (0,1)--(5.5,6.5);
\end{tikzpicture}
\end{center}
\caption{Sketch of $\Fhat(0,0)$ and $\F(0,0)$ when both $ik_1$ and $k_2$
are real.}
\label{fig:free}
\end{figure}

The lines $\mathcal{N}_\nu(b)$ have the following properties (see
\cite{deO} for a proof).

\begin{prop}[Properties of $\mathcal{N}_\nu(b)$] \label{p:lines}
Let $\nu \in \{1,2\}$ and let $b,c,d \in \Gdual$. Then:
\begin{itemize}
\item[\rm (a)] $\mathcal{N}_\nu(b) \cap \mathcal{N}_\nu(c) = \varnothing
\quad \text{if} \quad b \neq c \,$;

\item[\rm (b)] $\text{\rm dist}(\mathcal{N}_\nu(b), \mathcal{N}_\nu(c))
= \tfrac{1}{\sqrt{2}} |b-c|$;

\item[\rm (c)] $\mathcal{N}_1(b) \cap \mathcal{N}_2(c) = \{ ( i
\theta_1(c) + i \theta_2(b), \, \theta_1(c) - \theta_2(b) ) \}$; 

\item[\rm (d)] the map $k \mapsto k+d$  maps $\mathcal{N}_\nu(b)$ to
$\mathcal{N}_\nu(b-d)$;

\item[\rm (e)] the map $k \mapsto k+d$ maps $\mathcal{N}_1(b) \cap
\mathcal{N}_2(c)$ to $\mathcal{N}_1(b-d) \cap \mathcal{N}_2(c-d)$.
\end{itemize}
\end{prop}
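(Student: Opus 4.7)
The plan is to verify each of the five properties directly from the defining equation
\[ N_{b,\nu}(k) = (k_1+b_1) + i(-1)^\nu(k_2+b_2) = 0. \]

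For (a), I would take $k \in \mathcal{N}_\nu(b) \cap \mathcal{N}_\nu(c)$ and subtract the two vanishing defining equations to get $(b_1-c_1) + i(-1)^\nu(b_2-c_2) = 0$. Since $b,c \in \Gdual \subset \R^2$, separating real and imaginary parts forces $b=c$, contradicting $b \neq c$.

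For (b), the key observation is that $\mathcal{N}_\nu(b)$ and $\mathcal{N}_\nu(c)$ are parallel level sets of the same complex-linear functional $L_\nu(k) = k_1 + i(-1)^\nu k_2$. Writing $k = u+iv$ with $u,v \in \R^2$, the map $L_\nu$ becomes a real linear map $\R^4 \to \R^2$ whose two rows are orthogonal and both of Euclidean norm $\sqrt{2}$, so $L_\nu L_\nu^T = 2 I_2$. The standard formula for the distance between parallel affine subspaces then gives $\mathrm{dist}(\mathcal{N}_\nu(b),\mathcal{N}_\nu(c)) = |\alpha_b - \alpha_c|/\sqrt{2}$, where $\alpha_b = b_1 + i(-1)^\nu b_2$. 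Since $|\alpha_b-\alpha_c| = \sqrt{(b_1-c_1)^2+(b_2-c_2)^2} = |b-c|$, the claim follows. This part is the main technical step, essentially because one must be careful to interpret the distance correctly as Euclidean distance in $\R^4$ and to handle the non-unit normalization of $L_\nu$; everything else is bookkeeping.

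For (c), I would solve the $2 \times 2$ linear system $N_{b,1}(k) = 0$, $N_{c,2}(k) = 0$ by adding and subtracting the two equations to obtain $k_1 = -\tfrac{1}{2}(b_1+c_1) + \tfrac{i}{2}(b_2-c_2)$ and $k_2 = -\tfrac{1}{2}(b_2+c_2) - \tfrac{i}{2}(b_1-c_1)$, and then verify by direct substitution that these agree with $i\theta_1(c)+i\theta_2(b)$ and $\theta_1(c)-\theta_2(b)$ using $\theta_\nu(b) = \tfrac{1}{2}((-1)^\nu b_2 + ib_1)$.

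For (d) and (e), these are immediate from the identity $N_{b-d,\nu}(k+d) = N_{b,\nu}(k)$, which shows $k \in \mathcal{N}_\nu(b) \iff k+d \in \mathcal{N}_\nu(b-d)$; (e) then follows by applying (d) to each factor in the intersection.
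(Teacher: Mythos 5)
Your proposal is correct: the subtraction argument for (a), the solution of the $2\times 2$ linear system for (c) (which indeed reproduces $(i\theta_1(c)+i\theta_2(b),\theta_1(c)-\theta_2(b))$), the translation identity $N_{b-d,\nu}(k+d)=N_{b,\nu}(k)$ for (d)--(e), and the distance computation in (b) using that $\mathcal{N}_\nu(b)$, $\mathcal{N}_\nu(c)$ are parallel level sets of $L_\nu(k)=k_1+i(-1)^\nu k_2$ with $L_\nu L_\nu^{T}=2I_2$ in real coordinates, all check out. The paper itself omits the proof and refers to \cite{deO}, describing it as straightforward; your argument is exactly the direct verification intended there, so there is nothing further to reconcile.
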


Let us briefly describe what the free Fermi curve looks like. In the
Figure \ref{fig:free} there is a sketch of the set of $(k_1, k_2) \in
\Fhat(0,0)$ for which both $ik_1$ and $k_2$ are real, for the case where
the lattice $\Gdual$ has points over the coordinate axes, that is, it
has points of the form $(b_1,0)$ and $(0,b_2)$.  Observe that, in
particular, Proposition \ref{p:lines} yields
\begin{align*}
\mathcal{N}_1(0) \cap \mathcal{N}_2(b) & = \{ (i\theta_1(b),
\theta_1(b)) \},\\
\mathcal{N}_1(-b) \cap \mathcal{N}_2(0) & = \{ (i \theta_2(-b),
\theta_2(b))\},\\
\text{the map } k \mapsto k+b \text{ maps } & \mathcal{N}_1(0) \cap
\mathcal{N}_2(b) \text{ to } \mathcal{N}_1(-b) \cap \mathcal{N}_2(0).
\end{align*}
Recall that points in $\Fhat(0,0)$ that differ by elements of $\Gdual$
correspond to the same point in $\F(0,0)$. Thus, in the sketch on the
left, we should identify the lines $k_2 = -b_2/2$ and $k_2 = b_2/2$ for
all $b \in \Gdual$ with $b_2 \neq 0$, to get a pair of helices climbing
up the outside of a cylinder, as illustrated by the figure on the right.
The helices intersect each other twice on each cycle of the
cylinder---once on the front half of the cylinder and once on the back
half. Hence, viewed as a ``manifold'' (with singularities), the pair of
helices are just two copies of $\R$ with points that corresponds to
intersections identified. We can use $k_2$ as a coordinate in each copy
of $\R$ and then the pairs of identified points are $k_2 = b_2/2$ and
$k_2 = -b_2/2$ for all $b \in \Gdual$ with $b_2 \neq 0$. So far we have
only considered $k_2$ real. The full $\Fhat(0,0)$ is just two copies of
$\C$ with $k_2$ as a coordinate in each copy, provided we identify the
points $\theta_1(b) = \tfrac{1}{2} (-b_2 + i b_1)$ (in the first copy)
and $\theta_2(b) = \tfrac{1}{2} (b_2 + i b_1)$ (in the second copy) for
all $b \in \Gdual$ with $b_2 \neq 0$.

\section{The $\eps$-tubes about the free Fermi curve} \label{s:tubes}

We now introduce real and imaginary coordinates in $\C^2$ and define
$\eps$-tubes about the free Fermi curve. We derive some properties of
the $\eps$-tubes as well. For $k \in \C^2$ write
$$ k_1 = u_1 + i v_1 \qquad \text{and} \qquad k_2 = u_2 + iv_2, $$
where $u_1$, $u_2$, $v_1$ and $v_2$ are real numbers. Then,
\begin{align*}
N_{b,\nu}(k) & = (k_1 + b_1) + i(-1)^\nu(k_2 + b_2) \\ & = i(v_1 +
(-1)^\nu(u_2 + b_2)) -(-1)^\nu (v_2 - (-1)^\nu (u_1+b_1)),
\end{align*}
so that
$$ |N_{b,\nu}(k)| = |v + (-1)^\nu (u+b)^\perp |, $$
where $(y_1, y_2)^\perp \coloneqq (y_2, -y_1)$. Since $N_b(k) =
N_{b,1}(k) N_{b,2}(k)$, we have $N_b(k)=0$ if and only if
$$ v - (u+b)^\perp = 0 \qquad \text{or} \qquad v + (u+b)^\perp = 0. $$
Let $2 \Lambda$ be the length of the shortest nonzero ``vector'' in
$\Gdual$. Then there is at most one $b \in \Gdual$ with $|v+(u+b)^\perp|
< \Lambda$ and at most one $b \in \Gdual$ with $|v-(u+b)^\perp| <
\Lambda$ (see \cite{deO} for a proof).

Let $\eps$ be a constant satisfying $0 < \eps < \Lambda/6$. For $\nu \in
\{1,2\}$ and $b \in \Gdual$ define the $\eps$-tube about
$\mathcal{N}_{\nu}(b)$ as
$$ T_\nu(b) \coloneqq \{ k \in \C^2 \;\; | \;\; |N_{b,\nu}(k)| = |v +
(-1)^\nu (u+b)^\perp | < \eps \}, $$
and the $\eps$-tube about $\mathcal{N}_b = \mathcal{N}_1(b) \cup
\mathcal{N}_2(b)$ as
$$ T_b \coloneqq T_1(b) \cup T_2(b). $$
Since $(v+(u+b)^\perp) + (v-(u+b)^\perp) = 2v$, at least one of the
factors $|v+(u+b)^\perp|$ or $|v-(u+b)^\perp|$ in $|N_b(k)|$ must always
be greater or equal to $|v|$. If $k \not\in T_b$ both factors are also
greater or equal to $\eps$. If $k \in T_b$ one factor is bounded by
$\eps$ and the other must lie within $\eps$ of $|2v|$. Thus,
\begin{align}
k \not\in T_b \quad & \Longrightarrow \quad |N_b(k)| \geq \eps |v|,
\label{bd1} \\
k \in T_b \quad & \Longrightarrow \quad |N_b(k)| \leq \eps(2|v|+\eps).
\label{bd2}
\end{align}
Finally, denote by $\overline{T}_b$ the closure of $T_b$. The
intersection $\overline{T}_b \cap \overline{T}_{b'}$ is compact whenever
$b \neq b'$, and $\overline{T}_b \cap \overline{T}_{b'} \cap
\overline{T}_{b''}$ is empty for all distinct elements $b, b', b'' \in
\Gdual$ (see \cite{deO} for details).

If a point $k$ belongs to the free Fermi curve the function $N_b(k)$
vanishes for some $b \in \Gdual$. We now give a lower bound for this
function when $(b,k)$ is not in the zero set.

\begin{prop}[Lower bound for $|N_b(k)|$] \label{p:Nb}
$ \, $
\begin{itemize}
\item[\rm (a)] If $|b+u+v^\perp| \geq \Lambda$ and $|b+u-v^\perp| \geq
\Lambda$, then $|N_b(k)| \geq \frac{\Lambda}{2}(|v|+|u+b|)$.

\item[\rm (b)] If $|v| > 2\Lambda$ and $k \in T_0$, then $|N_b(k)| \geq
\frac{\Lambda}{2}(|v|+|u+b|)$ for all $b \neq 0$ but at most one $b \neq
0$. This exceptional $\tilde{b}$ obeys $|\tilde{b}| > |v|$ and $| \,
|u+\tilde{b}|-|v| \, | < \Lambda$.

\item[\rm (c)] If $|v| > 2\Lambda$ and $k \in T_0 \cap T_d$ with $d \neq
0$, then $|N_b(k)| \geq \frac{\Lambda}{2}(|v|+|u+b|)$ for all $b \not\in
\{0, d\}$. Furthermore we have $|d| > |v|$ and $|\, |u+d|-|v| \, | <
\Lambda$.
\end{itemize}
\end{prop}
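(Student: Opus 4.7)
\medskip

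\textbf{Proof plan.} My starting point is the identity
$$ |N_{b,1}(k)| = |(u+b)+v^\perp|, \qquad |N_{b,2}(k)| = |(u+b)-v^\perp|, $$
obtained from $|N_{b,\nu}(k)|=|v+(-1)^\nu(u+b)^\perp|$ by applying $\perp$ to the inside and using $(y^\perp)^\perp=-y$ together with the rotation-invariance of the norm. This yields the factorization
$$ |N_b(k)| = \bigl|(u+b)+v^\perp\bigr|\,\bigl|(u+b)-v^\perp\bigr|, $$
which lets me estimate the two factors independently.

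For part (a) I would exploit two triangle inequalities: since $(u+b+v^\perp)\pm(u+b-v^\perp)$ equal $2(u+b)$ and $2v^\perp$ respectively, the sum of the two factors is $\geq |u+b|+|v|$, so the larger factor is $\geq \tfrac{1}{2}(|u+b|+|v|)$. Multiplying by the smaller one, which is $\geq \Lambda$ by hypothesis, gives the bound at once.

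For part (b) the hypothesis $k \in T_0$ forces either $|u+v^\perp|<\eps$ or $|u-v^\perp|<\eps$; take the first (the other is symmetric). For $b \neq 0$, an obstruction $|b+u+v^\perp|<\Lambda$ would give $|b|<\Lambda+\eps<2\Lambda$, impossible for a nonzero vector of $\Gdual$. Thus only $|b+u-v^\perp|<\Lambda$ can obstruct (a), and this confines $b$ to the open ball of radius $<2\Lambda$ about $2v^\perp$, which contains at most one lattice point by minimality—call it $\tilde b$. For every other $b\neq 0$, part (a) applies. The two claimed properties of $\tilde b$ follow from $|\tilde b - 2v^\perp|<\Lambda+\eps$ combined with $|v|>2\Lambda$: the first gives $|\tilde b|\geq 2|v|-\Lambda-\eps>|v|$, and $\bigl||u+\tilde b|-|v|\bigr|=\bigl||u+\tilde b|-|v^\perp|\bigr|\leq |u+\tilde b - v^\perp|<\Lambda$.

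For part (c) I would first note that $k\in T_0\cap T_d$ with $d\neq 0$ rules out $k$ lying in both $\nu=1$ sub-tubes or both $\nu=2$ sub-tubes, since either would force $|d|<2\eps<2\Lambda$. WLOG $|u+v^\perp|<\eps$ and $|d+u-v^\perp|<\eps$. Then for $b\notin\{0,d\}$, $|b+u+v^\perp|<\Lambda$ would force $|b|<2\Lambda$ hence $b=0$, and $|b+u-v^\perp|<\Lambda$ would force $|b-d|<2\Lambda$ hence $b=d$; both contradict $b\notin\{0,d\}$, so (a) applies. The properties of $d$ come from $|d-2v^\perp|<2\eps$ just as in (b). The argument is essentially lattice-counting in small Euclidean balls; the one point to watch is the bookkeeping that matches the exceptional direction $\pm 2v^\perp$ to whichever sub-tube $T_\nu$ the point inhabits, but this is the only real subtlety.
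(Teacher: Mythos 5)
Your factorization $|N_b(k)| = |(u+b)+v^\perp|\,|(u+b)-v^\perp|$ and your arguments for parts (a) and (c) are correct. Part (a) is essentially the paper's proof (the paper splits into the cases $|v|\ge |u+b|$ and $|v|<|u+b|$ instead of taking the maximum of the two triangle inequalities, but the content is identical), and your part (c) is actually more direct than the paper's, which deduces (c) by showing $|N_d(k)|<\tfrac{\Lambda}{2}|v|$ and concluding that $d$ must be the exceptional $\tilde b$ of part (b); your exclusion argument for $b\notin\{0,d\}$ and the bound $|d-2v^\perp|<2\eps$ give the same conclusions without that detour.

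There is, however, one step in part (b) that fails as written: the uniqueness of the exceptional $\tilde b$. You confine an obstructing $b$ to the open ball of radius $\Lambda+\eps<2\Lambda$ about $2v^\perp$ and assert that such a ball ``contains at most one lattice point by minimality.'' Minimality of $2\Lambda$ only forbids two points of $\Gdual$ at distance less than $2\Lambda$ from each other, i.e.\ it gives uniqueness for sets of diameter less than $2\Lambda$ (for instance open balls of radius $\Lambda$); a ball whose radius lies strictly between $\Lambda$ and $2\Lambda$ can contain two lattice points (center it at the midpoint of a shortest lattice vector and take radius slightly larger than $\Lambda$). So the inference is invalid for your enlarged ball, even though the conclusion is true. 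The repair is immediate and is what the paper relies on (this ``at most one'' fact is recorded in \S\ref{s:tubes}): read uniqueness off the defining condition itself --- if $b\neq b'$ were two nonzero lattice points with $|b+u-v^\perp|<\Lambda$ and $|b'+u-v^\perp|<\Lambda$, then $|b-b'|<2\Lambda$, contradicting the definition of $2\Lambda$. Keep the larger ball about $2v^\perp$ only for what it actually delivers, namely $|\tilde b|>2|v|-\Lambda-\eps>|v|$, while $\bigl|\,|u+\tilde b|-|v|\,\bigr|<\Lambda$ comes, as you already note, from the reverse triangle inequality applied to $|\tilde b+u-v^\perp|<\Lambda$.
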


\begin{proof}
(a) By hypothesis, both factors in $|N_b(k)| = |v + (u+b)^\perp | \, |v
- (u+b)^\perp |$ are greater or equal to $\Lambda$. We now prove that at
  least one of the factors must also be greater or equal to
$\frac{1}{2}(|v|+|u+b|)$. Suppose that $|v| \geq |u+b|$. Then, since $(v
+ (u+b)^\perp) + (v - (u+b)^\perp) = 2v$, at least one of the factors
must also be greater or equal to $|v| = \frac{1}{2}(|v|+|v|) \geq
\frac{1}{2}(|v|+|u+b|)$. Now suppose that $|v| < |u+b|$. Then similarly
we prove that $|u+b| > \frac{1}{2}(|v|+|u+b|)$.
All this together implies that $|N_b(k)| \ge
\frac{\Lambda}{2}(|v|+|u+b|)$, which proves part (a).

(b) By hypothesis $\eps < \Lambda / 6 < |v|$. Let $k \in T_{0}$. Then,
by \eqref{bd2},
\begin{equation} \label{partb}
|N_{0}(k)| \leq \eps(2|v|+\eps) < 3 \eps |v| < \frac{\Lambda}{2} |v|.
\end{equation}
Thus we have either $|u + v^\perp| < \Lambda$ or $|u - v^\perp| <
\Lambda$ (otherwise apply part (a) to get a contradiction). Suppose that
$|u + v^\perp| < \Lambda$. Then there is no $b \in \Gdual \setminus
\{0\}$ with $|b+u+v^\perp| < \Lambda$ and there is at most one
$\tilde{b} \in \Gdual \setminus \{0\}$ satisfying $|\tilde{b}+u-v^\perp|
< \Lambda$. This inequality implies $| \, |u + \tilde{b}| - |v| \, | <
\Lambda$. Furthermore, for this $\tilde{b}$,
$$ |\tilde{b}| = |2v^\perp - (u+v^\perp) + (\tilde{b} + u - v^\perp)| >
2|v| - 2\Lambda > |v|, $$
since $-2\Lambda > -|v|$. Now suppose that $|u - v^\perp| < \Lambda$.
Then similarly we prove that $|\tilde{b}| > |v|$.
%
%
%
%
Finally observe that, if $b \not\in \{0,\tilde{b}\}$ then $|b+u+v^\perp|
\geq \Lambda$ and $|b+u-v^\perp| \geq \Lambda$. Hence, applying part (a)
it follows that $|N_b(k)| \geq \frac{\Lambda}{2}(|v|+|u+b|)$. This
proves part (b).

(c) As in the proof of part (b), if $k \in T_0 \cap T_d$ then in
addition to \eqref{partb} we have $|N_{d}(k)| < \frac{\Lambda}{2} |v|$.
Thus, applying part (b) we conclude that $d$ must be the exceptional
$\tilde{b}$ of part (b). The statement of part (c) follows then from
part (b). This completes the proof.
\end{proof}


\section{Main results} \label{s:main}

The Riemann surfaces introduced in \cite{FKT2} can be decomposed into
$$ X^{\rm com} \cup X^{\rm reg} \cup X^{\rm han}, $$
where $X^{\rm com}$ is a compact submanifold with smooth boundary and
finite genus, $X^{\rm reg}$ is a finite union of open ``regular
pieces'', and $X^{\rm han}$ is an infinite union of closed ``handles''.
All these components satisfy a number of geometric/analytic hypotheses
stated in \cite{FKT2} that specify the asymptotic holomorphic structure
of the surface. Below we state two ``asymptotic'' theorems that
essentially characterize the $X^{\rm reg}$ and $X^{\rm han}$ components
of Fermi curves with small magnetic potential. Before we move to the
theorems let us introduce some definitions.

For any $\varphi \in L^2(\R^2/\Gamma)$ define $\hat{\varphi} : \Gdual
\to \C$ as
$$ \hat{\varphi}(b) \coloneqq (\F \varphi)(b) \coloneqq
\frac{1}{|\Gamma|} \int_{\R^2/\Gamma} \varphi(x) \, e^{-i b \cdot x} \,
d x, $$
where $|\Gamma| \coloneqq \int_{\R^2/\Gamma} dx$. Then,
$$ \varphi(x) = (\F^{-1} \hat{\varphi})(x) = \sum_{b \in \Gamma^\#}
\hat{\varphi}(b) \, e^{i b \cdot x}, $$
$$ \| \varphi \|_{L^2(\R^2 / \Gamma)} = |\Gamma|^{1/2} \| \hat{\varphi}
\|_{l^2(\Gdual)}. $$
Recall that $k = u + iv$ with $u,v \in \R^2$, let $\rho$ be a positive
constant, and set
$$ \mathcal{K}_\rho \coloneqq \{k \in \C^2 \;|\;\; |v|\le \rho \}.$$
Finally, consider the projection
\begin{align*}
pr \, : \, \C^2 \, & \longrightarrow \, \C, \\
(k_1,k_2) \, & \longmapsto \, k_2,
\end{align*}
and define
$$ q \coloneqq (i\nabla \cdot A) + A^2 + V. $$

It is easy to construct a holomorphic map $E : \Fhat(A,V) \to \F(A,V)$
\cite{deO}. The precise form of this map is irrelevant here. For our
purposes it is enough to think of it simply as a ``projection'' (or
``exponential map'').

We are ready to state our results. Clearly, the set $\mathcal{K}_\rho$
is invariant under the action of $\Gdual$ and $\mathcal{K}_\rho /
\Gdual$ is compact. Hence, the image of $\Fhat(A,V) \cap
\mathcal{K}_\rho$ under the holomorphic map $E$ is compact in $\F(A,V)$.
This image set will essentially play the role of $X^{\rm com}$ in the
decomposition of $\F(A,V)$. Our first theorem characterizes the regular
piece $X^{\rm reg}$ of $\F(A,V)$.

\begin{thm}[The regular piece] \label{t:reg}
Let $0 < \eps < \Lambda/6$ and suppose that $A_1$, $A_2$ and $V$ are
functions in $L^2(\R^2/\Gamma)$ with $\| b^2 \hat{q}(b) \|_{l^1(\Gdual)}
< \infty$ and $\|(1+b^2) \hat{A}(b)\|_{l^1(\Gdual\setminus \{0\})} <
2\eps/63$. Then there is a constant $\rho = \rho_{\Lambda,\eps,q,A}$
such that, for $\nu \in \{1,2\}$, the projection $pr$ induces a
biholomorphic map between
$$ \left( \Fhat(A,V) \cap T_\nu(0) \right) \setminus \left(
\mathcal{K}_\rho \cup \bigcup_{b \in \Gdual \setminus \{0\}} T_b \right)
$$
and its image in $\C$. This image component contains
$$ \Big \{ z \in \C \,\, \Big| \,\, 8 |z| > \rho \, \text{ and }
|z+(-1)^\nu \theta_\nu(b)| > \eps \, \text{ for all } b \in \Gdual
\setminus \{0\} \Big \} $$
and is contained in
$$ \Bigg \{ z \in \C \,\, \Bigg | \,\, |z+(-1)^\nu\theta_\nu(b)| >
\frac{1}{2} \left( \eps - \frac{\eps^2}{40 \Lambda} \right) \, \text{
for all } b \in \Gdual \setminus \{0\} \Bigg \}, $$
where $\theta_{\nu}(b) = \frac{1}{2} ( (-1)^\nu b_2 + i b_1)$.
Furthermore,
\begin{align*}
pr^{-1} \, : \, \text{{\rm Image(}pr{\rm)}} & \, \longrightarrow \,
T_\nu(0),\\
y & \, \longmapsto \, (-\beta_2^{(1,0)} -i(-1)^\nu y - r(y),y),
\end{align*}
where $\beta_2^{(1,0)}$ is a constant given by \eqref{cte1} that depends
only on $\rho$ and $A$,
$$ |\beta^{(1,0)}_2| < \frac{\eps^2}{100 \Lambda} \qquad \text{and}
\qquad |r(y)| \leq \frac{\eps^3}{50 \Lambda^2} + \frac{C}{\rho}, $$
where $C = C_{\Lambda,\eps,q,A}$ is a constant.
\end{thm}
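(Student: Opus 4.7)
The plan is to reduce the defining condition $k \in \Fhat(A,V) \cap T_\nu(0)$ in the indicated region to a single scalar holomorphic equation for $k_1$ in terms of $k_2$, solved by the implicit function theorem. Following the Feshbach/Schur strategy announced in \S 1, I split $L^2(\R^2/\Gamma)$ via the Fourier basis $\{e^{i b\cdot x}\}_{b \in \Gdual}$ as $P L^2 \oplus P^\perp L^2$, where $P$ is the projection onto the $b=0$ mode---the only resonant mode in $T_\nu(0) \setminus \bigcup_{b\neq 0}T_b$. By Proposition \ref{p:Nb}(b) together with \eqref{bd1}, in this region one has $|N_b(k)| \geq \tfrac{\Lambda}{2}(|v|+|u+b|)$ for all $b \neq 0$ except at most one exceptional direction where still $|N_b(k)| \geq \eps|v|$, so the free operator $H_k(0,0)$ is invertible on $P^\perp L^2$ with a resolvent of size $1/|v|$. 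Combined with the smallness hypothesis $\|(1+b^2)\hat A\|_{l^1} < 2\eps/63$ and a choice of $\rho$ large enough to absorb the $k$-linear piece $2A\cdot(i\nabla - k)$ in $H_k(A,V) - H_k(0,0)$, a Neumann series then shows that $P^\perp H_k(A,V) P^\perp$ is invertible on $P^\perp L^2$, holomorphically in $k$.

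The Schur complement realizes $\Fhat(A,V) \cap T_\nu(0)$ in the region as the zero locus of a scalar holomorphic function $F(k) = N_0(k) + G(k)$, where $G$ is built from matrix elements of $q$ and $2A\cdot(i\nabla - k)$ composed with the inverse of the off-resonant block. Because $|N_{0,3-\nu}(k)|$ is of size $|v|$ on $T_\nu(0)$, dividing $F$ by $N_{0,3-\nu}(k)$ yields the equivalent equation
\[
N_{0,\nu}(k) + \beta_2^{(1,0)} + s(k_1,k_2) = 0,
\]
with $\beta_2^{(1,0)}$ the $k$-independent piece given by \eqref{cte1} (computable from $A$ and $\rho$) and $s$ a small holomorphic remainder. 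Since $\partial_{k_1} N_{0,\nu} = 1$, the implicit function theorem (equivalently a contraction on the disk of radius $\eps$ about $-\beta_2^{(1,0)} - i(-1)^\nu y$) produces a unique holomorphic solution $k_1 = -\beta_2^{(1,0)} - i(-1)^\nu y - r(y)$, exhibiting $pr$ as a biholomorphism and giving the stated form of $pr^{-1}$. The numerical bounds $|\beta_2^{(1,0)}| < \eps^2/(100\Lambda)$ and $|r(y)| \leq \eps^3/(50\Lambda^2) + C/\rho$ follow by inserting $\|(1+b^2)\hat A\|_{l^1} < 2\eps/63$ into the Schur expansion and separating the constant from the $y$-dependent contributions; the $C/\rho$ piece absorbs everything that decays as $|v| \to \infty$. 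For the image, the inner inclusion is obtained by checking that if $8|y| > \rho$ and $|y + (-1)^\nu\theta_\nu(b)| > \eps$ for every nonzero $b$, then $(-\beta_2^{(1,0)} - i(-1)^\nu y - r(y), y)$ lies in $T_\nu(0) \setminus (\mathcal{K}_\rho \cup \bigcup_{b \neq 0} T_b)$---using Proposition \ref{p:lines}(c) to identify the free collision points $-(-1)^\nu\theta_\nu(b)$, and the bound on $r$ to keep the perturbed point clear of those $T_b$---while the outer inclusion runs the same geometry in reverse, with the slack $\eps^2/(40\Lambda)$ absorbing the perturbation.

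The principal obstacle is the Feshbach inversion itself. The term $2A\cdot(i\nabla - k)$ is $k$-linear, so it does not shrink merely because $|v|$ grows; only the combined smallness of $A$ and the $1/(|v|+|u+b|)$ weight in the free resolvent save the argument, and landing the explicit numerical constants ($2\eps/63$, $\eps^2/(100\Lambda)$, $\eps^3/(50\Lambda^2)$) requires careful Fourier-side convolution bookkeeping---precisely the estimates postponed to \S\ref{s:inv}--\S\ref{s:der} and the appendices. Once those are in hand, the Schur complement and the implicit function theorem assemble into the theorem by routine manipulations.
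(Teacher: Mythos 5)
Your proposal is correct and follows essentially the same route as the paper: the inversion of the off-resonant block $P^\perp H_k(A,V)P^\perp$ is just the paper's invertibility of $R_{G'G'}=\pi_{G'}(I+(h+q)\Delta_k^{-1})\pi_{G'}$ in disguise, and the rest --- the scalar Schur-complement equation $N_0+D_{0,0}=0$, division by the large factor $z=N_{0,\nu'}$, extraction of the constant $\beta_2^{(1,0)}$, the implicit-function/contraction step giving $k_1=\eta(k_2)$, and the two geometric inclusions via Proposition \ref{p:lines}(c) --- matches Steps 1--5 of the paper's proof, with the quantitative input correctly deferred to the estimates of \S\ref{s:coeff}--\S\ref{s:der}. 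Only the phrasing ``$\rho$ large enough to absorb the $k$-linear piece'' should be tightened (that term is controlled by the smallness of $\hat{A}$, not by $\rho$), as you yourself note in your closing paragraph.
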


Now observe that, since $T_b + c = T_{b+c}$ for all $b, c \in \Gdual$,
the complement of $E \big(\Fhat(A,V) \cap \mathcal{K}_\rho\big)$ in
$\F(A,V)$ is the disjoint union of
$$ E \Bigg( \big(\Fhat(A,V) \cap T_0 \big) \setminus
\Bigg(\mathcal{K}_\rho \cup \bigcup_{\substack{b \in \Gdual \\ b_2 \neq
0}} T_b \Bigg) \Bigg) $$
and
$$ \bigcup_{\substack{b \in \Gdual \\ b_2 \neq 0}} E \big( \Fhat(A,V)
\cap T_0 \cap T_b \big). $$
Basically, the first of the two sets will be the regular piece of
$\F(A,V)$, while the second set will be the handles. The map $\Phi$
parametrizing the regular part will be the composition of the map $E$
with the inverse of the map discussed in the above theorem. The detailed
information about the handles $X^{\rm han}$ in $\F(A,V)$ comes from our
second main theorem.

\begin{thm}[The handles] \label{t:hand}
Let $0 < \eps < \Lambda/6$ and suppose that $A_1$, $A_2$ and $V$ are
functions in $L^2(\R^2/\Gamma)$ with $\| b^2 \hat{q}(b) \|_{l^1(\Gdual)}
< \infty$ and $\|(1+b^2) \hat{A}(b) \|_{l^1(\Gdual\setminus \{0\})} <
2\eps/63$. Then, for every sufficiently large constant $\rho$ and for
every $d \in \Gdual \setminus \{0\}$ with $2|d| > \rho$, there are maps
\begin{align*}
\phi_{d,1} \, & : \, \Big\{ (z_1,z_2) \in \C^2 \; \Big| \;\; |z_1| \le
\frac{\eps}{2} \text{ and } |z_2| \le \frac{\eps}{2} \Big\}
\longrightarrow T_1(0) \cap T_2(d), \\
\phi_{d,2} \, & : \, \Big\{ (z_1,z_2) \in \C^2 \; \Big| \;\; |z_1| \le
\frac{\eps}{2} \text{ and } |z_2| \le \eps \Big\} \longrightarrow
T_1(-d) \cap T_2(0),
\end{align*}
and a complex number $t_d$ with $|t_d| \leq \frac{C}{|d|^4}$ such that:
\begin{itemize}
\item[\rm (i)] For $\nu \in \{1,2\}$ the domain of the map
$\phi_{d,\nu}$ is biholomorphic to its image, and the image contains
$$ \Big\{ k \in \C^2 \; \Big| \;\; |k_1+i(-1)^\nu k_2| \le
\frac{\eps}{8} \; \text{ and } \; |k_1+(-1)^{\nu+1} d_1 - i(-1)^\nu( k_2
+ (-1)^{\nu+1} d_2)| \le \frac{\eps}{8} \Big \}. $$
Furthermore,
$$ D \hat{\phi}_{d,\nu} = \frac{1}{2} \begin{pmatrix} 1 & 1 \\
-i(-1)^\nu & i(-1)^\nu \end{pmatrix} \left(I + O \left( \frac{1}{|d|^2}
\right) \right) $$
and
$$ \phi_{d,\nu}(0) = (i \theta_\nu(d), (-1)^{\nu+1} \theta_\nu(d)) + O
\left( \frac{\eps}{900} \right) + O \left( \frac{1}{\rho} \right). $$

\item[\rm (ii)]
\begin{align*}
\phi_{d,1}^{-1} (T_1(0) \cap T_2(d) \cap \Fhat(A,V)) & = \Big\{
(z_1,z_2) \in \C^2 \; \Big| \;\; z_1z_2 = t_d, \;\; |z_1| \le
\frac{\eps}{2} \; \text{ and } \; |z_2| \le \frac{\eps}{2} \Big\}, \\
\phi_{d,2}^{-1} (T_1(-d) \cap T_2(0) \cap \Fhat(A,V)) & = \Big\{
(z_1,z_2) \in \C^2 \; \Big| \;\; z_1z_2 = t_d, \;\; |z_1| \le
\frac{\eps}{2} \; \text{ and } \; |z_2| \le \frac{\eps}{2} \Big\}.
\end{align*}

\item[\rm (iii)]
$$ \phi_{d,1}(z_1,z_2) = \phi_{d,2}(z_2,z_1) - d. $$
\end{itemize}
\end{thm}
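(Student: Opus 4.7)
The plan is to apply the Feshbach--Schur reduction outlined in \S\ref{s:idea} to cut out $\Fhat(A,V) \cap T_1(0) \cap T_2(d)$ by the vanishing of a single holomorphic function $F$, and then put $F$ into Morse normal form. By Proposition~\ref{p:Nb}(c), on $T_1(0) \cap T_2(d)$ only the two momenta $b = 0$ and $b = d$ are ``resonant'', so the reduced problem is two-dimensional and $F(k)$ is a $2 \times 2$ determinant. On this region the nonresonant factors $N_{0,2}(k)$ and $N_{d,1}(k)$ are bounded below by a constant times $|d|$, so the diagonal entries of the reduced matrix are, to leading order, $N_{0,1}(k)$ and $N_{d,2}(k)$ times these large factors. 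Dividing through, $F(k)$ is proportional to $N_{0,1}(k) N_{d,2}(k)$ plus a holomorphic perturbation whose size is controlled by the Fourier coefficients of $A$ and $q$ at $\pm d$, paired with resolvent factors of order $|v|^{-1} \sim |d|^{-1}$; these are the estimates assembled in \S\ref{s:coeff}--\S\ref{s:der}.

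By Proposition~\ref{p:lines}(c) the two free lines meet at $p := (i\theta_1(d), \theta_1(d))$. The linear map $L(k) := (N_{0,1}(k), N_{d,2}(k))$ has constant Jacobian $\bigl(\begin{smallmatrix} 1 & -i \\ 1 & i \end{smallmatrix}\bigr)$, whose inverse $\tfrac{1}{2}\bigl(\begin{smallmatrix} 1 & 1 \\ i & -i \end{smallmatrix}\bigr)$ is exactly the $\nu = 1$ matrix appearing in (i). In the variables $w := L(k - p)$ the free equation reads $w_1 w_2 = 0$, and $F$ becomes $w_1 w_2$ plus a small holomorphic function with a unique nondegenerate critical point near $0$. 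The quantitative Morse lemma of Appendix~\ref{s:morse} then produces a biholomorphism $\psi$ of the bidisc $\{|z_1|,|z_2|\le \eps/2\}$ onto a neighborhood of that critical point with $F \circ \psi(z) = z_1 z_2 - t_d$ for a uniquely determined $t_d \in \C$. Setting $\phi_{d,1}(z) := p + L^{-1}(\psi(z))$ yields (ii) and the part of (i) for $\nu = 1$; the image containment is a direct comparison between $L^{-1}$ of the bidisc and the stated slab, and the bounds on $D\phi_{d,1}$ and $\phi_{d,1}(0)$ follow from the quantitative estimates in the Morse lemma combined with the perturbation bounds of \S\ref{s:der}.

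The map $\phi_{d,2}$ is constructed by periodicity. By Proposition~\ref{p:lines}(d)--(e) the translation $k \mapsto k + d$ is a biholomorphism of $\C^2$ preserving $\Fhat(A,V)$ that sends $T_1(0) \cap T_2(d)$ onto $T_1(-d) \cap T_2(0)$ and interchanges the roles of the two free linear factors. I therefore \emph{define}
\[
\phi_{d,2}(z_1, z_2) := \phi_{d,1}(z_2, z_1) + d,
\]
so (iii) holds by construction, and the $\nu = 2$ preimage equality in (ii) follows from $\Gdual$-periodicity of the Fermi curve together with the coordinate swap, which preserves $\{z_1 z_2 = t_d\}$. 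The $\nu = 2$ Jacobian formula in (i) comes from the column swap in $L^{-1}$, turning $\bigl(\begin{smallmatrix} 1 & 1 \\ i & -i \end{smallmatrix}\bigr)$ into $\bigl(\begin{smallmatrix} 1 & 1 \\ -i & i \end{smallmatrix}\bigr)$, while the base point is shifted by $d$; the enlarged disc $|z_2| \le \eps$ in the domain of $\phi_{d,2}$ is accommodated by repeating the Morse construction directly in the slightly larger subregion of $T_1(-d) \cap T_2(0)$ where the perturbation estimates still apply.

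The crux is the estimate $|t_d| \le C/|d|^4$, and this is what forces the small-magnetic-potential hypothesis. The Morse critical value $t_d$ is essentially, up to constants, the square of the off-diagonal entry of the reduced $2 \times 2$ matrix coupling the $b = 0$ and $b = d$ modes; that entry is a Fourier coefficient of $q$ or $A$ at $d$ multiplied by a resolvent factor of order $|v|^{-1} \sim |d|^{-1}$, giving $|t_d| \lesssim |d|^{-4}$ with constant depending on $\| b^2 \hat q(b)\|_{l^1}$ and $\|(1+b^2)\hat A(b)\|_{l^1}$. The small-$A$ assumption $\|(1+b^2)\hat A\|_{l^1(\Gdual\setminus\{0\})} < 2\eps/63$ is needed because the gradient term $A \cdot i\nabla$ in $H(A,V)$ scales like $|v|$ in the large-imaginary-part region, and would otherwise cancel the $|v|^{-1}$ gain from the resolvent, destroying both the smallness hypothesis of the Morse lemma and the $|d|^{-4}$ decay of $t_d$.
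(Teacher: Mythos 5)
Your overall architecture is the same as the paper's: reduce to the $2\times 2$ determinant on $T_1(0)\cap T_2(d)$ via Lemma~\ref{l:eqn}(ii), divide by the two large factors $z_1=N_{0,2}$, $z_2=N_{d,1}$, put the result in Morse normal form with the quantitative Morse lemma of Appendix~\ref{s:morse}, and obtain $\phi_{d,2}$ from $\phi_{d,1}$ by the swap-and-translate formula (your last step is literally the paper's Step 5). The genuine gap is in the central step: you apply the Morse lemma after only the \emph{linear} change of coordinates $w=(N_{0,1},N_{d,2})$, asserting that the remaining perturbation is controlled by Fourier coefficients of $q,A$ at $\pm d$ and hence that $|t_d|\le C/|d|^4$. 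That is not what the reduced equation looks like in those coordinates. After dividing by $z_1z_2$, the diagonal factors are $w_1+\beta_2^{(1)}z_1+g_1$ and $w_2+\eta_2^{(1)}z_2+g_2$ (see \eqref{term1b}--\eqref{defg1g2}); the corrections $\beta_2^{(1)}z_1+g_1$ and $\eta_2^{(1)}z_2+g_2$ are of size $\eps/900+C/\rho$, \emph{independent of $d$} ($\beta_2^{(1)}$ is built from all Fourier modes of $A$, not the mode at $d$, and $g_j=O(1/\rho)$ with $\rho$ a fixed constant while $|d|\to\infty$). So in the coordinates $w$ the perturbation of $w_1w_2$ has first derivatives of order $\eps/900+C/\rho$, and Lemma~\ref{l:morse} (which bounds the critical value only by $|c-r(0,0)|\le a^2$ with $a$ a bound on first derivatives of the perturbation) can only give $|t_d|\lesssim(\eps/900+C/\rho)^2$, not $C/|d|^4$; the $O(1/|d|^2)$ control of the differential would fail for the same reason. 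Since the $|d|^{-4}$ decay of the handle size is the whole point of part (ii), this is a missing idea, not a detail.

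The paper's cure, which your proposal lacks, is the \emph{nonlinear} change of variables \eqref{x1x2}, $x_1=w_1+\beta_2^{(1)}z_1+g_1$, $x_2=w_2+\eta_2^{(1)}z_2+g_2$, whose invertibility and derivative bounds are the content of Proposition~\ref{p:jac}. In these coordinates the defining equation becomes exactly $x_1x_2+r=0$ with $r=-(c_1(d)+p_1)(c_2(d)+p_2)/(z_1z_2)$, and only now is the perturbation genuinely $d$-small: $|c_j(d)|\le C/|d|$ comes from the hypotheses $\|b^2\hat q\|_{l^1}<\infty$ and $\|(1+b^2)\hat A\|_{l^1}<\infty$ applied to $\hat q(\pm d)$ and $2d\cdot\hat A(\pm d)$, the bounds $|p_j|$ and their $k$-derivatives $\le C/|d|$ require Proposition~\ref{p:p1p2} (which rests on the weighted-norm Lemma~\ref{l:derii} with $\beta=2$, an ingredient absent from your argument), and the factor $1/(z_1z_2)$ supplies the remaining $|d|^{-2}$. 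Only with $a=b=C/|d|^4$ does Lemma~\ref{l:morse} deliver $|t_d|\le C/|d|^4$, $\|D\Phi_\nu-I\|=O(|d|^{-2})$ and $|\Phi_\nu(0)|=O(|d|^{-4})$, after which the composition with the inverse of \eqref{x1x2} gives the stated properties of $\phi_{d,1}$. (A smaller inaccuracy: $t_d$ is, to leading order, the \emph{product} $c_1(d)c_2(d)/(z_1z_2)$ of the two off-diagonal couplings divided by $z_1z_2$, and the $|d|$-decay of each $c_j(d)$ comes from the decay hypotheses on $\hat q$, $\hat A$ at the single mode $d$, not from a resolvent factor; but this bookkeeping is secondary to the structural gap above.)
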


These are the main results in this paper. In the next section we outline
the strategy for proving them. The proofs are presented in the
subsequent sections divided in many steps.


\section{Strategy outline} \label{s:idea}

Below we briefly describe the general strategy of analysis used to prove
our results. We first introduce some notation and definitions. Observe
that
\begin{align*}
H_k(A,V) \varphi & = ( (i \nabla + A - k)^2 + V ) \varphi \\ & = ( (i
\nabla-k)^2 + 2A \cdot (i \nabla -k) + (i \nabla \cdot A) + A^2 + V )
\varphi,
\end{align*}
and write
$$ H_k(A,V) = \Delta_k + h(k,A) + q(A,V) $$
with
$$ \Delta_k \coloneqq (i \nabla - k)^2, \qquad h(k,A) \coloneqq 2 A
\cdot (i \nabla - k) \qquad \text{and} \qquad q(A,V) \coloneqq (i\nabla
\cdot A) + A^2 + V. $$
For each finite subset $G$ of $\Gdual$ set
\begin{align*}
G' \coloneqq \Gdual \setminus G \qquad & \text{and} \qquad \C^2_G
\coloneqq \C^2 \setminus \bigcup_{b \in G'} \mathcal{N}_b, \\ L^2_G
\coloneqq \overline{\text{span} \{ e^{ib \cdot x} \; | \; b \in G \}}
\qquad & \text{and} \qquad L^2_{G'} \coloneqq \overline{\text{span} \{
e^{ib\cdot x} \; | \; b \in G' \}}.
\end{align*}
To simplify the notation write $L^2$ in place of $L^2(\R^2/\Gamma)$. Let
$I$ be the identity operator on $L^2$, and let $\pi_G$ and $\pi_{G'}$ be
the orthogonal projections from $L^2$ onto $L^2_G$ and $L^2_{G'}$,
respectively. Then,
$$ L^2 = L^2_G \oplus L^2_{G'} \qquad \text{and} \qquad I = \pi_G +
\pi_{G'}. $$
For $k \in \C^2_G$ define the partial inverse $(\Delta_k)^{-1}_G$ on
$L^2$ as
$$ (\Delta_k)^{-1}_G \coloneqq \pi_G + \Delta_k^{-1} \pi_{G'}. $$
Its matrix elements are
$$ \big(  (\Delta_k)^{-1}_G \big)_{b,c} \coloneqq \left \la \frac{e^{ib
\cdot x}}{|\Gamma|^{1/2}}, (\Delta_k)^{-1}_G \frac{e^{ic \cdot
x}}{|\Gamma|^{1/2}} \right\ra_{\!\!L^2} = \begin{cases} \delta_{b,c} &
\text{if} \quad c \in G, \\ \delta_{b,c} \frac{1}{N_c(k)} & \text{if}
\quad c \not\in G, \end{cases} $$
where $b,c \in \Gdual$.

Here is the main idea. By definition, a point $k$ is in $\Fhat(A,V)$ if
$H_k(A,V)$ has a nontrivial kernel in $L^2$. Hence, to study the part of
the curve in the intersection of $\cup_{d' \in G} T_{d'}$ with $\C^2
\setminus \cup_{b \in G'} T_b$ for some finite subset $G$ of $\Gdual$,
it is natural to look for a nontrivial solution of
$$ (\Delta_k + h + q) (\psi_G + \psi_{G'}) = 0, $$
where $\psi_G \in L^2_G$ and $\psi_{G'} \in L^2_{G'}$. Equivalently, if
we make the following (invertible) change of variables in $L^2$,
$$ (\psi_G + \psi_{G'}) = (\Delta_k)^{-1}_G ( \varphi_G + \varphi_{G'}),
$$
where $\varphi_G \in L^2_G$ and $\varphi_{G'} \in L^2_{G'}$, we may
consider the equation
\begin{equation} \label{eq} (\Delta_k + h + q) \varphi_G + (I + (h + q)
\Delta_k^{-1}) \varphi_{G'} = 0.
\end{equation}
The projections of this equation onto $L^2_{G'}$ and $L^2_G$ are,
respectively,
\begin{align} \pi_{G'} (h+q) \varphi_G + \pi_{G'}(I + (h + q)
\Delta_k^{-1}) \varphi_{G'} & = 0, \label{projG'} \\ \pi_G (\Delta_k + h
+ q) \varphi_G + \pi_G (h + q) \Delta_k^{-1} \varphi_{G'} & = 0.
\label{projG} 
\end{align}
Now define $R_{G'G'}$ on $L^2$ as
$$ R_{G'G'} \coloneqq \pi_{G'}(I + (h + q) \Delta_k^{-1}) \pi_{G'}. $$
Observe that $R_{G'G'}$ is the zero operator on $L^2_G$. Then, if
$R_{G'G'}$ has a bounded inverse on $L^2_{G'}$, the equation
\eqref{projG'} is equivalent to
$$ \varphi_{G'} = - R_{G'G'}^{-1} \pi_{G'} (h+q) \varphi_G. $$
Substituting this into \eqref{projG} yields
$$ \pi_G ( \Delta_k + h + q - (h + q) \Delta_k^{-1} R_{G'G'}^{-1}
\pi_{G'} (h+q) ) \varphi_G = 0. $$
This equation has a nontrivial solution if and only if the (finite) $|G|
\times |G|$ determinant
$$ \det \, [ \, \pi_G ( \Delta_k + h + q - (h + q) \Delta_k^{-1}
R_{G'G'}^{-1} \pi_{G'} (h+q) ) \pi_G \, ] = 0 $$
or, equivalently, expressing all operators as matrices in the basis $\{
|\Gamma|^{-1/2} \, e^{ib \cdot x} \; | \; b \in \Gdual \}$,
\begin{equation} \label{detGxG} \det \left[ N_{d'}(k) \delta_{d',d''} +
w_{d',d''} - \sum_{b,c \in G'} \frac{w_{d',b}}{N_b(k)}
(R_{G'G'}^{-1})_{b,c} w_{c,d''} \right]_{d',d'' \in G} = 0,
\end{equation}
where
$$ w_{b,c} \coloneqq h_{b,c} + \hat{q}(b-c) = -2(c+k) \cdot \hat{A}(b-c)
+ \hat{q}(b-c). $$
Therefore, if $R_{G'G'}$ has a bounded inverse on $L^2_{G'}$---which is
in fact the case under suitable conditions---in the region under
consideration we can study the Fermi curve in detail using the (local)
defining equation \eqref{detGxG}.


\section{Invertibility of $R_{G'G'}$ } \label{s:inv}

The following notation will be used whenever we consider vector-valued
quantities. Let $\mathcal{X}$ be a Banach space and let $A, B \in
\mathcal{X}^2$, where $A = (A_1, A_2)$ and $B = (B_1, B_2)$. Then,
$$ \| A \|_\mathcal{X} \coloneqq (\| A_1 \|_\mathcal{X}^2 + \| A_2
\|_\mathcal{X}^2)^{1/2} \qquad \text{and} \qquad A \cdot B \coloneqq A_1
B_1 + A_2 B_2. $$
Furthermore, we will denote by $\| \, \cdot \, \|$ the operator norm on
$L^2(\R^2 / \Gamma)$.

In general, for any $B,C \subset \Gdual$ ($C$ such that $\Delta_k^{-1}
\pi_C$ exists) define the operator $R_{BC}$ as
\begin{align}
R_{BC} \coloneqq & \, \pi_B (I + (h + q) \Delta_k^{-1}) \pi_C \notag \\
= & \, \pi_B \pi_C + \pi_B \, q \, \Delta_k^{-1} \pi_C + \pi_B (2A \cdot
i \nabla) \Delta_k^{-1} \pi_C - \pi_B (2 k \cdot A) \Delta_k^{-1} \pi_C.
\label{RBC} 
\end{align}
Its matrix elements are
\begin{equation} \label{RBCm}
(R_{BC})_{b,c} = \delta_{b,c} + \frac{\hat{q}(b-c)}{N_c(k)} - \frac{2c
\cdot \hat{A}(b-c)}{N_c(k)} - \frac{2 k \cdot \hat{A}(b-c)}{N_c(k)},
\end{equation}
where $b \in B$ and $c \in C$. We first estimate the norm of the last
three terms on the right hand side of \eqref{RBC}. We begin with the
following proposition.

\begin{prop} \label{p:bd1}
Let $k \in \C^2$ and let $B,C \subset \Gdual$ with $C \subset \{ b \in
\Gdual \; | \; N_b(k) \neq 0 \}$. Then,
\begin{align*}
\| \pi_B \, q \, \Delta_k^{-1} \pi_C \| & \leq \| \hat{q} \|_{l^1}
\sup_{c \in C} \frac{1}{|N_c(k)|}, \\
\| \pi_B (A \cdot i \nabla) \Delta_k^{-1} \pi_C \| & \leq \| \hat{A}
\|_{l^1} \sup_{c \in C} \frac{|c|}{|N_c(k)|}, \\
\| \pi_B (k \cdot A) \Delta_k^{-1} \pi_C \| & \leq \| \hat{A} \|_{l^1}
\, |k| \, \sup_{c \in C} \frac{1}{|N_c(k)|}.
\end{align*}
\end{prop}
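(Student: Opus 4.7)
\medskip

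\noindent\textbf{Proof plan.} All three estimates have the same skeleton: factor the operator as a composition of $\pi_B$ (a contraction), a pointwise multiplication by a periodic function (or a component of one), and the ``diagonal'' piece $\Delta_k^{-1}\pi_C$. The first observation is that, in the Fourier basis, $\Delta_k^{-1}\pi_C$ acts as
\[
\Delta_k^{-1}\pi_C\Big(\sum_{c\in\Gdual}\hat\varphi(c)\,e^{ic\cdot x}\Big)
=\sum_{c\in C}\frac{\hat\varphi(c)}{N_c(k)}\,e^{ic\cdot x},
\]
so by Plancherel ($\|\varphi\|_{L^2}=|\Gamma|^{1/2}\|\hat\varphi\|_{l^2}$) one immediately gets
\[
\|\Delta_k^{-1}\pi_C\varphi\|_{L^2}\le \Big(\sup_{c\in C}\tfrac{1}{|N_c(k)|}\Big)\|\varphi\|_{L^2},
\]
and, with the factor of $|c|$ coming from $i\nabla e^{ic\cdot x}=-c\,e^{ic\cdot x}$,
\[
\|i\nabla\,\Delta_k^{-1}\pi_C\varphi\|_{L^2}\le \Big(\sup_{c\in C}\tfrac{|c|}{|N_c(k)|}\Big)\|\varphi\|_{L^2},
\]
where the left-hand side is the $L^2$ norm of the $\C^2$-valued function obtained by taking the gradient.

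\medskip

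\noindent For the first inequality I would simply combine the diagonal bound above with the elementary fact that multiplication by the periodic scalar $q$ is a bounded operator on $L^2(\R^2/\Gamma)$ of norm at most $\|q\|_\infty$, and that $\|q\|_\infty\le\|\hat q\|_{l^1}$ because the Fourier series $q(x)=\sum_b \hat q(b)\,e^{ib\cdot x}$ converges absolutely. Together with $\|\pi_B\|\le 1$ this yields the first bound.

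\medskip

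\noindent For the third inequality, the function $k\cdot A(x)=k_1A_1(x)+k_2A_2(x)$ is again a scalar periodic multiplier. By the pointwise Cauchy--Schwarz inequality in $\C^2$ and the componentwise sup-norm bound one has $\|k\cdot A\|_\infty\le|k|\,\|A\|_\infty\le|k|\,\|\hat A\|_{l^1}$, using the convention $\|A\|_{\mathcal X}=(\|A_1\|_{\mathcal X}^2+\|A_2\|_{\mathcal X}^2)^{1/2}$ introduced in the section. Composing with the diagonal bound on $\Delta_k^{-1}\pi_C$ gives the claim.

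\medskip

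\noindent The second inequality is the only one with a small subtlety because $i\nabla\,\Delta_k^{-1}\pi_C\varphi$ is $\C^2$-valued and $A$ is dotted into it. I would argue pointwise: for any $\C^2$-valued $\psi\in L^2$,
\[
|A(x)\cdot\psi(x)|\le |A(x)|\,|\psi(x)|\le \|A\|_\infty\,|\psi(x)|,
\]
so $\|A\cdot\psi\|_{L^2}\le\|A\|_\infty\|\psi\|_{L^2}$, where on the right $\|\psi\|_{L^2}$ denotes the $L^2$ norm of the $\C^2$-valued function. Applying this with $\psi=i\nabla\,\Delta_k^{-1}\pi_C\varphi$ and the displayed bound on $\|i\nabla\,\Delta_k^{-1}\pi_C\varphi\|_{L^2}$, together with $\|A\|_\infty\le\|\hat A\|_{l^1}$, finishes the proof.

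\medskip

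\noindent The only potential obstacle is keeping the vector-valued conventions straight in the middle estimate; once the pointwise Cauchy--Schwarz step is in place, everything reduces to the scalar Fourier estimates $\|f\|_\infty\le\|\hat f\|_{l^1}$ and the Plancherel identity, and no further work is needed.
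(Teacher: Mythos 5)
Your argument is correct, and it reaches exactly the stated bounds, but it runs along a different track than the paper. The paper treats each operator $\pi_B\,q\,\Delta_k^{-1}\pi_C$ (and its two companions) as a single matrix in the Fourier basis, writes its entries $T_{b,c}=\hat q(b-c)/N_c(k)$ (resp.\ with $-c\cdot\hat A(b-c)$, $-k\cdot\hat A(b-c)$ in the numerator), and then invokes the row/column-sum bound of Proposition \ref{p:ineq} (a Schur-type test), so the $l^1$ norms appear as bounds on the convolution sums $\sum_b|\hat q(b-c)|$. You instead factor the operator as (orthogonal projection) $\circ$ (multiplication operator) $\circ$ (diagonal operator), compute the norm of $\Delta_k^{-1}\pi_C$ and of $i\nabla\,\Delta_k^{-1}\pi_C$ exactly via Plancherel, and control the multiplication step by $\|q\|_\infty\le\|\hat q\|_{l^1}$, with the pointwise Cauchy--Schwarz step handling the vector-valued case $A\cdot(i\nabla\cdots)$; your handling of the convention $\|A\|_{\mathcal X}=(\|A_1\|_{\mathcal X}^2+\|A_2\|_{\mathcal X}^2)^{1/2}$ is consistent, since $\sup_x|A(x)|\le(\|\hat A_1\|_{l^1}^2+\|\hat A_2\|_{l^1}^2)^{1/2}=\|\hat A\|_{l^1}$. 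Your route is arguably more structural and even gives the slightly sharper intermediate constants $\|q\|_\infty$, $|k|\,\|A\|_\infty$ (and is trivially vacuous when the $l^1$ norms are infinite, which is worth a sentence). What the paper's matrix-element route buys is reusability: the same Schur-test template, applied directly to the entries with their convolution structure, is what powers the later and more delicate estimates (Proposition \ref{p:bdR}, Proposition \ref{p:decayT}, the weighted $\sigma$-norm bounds), where an $L^\infty$-multiplier argument would not see the needed decay in $|b-c|$ or permit splitting of the index sets. So your proof is a valid stand-alone alternative for this proposition, while the paper's proof doubles as the prototype for the rest of the machinery.
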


To prove this proposition we apply the following well-known inequality
(see \cite{deO}).

\begin{prop} \label{p:ineq}
Consider a linear operator $T : L^2_C \to L^2_B$ with matrix elements
$T_{b,c}$. Then,
$$ \| T \| \leq \max \left\{ \sup_{c \in C} \, \sum_{b \in B} |T_{b,c}|,
\; \sup_{b \in B} \, \sum_{c \in C} |T_{b,c}| \right\}. $$
\end{prop}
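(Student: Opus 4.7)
The statement is the classical Schur test for operator norms on $\ell^2$ (transported here to $L^2$ via the Fourier isomorphism stated earlier in the paper, $\|\varphi\|_{L^2(\R^2/\Gamma)} = |\Gamma|^{1/2}\|\hat\varphi\|_{l^2(\Gdual)}$). So my plan is to unitarily reduce to $\ell^2$, apply a weighted Cauchy–Schwarz inequality, and then estimate the resulting double sum by the two row/column sums separately.

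\textbf{Step 1: Pass to Fourier side.} Write an arbitrary $\psi \in L^2_C$ as $\psi = |\Gamma|^{-1/2}\sum_{c \in C} \hat\psi(c)\, e^{ic\cdot x}$, so that $\|\psi\|^2 = |\Gamma|\sum_c |\hat\psi(c)|^2$. By the definition of the matrix elements $T_{b,c}$, the image $T\psi \in L^2_B$ has Fourier coefficients $\widehat{T\psi}(b) = \sum_{c \in C} T_{b,c}\hat\psi(c)$, and $\|T\psi\|^2 = |\Gamma|\sum_b |\widehat{T\psi}(b)|^2$. Thus it suffices to show
\[
\sum_{b\in B}\Bigl|\sum_{c\in C} T_{b,c}\hat\psi(c)\Bigr|^{2} \;\le\; M^{2}\sum_{c\in C}|\hat\psi(c)|^{2},
\]
where $M$ denotes the max appearing in the statement.

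\textbf{Step 2: Weighted Cauchy–Schwarz.} For each fixed $b$, split $|T_{b,c}| = |T_{b,c}|^{1/2}\cdot |T_{b,c}|^{1/2}$ and apply Cauchy–Schwarz to the sum over $c$:
\[
\Bigl|\sum_{c} T_{b,c}\hat\psi(c)\Bigr|^{2}
\;\le\; \Bigl(\sum_{c}|T_{b,c}|\Bigr)\Bigl(\sum_{c}|T_{b,c}|\,|\hat\psi(c)|^{2}\Bigr)
\;\le\; \Bigl(\sup_{b'\in B}\sum_{c}|T_{b',c}|\Bigr)\sum_{c}|T_{b,c}|\,|\hat\psi(c)|^{2}.
\]

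\textbf{Step 3: Sum over $b$ and swap.} Summing the previous display over $b$ and interchanging the order of summation yields
\[
\sum_{b}\Bigl|\sum_{c} T_{b,c}\hat\psi(c)\Bigr|^{2}
\;\le\; \Bigl(\sup_{b}\sum_{c}|T_{b,c}|\Bigr)\sum_{c}|\hat\psi(c)|^{2}\Bigl(\sum_{b}|T_{b,c}|\Bigr)
\;\le\; \Bigl(\sup_{b}\sum_{c}|T_{b,c}|\Bigr)\Bigl(\sup_{c}\sum_{b}|T_{b,c}|\Bigr)\sum_{c}|\hat\psi(c)|^{2}.
\]
Taking square roots gives the sharper geometric-mean version of Schur's bound, which in particular is bounded by $M$, proving $\|T\|\le M$.

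There is no real obstacle: the only minor bookkeeping is making sure the (possibly infinite) sums converge, which is automatic once the two suprema are finite (if either is $+\infty$ the stated bound is vacuous). Since the paper cites \cite{deO} for the proof, I would keep the writeup to the three lines above.
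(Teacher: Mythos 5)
Your proof is correct: it is the standard Schur-test argument (split $|T_{b,c}|=|T_{b,c}|^{1/2}|T_{b,c}|^{1/2}$, Cauchy--Schwarz in $c$, sum over $b$ and swap), and in fact yields the sharper geometric-mean bound, which dominates the stated maximum. The paper itself defers the proof to \cite{deO}, where the same classical argument is used, so there is nothing further to compare.
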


\begin{proof}[Proof of Proposition \ref{p:bd1}]
We only prove the first inequality. The proof of the other ones is
similar.  Write $T \coloneqq \pi_B \, q \, \Delta_k^{-1} \pi_C$. Then,
in view of \eqref{RBC} and \eqref{RBCm},
\begin{align*}
\sup_{c \in C} \sum_{b \in B} |T_{b,c}| & \leq \sup_{c \in C} \sum_{b
\in B} \frac{|\hat{q}(b-c)|}{|N_c(k)|} \leq \sup_{c \in C}
\frac{1}{|N_c(k)|} \| \hat{q} \|_{l^1}, \\
\sup_{b \in B} \sum_{c \in C} |T_{b,c}| & \leq \sup_{b \in B} \sum_{c
\in C} \frac{|\hat{q}(b-c)|}{|N_c(k)|} \leq \sup_{c \in C}
\frac{1}{|N_c(k)|} \| \hat{q} \|_{l^1}.
\end{align*}
By Proposition \ref{p:ineq}, these estimates yield the desired
inequality.  \end{proof}

The key estimate for the existence of $R_{G'G'}^{-1}$ is given below.

\begin{prop}[Estimate of $\| R_{SS} - \pi_{S} \|$] \label{p:bdR}
Let $k \in \C^2$ with $|u| \leq 2|v|$ and $|v| > 2 \Lambda$. Suppose
that $S \subset \{ b \in \Gdual \; | \;\; |N_b(k)| \geq \eps |v| \}$.
Then,
\begin{equation} \label{key}
\| R_{SS} - \pi_S \| \leq \| \hat{q} \|_{l^1} \frac{1}{\eps |v|} +
\frac{14}{\eps} \| \hat{A} \|_{l^1}.
\end{equation}
\end{prop}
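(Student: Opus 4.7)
The plan is to write $R_{SS}-\pi_S$ as a sum of three operators using \eqref{RBC}, apply Proposition \ref{p:bd1} term by term, and then use the hypotheses $c\in S \Rightarrow |N_c(k)|\geq \eps|v|$ together with $|u|\leq 2|v|$ to convert the resulting suprema into the claimed constants.

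More precisely, from \eqref{RBC} applied with $B=C=S$,
\begin{equation*}
R_{SS}-\pi_S \;=\; \pi_S\,q\,\Delta_k^{-1}\pi_S \;+\; \pi_S(2A\cdot i\nabla)\Delta_k^{-1}\pi_S \;-\; \pi_S(2k\cdot A)\Delta_k^{-1}\pi_S,
\end{equation*}
so that by Proposition \ref{p:bd1} the norm is bounded by
\begin{equation*}
\|\hat q\|_{l^1}\sup_{c\in S}\frac{1}{|N_c(k)|} \;+\; 2\|\hat A\|_{l^1}\sup_{c\in S}\frac{|c|}{|N_c(k)|} \;+\; 2\|\hat A\|_{l^1}\,|k|\sup_{c\in S}\frac{1}{|N_c(k)|}.
\end{equation*}
For the first sup, $c\in S$ gives $1/|N_c(k)|\leq 1/(\eps|v|)$ directly, producing the term $\|\hat q\|_{l^1}/(\eps|v|)$. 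For the third, I combine $|k|\leq|u|+|v|\leq 3|v|$ with the same bound to obtain $2\|\hat A\|_{l^1}\cdot 3/\eps = 6\|\hat A\|_{l^1}/\eps$.

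The one nontrivial step is bounding $\sup_{c\in S}|c|/|N_c(k)|$ by $4/\eps$; this is the main obstacle because $|c|$ can be arbitrarily large, so I cannot simply divide by $\eps|v|$. I would split into two cases based on the size of $|u+c|$. If $|u+c|<2|v|$, then $|c|\leq|u+c|+|u|<4|v|$, and $|N_c(k)|\geq\eps|v|$ yields $|c|/|N_c(k)|<4/\eps$. If $|u+c|\geq 2|v|$, then both factors in $|N_c(k)|=|v+(u+c)^\perp|\,|v-(u+c)^\perp|$ satisfy
\begin{equation*}
|v\pm(u+c)^\perp|\;\geq\;|u+c|-|v|\;\geq\;\tfrac{1}{2}|u+c|,
\end{equation*}
so $|N_c(k)|\geq|u+c|^2/4$; combined with $|c|\leq|u+c|+|u|\leq|u+c|+2|v|\leq 2|u+c|$ this gives $|c|/|N_c(k)|\leq 8/|u+c|\leq 4/|v|$, which is smaller than $4/\eps$ since $|v|>2\Lambda>12\eps$. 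Hence the second term contributes at most $2\|\hat A\|_{l^1}\cdot 4/\eps=8\|\hat A\|_{l^1}/\eps$.

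Summing the three contributions gives $\|\hat q\|_{l^1}/(\eps|v|)+(8+6)\|\hat A\|_{l^1}/\eps$, which is exactly \eqref{key}. The strategy is standard once Proposition \ref{p:bd1} is in hand; the only place where the geometry of the factorization $N_c=N_{c,1}N_{c,2}$ actually enters is the case split above, which is where the fact that $A$ appears paired with the momentum $i\nabla-k$ (rather than just with $V$) forces an argument using both lower bounds on $|v\pm(u+c)^\perp|$.
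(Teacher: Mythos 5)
Your proposal is correct and follows essentially the same route as the paper: decompose $R_{SS}-\pi_S$ via \eqref{RBC}, apply Proposition \ref{p:bd1}, use $|k|\le 3|v|$ for the $k\cdot A$ term, and establish $\sup_{c\in S}|c|/|N_c(k)|\le 4/\eps$ by a two-case argument (the paper splits on $|c|\lessgtr 4|v|$ rather than on $|u+c|\lessgtr 2|v|$, but the estimates are the same in substance), arriving at the identical $8/\eps+6/\eps=14/\eps$ bookkeeping.
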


If $A=0$, the right hand side of \eqref{key} can be made arbitrarily
small for any $V$ by taking $|v|$ sufficiently large (recall that
$q(0,V)=V$). If $A \neq 0$, however, we need to take $\| \hat{A}
\|_{l^1}$ small to make that quantity less than 1. The term $
\tfrac{14}{\eps} \| \hat{A}\|_{l^1}$ in \eqref{key} comes from the
estimate we have for $ \| \pi_{G'} \, h \, \Delta_k^{-1} \pi_{G'} \|$.

\begin{proof}[Proof of Proposition \ref{p:bdR}]
By hypothesis, for all $b \in S$,
\begin{equation} \label{b5}
\frac{1}{|N_b(k)|} \leq \frac{1}{\eps |v|}.
\end{equation}
We now show that, for all $b \in S$,
\begin{equation} \label{b6}
\frac{|b|}{|N_b(k)|} \leq \frac{4}{\eps}.
\end{equation}
First suppose that $|b| \leq 4|v|$. Then,
$$ \frac{|b|}{|N_b(k)|} \leq \frac{4|v|}{\eps |v|} = \frac{4}{\eps}. $$
Now suppose that $|b| \geq 4|v|$. Again, by hypothesis we have $|u| \leq
2|v|$ and $|v| > 2\Lambda > \eps$. Hence,
$$ |v \pm (u+b)^\perp| \geq |b|-|u|-|v| \geq |b|-3|v| \geq |b|-
\frac{3}{4} |b| = \frac{|b|}{4}. $$
Consequently,
$$ \frac{|b|}{|N_b(k)|} = \frac{|b|}{|v+(u+b)^\perp| \, |v-(u+b)^\perp|}
\leq |b| \frac{4}{|b|} \frac{4}{|b|} = \frac{16}{|b|} \leq \frac{4}{|v|}
\leq \frac{4}{\eps}. $$
This proves \eqref{b6}.

The expression for $R_{SS} - \pi_S$ is given by \eqref{RBC}. Observe
that $|k| \leq |u|+|v| \leq 3|v|$. Then, applying Proposition
\ref{p:bd1} and using \eqref{b5} and \eqref{b6} we obtain
\begin{align*}
\| R_{SS} - \pi_S \| & \leq ( 6 |v| \, \| \hat{A} \|_{l^1} + \| \hat{q}
\|_{l^1} ) \sup_{b \in S} \frac{1}{|N_c(k)|} + 2 \| \hat{A} \|_{l^1}
\sup_{b \in S} \frac{|c|}{|N_c(k)|} \\
& \leq ( 6 |v| \, \| \hat{A} \|_{l^1} + \| \hat{q} \|_{l^1} )
\frac{1}{\eps |v|} + \frac{8}{\eps} \| \hat{A} \|_{l^1} = \| \hat{q}
\|_{l^1} \frac{1}{\eps |v|} + \frac{14}{\eps} \| \hat{A} \|_{l^1}.
\end{align*}
This is the desired inequality.
\end{proof}

From the last proposition it follows easily that $R_{SS}$ has a bounded
inverse for large $|v|$ and weak magnetic potential.

\begin{lem}[Invertibility of $R_{SS}$] \label{l:invR}
Let $k \in \C^2$,
$$ |u| \leq 2|v|, \qquad |v| > \max \left\{ 2 \Lambda, \, \| \hat{q}
\|_{l^1} \frac{2}{\eps} \right\}, \qquad \| \hat{q} \|_{l^1} < \infty
\qquad \text{and} \qquad \| \hat{A} \|_{l^1} < \frac{2}{63} \eps. $$
Suppose that $S \subset \{ b \in \Gdual \; | \; |N_b(k)| \geq \eps |v|
\}$. Then the operator $R_{SS}$ has a bounded inverse with
\begin{align*}
\| R_{SS} - \pi_S \| & < \| \hat{q} \|_{l^1} \frac{1}{\eps |v|} + \|
\hat{A} \|_{l^1} \frac{14}{\eps} < \frac{17}{18}, \\
\| R_{SS}^{-1} - \pi_S \| & < 18 \| R_{SS} - \pi_S \|.
\end{align*}
\end{lem}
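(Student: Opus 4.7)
The plan is to deduce the lemma from Proposition \ref{p:bdR} via a standard Neumann series argument, so essentially nothing beyond elementary arithmetic remains once the key estimate is in hand. The main conceptual point to keep in mind is that $R_{SS}$ acts as zero on $L^2_{S'}$ and has range in $L^2_S$, so ``invertibility'' really means invertibility of $R_{SS}\big|_{L^2_S}$, on which $\pi_S$ is the identity; writing $R_{SS} = \pi_S + (R_{SS}-\pi_S)$ then exhibits the operator as ``identity plus a small perturbation'' on $L^2_S$.

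The first step is to verify that the hypotheses of Proposition \ref{p:bdR} are met---this is immediate from $|u|\le 2|v|$ and $|v|>2\Lambda$ together with $S \subset \{b \in \Gdual \mid |N_b(k)| \geq \eps|v|\}$---and to read off the bound. Plugging in the assumption $|v| > (2/\eps)\|\hat{q}\|_{l^1}$ gives $\|\hat{q}\|_{l^1}/(\eps|v|) < 1/2$, and the assumption $\|\hat{A}\|_{l^1} < 2\eps/63$ gives $(14/\eps)\|\hat{A}\|_{l^1} < 28/63 = 4/9$. Summing the two contributions in \eqref{key} yields
$$ \| R_{SS} - \pi_S \| \,<\, \tfrac{1}{2} + \tfrac{4}{9} \,=\, \tfrac{17}{18}, $$
which is the first displayed inequality of the lemma.

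Second, since $\|R_{SS}-\pi_S\|_{L^2_S} < 17/18 < 1$, the Neumann series
$$ \sum_{n=0}^{\infty} (\pi_S - R_{SS})^n $$
converges in the operator norm on $L^2_S$ and provides a bounded inverse of $R_{SS}\big|_{L^2_S}$. Subtracting the $n=0$ term and applying the geometric bound gives
$$ \| R_{SS}^{-1} - \pi_S \| \,\leq\, \sum_{n=1}^{\infty} \| R_{SS} - \pi_S \|^n \,=\, \frac{\| R_{SS} - \pi_S \|}{1 - \| R_{SS} - \pi_S \|}. $$
The bound $\|R_{SS}-\pi_S\| < 17/18$ implies $1 - \|R_{SS}-\pi_S\| > 1/18$, and substituting this lower bound for the denominator yields $\| R_{SS}^{-1} - \pi_S \| < 18 \| R_{SS} - \pi_S \|$, which is the second displayed inequality.

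There is no real obstacle here: all the analytic work was already carried out in Proposition \ref{p:bdR}, so the entire lemma reduces to (i) the arithmetic check that the two hypotheses on $|v|$ and $\|\hat A\|_{l^1}$ combine to push the perturbation norm strictly below $17/18$, and (ii) the Neumann series bookkeeping above. The only thing worth noting---and the reason the tight constant $17/18$ rather than simply ``less than one'' is recorded---is that the factor $18$ in the bound on $\|R_{SS}^{-1}-\pi_S\|$ will be used quantitatively in later sections, so one must not weaken the estimate.
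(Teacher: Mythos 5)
Your proof is correct and follows essentially the same route as the paper: invoke Proposition \ref{p:bdR}, check that the hypotheses force $\|R_{SS}-\pi_S\| < \tfrac{1}{2}+\tfrac{4}{9} = \tfrac{17}{18}$, and then invert by Neumann series, bounding $\|R_{SS}^{-1}-\pi_S\|$ by $\|R_{SS}-\pi_S\|/(1-\|R_{SS}-\pi_S\|) < 18\,\|R_{SS}-\pi_S\|$ (the paper reaches the same quantity via the identity $(\pi_S+T)^{-1}-\pi_S=(\pi_S+T)^{-1}T$ rather than summing the tail, which is an immaterial difference).
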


\begin{proof}
Write $R_{SS} = \pi_S + T$ with $T = R_{SS} - \pi_S$. Then, by
Proposition \ref{p:bdR},
$$ \| T \| = \| R_{SS} - \pi_S \| \leq \| \hat{q} \|_{l^1} \frac{1}{\eps
|v|} + \| \hat{A} \|_{l^1} \frac{14}{\eps} < \frac{1}{2} + \frac{4}{9} =
\frac{17}{18} < 1. $$
Hence, the Neumann series for $R_{SS}^{-1} = (\pi_S + T)^{-1}$ converges
(and is a bounded operator). Furthermore,
\begin{align*}
\| R_{SS}^{-1} - \pi_S \| & = \| (\pi_S + T)^{-1} - \pi_S \| = \| (\pi_S
+ T)^{-1} - (\pi_S + T)^{-1} (\pi_S + T) \| \\
& = \| (\pi_S + T)^{-1} T \| \leq (1-\|T\|)^{-1} \| T \| < 18 \| R_{SS}
- \pi_S \|,
\end{align*}
as was to be shown.
\end{proof}

Lemma \ref{l:invR} says that if $G$ is such that $G' \subset \{ b \in
\Gdual \; | \; |N_b(k)| \geq \eps |v| \}$ the operator $R_{G'G'}$ has a
bounded inverse on $L^2_{G'}$ for $|u| \leq 2|v|$, large $|v|$, and weak
magnetic potential. We are now able to write local defining equations
for $\Fhat(A,V)$ under such conditions.


\section{Local defining equations}

In this section we derive local defining equations for the Fermi curve.
We begin with a simple proposition.

\begin{prop} \label{p:Gprime}
Suppose either {\rm (i)} or {\rm (ii)} or {\rm (iii)} where:

\begin{itemize}
\item[] \begin{itemize}
\item[{\rm (i)}] $G = \{ 0 \}$ and $k \in T_0 \setminus \cup_{b \in
\Gdual \setminus \{0\} }T_b$;

\item[{\rm (ii)}] $G = \{ 0, d \}$ and $k \in T_0 \cap T_d$;

\item[{\rm (iii)}] $G = \varnothing$ and $k \in \C^2 \setminus \cup_{b
\in \Gdual} T_b$.
\end{itemize}
\end{itemize}
Then $G' = \Gdual \setminus G = \{ b \in \Gdual \; | \;\; |N_b(k)| \geq
\eps |v| \}$.
\end{prop}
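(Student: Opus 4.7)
The proposition is a set equality in $\Gdual$, so I would establish both inclusions separately, case by case.

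\emph{Forward inclusion} $\Gdual \setminus G \subseteq \{b \in \Gdual : |N_b(k)| \ge \eps|v|\}$. In each of the three cases I would argue that the hypothesis on $k$ forces $k \notin T_b$ for every $b \in \Gdual \setminus G$, at which point the implication \eqref{bd1} immediately yields $|N_b(k)| \ge \eps|v|$. Cases (i) and (iii) are direct from the explicit hypothesis: $k$ is assumed to avoid every tube $T_b$ with $b \in \Gdual \setminus G$. Case (ii) is the only one requiring an extra ingredient; here I would invoke the triple-empty intersection property recorded at the end of \S\ref{s:tubes}, namely that $\overline{T}_b \cap \overline{T}_{b'} \cap \overline{T}_{b''}$ is empty for any three distinct lattice vectors. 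Specializing to $b' = 0$, $b'' = d$, and an arbitrary $b \in \Gdual \setminus \{0,d\}$, this prevents $k$ from lying in $T_b$ while simultaneously sitting in $T_0 \cap T_d$, completing the case.

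\emph{Reverse inclusion} $\{b \in \Gdual : |N_b(k)| \ge \eps|v|\} \subseteq \Gdual \setminus G$. Equivalently, every $b \in G$ must satisfy $|N_b(k)| < \eps|v|$. In case (iii) this is vacuous since $G = \varnothing$. In cases (i) and (ii), each $b \in G$ does satisfy $k \in T_b$ by the case hypothesis, so \eqref{bd2} supplies the bound $|N_b(k)| \le \eps(2|v| + \eps)$. To tighten this into the strict form $|N_b(k)| < \eps|v|$ required by the proposition, I would revisit the factorization $|N_b(k)| = |v + (u+b)^\perp| \cdot |v - (u+b)^\perp|$ and exploit the definition of $T_b$ at the level of individual factors: tube-membership forces one factor to lie strictly below $\eps$ while, since the two factors sum (as vectors) to $2v$, the complementary factor sits close to $2|v|$. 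A careful local analysis tracking which of the two factors is the small one, combined with the running regime $|v| > 2\Lambda$ and $\eps < \Lambda/6$ under which Proposition \ref{p:Gprime} is actually applied, is what I would use to deliver the sharpened bound.

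\emph{Main obstacle.} The forward inclusion is essentially bookkeeping once the triple-empty intersection property of \S\ref{s:tubes} is in hand. The genuine difficulty lies in the reverse inclusion in cases (i) and (ii): the elementary bound \eqref{bd2} is of order $2\eps|v|$ and so overshoots the target $\eps|v|$ by roughly a factor of two, so the key technical task is to split the product $|N_{b,1}(k) N_{b,2}(k)|$ asymmetrically—bounding the factor singled out by the $T_b$-membership against the much larger complementary factor—rather than applying \eqref{bd2} as a black box.
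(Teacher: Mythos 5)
Your forward-inclusion argument is exactly the paper's argument: in each of (i)--(iii) the hypothesis forces $k \notin T_b$ for every $b \in G'$ (using, in case (ii), the empty triple-intersection property from the end of \S\ref{s:tubes}), and then \eqref{bd1} gives $|N_b(k)| \ge \eps|v|$. That part is fine and aligned with the paper's one-line proof, which invokes precisely \eqref{bd1}.

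The reverse inclusion is where your plan breaks down, and not for lack of care. Fix $b \in G$, so $k \in T_b$ and (say) $|v + (u+b)^\perp| < \eps$. Because the two factors of $|N_b(k)|$ sum, as vectors, to $2v$, the complementary factor $|v - (u+b)^\perp|$ is pinned to the interval $(2|v|-\eps,\,2|v|+\eps)$. When the small factor creeps up toward $\eps$ the product is close to $2\eps|v|$, which is about twice your target; an explicit choice of $u$ with $|v+u^\perp|=\eps-\delta$ for small $\delta>0$ and $|v|$ large exhibits $k$ in $T_b$ (indeed in the interior of all the regions where the proposition is applied) with $|N_b(k)| > \eps|v|$. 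So the asymmetric splitting you propose does not ``deliver the sharpened bound''; it is exactly what shows the bound fails. The equality asserted in Proposition~\ref{p:Gprime} is therefore an overstatement of what is true; what holds, and what the paper's own proof establishes and what is actually consumed downstream (it is the hypothesis of Lemma~\ref{l:invR} and the basis for inverting $R_{G'G'}$), is only the inclusion $G' \subseteq \{b \in \Gdual : |N_b(k)| \ge \eps|v|\}$. Read the proposition as that inclusion and drop the reverse direction entirely.
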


\begin{proof}
The proposition follows easily if we observe that $G' = \Gdual \setminus
G$ and recall from \eqref{bd1} that
\[ k \not\in T_b \quad \Longrightarrow \quad |N_b(k)| \geq \eps |v|.
\qedhere \]
\end{proof}

We now introduce some notation. Let $\mathcal{B}$ be a fundamental cell
for $\Gdual \subset \R^2$ (see \cite[p~310]{RS4}). Then any vector $u
\in \R^2$ can be written as $u = \xi + \mathfrak{u}$ for some $\xi \in
\Gdual$ and $\mathfrak{u} \in \mathcal{B}$. Define
\[ \alpha \coloneqq \sup \{ |\mathfrak{u}| \; | \; \mathfrak{u} \in
\mathcal{B} \}, \qquad R \coloneqq \max \left\{ \alpha, \, 2\Lambda, \,
\| \hat{q} \|_{l^1} \frac{2}{\eps} \right \}, \qquad \mathcal{K}_R
\coloneqq \{ k \in \C^2 \; | \; |v| \leq R \}. \]
We first show that in $\C^2 \setminus \mathcal{K}_R$ the Fermi curve is
contained in the union of $\eps$-tubes about the free Fermi curve.

\begin{prop}[$\Fhat(A,V) \setminus \mathcal{K}_R$ is contained in the
union of $\eps$-tubes]
$$ \Fhat(A,V) \setminus \mathcal{K}_R \subset \bigcup_{b \in \Gdual}
T_b. $$ 
\end{prop}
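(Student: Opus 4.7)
The plan is to prove the contrapositive: if $k \in \C^2 \setminus \mathcal{K}_R$ and $k \notin \bigcup_{b \in \Gdual} T_b$, then $k \notin \Fhat(A,V)$, i.e., $H_k(A,V)$ has trivial kernel in $L^2$. The strategy is to apply the framework of Section~\ref{s:idea} with the empty set $G = \varnothing$.

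First I would reduce to the case $|u| \le \alpha \le R$ using $\Gdual$-periodicity. Both the Fermi curve $\Fhat(A,V)$ and the union $\bigcup_{b \in \Gdual} T_b$ are invariant under the $\Gdual$-translation action $k \mapsto k + d$: for the Fermi curve this is built into the definition via the dual lattice periodicity of band functions, while for the tubes one checks from the definition of $N_{b,\nu}$ that $T_b + d = T_{b-d}$, so taking a union over all $b \in \Gdual$ gives a $\Gdual$-invariant set. Writing $u = \xi + \mathfrak{u}$ with $\xi \in \Gdual$ and $\mathfrak{u} \in \mathcal{B}$, I may replace $k$ by $k - \xi$ without affecting either membership question. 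After this reduction, $|u| \le \alpha \le R < |v|$, so in particular $|u| \le 2|v|$, as needed to invoke Lemma~\ref{l:invR}.

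Next, assuming $k \notin \bigcup_{b \in \Gdual} T_b$, I apply Proposition~\ref{p:Gprime} case (iii) with $G = \varnothing$ to conclude that $G' = \Gdual = \{ b \in \Gdual \,|\, |N_b(k)| \ge \eps|v| \}$. In particular $N_b(k) \ne 0$ for every $b \in \Gdual$, so $\Delta_k$ is boundedly invertible on $L^2$. The hypotheses of Lemma~\ref{l:invR} are satisfied with $S = \Gdual$ (using $|v| > R \ge \max\{2\Lambda,\, 2\|\hat{q}\|_{l^1}/\eps\}$, $|u| \le 2|v|$, and the standing assumption $\|\hat{A}\|_{l^1} < 2\eps/63$, which follows from the $(1+b^2)$-weighted hypothesis since $\hat{A}(0) = 0$). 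Thus $R_{\Gdual\Gdual} = I + (h+q)\Delta_k^{-1}$ is boundedly invertible on $L^2$.

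Finally, unwinding the change of variables from Section~\ref{s:idea}: if $\psi \in L^2$ satisfies $H_k(A,V)\psi = 0$, setting $\varphi \coloneqq \Delta_k \psi$ (which is legitimate since $\Delta_k$ is invertible here), the equation becomes
\[
(I + (h+q)\Delta_k^{-1})\varphi = R_{\Gdual\Gdual}\,\varphi = 0,
\]
so $\varphi = 0$ and hence $\psi = 0$. Therefore $H_k(A,V)$ has trivial kernel and $k \notin \Fhat(A,V)$, proving the contrapositive. There is no single hard step — the whole point of the earlier sections is that by the time Lemma~\ref{l:invR} has been established, this statement is essentially a bookkeeping exercise. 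The only mild subtlety is the periodicity reduction to $|u| \le 2|v|$, since Lemma~\ref{l:invR} requires this bound to derive the estimate $|b|/|N_b(k)| \le 4/\eps$ uniformly in $b$.
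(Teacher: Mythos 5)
Your proposal is correct and follows essentially the same route as the paper: reduce by $\Gdual$-periodicity to $u$ in the fundamental cell so that $|u|\le\alpha<|v|$, take $G=\varnothing$ so that Proposition~\ref{p:Gprime}(iii) gives $G'=\Gdual=\{b:|N_b(k)|\ge\eps|v|\}$, invoke Lemma~\ref{l:invR} to invert $R_{\Gdual\Gdual}$, and conclude via the change of variables $\varphi=\Delta_k\psi$ that the kernel of $H_k(A,V)$ is trivial. Your explicit unwinding of the change of variables and the periodicity justification are just slightly more detailed versions of the paper's ``without loss of generality'' and its appeal to equation~\eqref{eq}.
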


\begin{proof}
Without loss of generality we may consider $k \in \C^2$ with real part
in $\mathcal{B}$. We now prove that any point outside the region
$\mathcal{K}_R$ and outside the union of $\eps$-tubes does not belong to
$\Fhat(A,V)$. Suppose that $k \in \C^2 \setminus (\mathcal{K}_R \cup
\bigcup_{b \in \Gdual} T_b)$ and recall that $k$ is in $\Fhat(A,V)$ if
and only if \eqref{eq} has a nontrivial solution. If we choose $G =
\varnothing$ then $G' = \Gdual$ and this equation reads
$$ R_{G'G'} \varphi_{G'} = 0. $$
By Proposition \ref{p:Gprime}(iii) we have $G' = \Gdual = \{ b \in
\Gdual \; | \; |N_b(k)| \geq \eps |v| \}$. Furthermore, since $u \in
\mathcal{B}$ and $|v| > R \geq \alpha$, it follows that $|u| \leq 
\alpha < |v| < 2|v|$. Consequently, the operator $R_{G'G'}$ has a 
bounded inverse by Lemma \ref{l:invR}. Thus, the only solution of the
above equation is $\varphi_{G'} = 0$. That is, there is no nontrivial
solution of this equation and therefore $k \not\in \Fhat(A,V)$.
\end{proof}

We are left to study the Fermi curve inside the $\eps$-tubes. There are
two types of regions to consider: intersections and non-intersections of
tubes. To study non-intersections we choose $G = \{ 0 \}$ and consider
the region $(T_0 \setminus \cup_{b \in \Gdual \setminus \{0\}} T_b)
\setminus \mathcal{K}_R$. For intersections we take $G = \{0, d\}$ for
some $d \in \Gdual \setminus \{0\}$ and consider $(T_0 \cap T_d)
\setminus \mathcal{K}_R$. Observe that, since the tubes $T_b$ have the
following translational property, $T_b + c = T_{b+c}$ for all $b,c \in
\Gdual$, and the curve $\Fhat(A,V)$ is invariant under the action of
$\Gdual$, there is no loss of generality in considering only the two
regions above. Any other part of the curve can be reached by
translation.

Recall that $G' = \Gdual \setminus G$ and for $d',d'' \in G$ and $i,j
\in \{1,2 \}$ set
\begin{equation} \label{defBC}
\begin{aligned}
B_{ij}^{d'd''}(k;G) & \coloneqq -4 \sum_{b,c \in G'}
\frac{\hat{A}_i(d'-b)}{N_b(k)} (R_{G'G'}^{-1})_{b,c} \,
\hat{A}_j(c-d''), \\
C_{i}^{d'd''}(k;G) & \coloneqq -2 \hat{A}_i(d'-d'') + 2 \sum_{b,c \in
G'} \frac{\hat{q}(d'-b) - 2b \cdot \hat{A}(d'-b)}{N_b(k)}
(R_{G'G'}^{-1})_{b,c} \hat{A}_i(c-d'') \\
& \quad + 2 \sum_{b,c \in G'} \frac{\hat{A}_i(d'-b)}{N_b(k)}
(R_{G'G'}^{-1})_{b,c} (\hat{q}(c-d'') - 2d'' \cdot \hat{A}(c-d'')), \\
C_{0}^{d'd''}(k;G) & \coloneqq \hat{q}(d'-d'') -2 d'' \cdot
\hat{A}(d'-d'') \\
& \quad - \sum_{b,c \in G'} \frac{\hat{q}(d'-b) - 2b \cdot
\hat{A}(d'-b)}{N_b(k)} (R_{G'G'}^{-1})_{b,c} (\hat{q}(c-d'') - 2d''
\cdot \hat{A}(c-d'')).  \end{aligned}
\end{equation}
Then,
\begin{align*}
D_{d',d''}(k;G) & \coloneqq w_{d',d''} - \sum_{b,c \in G'}
\frac{w_{d',b}}{N_b(k)} (R_{G'G'}^{-1})_{b,c} w_{c,d''} \\
& = B_{11}^{d'd''} k_1^2 + B_{22}^{d'd''} k_2^2 + (B_{12}^{d'd''} +
B_{21}^{d'd''}) k_1 k_2 + C_1^{d'd''} k_1 + C_2^{d'd''} k_2 +
C_0^{d'd''}.
\end{align*}
These functions have the following property.

\begin{prop} \label{p:holom}
For $d',d'' \in G$ and $i,j \in \{1,2\}$, the functions
$B_{ij}^{d'd''}$, $C_i^{d'd''}$, $C_0^{d'd''}$ {\rm(}and consequently
$D_{d',d''}${\rm)} are analytic on $(T_0 \setminus \cup_{b \in \Gdual
\setminus \{0\}} T_b) \setminus \mathcal{K}_R$ and $(T_0 \cap T_d)
\setminus \mathcal{K}_R$ for $G=\{0\}$ and $G=\{0,d\}$, respectively.
\end{prop}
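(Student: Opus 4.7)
The plan is to view each quantity in \eqref{defBC} as a $k$-independent additive constant plus sums of pairings of the form $\langle \bar f, M(k) g\rangle_{l^2(G')}$, where $f, g \in l^2(G')$ are $k$-independent vectors built from $\hat A$ and $\hat q$ (with components $\hat A_i(d'-b)$, $\hat A_j(c-d'')$, or $\hat q(d'-b)-2b\cdot\hat A(d'-b)$), and $M(k)$ is a product of $\Delta_k^{-1}\pi_{G'}$ and $R_{G'G'}^{-1}$. Once $M(k)$ is shown to be holomorphic in $k$ as a map into the bounded operators on $L^2_{G'}$, the pairing against fixed vectors is a holomorphic scalar function, and $D_{d',d''}$ inherits holomorphy as a polynomial in $k_1,k_2$ with holomorphic coefficients.

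I would first check $l^2(G')$-summability of the auxiliary vectors, which follows from $\|b^2\hat q(b)\|_{l^1}<\infty$ and $\|(1+b^2)\hat A(b)\|_{l^1(\Gdual\setminus\{0\})}<\infty$ together with the inclusion $l^1\subset l^2$ on countable sets (the weight $1+b^2$ absorbs the $|b|$ growth from the term $2b\cdot\hat A(d'-b)$ after translation). Next, by Proposition \ref{p:Gprime}, $G'\subset\{b\in\Gdual:|N_b(k)|\ge\eps|v|\}$ on the regions in question, so $\Delta_k^{-1}\pi_{G'}$ is a diagonal operator with holomorphic entries $1/N_b(k)$ and operator-norm bound $1/(\eps|v|)$, hence norm-holomorphic by truncation and the Weierstrass theorem. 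The operator $T(k):=R_{G'G'}-\pi_{G'}$ likewise has holomorphic matrix entries by \eqref{RBC}, and Proposition \ref{p:bdR} combined with Lemma \ref{l:invR} gives the uniform-on-compacta bound $\|T(k)\|<17/18$. Consequently the Neumann series $R_{G'G'}^{-1}=\sum_{n\ge 0}(-T(k))^n$ converges in operator norm uniformly on compact subsets, and each partial sum is a finite composition of holomorphic operator-valued maps, so by the Weierstrass theorem applied in the Banach space of bounded operators on $L^2_{G'}$ the limit is holomorphic.

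With holomorphy of the relevant $M(k)$ (products of $\Delta_k^{-1}\pi_{G'}$ and $R_{G'G'}^{-1}$ appearing in the $B$ and $C$ expressions) in hand, the rest is bookkeeping: identifying each summand in \eqref{defBC} with an expression of the form $\langle\bar f,M(k)g\rangle$ for appropriate fixed $l^2(G')$-vectors $f,g$, and collecting the $k$-independent additive pieces $-2\hat A_i(d'-d'')$ and $\hat q(d'-d'')-2d''\cdot\hat A(d'-d'')$. The main obstacle is exactly the uniformity of the Neumann bound: pointwise holomorphy of the matrix entries of $R_{G'G'}^{-1}$ does not by itself justify interchanging the infinite Fourier sums in \eqref{defBC} with $k$-differentiation, and it is the uniform estimate $\|T(k)\|<17/18$ from Lemma \ref{l:invR} that makes the Weierstrass convergence argument in the operator-norm topology applicable.
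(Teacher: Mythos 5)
Your proof is correct and is in substance the argument the paper sketches for Proposition \ref{p:holom}: analyticity of the coefficients comes from uniform (operator-norm) convergence of the series built from $\Delta_k^{-1}\pi_{G'}$ and the Neumann series for $R_{G'G'}^{-1}$, with the bound of Lemma \ref{l:invR} doing the work, followed by pairing against fixed summable vectors formed from $\hat{A}$ and $\hat{q}$. The only loose end is that you establish norm-holomorphy by truncation only for $\Delta_k^{-1}\pi_{G'}$, while for $T(k)=R_{G'G'}-\pi_{G'}$ you cite only entrywise holomorphy; to make the Neumann partial sums genuinely holomorphic operator-valued maps you should upgrade $T(k)$ in the same way (entrywise holomorphy from \eqref{RBCm} together with the uniform bound of Proposition \ref{p:bdR} gives weak, hence norm, holomorphy), which is exactly the principle you already invoke in your closing paragraph.
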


\begin{proof}[Sketch of the proof]
It suffices to show that $B_{ij}^{d'd''}$, $C_i^{d'd''}$ and
$C_0^{d'd''}$ are analytic functions. This property follows from the
fact that all the series involved in the definition of these functions
are uniformly convergent sums of analytic functions. The argument is
similar for all cases. See \cite{deO} for details.
\end{proof}

Using the above functions we can write (local) defining equations for
the Fermi curve.

\begin{lem}[Local defining equations for $\Fhat(A,V)$] \label{l:eqn}
$\,$
\begin{itemize}
\item[{\rm (i)}] Let $G = \{0\}$ and $k \in (T_0 \setminus \cup_{b \in
\Gdual \setminus \{0\}} T_b) \setminus \mathcal{K}_R$. Then $k \in
\Fhat(A,V)$ if and only if
$$ N_0(k) + D_{0,0}(k) = 0. $$

\item[{\rm (ii)}] Let $G=\{0,d\}$ and $k \in (T_0 \cap T_d) \setminus
\mathcal{K}_R$. Then $k \in \Fhat(A,V)$ if and only if
$$ (N_0(k) + D_{0,0}(k))(N_d(k) + D_{d,d}(k)) - D_{0,d}(k)D_{d,0}(k) =
0. $$ 
\end{itemize}
\end{lem}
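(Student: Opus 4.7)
The plan is to apply the Feshbach-type reduction outlined in Section \ref{s:idea} and then unpack the resulting finite-dimensional determinant for the two specific choices of $G$. I would proceed as follows.

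First, I recall that $k \in \Fhat(A,V)$ if and only if equation \eqref{eq} admits a nontrivial solution $\varphi = \varphi_G + \varphi_{G'} \in L^2_G \oplus L^2_{G'}$, and that this equation splits into the projected system \eqref{projG'}--\eqref{projG}. Provided $R_{G'G'}$ has a bounded inverse on $L^2_{G'}$, equation \eqref{projG'} uniquely determines $\varphi_{G'}$ from $\varphi_G$, and substituting into \eqref{projG} reduces the whole problem to the $|G| \times |G|$ determinant equation \eqref{detGxG}. Recognizing that the bracketed matrix entry in \eqref{detGxG} equals $N_{d'}(k)\delta_{d',d''} + D_{d',d''}(k;G)$ with $D_{d',d''}$ as defined immediately after \eqref{defBC}, I would then simply expand. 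For $G = \{0\}$ the $1 \times 1$ determinant gives $N_0(k) + D_{0,0}(k) = 0$, which is (i); for $G = \{0,d\}$ the $2 \times 2$ determinant gives $(N_0(k) + D_{0,0}(k))(N_d(k) + D_{d,d}(k)) - D_{0,d}(k) D_{d,0}(k) = 0$, which is (ii).

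The only step that requires genuine verification is the invertibility of $R_{G'G'}$ in the two regions, and for this I plan to invoke Lemma \ref{l:invR} with $S = G'$. The inclusion $G' \subset \{b \in \Gdual : |N_b(k)| \geq \eps|v|\}$ is supplied directly by parts (i) and (ii) of Proposition \ref{p:Gprime}. The requirement $|v| > \max\{2\Lambda,\, \|\hat{q}\|_{l^1}\cdot 2/\eps\}$ follows from $k \notin \mathcal{K}_R$ together with the definition of $R$, after observing that $\|\hat{q}\|_{l^1} < \infty$ holds under the moment hypothesis $\|b^2 \hat{q}(b)\|_{l^1(\Gdual)} < \infty$ in Theorems \ref{t:reg}--\ref{t:hand}. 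The constraint $|u| \leq 2|v|$ follows from $k \in T_0$: by definition $|v + (-1)^\nu u^\perp| < \eps$ for some $\nu$, whence $|u| = |u^\perp| \leq |v| + \eps < 2|v|$ since $|v| > R \geq 2\Lambda > \eps$. Finally, the magnetic smallness $\|\hat{A}\|_{l^1} < 2\eps/63$ is assumed.

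In short, the main obstacle is bookkeeping rather than a new estimate: the substantive analytic input (bounded inverse of $R_{G'G'}$) is already packaged in Lemma \ref{l:invR}, and the algebraic passage from \eqref{projG'}--\eqref{projG} to the finite determinant was laid out in Section \ref{s:idea}. The only potential pitfall I anticipate is matching all four hypotheses of Lemma \ref{l:invR} to the two geometric regions considered here, which as indicated above reduces to the two short inequalities $|u| \leq 2|v|$ and $|v| > R$ derived from the standing assumptions $k \in T_0$ and $k \notin \mathcal{K}_R$.
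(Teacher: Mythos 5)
Your proposal is correct and follows essentially the same route as the paper: verify the hypotheses of Lemma \ref{l:invR} for $S=G'$ via Proposition \ref{p:Gprime} and the elementary bounds $|u|<2|v|$ (from $k\in T_0$) and $|v|>R$ (from $k\notin\mathcal{K}_R$), then read off the $1\times1$ and $2\times2$ cases of the determinant equation \eqref{detGxG}, whose entries are exactly $N_{d'}\delta_{d',d''}+D_{d',d''}$. No gaps.
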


\begin{proof}
We only prove part (i). The proof of part (ii) is similar. First, by
Proposition \ref{p:Gprime}(i) we have $G' = \Gdual \setminus \{0\} = \{
b \in \Gdual \; | \; |N_b(k)| \geq \eps |v| \}$. Furthermore, since $k
\in T_0$, we have either $|v-u^\perp|<\eps$ or $|v+u^\perp|<\eps$. In
either case this implies $|u| < \eps + |v| < 2\Lambda + |v| < 2|v|$.
Hence, the operator $R_{G'G'}$ has a bounded inverse by Lemma
\ref{l:invR}. Thus, in the region under consideration $\Fhat(A,V)$ is
given by \eqref{detGxG}:
$$ 0 = N_0(k) + w_{0,0} - \sum_{b,c \in G'} \frac{w_{0,b}}{N_b(k)}
(R_{G'G'}^{-1})_{b,c} w_{c,0} = N_0(k) + D_{0,0}(k). $$
This is the desired expression.
\end{proof}

To study in detail the defining equations above we shall estimate the
asymptotic behaviour of the functions $B_{ij}^{d'd''}$, $C_i^{d'd''}$,
$C_0^{d'd''}$ and $D_{d',d''}$ for large $|v|$. (We sometimes refer to
these functions as coefficients.) Since all these functions have a
similar form it is convenient to prove these estimates in a general
setting and specialize them later. This is the contents of \S
\ref{s:coeff} and \S \ref{s:der}. We next introduce a change of
variables in $\C^2$ that will be useful for proving these bounds.


\section{Change of coordinates} \label{s:change}

Define the (complementary) index $\nu'$ as $\nu' \coloneqq \nu -
(-1)^\nu$.  Observe that $\nu' = 2$ if $\nu = 1$, $\nu' = 1$ if $\nu =
2$, and $(-1)^\nu = -(-1)^{\nu'}$.
%
%
%
The following change of coordinates in $\C^2$ will be useful for our
analysis. For $\nu \in \{1,2\}$ and $d',d'' \in G$ define the functions
$w_{\nu,d'}, \, z_{\nu,d'} : \C^2 \to \C$ as
\begin{equation} \label{change}
\begin{aligned}
w_{\nu,d'}(k) & \coloneqq k_1+d'_1 + i(-1)^\nu (k_2+d'_2), \\
z_{\nu,d'}(k) & \coloneqq k_1+d'_1 - i(-1)^\nu (k_2+d'_2).
\end{aligned}
\end{equation}
Observe that, the transformation $(k_1,k_2) \mapsto (w_{\nu,d'},
z_{\nu,d'})$ is just a translation composed with a rotation.
Furthermore, if $k \in T_\nu(d') \setminus \mathcal{K}_R$ then
$|w_{\nu,d'}(k)|$ is ``small'' and $|z_{\nu,d'}(k)|$ is ``large''.
Indeed, $|w_{\nu,d'}(k)| = |N_{d',\nu}(k)| < \eps$ and $|z_{\nu,d'}(k)|
= |N_{d',\nu'}(k)| \geq |v| > R$. Define also
\begin{align*}
J^{d'd''}_\nu & \coloneqq \tfrac{1}{4} (B_{11}^{d'd''} - B_{22}^{d'd''}
+ i(-1)^\nu (B_{12}^{d'd''} + B_{21}^{d'd''})), \\
K^{d'd''} & \coloneqq \tfrac{1}{2} (B_{11}^{d'd''} + B_{22}^{d'd''}),\\
L^{d'd''}_\nu & \coloneqq -d_1' B_{11}^{d'd''} - i(-1)^\nu d_2'
B_{22}^{d'd''} - \tfrac{1}{2}(d_2' + i(-1)^\nu d_1') (B_{12}^{d'd''} 
+ B_{21}^{d'd''}) \\
& \qquad + \tfrac{1}{2}( C_1^{d'd''} + i(-1)^\nu C_2^{d'd''}), \\
M^{d'd''} & \coloneqq d_1'^2 B_{11}^{d'd''} + d_2'^2 B_{22}^{d'd''} 
+ d_1' d_2' (B_{12}^{d'd''} + B_{21}^{d'd''}) -d_1' C_1^{d'd''} 
- d_2' C_2^{d'd''} + C_0^{d'd''},
\end{align*}
where $J^{d'd''}_\nu$, $K^{d'd''}$, $L^{d'd''}_\nu$ and $M^{d'd''}$ are
functions of $k \in \C^2$ that also depend on the choice of $G \subset
\Gdual$. Using these functions we can express $N_{d'}(k) + D_{d',d'}(k)$
and $D_{d',d''}(k)$ as follows.

\begin{prop} \label{p:newv} 
Let $\nu \in \{1,2\}$ and let $d',d'' \in G$. Then,
\begin{align*}
N_{d'} + D_{d',d'} & = J^{d'd'}_{\nu'} w_{\nu,d'}^2 + J^{d'd'}_\nu
z_{\nu,d'}^2 + (1 + K^{d'd'}) w_{\nu,d'} z_{\nu,d'} + L^{d'd'}_{\nu'}
w_{\nu,d'} + L^{d'd'}_\nu z_{\nu,d'} + M^{d'd'}, \\
D_{d',d''} & = J^{d'd''}_{\nu'} w_{\nu,d'}^2 + J^{d'd''}_\nu
z_{\nu,d'}^2 + K^{d'd''} w_{\nu,d'} z_{\nu,d'} + L^{d'd''}_{\nu'}
w_{\nu,d'} + L^{d'd''}_\nu z_{\nu,d'} + M^{d'd''}.
\end{align*}
Furthermore,
{\allowdisplaybreaks
\begin{align*}
J^{d'd''}_\nu(k) & = - \sum_{b,c \in G'} \frac{(1,-i(-1)^\nu) \cdot
\hat{A}(d'-b)}{N_b(k)} (R_{G'G'}^{-1})_{b,c} \, (1,-i(-1)^\nu) \cdot
\hat{A}(c-d''), \\
K^{d'd''}(k) & = - 2 \sum_{b,c \in G'} \frac{\hat{A}(d'-b) \cdot
\hat{A}(c-d'')}{N_b(k)} (R_{G'G'}^{-1})_{b,c}, \\ L^{d'd''}_\nu(k) & =
\sum_{b,c \in G'} \frac{\hat{q}(d'-b) + 2(d'-b) \cdot
\hat{A}(d'-b)}{N_b(k)} (R_{G'G'}^{-1})_{b,c} (1,i(-1)^\nu) \cdot
\hat{A}(c-d'')\\
& \quad + \sum_{b,c \in G'} \frac{(1,i(-1)^\nu) \cdot
\hat{A}(d'-b)}{N_b(k)} (R_{G'G'}^{-1})_{b,c} \, (\hat{q}(c-d'') + 2
(d'-d'') \cdot \hat{A}(c-d'')) \\ & \quad - (1,i(-1)^\nu) \cdot
\hat{A}(d'-d''), \\
M^{d'd''}(k) & = - \sum_{b,c \in G'} \frac{\hat{q}(d'-b) + 2(d'-b) \cdot
\hat{A}(d'-b)}{N_b(k)} (R_{G'G'}^{-1})_{b,c} \, \hat{q}(c-d'')\\ & \quad
+ \hat{q}(d'-d'') + 2 (d'-d'') \cdot \hat{A}(d'-d'').
\end{align*}
}
\end{prop}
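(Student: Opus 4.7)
The proof is essentially a two-stage algebraic verification: first an identity in $\mathbb{C}[w,z]$ that rewrites the quadratic polynomial $D_{d',d''}$ in the rotated variables, then a substitution of the defining formulas \eqref{defBC} to identify the coefficients as the claimed sums in $\hat{A}$ and $\hat{q}$.

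Stage one. Since $w_{\nu,d'}=N_{d',\nu}$ and $z_{\nu,d'}=N_{d',\nu'}$, and $N_{d'}=N_{d',1}N_{d',2}$, I immediately get $N_{d'}=w_{\nu,d'}\,z_{\nu,d'}$. Inverting the coordinate change gives $k_1+d_1' = \tfrac{1}{2}(w+z)$ and $k_2+d_2' = -\tfrac{i(-1)^\nu}{2}(w-z)$, where I abbreviate $w=w_{\nu,d'}$, $z=z_{\nu,d'}$. Substituting $k_j = (k_j+d_j')-d_j'$ into the polynomial form of $D_{d',d''}$ from the end of Section~8 and expanding, the constant term collects exactly to $M^{d'd''}$; the coefficients of $w^2$, $z^2$, and $wz$ reduce to $\tfrac{1}{4}(B_{11}-B_{22}-i(-1)^\nu(B_{12}+B_{21}))$, $\tfrac{1}{4}(B_{11}-B_{22}+i(-1)^\nu(B_{12}+B_{21}))$, and $\tfrac{1}{2}(B_{11}+B_{22})$ respectively, which are $J^{d'd''}_{\nu'}$, $J^{d'd''}_{\nu}$, $K^{d'd''}$ (using $(-1)^{\nu'}=-(-1)^\nu$); the coefficients of $w$ and $z$ collect to $L^{d'd''}_{\nu'}$ and $L^{d'd''}_{\nu}$ by the same kind of bookkeeping. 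Adding $wz$ in the diagonal case $d''=d'$ yields the first displayed identity of the proposition.

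Stage two. Substitute the formulas \eqref{defBC} for $B_{ij}$, $C_i$, $C_0$ into the expressions for $J$, $K$, $L$, $M$. The key algebraic identity is
\[
(x_1+i(-1)^\nu x_2)(y_1+i(-1)^\nu y_2)=(x_1y_1-x_2y_2)+i(-1)^\nu(x_1y_2+x_2y_1),
\]
which allows combinations such as $S_{11}-S_{22}+i(-1)^\nu(S_{12}+S_{21})$ appearing in $J^{d'd''}_\nu$ to be factored into a product of two complex linear forms of the type $(1,i(-1)^\nu)\cdot\hat{A}$. The same factorization applied to the cross-terms between $\hat{q}$, $\hat{A}$, $d'$ and $d''$ converts the expression for $L^{d'd''}_\nu$ into the two-term sum stated in the proposition, and the $B_{11}+B_{22}$ combination in $K^{d'd''}$ directly yields the ordinary dot product $\hat{A}(d'-b)\cdot\hat{A}(c-d'')$. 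The coefficient $M^{d'd''}$ is just the constant part $C_0^{d'd''}-d_1'C_1^{d'd''}-d_2'C_2^{d'd''}+\ldots$ after using $d'\cdot\hat{A}(d'-b) - b\cdot\hat{A}(d'-b) = (d'-b)\cdot\hat{A}(d'-b)$ to absorb the translations, and analogously on the right-hand side.

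There is no real obstacle beyond bookkeeping. The only places where care is required are: keeping the complementary index $\nu'=\nu-(-1)^\nu$ straight (since the coefficient of $w^2$ carries $J_{\nu'}$, not $J_\nu$); propagating the shifts $k_j\mapsto k_j+d_j'$ uniformly through the linear parts to arrive at the $(d'-b)\cdot\hat{A}$ and $(d'-d'')\cdot\hat{A}$ terms; and correctly splitting the $\hat{q}-2\,\cdot\,\hat{A}$ factors of \eqref{defBC} between the $L$ formula (which inherits one such factor) and the $M$ formula (which inherits both). Because the two stages are purely polynomial and the identities involved are self-contained, the proof requires no input beyond the definitions already in play.
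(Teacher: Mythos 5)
Your proposal is correct and follows essentially the same route as the paper: first invert the affine change of variables \eqref{change}, substitute into the quadratic form $D_{d',d''}=B_{11}k_1^2+B_{22}k_2^2+(B_{12}+B_{21})k_1k_2+C_1k_1+C_2k_2+C_0$ and match coefficients of $w^2$, $z^2$, $wz$, $w$, $z$, $1$ against the definitions of $J^{d'd''}_{\nu'}$, $J^{d'd''}_\nu$, $K^{d'd''}$, $L^{d'd''}_{\nu'}$, $L^{d'd''}_\nu$, $M^{d'd''}$ (together with $N_{d'}=w_{\nu,d'}z_{\nu,d'}$), and then substitute \eqref{defBC} and factor the resulting bilinear combinations of $\hat A_1,\hat A_2$ via the same product identity the paper uses in its shorthand computation of $J^{d'd''}_\nu$ (the paper, like you, carries out only the $J$ case in detail and treats $K$, $L$, $M$ as analogous bookkeeping).
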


\begin{proof}
To simplify the notation write $w = w_{\nu,d'}$, $z = z_{\nu,d'}$,
$B_{ij} = B_{ij}^{d'd''}$ and $C_i = C_i^{d'd''}$. First observe that,
in view of \eqref{change},
$$ N_{d'} = (k_1+d_1' + i(-1)^\nu(k_2+d_2')) (k_1+d_1' -
i(-1)^\nu(k_2+d_2')) = wz. $$
Furthermore,
\begin{align*}
k_1 & = \tfrac{1}{2}(w+z)-d_1', \\
k_2 & = \tfrac{(-1)^\nu}{2i}(w-z)-d_2', \\
k_1^2 & = \tfrac{1}{4} (w^2+z^2) + \tfrac{1}{2}wz -d_1'(w+z)+d_1'^2,\\
k_2^2 & = -\tfrac{1}{4} (w^2+z^2) + \tfrac{1}{2}wz + i(-1)^\nu d_2'(w-z)
+ d_2'^2, \\
k_1 k_2 & = \tfrac{i(-1)^\nu}{4} (z^2 - w^2) - \tfrac{1}{2} (d_2' -
i(-1)^\nu d_1') w - \tfrac{1}{2} (d_2' + i(-1)^\nu d_1') + d_1'd_2'.
\end{align*}
Hence,
\begin{align*}
D_{d',d''} & = B_{11} k_1^2 + B_{22} k_2^2 + (B_{12} + B_{21}) k_1 k_2 +
C_1 k_1 + C_2 k_2 + C_0 \\
& = \tfrac{1}{4}(B_{11} - B_{22} -i(-1)^\nu(B_{12} +B_{21})) w^2 +
\tfrac{1}{4} (B_{11} - B_{22} + i(-1)^\nu(B_{12} + B_{21})) z^2\\
& \quad + \big( -d_1' B_{11} + i(-1)^\nu d_2' B_{22} - \tfrac{1}{2}(d_2'
- i(-1)^\nu d_1')(B_{12} + B_{21}) + \tfrac{1}{2} (C_1 - i(-1)^\nu C_2)
\big ) w \\
& \quad + \big( -d_1' B_{11} + i(-1)^\nu d_2' B_{22} - \tfrac{1}{2}(d_2'
+ i(-1)^\nu d_1')(B_{12} + B_{21}) + \tfrac{1}{2} (C_1 + i(-1)^\nu C_2)
\big) z \\
& \quad + d_1'^2 B_{11} + d_2'^2 B_{22} + d_1' d_2' (B_{12} + B_{21})
-d_1' C_1 - d_2' C_2 + C_0 + \tfrac{1}{2}(B_{11}+B_{22}) wz \\
& = J^{d'd''}_{\nu'} w^2 + J^{d'd''}_\nu z^2 + K^{d'd''} w z +
L^{d'd''}_{\nu'} w + L^{d'd''}_\nu z + M^{d'd''}.
\end{align*}
This proves the first claim. Consequently,
$$ N_{d'} + D_{d',d'} = J^{d'd'}_{\nu'} w^2 + J^{d'd'}_\nu z^2 +
(1+K^{d'd'}) w z + L^{d'd'}_{\nu'} w + L^{d'd'}_\nu z + M^{d'd'}, $$
which proves the second claim.

Now, again to simplify the notation write
$$ f g = \sum_{b,c \in G'} \frac{\hat{f}(b,d')}{N_b(k)}
(R_{G'G'}^{-1})_{b,c} \, \hat{g}(c,d''), $$
that is, to represent sums of this form suppress the summation and the
other factors. Note that $f g \neq g f$ according to this notation.
Then, substituting \eqref{defBC} into the definition of $J^{d'd''}_\nu$
we have
\begin{align*}
J^{d',d''}_\nu & = \tfrac{1}{4} (B_{11} - B_{22} + i(-1)^\nu (B_{12} +
B_{21})) = -A_1A_1  + A_2A_2 - i(-1)^\nu (A_1A_2+A_2A_1) \\
& = (A_1 -i(-1)^\nu A_2) ( -A_1 + i(-1)^\nu A_2) = - ((1,-i(-1)^\nu)
\cdot A)\, ((1,-i(-1)^\nu) \cdot A ) \\
& = - \sum_{b,c \in G'} \frac{(1,-i(-1)^\nu) \cdot
\hat{A}(d'-b)}{N_b(k)} (R_{G'G'}^{-1})_{b,c} \, (1,-i(-1)^\nu) \cdot
\hat{A}(c-d'').
\end{align*}
Similarly, substituting \eqref{defBC} into the definitions of
$K^{d'd''}$, $L^{d'd''}_\nu$ and $M^{d'd''}$ we derive the other
expressions.
\end{proof}


\section{Asymptotics for the coefficients} \label{s:coeff}

Let $f$ and $g$ be functions on $\Gdual$ and for $k \in \C^2$ and
$d',d'' \in G$ set
\begin{equation} \label{Phi}
\Phi_{d',d''}(k;G) \coloneqq \sum_{b,c \in G'} \frac{f(d'-b)}{N_b(k)}
(R_{G'G'}^{-1})_{b,c} \, g(c-d'').
\end{equation}
In this section we study the asymptotic behaviour of the function
$\Phi_{d',d''}(k)$ for $k$ in the union of $\eps$-tubes with large
$|v|$. Here we only give the statements. See Appendix \ref{s:app2} for
the proofs. Reset the constant $R$ as
\begin{equation} \label{R1}
R \coloneqq \max \left\{1, \, \alpha, \, 2\Lambda, \, 140 \| \hat{A}
\|_{l^1}, \, \| (1+b^2)\hat{q}(b) \|_{l^1} \frac{4}{\eps} \right\},
\end{equation}
and make the following hypothesis.
\begin{hyp} \label{h:small}
$$ \| b^2 \hat{q}(b) \|_{l^1} < \infty \qquad \text{and} \qquad \|
(1+b^2) \hat{A}(b) \|_{l^1} < \frac{2}{63} \eps. $$
\end{hyp}

Our first lemma provides and expansion for $\Phi_{d',d'}(k)$ ``in powers
of $1/|z_{\nu,d'}(k)|$''.

\begin{lem}[Asymptotics for $\Phi_{d',d'}(k)$] \label{l:b1}
Under Hypothesis \ref{h:small}, let $\nu \in \{1,2\}$ and let $f$ and
$g$ be functions on $\Gdual$ with $\| b^2 f(b) \|_{l^1} < \infty$ and
$\| b^2 g(b) \|_{l^1} < \infty$. Suppose either {\rm (i)} or {\rm (ii)}
where:
\begin{itemize}
\item[]
\begin{itemize}
\item[\rm (i)] $G=\{0\}$ and $k \in (T_\nu(0) \setminus \cup_{b \in G'}
T_b) \setminus \mathcal{K}_R$;

\item[\rm (ii)] $G=\{0, d\}$ and $k \in (T_\nu(0) \cap T_{\nu'}(d))
\setminus \mathcal{K}_R$.
\end{itemize}
\end{itemize}
Then, for $(\mu,d') = (\nu,0)$ if {\rm (i)} or $(\mu,d') \in \{ (\nu,0),
(\nu',d) \}$ if {\rm (ii)},
$$ \Phi_{d',d'}(k) = \alpha_{\mu,d'}^{(1)}(k) + \alpha_{\mu,d'}^{(2)}(k)
+ \alpha_{\mu,d'}^{(3)}(k), $$
where for $1 \le j \le 2$,
$$ |\alpha_{\mu,d'}^{(j)}(k)| \leq \frac{C_j}{(2|z_{\mu,d'}(k)|-R)^j}
\qquad \text{and} \qquad |\alpha_{\mu,d'}^{(3)}(k)| \leq
\frac{C_3}{|z_{\mu,d'}(k)| R^2}, $$
where $C_j = C_{j;\Lambda,A,q,f,g}$ and $C_3 =
C_{3;\eps,\Lambda,A,q,f,g}$ are constants. Furthermore, the functions
$\alpha_{\mu,d'}^{(j)}(k)$ are given by \eqref{alpha} and \eqref{alpha3}
and are analytic in the region under consideration.
\end{lem}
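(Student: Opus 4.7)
The plan is to combine the Neumann expansion for $R_{G'G'}^{-1}$ with a Taylor-like expansion of $1/N_b(k)$ in inverse powers of $z_{\mu,d'}(k)$.

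First I would verify that Lemma \ref{l:invR} applies with $S = G'$. Under Hypothesis \ref{h:small} and the definition \eqref{R1} of $R$, the assumption $|v|>R$ yields $|v| > 2\Lambda$ and $|v| > 2\|\hat q\|_{l^1}/\eps$; upon replacing $k$ by its $\Gdual$-translate with real part in $\mathcal{B}$, one also has $|u| \le \alpha \le R < |v|$. For every $b \in G'$ the bound $|N_b(k)| \ge \eps|v|$ follows from Proposition \ref{p:Nb}(b)--(c) (together with $\Lambda/2 > \eps$ in case (ii)). Thus $R_{G'G'}^{-1} = \pi_{G'} + T$ where $T := R_{G'G'}^{-1} - \pi_{G'}$ satisfies $\|T\| \le 18\,\|R_{G'G'}-\pi_{G'}\|$, controlled via Proposition \ref{p:bdR}. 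Correspondingly $\Phi_{d',d'}(k) = \Phi^{\mathrm{diag}}(k) + \Phi^T(k)$, where
\[
\Phi^{\mathrm{diag}}(k) := \sum_{b \in G'} \frac{f(d'-b)\,g(b-d')}{N_b(k)}, \qquad \Phi^T(k) := \sum_{b,c \in G'} \frac{f(d'-b)\,T_{b,c}\,g(c-d')}{N_b(k)}.
\]

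Next, using the $(w,z):=(w_{\mu,d'},z_{\mu,d'})$ coordinates of \S\ref{s:change}, set $\beta := b-d'$, $\beta_\mu := \beta_1 + i(-1)^\mu\beta_2$, $\overline{\beta_\mu} := \beta_1 - i(-1)^\mu\beta_2$. A direct computation gives $w+\beta_\mu = N_{b,\mu}(k)$ and $z+\overline{\beta_\mu} = N_{b,\mu'}(k)$, so $N_b(k) = (w+\beta_\mu)(z+\overline{\beta_\mu})$. I would then apply the identity
\[
\frac{1}{N_b(k)} = \frac{1}{z(w+\beta_\mu)} - \frac{\overline{\beta_\mu}}{z^{2}(w+\beta_\mu)} + \frac{\overline{\beta_\mu}^{\,2}}{z^{2}(w+\beta_\mu)(z+\overline{\beta_\mu})}
\]
to split $\Phi^{\mathrm{diag}}$ into three sums, which I define to be $\alpha^{(1)}_{\mu,d'}$, $\alpha^{(2)}_{\mu,d'}$, and the first contribution to $\alpha^{(3)}_{\mu,d'}$; the remaining contribution to $\alpha^{(3)}_{\mu,d'}$ is $\Phi^T$.

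The main obstacle is obtaining the three bounds uniformly in $k$ with the stated dependencies. For $\alpha^{(1)}$ and $\alpha^{(2)}$, the estimate $|w+\beta_\mu|\ge|\beta|/2\ge\Lambda$ (valid since $|\beta|\ge 2\Lambda$ and $|w|\le\eps<\Lambda/6$) together with the $|\overline{\beta_\mu}|=|\beta|$ factor in the $\alpha^{(2)}$ numerator makes the sums absolutely convergent under $\|b^2 f\|_{l^1},\|b^2 g\|_{l^1}<\infty$; the refinement from $|z|^{-j}$ to $(2|z|-R)^{-j}$ is obtained by absorbing next-order $|z|^{-j-1}$ corrections using $|z|\ge R$. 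For the remainder term in $\alpha^{(3)}$, the key identity $|w+\beta_\mu|\,|z+\overline{\beta_\mu}| = |N_b(k)| \ge \eps|v|$ combined with $|z|,|v|\ge R$ yields $|z|^{2}\,\eps|v|\ge\eps|z|R^{2}$, producing the claimed $1/(|z|R^2)$ decay (with $1/\eps$ absorbed into $C_3$). For $\Phi^T$, Cauchy--Schwarz in $l^2(G')$ gives $|\Phi^T|\le\|f(d'-\cdot)/N_\cdot(k)\|_{l^2}\,\|T\|\,\|g\|_{l^2}\le C/(\eps|v|)$; a direct computation in $(w,z)$-coordinates shows $|z|\le 3|v|$ in both regions (i)--(ii), which upgrades the bound to $\le C_3/(|z|R^2)$ after absorbing $R^2/\eps$ into $C_3$. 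Analyticity of each $\alpha^{(j)}_{\mu,d'}$ on the relevant region then follows from the uniform absolute convergence of its defining series by Weierstrass' theorem.
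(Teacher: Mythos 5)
There is a genuine gap, and it is exactly at the point this lemma is designed to handle: the off-diagonal part of $R_{G'G'}^{-1}$. You set $T := R_{G'G'}^{-1}-\pi_{G'}$ and put all of $\Phi^T$ into $\alpha^{(3)}_{\mu,d'}$, bounding it by $C/(\eps|v|)$ and then ``absorbing $R^2/\eps$ into $C_3$.'' This fails for two related reasons. First, by Proposition \ref{p:bdR} one only has $\|R_{G'G'}-\pi_{G'}\| \leq \|\hat{q}\|_{l^1}/(\eps|v|) + 14\|\hat{A}\|_{l^1}/\eps$: when $A\neq 0$ the second term does not decay in $|v|$ (the operator $2k\cdot A\,\Delta_k^{-1}$ has numerators proportional to $z_{\mu,d'}$), so $\|T\|$ is only $O(1)$ and $\Phi^T$ is genuinely of size $1/|z_{\mu,d'}|$ --- the same order as $\alpha^{(1)}$, not a remainder. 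Second, the bound $C_3/(|z|R^2)$ is only meaningful if $C_3$ does not grow with the cutoff: the lemma is invoked later with $\rho\geq R$ in place of $R$ (see \eqref{coef}--\eqref{decg} and \eqref{coef2}), and the proofs of Theorems \ref{t:reg} and \ref{t:hand} need $|\beta_2^{(3)}(k)\,z(k)| \leq C/\rho \to 0$; with your $C_3 \propto R^2$ (i.e.\ $\rho^2$) this term is merely $O(1)$, the function $g$ in \eqref{defg} is no longer small, and the implicit-function/Morse-lemma steps collapse.

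The paper's proof avoids this by never treating the Neumann series $\sum_j T_{G'G'}^j$ as a remainder: after localizing to finite blocks ($G'_1,G'_2,G'_3,G'_4$, with the tails controlled via the weighted decay estimate of Proposition \ref{p:decayT}), it splits $T_{b,c}=X_{b,c}+Y_{b,c}$ as in \eqref{Xbc}--\eqref{Ybc}, where $Y$ carries the $z_{\mu,d'}$-proportional magnetic part and is small only through $\|\hat{A}\|_{l^1}$, not through $1/|z|$. The whole geometric series in $Y_{33}$ is then resummed into $S=(I-Y_{33})^{-1}$ and placed inside $\alpha^{(1)}$ (formula \eqref{alpha}); the terms with exactly one $X_{33}$ give $\alpha^{(2)}$, and only the far-lattice tails and multi-$X$ terms go into $\alpha^{(3)}$, which is how the $1/(|z|R^2)$ bound with $R$-independent constant is obtained. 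This structure is also needed later: the refinement of Lemma \ref{l:b1b}, the identity \eqref{beta21}, and the constant $\beta_2^{(1,0)}$ in \eqref{cte1} are all computed from the specific $\alpha^{(1)}$ of \eqref{alpha}, and the smallness $|\beta_2^{(1)}|\leq \eps/(900|z|)$ in \eqref{as1} uses that it is quadratic in $\hat{A}$ with the resummed $S$ in the middle. Your partial-fraction expansion of $1/N_b(k)$ for the diagonal term is fine as far as it goes (and would essentially suffice if $A=0$), but without the $X/Y$ splitting and the resummation of the $z$-growing magnetic contributions, neither the stated bound on $\alpha^{(3)}$ nor the identification of $\alpha^{(1)}$ with \eqref{alpha} can be recovered.
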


Below we have more information about the function
$\alpha_{\mu,d'}^{(1)}(k)$.

\begin{lem}[Asymptotics for $\alpha_{\mu,d'}^{(1)}(k)$] \label{l:b1b} 
Consider the same hypotheses of Lemma \ref{l:b1}. Then, for $(\mu,d') =
(\nu,0)$ if {\rm (i)} or $(\mu,d') \in \{ (\nu,0), (\nu',d) \}$ if {\rm
(ii)},
$$ z_{\mu,d'}(k) \, \alpha_{\mu,d'}^{(1)}(k) = \alpha^{(1,0)}_{\mu,d'} +
\alpha^{(1,1)}_{\mu,d'}(w(k)) + \alpha^{(1,2)}_{\mu,d'}(k) +
\alpha^{(1,3)}_{\mu,d'}(k), $$
where $\alpha^{(1,0)}_{\mu,d'}$ is a constant given by \eqref{a10}, and
the remaining functions $\alpha^{(1,j)}_{\mu,d'}$ are given by
\eqref{a1j}. Furthermore, for $0 \le j \le 2$,
$$ |\alpha_{\mu,d'}^{(1,j)}| \leq C_j \qquad \text{and} \qquad
|\alpha_{\mu,d'}^{(1,3)}| \leq \frac{C_3}{2|z_{\mu,d'}(k)|-R}, $$
where $C_j = C_{j;\Lambda,A,f,g}$ and $C_3 = C_{3;\Lambda,A,f,g}$ are
constants given by \eqref{cts}.
\end{lem}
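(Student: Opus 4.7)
The plan is to start from the explicit expression for $\alpha^{(1)}_{\mu,d'}(k)$ given by equation~\eqref{alpha}, which should be the leading (diagonal) contribution to $\Phi_{d',d'}(k)$ coming from the $\pi_{G'}$ term in the Neumann expansion of $R_{G'G'}^{-1}$. This expression is a sum over $b\in G'$ of terms of the form $f(d'-b)\,g(b-d')/N_b(k)$. The goal is to extract its $z_{\mu,d'}$-asymptotic structure.

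The key algebraic step is to rewrite $1/N_b(k)$ in the adapted coordinates $w=w_{\mu,d'}(k)$ and $z=z_{\mu,d'}(k)$ introduced in~\eqref{change}. Using the factorization $N_b(k)=N_{b,\mu}(k)\,N_{b,\mu'}(k)$ together with the translation identities
\[
N_{b,\mu}(k) = w + \omega_b, \qquad N_{b,\mu'}(k) = z + \zeta_b,
\]
where $\omega_b=(b-d')_1+i(-1)^\mu(b-d')_2$ and $\zeta_b=(b-d')_1-i(-1)^\mu(b-d')_2$ are constants depending only on $b$ and $d'$, each summand becomes $f(d'-b)g(b-d')/[(w+\omega_b)(z+\zeta_b)]$. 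After multiplying by $z_{\mu,d'}$, I would use the elementary identity $z/(z+\zeta_b)=1-\zeta_b/(z+\zeta_b)$ to split each term into a piece depending only on $w$ and a remainder carrying an extra factor of $1/(z+\zeta_b)$.

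Collecting the $w$-only pieces gives a function of $w$ alone; its value at $w=0$ is the constant $\alpha^{(1,0)}_{\mu,d'}$ (formula~\eqref{a10}), and subtracting this constant isolates the piece $\alpha^{(1,1)}_{\mu,d'}(w)$. The remainder pieces carry a factor of $1/(z+\zeta_b)$. For those $b$ for which $|\zeta_b|$ is small compared with $|z|$ (the exceptional indices identified in Proposition~\ref{p:Nb}(b)--(c)) the factor $1/(z+\zeta_b)$ is not small, so their contributions cannot be absorbed into a $1/(2|z|-R)$ remainder and must be gathered separately into the bounded piece $\alpha^{(1,2)}_{\mu,d'}$; for all remaining $b$, the denominator $z+\zeta_b$ is comparable to $2z$, and the extra $1/(2|z|-R)$ factor can be pulled out uniformly, yielding the decaying term $\alpha^{(1,3)}_{\mu,d'}$.

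The main obstacle is establishing that the series defining $\alpha^{(1,0)}$, $\alpha^{(1,1)}(w)$ and $\alpha^{(1,2)}$ converge with bounds independent of $k$ in the region of interest. Once the $z$-dependence is stripped off, only one factor of $1/|b|$ (from $1/|\omega_b|$) survives for large $b$, which is not summable in two dimensions; the hypotheses $\|b^2 f(b)\|_{l^1}<\infty$ and $\|b^2 g(b)\|_{l^1}<\infty$ are exactly what compensates this loss, and they must be deployed by distributing the $b^2$ weight between $f$ and $g$ (using $|b|\leq |d'|+|\omega_b|$ or a Cauchy--Schwarz-type product estimate). The exceptional indices flagged by Proposition~\ref{p:Nb} need special handling: they are finite in number and controlled by the bound $|\tilde b|>|v|$, so their contribution to $\alpha^{(1,2)}$ is again bounded by $\|b^2f\|_{l^1}\|b^2g\|_{l^1}$ up to the geometric constants coming from $\Lambda$. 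With these summability facts in hand, the required bounds on $\alpha^{(1,j)}_{\mu,d'}$ follow term by term, and analyticity is inherited from Lemma~\ref{l:b1} since the decomposition is purely algebraic.
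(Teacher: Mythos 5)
There is a genuine gap, and it starts at your very first step: you have misidentified the object $\alpha^{(1)}_{\mu,d'}$. By \eqref{alpha} it is \emph{not} the diagonal ($\pi_{G'}$) contribution to $\Phi_{d',d'}$, i.e.\ not a single sum $\sum_b f(d'-b)g(b-d')/N_b(k)$; it is the double sum over the \emph{finite} set $G'_1=\{b\in G'\,:\,|b-d'|<R/4\}$ weighted by the full kernel $S_{b,c}$, where $S=(I-Y_{33})^{-1}$ and $Y_{33}$ is precisely the part of $T_{33}$ carrying the large factor $z_{\mu,d'}$. This matters because the structure you are asked to reproduce depends on that kernel: the constant \eqref{a10} contains the off-diagonal correction $\theta_{\mu'}(\hat{A}(b-c))/\theta_{\mu'}(c-d')$ coming from the first-order term of the Neumann series of $S$, and the bounded piece $\alpha^{(1,2)}$, with $C_2=\tfrac{64}{\Lambda^3}\|\theta_{\mu'}(\hat{A})\|_{l^1}^2\|f\|_{l^1}\|g\|_{l^1}$ in \eqref{cts}, comes from the remainder $SY_{33}^2$ in the identity $S=I+Y_{33}+SY_{33}^2$ --- not, as you propose, from exceptional lattice points. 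A decomposition of the single-sum object simply cannot produce these terms, so the claimed equalities with \eqref{a10} and \eqref{a1j} would fail. Your partial-fraction mechanism $z/(z+\zeta_b)=1-\zeta_b/(z+\zeta_b)$ is indeed the right elementary tool (the paper writes $z/N_c(k)=\eta_c^{(0)}+\eta_c^{(w)}+\eta_c^{(z)}$ with $|\eta_c^{(0)}|\le 1/(2\Lambda)$, $|\eta_c^{(w)}|\le \eps/(2\Lambda^2)$, $|\eta_c^{(z)}|\le 4/(2|z|-R)$), but it must be applied to the kernel level: one splits $Y_{33}=Y_{33}^{(0)}+Y_{33}^{(w)}+Y_{33}^{(z)}$ accordingly, expands $S$, and collects the pieces $K^{(0)},\dots,K^{(3)}$ by matching powers of $w$, of $\hat{A}$, and of $1/(2|z|-R)$.

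The second problem is that your convergence discussion attacks a difficulty that is not present and invokes hypotheses that are not used here. Since the sums run over the finite set $G'_1\subset G'_3$, and since for all $b\in G'_3$ one has the uniform bounds $|w_{\mu,d'}-2i\theta_{\mu'}(b-d')|\ge \Lambda$ and $|z_{\mu,d'}-2i\theta_\mu(b-d')|\ge \tfrac12(2|z_{\mu,d'}|-R)$, there is no loss of a factor $1/|b|$ to compensate, no need to distribute $b^2$ weights, and no role for the exceptional $\tilde b$ of Proposition \ref{p:Nb} --- that point never enters this estimate. The constants in \eqref{cts} involve only $\|f\|_{l^1}\|g\|_{l^1}$ and $\|\theta_{\mu'}(\hat{A})\|_{l^1}$; the stronger hypotheses $\|b^2 f(b)\|_{l^1}<\infty$, $\|b^2 g(b)\|_{l^1}<\infty$ are consumed earlier, in Lemma \ref{l:b1}, to control the tail remainders $\mathcal{R}_1$, $\mathcal{R}_2$ where $|c-d'|\ge R/4$. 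To repair your argument you would need to restate it for the actual double sum with the $S_{b,c}$ kernel and carry out the $Y_{33}$-splitting described above.
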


The next lemma estimates the decay of $\Phi_{d',d''}(k)$ with respect to
$z_{\nu',d}(k)$ for $d' \neq d''$.

\begin{lem}[Decay of $\Phi_{d',d''}(k)$ for $d'\neq d''$] \label{l:b2}
Under Hypothesis \ref{h:small}, let $\nu \in \{1,2\}$ and let $f$ and
$g$ be functions on $\Gdual$ with $\| b^2 f(b) \|_{l^1} < \infty$ and
$\| b^2 g(b) \|_{l^1} < \infty$. Suppose further that $G=\{0, d\}$ and
$k \in (T_\nu(0) \cap T_{\nu'}(d)) \setminus \mathcal{K}_R$. Then, for
$d',d'' \in G$ with $d' \neq d''$,
$$ |\Phi_{d',d''}(k)| \le
\frac{C_{\Gdual,\eps,f,g}}{|z_{\nu',d}(k)|^{3-10^{-1}}}, $$
where $C_{\Gdual,\eps,f,g}$ is a constant.
\end{lem}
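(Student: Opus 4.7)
The plan is to expand $R_{G'G'}^{-1}$ as a Neumann series and isolate a principal contribution to $\Phi_{d',d''}$. Writing $T \coloneqq R_{G'G'} - \pi_{G'}$, Lemma \ref{l:invR} gives $\|T\| < 17/18$, so on $L^2_{G'}$ we have $R_{G'G'}^{-1} = \pi_{G'} - T + T^2 - \cdots$. Substituting into \eqref{Phi} yields the principal (diagonal) term
$$ \Phi^{(0)}_{d',d''}(k) = \sum_{b \in G'} \frac{f(d'-b)\, g(b-d'')}{N_b(k)}, $$
plus a remainder whose $n$-th piece is bounded by $\|T\|^n$ times an expression of the same structural type, uniformly controllable via Proposition \ref{p:bd1}. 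The geometric factor $(17/18)^n$ makes the tail summable.

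The decisive input for bounding $\Phi^{(0)}_{d',d''}$ is the hypothesis $d'\neq d''$: as $b$ ranges over $G'$, it cannot be simultaneously close to both $d'$ and $d''$ since these are separated by distance $\geq |d|$. I would partition $G'$ according to whether $|b-d'|\leq |d|/2$ or $|b-d''|\leq |d|/2$. On the first piece one has $|b-d''|\geq |d|/2$, so after the substitution $c = b - d''$ the whole relevant $\ell^1$-mass of $g$ is supported on $\{|c|\geq |d|/2\}$, and a Chebyshev-type estimate $\sum_{|c|\geq |d|/2}|g(c)| \leq (2/|d|)^{2}\|b^2 g(b)\|_{l^1}$ produces a factor of $|d|^{-2}$; the symmetric argument on the other piece uses $\|b^2 f(b)\|_{l^1}<\infty$. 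Simultaneously, Proposition \ref{p:Nb}(c)---applicable precisely because $b \in G' = \Gdual \setminus \{0,d\}$---gives $|N_b(k)|\geq \tfrac{\Lambda}{2}(|v|+|u+b|)$ and thus an extra factor of $|v|^{-1}$. A short direct computation in $T_\nu(0)\cap T_{\nu'}(d) \setminus \mathcal{K}_R$ shows that $|z_{\nu',d}(k)|$ is comparable to both $|v|$ and $|d|$ there, so the three decay factors combine to a bound of order $|z_{\nu',d}(k)|^{-3}$ up to a logarithmic loss.

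The main obstacle is extracting precisely the exponent $3 - 10^{-1}$. A naive count of the type above leaves a marginal logarithmic divergence in one of the $\ell^1$-sums, and the small fractional loss $|z_{\nu',d}(k)|^{-1/10}$ is the standard device for absorbing it. Concretely, I would organise $G'$ into dyadic annuli about $d'$ and $d''$ and interpolate between an $\ell^1$-bound on $f$, $g$ and a pointwise bound on $1/N_b(k)$, tuning the cutoff between the two regimes so that the logarithm is traded for the fractional exponent. The Neumann remainder is then routine: each of its terms is bounded by the same structural estimate multiplied by $\|T\|^n$, and the resulting geometric series sums to a constant. Because the manipulations closely mirror those already used for Lemmas \ref{l:b1}--\ref{l:b1b}, I expect the detailed bookkeeping to be relegated to Appendix \ref{s:app2} alongside the other coefficient estimates.
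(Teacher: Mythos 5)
There is a genuine gap in your handling of the Neumann tail, and it sits exactly where the lemma's content lies. Your zeroth-order term $\Phi^{(0)}_{d',d''}$ is fine: there the single summation variable $b$ is paired with both $f(d'-b)$ and $g(b-d'')$, so it cannot be close to both $d'$ and $d''$, and your Chebyshev argument together with $|N_b(k)|\geq \eps|v|$ (or Proposition \ref{p:Nb}) and Proposition \ref{p:order}(ii) gives the decay. But for the terms containing $T^n$, $n\geq 1$, the bound you propose --- $\|T\|^n$ times ``an expression of the same structural type'', controlled via Proposition \ref{p:bd1} --- does not reproduce any decay in $|d|$. Estimating $\sum_{b,c\in G'}\frac{f(d'-b)}{N_b(k)}(T^n)_{b,c}\,g(c-d'')$ through operator norms yields only $\frac{1}{\eps|v|}\,\|f\|_{l^1}\|g\|_{l^1}\,\|T\|^n$, hence after summing the geometric series merely $O(1/|z_{\nu',d}(k)|)$: in such a term $b$ may sit near $d'$ and $c$ near $d''$, and $\|T\|$ carries no information that suppresses this configuration. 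The separation $|d'-d''|=|d|$ only helps if the matrix elements $(T^m_{G'G'})_{b,c}$ themselves decay in $|b-c|$, uniformly enough in $m$ to sum the series. That is precisely the paper's Proposition \ref{p:decayT} with $\beta=2$, which gives $|(T^m_{G'G'})_{b,c}|\leq (1+2m)(17/18)^m(1+|b-c|^2)^{-1}$; its proof (see \eqref{Qbc1} and \eqref{bminusc}) is exactly where the strengthened hypotheses $\|b^2\hat{q}(b)\|_{l^1}<\infty$ and $\|(1+b^2)\hat{A}(b)\|_{l^1}<2\eps/63$ enter, since the weighted size of $T$ is not controlled by its operator norm. (Equivalently one can use the $\sigma$-norm machinery of Proposition \ref{p:nsigma} and Lemma \ref{l:derii}, which with $m=n=0$ even improves the exponent.) Without such an off-diagonal estimate for every power of $T$ your remainder is only $O(1/|v|)$ and the exponent $3-10^{-1}$ is out of reach.

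Apart from this, your outline matches the paper's route for the pieces that do work: the paper obtains the $|d|^{-2}$ factor for the ``diagonal'' contribution by essentially your mechanism (its estimate \eqref{fg}, using $|d'-d''|^2\lesssim |d'-b|^2+|b-d''|^2$ instead of a partition of $G'$), takes the extra $|v|^{-1}$ from $|N_b(k)|\geq\eps|v|$ on $G'$, uses Proposition \ref{p:order}(ii) for $|z_{\nu',d}|\asymp|v|\asymp|d|$, and trades the borderline logarithms coming from the convolution sums for the fractional loss $10^{-1}$ by applying Proposition \ref{p:decay} twice --- the precise version of your dyadic/interpolation step. So the repair is concrete: replace the operator-norm control of the tail by the weighted matrix-element bound of Proposition \ref{p:decayT} (or a $\sigma$-norm bound on $R_{G'G'}^{-1}$), and then push the entire Neumann series, not just the diagonal term, through the convolution estimates.
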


The next proposition relates the quantities $|v|$, $|k_2|$,
$|z_{\nu,d'}(k)|$ and $|d|$ for $k$ in the $\eps$-tubes with large
$|v|$.

\begin{prop} \label{p:order}
For $\nu \in \{1,2\}$ we have:
\begin{itemize}
\item[{\rm (i)}] Let $k \in T_\nu(0) \setminus \mathcal{K}_R$. Then,
$$ \frac{1}{|z_{\nu,0}(k)|} \leq \frac{1}{|v|} \leq
\frac{3}{|z_{\nu,0}(k)|} \qquad \text{and} \qquad \frac{1}{4|v|} \leq
\frac{1}{|k_2|} \leq \frac{8}{|v|}. $$

\item[{\rm (ii)}] Let $k \in (T_\nu(0) \cap T_{\nu'}(d)) \setminus
\mathcal{K}_R$. Then,
$$ \frac{1}{|z_{\nu,0}(k)|} \leq \frac{1}{|v|} \leq
\frac{3}{|z_{\nu,0}(k)|}, \qquad \frac{1}{|z_{\nu',d}(k)|} \leq
\frac{1}{|v|} \leq \frac{3}{|z_{\nu',d}(k)|}, $$
$$ \frac{1}{2|z_{\nu',d}(k)|} \leq \frac{1}{|d|} \leq
\frac{2}{|z_{\nu',d}(k)|}. $$
\end{itemize}
\end{prop}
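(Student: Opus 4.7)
The proposition follows by applying the triangle inequality to the vector identity
$$\left(v + (-1)^\nu (u+d')^\perp\right) + \left(v - (-1)^\nu (u+d')^\perp\right) = 2v,$$
together with the identifications $|w_{\nu,d'}(k)| = |v + (-1)^\nu (u+d')^\perp|$ and $|z_{\nu,d'}(k)| = |v - (-1)^\nu (u+d')^\perp|$ established in \S\ref{s:tubes}. The choice of $R$ in \eqref{R1} enforces $R \geq 2\Lambda$, which combined with $\eps < \Lambda/6$ gives $\eps < |v|/12$ throughout the region $|v| > R$; this slack absorbs every $\eps$-correction into the crude numerical constants appearing in the statement.

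For part (i), $k \in T_\nu(0)$ means $|w_{\nu,0}(k)| < \eps$, so the identity above immediately yields
$$2|v| - \eps \;\leq\; |z_{\nu,0}(k)| \;\leq\; 2|v| + \eps,$$
and inserting $\eps < |v|/12$ gives $|v| < |z_{\nu,0}(k)| < 3|v|$, which is the first inequality. For the second, I would use the direct consequence of the definitions \eqref{change},
$$w_{\nu,0}(k) - z_{\nu,0}(k) = 2i(-1)^\nu k_2,$$
so that $2|k_2|$ is pinched between $|z_{\nu,0}(k)| - |w_{\nu,0}(k)|$ and $|z_{\nu,0}(k)| + |w_{\nu,0}(k)|$; combined with the previous sandwich this gives $|v|/4 < |k_2| < 2|v|$, which is a strengthened form of the assertion.

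For part (ii), the bounds on $|z_{\nu,0}(k)|$ and on $|z_{\nu',d}(k)|$ repeat the argument from part (i), the latter applied with $(\mu,d')=(\nu',d)$ and using $k \in T_{\nu'}(d)$. The new ingredient is the relation between $|d|$ and $|v|$. Adding the two real-vector identities coming from the tube conditions, I would obtain
$$\left(v + (-1)^\nu u^\perp\right) + \left(v - (-1)^\nu (u+d)^\perp\right) = 2v - (-1)^\nu d^\perp,$$
whose norm is bounded by $|w_{\nu,0}(k)| + |w_{\nu',d}(k)| < 2\eps$. Hence $\bigl||d| - 2|v|\bigr| < 2\eps$, which with $\eps < |v|/12$ localizes $|d|$ inside $(11|v|/6,\,13|v|/6)$. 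Combining this with the already-established $|v| < |z_{\nu',d}(k)| < 3|v|$ gives $|d|/2 \leq |z_{\nu',d}(k)| \leq 2|d|$.

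There is no conceptual obstacle here: every step is one-line triangle-inequality bookkeeping. The only point requiring mild care is verifying that the $\eps/|v|$ slack built into the definition of $R$ is large enough to absorb all the $\eps$-corrections into the round constants ($3$, $4$, $8$, $2$) of the proposition; the choice $R \geq 2\Lambda > 12\eps$ handles this comfortably.
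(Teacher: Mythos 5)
Your argument is essentially the paper's: both hinge on the identity $\bigl(v+(-1)^\mu(u+d')^\perp\bigr)+\bigl(v-(-1)^\mu(u+d')^\perp\bigr)=2v$ together with $|w_{\mu,d'}(k)|<\eps<|v|$, giving $2|v|-\eps\le|z_{\mu,d'}(k)|\le 2|v|+\eps$ and hence the sandwich $|v|\le|z_{\mu,d'}(k)|\le 3|v|$; your treatment of $|k_2|$ via $w_{\nu,0}-z_{\nu,0}=2i(-1)^\nu k_2$ is also the paper's computation, and your constants there are even slightly sharper than required. The one place you genuinely deviate is the last inequality of (ii): the paper uses the exact identity $z_{\nu',d}(k)=w_{\nu,0}(k)+d_1-i(-1)^{\nu'}d_2$, which gives $\bigl|\,|z_{\nu',d}(k)|-|d|\,\bigr|<\eps$ directly, whereas you first localize $|d|$ near $2|v|$ by adding the two tube conditions and then compare with $|z_{\nu',d}(k)|$. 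That route works, but note a constant slip in your final combination: from the \emph{loosened} bounds $|v|<|z_{\nu',d}(k)|<3|v|$ and $|d|<2|v|+2\eps<\tfrac{13}{6}|v|$ you only get $|d|<\tfrac{13}{6}|z_{\nu',d}(k)|$, not $|d|\le 2|z_{\nu',d}(k)|$, so the stated factor $2$ (which is what later sections cite) does not follow as written. The repair is immediate with tools you already have: combine instead with the un-loosened sandwich $2|v|-\eps\le|z_{\nu',d}(k)|\le 2|v|+\eps$, which together with $\bigl|\,|d|-2|v|\,\bigr|<2\eps$ and $\eps<|v|/12$ yields $\tfrac12|z_{\nu',d}(k)|\le|d|\le 2|z_{\nu',d}(k)|$ with room to spare; alternatively, just use the paper's one-line identity relating $z_{\nu',d}$ and $d$.
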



\section{Bounds on the derivatives} \label{s:der}

In the last section we expressed $\Phi_{d',d''}(k)$ as a sum of certain
functions $\alpha_{\mu,d'}^{(j)}(k)$ for $k$ in the $\eps$-tubes with
large $|v|$. In this section we provide bounds for the derivatives of
all these functions. Here we only give the statements. See Appendix
\ref{s:app3} for the proofs.

Our first lemma concerns the derivatives of $\Phi_{d',d''}(k)$.

\begin{lem}[Derivatives of $\Phi_{d',d''}(k)$] \label{l:der}
Under Hypothesis \ref{h:small}, let $f$ and $g$ be functions in
$l^1(\Gdual)$ and suppose either {\rm (i)} or {\rm (ii)} where:
\begin{itemize}
\item[]
\begin{itemize}
\item[\rm (i)] $G=\{0\}$ and $k \in (T_0 \setminus \cup_{b \in G'} T_b)
\setminus \mathcal{K}_R$;

\item[\rm (ii)] $G=\{0, d\}$ and $k \in (T_0 \cap T_d) \setminus
\mathcal{K}_R$.
\end{itemize}
\end{itemize}
Then, for any integers $n$ and $m$ with $n+m \geq 1$ and for any $d',d''
\in G$,
$$ \left| \frac{\partial^{n+m}}{\partial k_1^n \partial k_2^m}
\Phi_{d',d''}(k) \right| \leq \frac{C}{|v|}, $$
where $C$ is a constant with $C = C_{\eps,\Lambda, A, f,g,m,n}$ if {\rm
(i)} or $C = C_{\Lambda,A,f,g,m,n}$ if {\rm (ii)}.
\end{lem}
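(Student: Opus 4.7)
The plan is to differentiate the defining sum
$$\Phi_{d',d''}(k) = \sum_{b,c \in G'} \frac{f(d'-b)}{N_b(k)}\, (R_{G'G'}^{-1})_{b,c}\, g(c-d'')$$
term by term and estimate each factor using bounds analogous to those in Sections \ref{s:inv} and \ref{s:coeff}. By the product rule, $\partial_{k_1}^n \partial_{k_2}^m \Phi_{d',d''}$ is a finite linear combination of sums in which some derivatives fall on $1/N_b(k)$ and the remaining ones on the matrix element $(R_{G'G'}^{-1})_{b,c}$; the factors $f(d'-b)$ and $g(c-d'')$ are independent of $k$.

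First I would bound the derivatives of $1/N_b(k)$. Since $N_b(k) = N_{b,1}(k) N_{b,2}(k)$, each derivative of $1/N_b(k)$ is a finite linear combination of products of positive powers of $1/N_{b,\nu}(k)$ times monomials in $(k_j + b_j)$. In the region under consideration $|N_b(k)| \ge \eps|v|$ for all $b \in G'$, and the estimates $|k| \le 3|v|$ and $|b|/|N_b(k)| \le 4/\eps$ established in the proof of Proposition \ref{p:bdR} yield
$$\Bigl|\partial_{k_1}^n \partial_{k_2}^m \tfrac{1}{N_b(k)}\Bigr| \le \frac{C_{n,m,\eps,\Lambda}}{|v|}$$
uniformly in $b \in G'$ (with an extra $|v|^{-(n+m-1)}$ that we throw away).

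Next I would control derivatives of $R_{G'G'}^{-1}$. From \eqref{RBC}, $R_{G'G'} = \pi_{G'} + T(k)$ with $\|T(k)\| < 17/18$ by Lemma \ref{l:invR}, so the Neumann series gives $R_{G'G'}^{-1}$ as a bounded operator depending smoothly on $k$, with
$$\partial_{k_j} R_{G'G'}^{-1} = - R_{G'G'}^{-1}\, (\partial_{k_j} T)\, R_{G'G'}^{-1},$$
and higher mixed derivatives are finite sums of analogous products. Each $\partial_{k_1}^n \partial_{k_2}^m T$ is again of the form treated in Proposition \ref{p:bd1}, with matrix entries involving derivatives of $1/N_c(k)$ multiplied by factors of $\hat{A}$ and $\hat{q}$; applying the first step and the Schur-type bound of Proposition \ref{p:ineq} gives $\|\partial_{k_1}^n \partial_{k_2}^m T\| \le C/|v|$ for $n + m \ge 1$. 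Since $\|R_{G'G'}^{-1}\|$ is uniformly bounded (also by Lemma \ref{l:invR}), the same operator-norm bound $C/|v|$ propagates to every derivative of $R_{G'G'}^{-1}$, and in particular to its matrix elements.

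Finally, I would assemble the estimate. In each term produced by the product rule, at least one of the factors $1/N_b(k)$ or $(R_{G'G'}^{-1})_{b,c}$ is differentiated (since $n + m \ge 1$), so that piece contributes a $1/|v|$; any remaining undifferentiated $1/N_b(k)$ is simply bounded by $1/(\eps|v|)$, and any undifferentiated $(R_{G'G'}^{-1})_{b,c}$ is absorbed into the operator norm. Converting the sum over $b,c \in G'$ into an operator-norm estimate via Proposition \ref{p:ineq}, exactly as in the proof of Proposition \ref{p:bd1}, and using $f, g \in \ell^1(\Gdual)$, yields the desired bound $|\partial_{k_1}^n \partial_{k_2}^m \Phi_{d',d''}(k)| \le C/|v|$. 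The only technical obstacle is the combinatorial bookkeeping of the Leibniz expansion for large $n + m$ to verify that exactly one factor of $1/|v|$ survives in every term and that the constants depend only on the parameters claimed; the difference between cases (i) and (ii) is simply that case (i) forces the explicit $\eps$-dependence through the bound $|N_b(k)| \ge \eps|v|$ for $b \in G'$, whereas in case (ii) the elements $b \in G' = \Gdual \setminus \{0,d\}$ automatically satisfy $|N_b(k)| \ge \eps|v|$ by Proposition \ref{p:Gprime}(ii).
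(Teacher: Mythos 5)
Your overall strategy is essentially the paper's (the paper writes $\Phi_G=\F_{GG'}H_k^{-1}\G_{G'G}$ with $H_k^{-1}=\Delta_k^{-1}R_{G'G'}^{-1}$ on $L^2_{G'}$ and differentiates the inverse), but your key intermediate estimate is false, and it is exactly the point where the bookkeeping has to be done carefully. You claim $\|\partial_{k_1}^n\partial_{k_2}^m T\|\le C/|v|$ for $n+m\ge 1$, hence the same decay for derivatives of $R_{G'G'}^{-1}$. Differentiating the matrix element $T_{b,c}$ in $k_i$ produces, besides the harmless term $-2\hat{A}_i(b-c)/N_c(k)$, terms containing the factor $2(k_i+c_i)/N_c(k)$ coming from $\partial_{k_i}(1/N_c)$. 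This factor is only \emph{bounded} on $G'$ (by roughly $7/\eps$, or $2/\Lambda$ away from the exceptional $\tilde b$ of Proposition \ref{p:Nb}), it does not decay in $|v|$: for $c\in G'$ with $|N_c(k)|$ of order $\Lambda|v|$ (or $\eps|v|$ for $\tilde b$ in case (i)) and $|c|\sim|v|$, one has $|k\cdot\hat{A}(b-c)|\,|k_i+c_i|/|N_c|^2\sim \mathrm{const}$, so $\|\partial_{k_i}T\|$ stays bounded away from zero as $|v|\to\infty$. Consequently your final accounting, ``the differentiated factor contributes the $1/|v|$,'' fails whenever the derivatives land on $(R_{G'G'}^{-1})_{b,c}$. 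The argument is repairable: derivatives of $T$ and hence of $R_{G'G'}^{-1}$ are uniformly \emph{bounded} in operator norm, and the single factor $1/|v|$ must instead always be charged to the $1/N_b(k)$ factor of the outer sum (differentiated or not). This is precisely how the paper's proof allocates it: each factor $\frac{\partial^{n_j}H_k}{\partial k_{i_j}^{n_j}}\Delta_k^{-1}\pi_{G'}$ is only $O(1)$ (Step 3 of the paper's proof, where the diagonal entries $2(k_{i_j}+b_{i_j})/N_b$ are bounded via Proposition \ref{p:Nb} and the exceptional $\tilde b$ is treated separately), and the one leftover undifferentiated $\Delta_k^{-1}$, of norm $O(1/|v|)$, supplies the decay.

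A secondary issue is the dependence of the constant. By using only $|N_b(k)|\ge\eps|v|$ in both cases you would obtain $C=C_{\eps,\Lambda,A,f,g,m,n}$ in case (ii) as well, which is weaker than the stated $C=C_{\Lambda,A,f,g,m,n}$; your closing remark about the (i)/(ii) distinction also has it backwards. In the paper, the $\eps$-free constant for $G=\{0,d\}$ comes from Proposition \ref{p:Nb}(c): for \emph{all} $b\in G'$ one has $|N_b(k)|\ge\frac{\Lambda}{2}(|v|+|u+b|)$ and $|k_i+b_i|\le\frac{2}{\Lambda}|N_b(k)|$, with no exceptional lattice point, whereas for $G=\{0\}$ there may be one exceptional $\tilde b$ for which only the $\eps|v|$ lower bound is available, and this is the sole source of the $\eps$-dependence in case (i).
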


We now improve the estimate of Lemma \ref{l:der}(ii) for $d'\neq d''$.

\begin{lem}[Derivatives of $\Phi_{d',d''}(k)$ for $d' \neq d''$] 
\label{l:derii}
Consider a constant $\beta \ge 2$ and suppose that $\| |b|^\beta
\hat{q}(b) \|_{l^1} < \infty$ and $\| (1+|b|^\beta) \hat{A}(b) \|_{l^1}
< 2\eps/63$. Let $\nu \in \{1,2\}$ and let $f$ and $g$ be functions on
$\Gdual$ obeying $\| |b|^\beta f(b) \|_{l^1} < \infty$ and $\| |b|^\beta
g(b) \|_{l^1} < \infty$. Suppose further that $G=\{0, d\}$ and $k \in
T_0 \cap T_d$ with $|v| > \frac{2}{\eps} \| |b|^\beta \hat{q}(b)
\|_{l^1}$. Then, for any integers $n$ and $m$ with $n+m \geq 0$ and for
any $d',d'' \in G$ with $d' \neq d''$,
$$ \left| \frac{\partial^{n+m}}{\partial k_1^n \partial k_2^m}
\Phi_{d',d''}(k) \right| \le \frac{C}{|d|^{1+\beta}}, $$
where $C = C_{\eps,\Lambda, A, f,g,m,n}$ is a constant.
\end{lem}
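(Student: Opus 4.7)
The plan is to exploit the large separation $|d'-d''|=|d|$ together with the polynomial decay of $f,g,\hat A,\hat q$ encoded in the hypotheses $\||b|^\beta f(b)\|_{l^1},\||b|^\beta g(b)\|_{l^1}<\infty$ and $\|(1+|b|^\beta)\hat A(b)\|_{l^1},\||b|^\beta\hat q(b)\|_{l^1}<\infty$. Heuristically, the series
\[
\Phi_{d',d''}(k)=\sum_{b,c\in G'}\frac{f(d'-b)}{N_b(k)}(R_{G'G'}^{-1})_{b,c}\,g(c-d'')
\]
must ``bridge'' the distance $|d|$ from $d'$ to $d''$, and I would extract one factor $|d|^{-\beta}$ from the decay of $f,g,\hat A$ or $\hat q$ via a triangle-inequality splitting, together with one additional factor $|d|^{-1}$ from a resolvent factor $1/N_b(k)$, using that $|v|$ is comparable to $|d|$ on $T_0\cap T_d$ by Proposition \ref{p:order}(ii).

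For the case $n=m=0$ I would split $R_{G'G'}^{-1}=\pi_{G'}+X$. The $\pi_{G'}$ part contributes $\sum_{b\in G'}f(d'-b)g(b-d'')/N_b(k)$; since $|d'-b|+|b-d''|\geq|d|$, for each $b$ at least one of the two summands exceeds $|d|/2$. Splitting the sum into the two corresponding regions and using $|f(d'-b)|\leq(2/|d|)^\beta|d'-b|^\beta|f(d'-b)|$ in one region and the analogue for $g$ in the other, together with $|N_b(k)|^{-1}\leq C/(|v|+|u+b|)\leq C/|d|$ from Proposition \ref{p:Nb}(c) and the $l^1$-convolution bound, gives a contribution of size $|d|^{-(1+\beta)}$. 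The $X$-part I would treat by iterating the identity $R_{G'G'}^{-1}=\pi_{G'}-Y R_{G'G'}^{-1}$ with $Y\coloneqq\pi_{G'}(h+q)\Delta_k^{-1}\pi_{G'}$, which has $\|Y\|<17/18$ by Lemma \ref{l:invR}; each additional convolution by $h+q$ supplies another $|d|^{-\beta}$ decay via the same triangle-inequality splitting (now also allowed to fall on the $\hat A$ or $\hat q$ factor) while introducing a further $1/N_c(k)\sim 1/|d|$, so the higher Neumann terms are strictly better than $|d|^{-(1+\beta)}$.

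For $n+m\geq 1$, rather than differentiate term by term, I would apply Cauchy's integral formula on a polydisc $\{k':|k'_j-k_j|\leq r\}$, where $r>0$ can be chosen depending only on $\eps$ and $\Lambda$: for such $r$ the polydisc remains in the open set where $|N_b(k')|\geq\eps|v|$ for every $b\in G'$ (using the pairwise separation of the tubes $T_b$ and Proposition \ref{p:Nb}), and Proposition \ref{p:holom} guarantees analyticity there. The zeroth-order estimate applied uniformly on the polydisc then yields
\[
\left|\frac{\partial^{n+m}}{\partial k_1^n\partial k_2^m}\Phi_{d',d''}(k)\right|\leq C\,n!\,m!\,r^{-(n+m)}|d|^{-(1+\beta)},
\]
which is absorbed into $C_{\eps,\Lambda,A,f,g,m,n}$.

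The main obstacle I foresee is the $X$-contribution at the zeroth order. A crude bound of the form $\|f\|_{l^1}\,\|X\|\,\|g\|_{l^1}$ loses the extra $|d|^{-1}$ gained in the $\pi_{G'}$ argument from the single resolvent factor $1/N_b$. What saves the estimate is the factorization $X=-YR_{G'G'}^{-1}$: the $\Delta_k^{-1}$ inside $Y$ still supplies an explicit factor $1/N_c(k)\sim 1/|d|$ at the ``closest'' intermediate index, $\|R_{G'G'}^{-1}\|$ is uniformly bounded by Lemma \ref{l:invR}, and the $|b|^\beta$-decay of $\hat A,\hat q$ lets the triangle-inequality splitting consume one step along the $h+q$ convolution to deliver the remaining $|d|^{-\beta}$.
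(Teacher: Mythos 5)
Your zeroth-order estimate is, in substance, the mechanism the paper itself uses, just unpackaged: the paper formalizes the ``bridging'' idea through the weighted norm $\|\cdot\|_\sigma$ with $\sigma(t)=(1+t)^\beta$ (Proposition \ref{p:nsigma}), shows the weighted Neumann bound $\|T_{G'G'}\|_\sigma<17/18$ (this is \eqref{Qbc1}, cf.\ Proposition \ref{p:decayT}), writes $\Phi_{d',d''}=\mathcal{F}_{\{d'\}G'}H_k^{-1}\mathcal{G}_{G'\{d''\}}$, gets a $\sigma$-norm bound of size $1/|v|\sim1/|d|$ from the diagonal factor $\Delta_k^{-1}\pi_{G'}$ together with Proposition \ref{p:order}(ii), and only at the very end extracts the extra $|d|^{-\beta}$ via the pointwise property Proposition \ref{p:nsigma}(d); your triangle-inequality splitting of the Neumann series is the hands-on version of exactly this. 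Where you genuinely differ is the derivatives: the paper reruns the algebraic expansion of $\partial^{n+m}H_k^{-1}$ from Lemma \ref{l:der} with the operator norm replaced by the $\sigma$-norm, whereas you invoke Cauchy's formula on a polydisc of radius $r=r(\eps,\Lambda)$. That shortcut is legitimate and arguably cleaner, but it is only as good as the uniformity of the zeroth-order bound on the polydisc: besides $|N_b(k')|\gtrsim\Lambda|v|$ for $b\in G'$ (which does follow, since for $b\notin\{0,d\}$ one linear factor of $N_b$ is $\ge\Lambda$ and the other $\ge|v|$ at $k$, by Proposition \ref{p:Nb}(c)), you must also check that the weighted Neumann bound at $k'$ (hence invertibility of $R_{G'G'}(k')$, needing $|u'|\le2|v'|$ and $|v'|$ large), the comparability $|v'|\sim|d|$, and analyticity of $\Phi_{d',d''}$ persist on the polydisc — Proposition \ref{p:holom} is stated only on $(T_0\cap T_d)\setminus\mathcal{K}_R$, though the same uniform-convergence argument applies wherever $R_{G'G'}^{-1}$ exists. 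All of this goes through with $r$ of order $\Lambda$, so the constants keep the advertised dependence.

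One piece of your bookkeeping is wrong, though not fatally: a Neumann factor $Y_{b,c}=\big(2(c+k)\cdot\hat{A}(b-c)-\hat{q}(b-c)\big)/N_c(k)$ is \emph{not} $O(1/|d|)$, because $|c+k|$ is comparable to $|N_c(k)|/\Lambda$; each extra convolution contributes only an $O(1)$ factor (below $17/18$ even in the weighted norm), so the higher Neumann terms are of the same order $|d|^{-(1+\beta)}$, not ``strictly better''. Likewise there is no need to hunt for the $1/|d|$ ``inside $Y$ at the closest intermediate index'': the explicit leading factor $1/N_b(k)$ already supplies it for \emph{every} $b\in G'$, since $|N_b(k)|\ge\eps|v|$ and $|d|\le6|v|$ by Propositions \ref{p:Gprime} and \ref{p:order}(ii). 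With the correct accounting — one weighted hop per chain (allowed to land on $f$, $g$, $\hat{A}$ or $\hat{q}$), a polynomial-in-$j$ loss from deciding which hop is long, absorbed by the geometric decay, and the single $1/|d|$ from the explicit resolvent — your argument closes and reproduces the paper's bound.
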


Observe that, in particular, this lemma with $m=n=0$ generalizes Lemma
\ref{l:b2}. We next have bounds for the derivatives of
$\alpha_{\mu,d'}^{(j)}(k)$.

\begin{lem}[Derivatives of $\alpha_{\mu,d'}^{(j)}(k)$] \label{l:der2}
Under Hypothesis \ref{h:small}, let $\nu \in \{1,2\}$ and let $f$ and
$g$ be functions in $l^1(\Gdual)$. Suppose either {\rm (i)} or {\rm
(ii)} where:
\begin{itemize}
\item[]
\begin{itemize}
\item[\rm (i)] $G=\{0\}$ and $k \in (T_\nu(0) \setminus \cup_{b \in G'}
T_b) \setminus \mathcal{K}_R$;

\item[\rm (ii)] $G=\{0, d\}$ and $k \in (T_\nu(0) \cap T_{\nu'}(d))
\setminus \mathcal{K}_R$.
\end{itemize}
\end{itemize}
Then, there is a constant $\rho = \rho_{\eps,A,q,m,n}$ with $\rho \geq
R$ such that, for $|v| \geq \rho$ and for $(\mu,d') = (\nu,0)$ if {\rm
(i)} or $(\mu,d') \in \{ (\nu,0), (\nu',d) \}$ if {\rm (ii)}, for any
integers $n$ and $m$ with $n+m \geq 1$ and for $1 \le j \le 2$,
$$ \left| \frac{\partial^{n+m}}{\partial k_1^n \partial k_2^m}
\alpha_{\mu,d'}^{(j)}(k) \right| \leq \frac{C_j}{(2|z_{\mu,d'}(k)|-R)^j}
\qquad \text{and} \qquad \left| \frac{\partial^{n+m}}{\partial k_1^n
\partial k_2^m} \alpha_{\mu,d'}^{(3)}(k) \right| \leq
\frac{C_3}{|z_{\mu,d'}(k)| R^2}, $$
where $C_l = C_{l;f,g,\Lambda,A,q,n,m}$ for $1 \le l \le 3$ are
constants. Furthermore,
$$ C_{1;f,g,\Lambda, A, 1, 0}, \, C_{1;f,g,\Lambda, A, 0, 1} \leq 13
\Lambda^{-2} \| f \|_{l^1} \| g \|_{l^1} \qquad \text{and} \qquad
C_{1;f,g,\Lambda, A, 1, 1} \leq 65 \Lambda^{-3} \| f \|_{l^1} \| g
\|_{l^1}. $$
\end{lem}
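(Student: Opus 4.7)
The plan is to upgrade the size bounds of Lemma \ref{l:b1} to derivative bounds via Cauchy's integral formula, then handle the sharper explicit constants for low-order derivatives by direct differentiation of the decomposition in Lemma \ref{l:b1b}. Since each $\alpha_{\mu,d'}^{(j)}$ is holomorphic in the region specified by Lemma \ref{l:b1} (the $\eps$-tube intersected with $\C^2 \setminus \mathcal{K}_R$, possibly minus other tubes), derivative bounds follow once we secure a polydisc of uniform size around $k$ that remains in the analyticity region.

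First I would fix a radius $r = r(\eps) > 0$ (say $r = \eps/100$) and choose $\rho \ge R$ large enough that, for every $k$ satisfying the hypotheses with $|v| \ge \rho$, the polydisc $P_r(k) \coloneqq \{k' \in \C^2 : |k'_i - k_i| \le r\}$ lies entirely in the domain of analyticity of the relevant $\alpha_{\mu,d'}^{(j)}$. Concretely, $r$ must be smaller than the distance from $k$ to the complement of $T_\nu(0) \cap T_{\nu'}(d)$ (respectively $T_\nu(0) \setminus \bigcup_{b\in G'} T_b$) and to $\partial \mathcal{K}_R$. For case (ii) both conditions are satisfied for any $|v| > R + 2r$ since the tube widths are $\eps \gg r$; in case (i) one also uses the fact that $\operatorname{dist}(k, T_b) \to \infty$ for $b \neq 0$ as $|v| \to \infty$, so a large enough $\rho$ works. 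On $P_r(k)$, the linearity of $z_{\mu,d'}$ gives $|z_{\mu,d'}(k')| \ge |z_{\mu,d'}(k)| - 2r$, hence
\[
2|z_{\mu,d'}(k')| - R \;\ge\; \tfrac{1}{2}\bigl(2|z_{\mu,d'}(k)| - R\bigr)
\]
once $|v| \ge \rho$ with $\rho$ sufficiently large, so the Lemma \ref{l:b1} bounds transfer to $P_r(k)$ with at worst a factor $2^j$. Cauchy's formula then gives
\[
\left|\frac{\partial^{n+m}\alpha_{\mu,d'}^{(j)}}{\partial k_1^n \partial k_2^m}(k)\right| \;\le\; \frac{n!\,m!}{r^{n+m}} \sup_{P_r(k)} \bigl|\alpha_{\mu,d'}^{(j)}(k')\bigr|,
\]
which yields the stated bounds for $j=1,2$ and, by the $j=3$ case of Lemma \ref{l:b1}, for $j=3$, with constants $C_j$ depending on $f,g,\Lambda,A,q,n,m$ through $r$ and the constants of Lemma \ref{l:b1}.

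For the explicit constants $C_{1;f,g,\Lambda,A,n,m}$ with $n+m \in \{1,2\}$, Cauchy is too wasteful because of the extraneous $1/r^{n+m}$ factor. Instead I would use the finer decomposition of Lemma \ref{l:b1b}: write
\[
\alpha_{\mu,d'}^{(1)}(k) = \frac{1}{z_{\mu,d'}(k)}\Bigl[\alpha^{(1,0)}_{\mu,d'} + \alpha^{(1,1)}_{\mu,d'}(w(k)) + \alpha^{(1,2)}_{\mu,d'}(k) + \alpha^{(1,3)}_{\mu,d'}(k)\Bigr],
\]
and differentiate term by term. Since $\partial z_{\mu,d'}/\partial k_j \in \{1, \pm i\}$, each first-order derivative produces either a $-\alpha^{(1,0)}/z_{\mu,d'}^2$ leading term (controlled by the explicit constant bounding $|\alpha^{(1,0)}|$ from \eqref{cts}) plus subleading contributions from $\alpha^{(1,1)}, \alpha^{(1,2)}, \alpha^{(1,3)}$. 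A short computation, using the pointwise estimates of Lemma \ref{l:b1b} together with Cauchy (with the same polydisc) for the derivatives of $\alpha^{(1,1)}, \alpha^{(1,2)}, \alpha^{(1,3)}$ only, yields after multiplication by $(2|z_{\mu,d'}| - R)$ the explicit constants $13\Lambda^{-2}\|f\|_{l^1}\|g\|_{l^1}$ and, for the mixed second derivative, $65\Lambda^{-3}\|f\|_{l^1}\|g\|_{l^1}$, provided $\rho$ is further enlarged so that the subleading contributions are at most a small fraction of the leading one.

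The main obstacle is the bookkeeping for these sharp numerical constants: one must verify that after differentiating the identity of Lemma \ref{l:b1b}, the leading term $\alpha^{(1,0)}/z_{\mu,d'}^2$ is dominant and that all remaining terms contribute at most a controllable fraction (small for $\rho$ large), and that the specific numerical factors $13$ and $65$ emerge from the constant in \eqref{cts} together with the factor $2$ from $(2|z_{\mu,d'}|-R) \le 2|z_{\mu,d'}|$ and from $|\partial z_{\mu,d'}/\partial k_j|=1$. All other steps (choice of $\rho$, Cauchy formula, polydisc geometry) are routine and uniform across cases (i) and (ii) and the allowed $(\mu,d')$.
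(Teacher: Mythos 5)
There are two genuine gaps. First, the Cauchy/polydisc step is not justified as stated: a point $k$ of the region in case (i) or (ii) can lie arbitrarily close to the boundary of that region (arbitrarily close to $\partial T_\nu(0)$, to $\partial T_{\nu'}(d)$, or --- the worst case --- just outside an excluded tube $T_b$ with $b \in G'$), so a polydisc of fixed radius $r=\eps/100$ does not stay inside the set where Lemma \ref{l:b1} establishes analyticity and bounds. Your justification for case (i), that $\operatorname{dist}(k,T_b)\to\infty$ as $|v|\to\infty$, is false: the tubes $T_\nu(0)$ and $T_{\nu'}(b)$ come together near $|v|\approx|b|/2$, so at every large height there are points of $T_\nu(0)\setminus\bigcup_b T_b$ at distance essentially zero from some excluded tube, and enlarging $\rho$ does not help. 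To rescue the argument you would have to show that each $\alpha^{(j)}_{\mu,d'}$ extends analytically, with comparable bounds, to a fattened region; for $\alpha^{(1)},\alpha^{(2)}$ this is plausible (the estimates \eqref{est3}--\eqref{est6} have slack), but $\alpha^{(3)}$ contains the remainders $\mathcal{R}_1,\mathcal{R}_2,\mathcal{R}_3$, which are built from $R_{G'G'}^{-1}$ and powers of $T_{G'G'}$, and their control rests on $|N_b(k)|\ge\eps|v|$ for \emph{all} $b\in G'$ (Proposition \ref{p:Gprime}, Lemma \ref{l:invR}); just inside an excluded tube this inequality fails and only a weakened version survives, so the whole chain of estimates (including Proposition \ref{p:decayT}) would have to be rerun on the enlarged region with degraded constants. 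That re-derivation is the actual content of the step and is missing. The paper avoids the issue entirely by differentiating directly at $k$ (Proposition \ref{p:dSWZ}), using $\partial S=-S(\partial S^{-1})S$, the Leibniz rule in the $w,z$ variables, and explicit matrix-element bounds, so no neighbourhood of $k$ is ever needed.

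Second, the route you propose for the sharp constants does not produce $13\Lambda^{-2}\|f\|_{l^1}\|g\|_{l^1}$ and $65\Lambda^{-3}\|f\|_{l^1}\|g\|_{l^1}$. In the decomposition of Lemma \ref{l:b1b}, the term you call leading, $-\alpha^{(1,0)}_{\mu,d'}\,z_{\mu,d'}^{-2}\,\partial z_{\mu,d'}/\partial k_j$, is $O(|z_{\mu,d'}|^{-2})$, whereas the contribution of $\alpha^{(1,1)}_{\mu,d'}(w(k))$ is a function of $w$ alone, does not decay as $|v|\to\infty$, and after division by $z_{\mu,d'}$ enters at order $|z_{\mu,d'}|^{-1}$ --- the same order as the target bound. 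So it cannot be made ``a small fraction of the leading one'' by enlarging $\rho$; the hierarchy is in fact reversed. Moreover, with your radius $r=\eps/100$ the Cauchy estimate for $\partial_{k_j}\alpha^{(1,1)}_{\mu,d'}$ already costs a factor $100/\eps$ against a supremum of size roughly $\eps/(2\Lambda^2)\,\|f\|_{l^1}\|g\|_{l^1}$, giving a constant of order $50\Lambda^{-2}$, above $13\Lambda^{-2}$ before any other term is added; for the mixed second derivative the factor $(100/\eps)^2$ overshoots $65\Lambda^{-3}$ by orders of magnitude (recall $\eps<\Lambda/6$). One could try a radius comparable to $\Lambda$ rather than $\eps$, but then the constants of Lemma \ref{l:b1b} themselves must be re-derived on the fattened domain, and none of this bookkeeping is done. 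The paper gets $13$ and $65$ by the direct computation $\|\partial_{k_j}(\Delta_k^{-1}S)\|\le\|\partial_{k_j}\Delta_k^{-1}\|\,\|S\|+\|\Delta_k^{-1}\|\,\|\partial_{k_j}S\|$ with the explicit bounds \eqref{estD}, \eqref{estS}, \eqref{Sinv}; some argument of that type, not a generic Cauchy estimate, is needed here.
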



\section{The regular piece} \label{s:reg}

\begin{proof}[Proof of Theorem \ref{t:reg}]
{\tt Step 1 (defining equation).} We first derive a defining equation
for the Fermi curve. Without loss of generality we may assume that
$\hat{A}(0)=0$. Let $G=\{0\}$, recall that $G'=\Gdual \setminus \{0\}$,
and consider the region $( T_\nu(0) \setminus \cup_{b \in G'} T_b )
\setminus \mathcal{K}_\rho$, where $\rho$ is a constant to be chosen
sufficiently large obeying $\rho \geq R$. By Proposition
\ref{p:Gprime}(i) we have $G' = \{ b \in \Gdual \; | \;\; |N_b(k)| \geq
\eps |v| \}$. To simplify the notation write
$$ \mathcal{M}_\nu \coloneqq \left( \Fhat(A,V) \cap T_\nu(0) \right)
\setminus \left( \mathcal{K}_\rho \cup \bigcup_{b \in \Gdual \setminus
\{0\}} T_b \right). $$
By Lemma \ref{l:eqn}(i), a point $k$ is in $\mathcal{M}_\nu$ if and only
if
$$ N_0(k) + D_{0,0}(k) = 0. $$
By Proposition \ref{p:newv}, if we set
$$ w(k) \coloneqq w_{\nu,0}(k) = k_1 + i (-1)^\nu k_2 \qquad \text{and}
\qquad z(k) \coloneqq z_{\nu,0}(k) = k_1 - i (-1)^\nu k_2, $$
this equation becomes
\begin{equation} \label{deq1}
\beta_1 w^2 + \beta_2 z^2 + (1 + \beta_3) wz + \beta_4 w + \beta_5 z +
\beta_6 + \hat{q}{(0)} = 0,
\end{equation}
where
\begin{alignat*}{3}
\beta_1 & \coloneqq J_{\nu'}^{00}, & \qquad \beta_2 & \coloneqq
J_\nu^{00}, & \qquad \beta_3 & \coloneqq K^{00}, \\
\beta_4 & \coloneqq L_{\nu'}^{00}, & \qquad \beta_5 & \coloneqq
L_\nu^{00}, & \qquad \beta_6 & \coloneqq M^{00} - \hat{q}(0),
\end{alignat*}
with $J_\nu^{00}$, $K^{00}$, $L_\nu^{00}$ and $M^{00}$ given by
Proposition \ref{p:newv}. Observe that all the coefficients $\beta_1,
\dots, \beta_6$ have exactly the same form as the function
$\Phi_{0,0}(k)$ of Lemma \ref{l:b1}(i) (see \eqref{Phi}). Thus, by this
lemma, for $1 \leq i \leq 6$ we have
\begin{equation} \label{coef}
\beta_i = \beta_i^{(1)} + \beta_i^{(2)} + \beta_i^{(3)},
\end{equation}
where the function $\beta_i^{(j)}$ is analytic in the region under
consideration with
$$ |\beta_i^{(j)}(k)| \leq \frac{C}{(2|z(k)|-\rho)^j} \leq
\frac{C}{|z(k)|^j} \quad \text{for} \quad 1 \le j \le 2 \qquad
\text{and} \qquad |\beta_i^{(3)}(k)| \leq \frac{C}{|z(k)| \rho^2}, $$
where $C = C_{\eps, \Lambda, q, A}$ is a constant. The exact expression
for $\beta_i^{(j)}$ can be easily obtained from the definitions and from
Lemma \ref{l:b1}(i). Substituting \eqref{coef} into \eqref{deq1} and
dividing both sides of the equation by $z$ yields
\begin{equation} \label{deq1b}
w + \beta_2^{(1)} \, z + g = 0,
\end{equation}
where
\begin{equation} \label{defg}
g \coloneqq \frac{\beta_1 w^2}{z} + (\beta_2^{(2)} + \beta_2^{(3)}) z +
\beta_3 w + \frac{\beta_4 w}{z} + \beta_5 + \frac{\beta_6}{z} +
\frac{\hat{q}(0)}{z}
\end{equation}
obeys
\begin{equation} \label{decg}
|g(k)| \leq \frac{C}{\rho},
\end{equation}
with a constant $C = C_{\eps, \Lambda, q, A}$. Therefore, a point $k$ is
in $\mathcal{M}_\nu$ if and only if
$$ F(k) = 0, $$
where
$$ F(k) \coloneqq w(k) + \beta_2^{(1)}(k) \, z(k) + g(k) $$
is an analytic function (in the region under consideration).

{\tt Step 2 (candidates for a solution).} Let us now identify which
points are candidates to solve the equation $F(k)=0$. First observe
that, by Proposition \ref{p:lines}(c) the lines $\mathcal{N}_\nu(0)$ and
$\mathcal{N}_{\nu'}(d)$ intersect at $\mathcal{N}_\nu(0) \cap
\mathcal{N}_{\nu'}(d) = \{ ( i \theta_\nu(d), (-1)^{\nu'} \theta_\nu(d))
\}$. Hence, the second coordinate of this point and the second
coordinate of a point $k$ differ by
$$ pr(k) - pr( \mathcal{N}_\nu(0) \cap \mathcal{N}_{\nu'}(d) ) = k_2 -
(-1)^{\nu'} \theta_\nu(d) = k_2 + (-1)^\nu \theta_\nu(d). $$
Now observe that, if $k \in T_\nu(0) \cap T_{\nu'}(d)$ then $|k_1 +
i(-1)^\nu k_2| < \eps$ and
\begin{align*}
|k_2 + (-1)^\nu \theta_\nu(d)| & = \big| \tfrac{1}{2} (k_1+i(-1)^\nu
k_2) - \tfrac{1}{2} ( k_1 + d_1 - i(-1)^\nu (k_2+d_2) \big| \\ & \leq
\tfrac{1}{2} \big| N_{0,\nu}(k) - N_{d,\nu'}(k) \big | < \tfrac{\eps}{2}
+ \tfrac{\eps}{2} = \eps.
\end{align*}
That is, the second coordinate of $k$ and the second coordinate of
$\mathcal{N}_\nu(0) \cap \mathcal{N}_{\nu'}(d)$ must be apart from each
other by at most $\eps$. This gives a necessary condition on the second
coordinate of a point $k$ for being in $\mathcal{M}_\nu$. Conversely, if
a point $k$ is in the $(\eps/4)$-tube inside $T_\nu(0)$, that is, $|k_1
+ i(-1)^\nu k_2| < \frac{\eps}{4}$, and its second coordinate differ
from the second coordinate of $\mathcal{N}_\nu(0) \cap
\mathcal{N}_{\nu'}(d)$ by at most $\eps/4$, that is, $|k_2 + (-1)^\nu
\theta_\nu(d)| < \frac{\eps}{4}$, then
$$ |N_{d,\nu'}(k)| = \big| N_{0,\nu}(k) - 2 (k_2 + (-1)^\nu
\theta_\nu(d))| \leq \frac{\eps}{4} + 2 \, \frac{\eps}{4} < \eps, $$
that is, the point $k$ is also in $T_{\nu'}(d)$ and hence lie in the
intersection $T_\nu(0) \cap T_{\nu'}(d)$. This gives a sufficient
condition on the first and second coordinates of a point $k$ for being
in $T_\nu(0) \cap T_{\nu'}(d)$.

For $y \in \C$ define the set of candidates for a solution of $F(k)=0$
as
$$ M_\nu(y) \coloneqq pr^{-1}(y) \cap \left( T_\nu(0) \setminus
\bigcup_{b \in \Gdual \setminus \{0\}} T_b \right) = pr^{-1}(y) \cap
\left( T_\nu(0) \setminus \bigcup_{b \in \Gdual \setminus \{0\}}
T_{\nu'}(b) \right). $$
Observe that, if $|y + (-1)^\nu \theta_\nu(b)| \geq \eps$ for all $b \in
\Gdual \setminus \{0\}$ then
\begin{equation} \label{inc1}
M_\nu(y) = pr^{-1}(y) \cap T_\nu(0) = \{ (k_1, y) \in \C^2 \; | \;\;
|k_1 + i(-1)^\nu y | < \eps \}.
\end{equation}
On the other hand, if $|y + (-1)^\nu \theta_\nu(d)| < \eps$ for some $d
\in \Gdual \setminus \{0\}$, then there is at most one such $d$ and
consequently
\begin{equation} \label{inc2}
\begin{aligned}
M_\nu(y) & = pr^{-1}(y) \cap (T_\nu(0) \setminus T_{\nu'}(d)) \\ & = \{
(k_1, y) \in \C^2 \; | \;\; |k_1 + i(-1)^\nu y | < \eps \, \text{ and }
\, |k_1+d_1 + i(-1)^{\nu'} (y+d_2)| \geq \eps \}.
\end{aligned}
\end{equation}
Indeed, suppose there is another $d' \neq 0$ such that $|y + (-1)^\nu
\theta_\nu(d')| < \eps$. Then,
$$ |d-d'| = |2(-1)^\nu \theta_\nu(d-d')| = |y + (-1)^\nu \theta_\nu(d) -
(y + (-1)^\nu \theta_\nu(d'))| \leq 2\eps < 2 \Lambda, $$
which contradicts the definition of $\Lambda$. Thus, there is no such
$d' \neq 0$.

{\tt Step 3 (uniqueness).} We now prove that, given $k_2$, if there
exists a solution $k_1(k_2)$ of $F(k_1,k_2)=0$, then this solution is
unique and it depends analytically on $k_2$. This follows easily using
the implicit function theorem and the estimates below, which we prove
later.

\begin{prop} \label{p:imp}
Under the hypotheses of Theorem \ref{t:reg} we have
\begin{align}
|F(k)-w(k)| & \leq \frac{\eps}{900} + \frac{C_1}{\rho}, \tag{a} \\
\left| \frac{\partial F}{\partial k_1}(k) - 1 \right| & \leq \frac{1}{7
\cdot 3^4} + \frac{C_2}{\rho}, \tag{b}
\end{align}
where the constants $C_1$ and $C_2$ depend only on $\eps$, $\Lambda$,
$q$ and $A$.
\end{prop}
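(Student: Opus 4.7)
The plan is to bound both quantities by decomposing $F(k)-w(k)=\beta_2^{(1)}(k)z(k)+g(k)$ and handling each piece via the machinery already assembled. The $g$ term is under control by \eqref{decg}, so the heart of the matter is $\beta_2^{(1)} z$.

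For part (a), notice that $\beta_2=J^{00}_\nu$ is exactly of the form \eqref{Phi} with $f(b)=-(1,-i(-1)^\nu)\cdot\hat A(-b)$ and $\tilde g(c)=(1,-i(-1)^\nu)\cdot\hat A(c)$ (from the formula in Proposition \ref{p:newv}). Apply Lemma \ref{l:b1b} to obtain
\[
\beta_2^{(1)}(k)\,z(k) \;=\; \alpha^{(1,0)}+\alpha^{(1,1)}(w(k))+\alpha^{(1,2)}(k)+\alpha^{(1,3)}(k).
\]
The first three pieces are bounded by constants $C_j=C_{j;\Lambda,A,f,\tilde g}$ whose explicit form from \eqref{cts} is a universal constant times $\Lambda^{-\mu}\|f\|_{l^1}\|\tilde g\|_{l^1}$ for a small integer $\mu$; since $\|f\|_{l^1},\|\tilde g\|_{l^1}\le\sqrt{2}\,\|\hat A\|_{l^1}$ and the hypothesis of Theorem \ref{t:reg} forces $\|\hat A\|_{l^1}<2\eps/63$, these combine to an $O(\eps^2/\Lambda)$ bound which, with the explicit numerical constants of Lemma \ref{l:b1b}, comes in below $\eps/900$. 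The remainder $\alpha^{(1,3)}$ satisfies $|\alpha^{(1,3)}|\le C_3/(2|z|-R)\le C/\rho$ because $|z(k)|\ge|v|\ge\rho$. Combining with $|g(k)|\le C/\rho$ from \eqref{decg} gives the stated estimate.

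For part (b), differentiate: $\partial F/\partial k_1-1=\partial_{k_1}(\beta_2^{(1)}z)+\partial_{k_1}g$. Use the same decomposition of $\beta_2^{(1)}z$ from Lemma \ref{l:b1b} and apply Lemma \ref{l:der2} with $(n,m)=(1,0)$ to each summand. The constant $\alpha^{(1,0)}$ contributes nothing, while $\alpha^{(1,2)}$ and $\alpha^{(1,3)}$ contribute $O(1/\rho)$ since the relevant bounds scale as $(2|z|-R)^{-j}\le\rho^{-j}$. The crucial term is $\partial_{k_1}\alpha^{(1,1)}(w)=(\alpha^{(1,1)})'(w)\cdot\partial_{k_1}w=(\alpha^{(1,1)})'(w)$, which by the explicit bound $C_{1;f,\tilde g,\Lambda,A,1,0}\le 13\Lambda^{-2}\|f\|_{l^1}\|\tilde g\|_{l^1}$ in Lemma \ref{l:der2} is at most a universal constant times $\Lambda^{-2}\|\hat A\|_{l^1}^2<\Lambda^{-2}(2\eps/63)^2$, which must be checked to fit under $1/(7\cdot 3^4)$. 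Finally, $|\partial_{k_1}g|$ is bounded term by term from the expression \eqref{defg}: each summand is either a derivative of some $\Phi_{d',d''}$ (controlled by Lemma \ref{l:der} as $O(1/|v|)$) or a function of the bounded quantity $w/z$ with $|w|<\eps$ and $|z|\ge\rho$, so the whole thing is $O(1/\rho)$.

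The main obstacle is the numerical bookkeeping. We must choose $\rho$ large enough to absorb every $O(1/\rho)$ remainder into the single constant $C_i/\rho$ on the right, and we must verify that the hypothesis $\|(1+b^2)\hat A(b)\|_{l^1}<2\eps/63$ is strong enough, together with the explicit numerical constants in \eqref{cts} and Lemma \ref{l:der2}, to push the $\|\hat A\|_{l^1}^2$-type contributions below the sharp thresholds $\eps/900$ and $1/567$. No new analytic ideas are needed; the work is in carefully tracking the constants supplied by Lemmas \ref{l:b1}, \ref{l:b1b}, \ref{l:der} and \ref{l:der2}.
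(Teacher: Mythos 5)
Your treatment of part (a) is sound, though it takes a heavier route than the paper: you pass through the decomposition of Lemma \ref{l:b1b} (essentially proving the refined statement of Proposition \ref{p:imp2} first), whereas the paper gets (a) in one line by bounding $\beta_2^{(1)}$ directly from its explicit form \eqref{beta21}, using $\|S\|$ from \eqref{estS} and the elementary bounds \eqref{rec}, which gives $|\beta_2^{(1)}(k)|\le \tfrac{\eps}{900}\tfrac{1}{|z(k)|}$ at once (see \eqref{as1}). Your version does close numerically: with $\|\hat A\|_{l^1}<2\eps/63$ and $\eps<\Lambda/6$ the three constants of \eqref{cts} sum to well under $\eps/900$, so the extra bookkeeping you flag is genuinely checkable.

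Part (b), however, has a real gap. You differentiate the decomposition $z\beta_2^{(1)}=\alpha^{(1,0)}+\alpha^{(1,1)}(w)+\alpha^{(1,2)}+\alpha^{(1,3)}$ and bound $(\alpha^{(1,1)})'(w)$ by the constant $C_{1;f,g,\Lambda,A,1,0}\le 13\Lambda^{-2}\|f\|_{l^1}\|g\|_{l^1}$ from Lemma \ref{l:der2}; but that lemma controls derivatives of the functions $\alpha^{(j)}_{\mu,d'}$, $1\le j\le 3$ (i.e. of $\beta_2^{(1)}$ itself), not of the sub-pieces $\alpha^{(1,j)}$ of Lemma \ref{l:b1b}, and no lemma in the paper bounds $(\alpha^{(1,1)})'$, $\partial_{k_1}\alpha^{(1,2)}$ or $\partial_{k_1}\alpha^{(1,3)}$. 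Your claim that $\alpha^{(1,2)}$ contributes $O(1/\rho)$ is also unsupported: by Lemma \ref{l:b1b} it is only bounded by a fixed constant (of size $\sim\eps^4/\Lambda^3$), not by $C/\rho$, and a bound on a function is not a bound on its derivative; one would have to redo the estimates from the explicit formulas \eqref{a1j}. The direct repair is to avoid the decomposition altogether: write $\partial_{k_1}(\beta_2^{(1)}z)=z\,\partial_{k_1}\beta_2^{(1)}+\beta_2^{(1)}$, apply Lemma \ref{l:der2} with $(n,m)=(1,0)$ and $f=g=(1,-i(-1)^\nu)\cdot\hat A$ to get $|z\,\partial_{k_1}\beta_2^{(1)}|\le 13\Lambda^{-2}\cdot 2\|\hat A\|_{l^1}^2<\tfrac{1}{7\cdot 3^4}$ (since $|z|/(2|z|-R)\le 1$), and use Lemma \ref{l:b1} for $|\beta_2^{(1)}|\le C/|z|\le C/\rho$; this is precisely the paper's \eqref{ee4}. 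A related under-citation occurs in your bound on $\partial_{k_1}g$: the summand $(\beta_2^{(2)}+\beta_2^{(3)})z$ in \eqref{defg} needs derivative bounds on the individual pieces $\beta_2^{(2)},\beta_2^{(3)}$ (Lemma \ref{l:der2} with $j=2,3$, as in \eqref{ee1}--\eqref{ee3}); Lemma \ref{l:der}, which controls only the full coefficient $\Phi_{d',d''}=\beta_2$, is not enough there because of the factor $z$.
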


Now suppose that $(k_1,y) \in M_\nu(y)$. Then,
$$ \left| \frac{\partial F}{\partial k_1}(k_1,y) - 1 \right| \leq
\frac{1}{7 \cdot 3^4} + \frac{C_2}{\rho}. $$
Hence, by the implicit function theorem, by choosing the constant $\rho
\geq R$ sufficiently large, if $F(k_1^*,y)=0$ for some $(k_1^*,y) \in
M_{\nu}(y)$, then there is a neighbourhood $U \times V \subset \C^2$
which contains $(k_1^*,y)$, and an analytic function $\eta : V \to U$
such that $F(k_1,k_2) = 0$ for all $(k_1,k_2) \in U \times V$ if and
only if $k_1 = \eta(k_2)$. In particular this implies that the equation
$F(k_1,k_2)=0$ has at most one solution $(\eta(y),y)$ in $M_\nu(y)$ for
each $y \in \C$. We next look for conditions on $y$ to have a solution
or have no solution in $M_\nu(y)$.

{\tt Step 4 (existence).} We first state an improved version of
Proposition \ref{p:imp}(a).

\begin{prop} \label{p:imp2}
Under the hypotheses of Theorem \ref{t:reg} we have
$$ F(k)-w(k) = \beta_2^{(1,0)} + \beta_2^{(1,1)}(w(k)) +
\beta_2^{(1,2)}(k) + h(k), $$
where
\begin{equation} \label{cte1}
\beta^{(1,0)}_2 = -2i \sum_{b,c \in G'_1} \frac{
\theta_{\nu'}(\hat{A}(-b))}{\theta_{\nu'}(b)} \left[ \delta_{b,c} +
\frac{\theta_{\nu'}(\hat{A}(b-c))}{ \theta_{\nu'}(c)} \right]
\theta_\nu(\hat{A}(c))
\end{equation}
is a constant that depends only on $\rho$ and $A$ and
$$ h \coloneqq \beta_2^{(1,3)} + g. $$
Furthermore,
\begin{align*}
|\beta^{(1,0)}_2| < \frac{1}{100 \Lambda} \, \eps^2, & \qquad \quad
|\beta^{(1,1)}_2(k)| < \frac{1}{40 \Lambda^2} \, \eps^3, \\
|\beta^{(1,2)}_2(k)| < \frac{1}{7^4 \Lambda^3} \, \eps^4, & \qquad \quad
|h(k)| \leq C_{\eps,\Lambda,q,A} \, \frac{1}{\rho}.
\end{align*}
\end{prop}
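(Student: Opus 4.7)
The plan is to simply push the decomposition one layer deeper than what was used in Proposition \ref{p:imp}. From Step 1 of the proof of Theorem \ref{t:reg} we already have
$$ F(k) - w(k) = \beta_2^{(1)}(k)\, z(k) + g(k), $$
and from \eqref{decg} we know $|g(k)| \le C_{\eps,\Lambda,q,A}/\rho$. What remains is to refine the product $\beta_2^{(1)}(k)\,z(k)$. Since $\beta_2 = J_\nu^{00}$ has exactly the form \eqref{Phi} of $\Phi_{0,0}$ with $f(b) = -(1,-i(-1)^\nu)\cdot \hat{A}(-b)$ and $g(c) = (1,-i(-1)^\nu)\cdot \hat{A}(c)$ (read off from Proposition \ref{p:newv}), I would invoke Lemma \ref{l:b1b} directly at $(\mu,d')=(\nu,0)$ to obtain the four-term expansion
$$ z(k)\,\beta_2^{(1)}(k) \;=\; \beta_2^{(1,0)} \;+\; \beta_2^{(1,1)}(w(k)) \;+\; \beta_2^{(1,2)}(k) \;+\; \beta_2^{(1,3)}(k), $$
after which setting $h := \beta_2^{(1,3)} + g$ yields the claimed decomposition of $F(k) - w(k)$.

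For the identification of $\beta_2^{(1,0)}$ with formula \eqref{cte1}, I would unwind the definition of $\alpha_{\mu,d'}^{(1,0)}$ coming from \eqref{a10} with the specific $f,g$ above. The factors $(1,-i(-1)^\nu)\cdot \hat{A}$ in the numerator should regroup, via the identity $(1,-i(-1)^\nu)\cdot \hat{A}(b) = -2i\,\theta_{\nu'}(\hat{A}(b))$ up to sign (and the analogous identity for $\theta_\nu$), into the products of $\theta_\nu,\theta_{\nu'}$ appearing in \eqref{cte1}; the factors $1/N_b(k)$ evaluated at the leading asymptotic value should produce the $1/\theta_{\nu'}(b)$ and $1/\theta_{\nu'}(c)$ denominators, while the bracketed $[\,\delta_{b,c} + \theta_{\nu'}(\hat{A}(b-c))/\theta_{\nu'}(c)\,]$ is exactly the first-order Neumann expansion of $R^{-1}_{G'G'}$ truncated where Lemma \ref{l:b1b} declares the remainder to live in $\alpha^{(1,2)} + \alpha^{(1,3)}$.

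The quantitative bounds are then a matter of feeding the smallness hypothesis $\|(1+b^2)\hat{A}\|_{l^1} < 2\eps/63$ into the constants $C_j$ of Lemma \ref{l:b1b} (given by \eqref{cts}). Since $\beta_2$ is quadratic in $\hat{A}$, both $\|f\|_{l^1}$ and $\|g\|_{l^1}$ are controlled by $\|\hat{A}\|_{l^1} < 2\eps/63$, so each of $|\beta_2^{(1,0)}|$, $|\beta_2^{(1,1)}|$ and $|\beta_2^{(1,2)}|$ picks up a factor $\|\hat{A}\|_{l^1}^2 \lesssim \eps^2/10^3$, multiplied by the appropriate power of $\Lambda^{-1}$ and a purely numerical constant from \eqref{cts}; the claimed bounds $\eps^2/(100\Lambda)$, $\eps^3/(40\Lambda^2)$ and $\eps^4/(7^4\Lambda^3)$ follow. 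The bound $|h(k)| \le C/\rho$ combines \eqref{decg} with Lemma \ref{l:b1b}'s estimate $|\beta_2^{(1,3)}(k)| \le C_3/(2|z(k)|-R)$ together with $|z(k)|\ge|v|\ge \rho$ from Proposition \ref{p:order}(i), so the remainder is uniformly $O(1/\rho)$ in the region $\mathcal{M}_\nu$.

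The only delicate point, and the step I expect to require the most care, is matching \eqref{cte1} on the nose: one has to keep strict track of sign conventions $(-1)^\nu$ versus $(-1)^{\nu'}$, of the factor $-2i$ relating $(1,-i(-1)^\nu)\cdot \hat{A}$ to $\theta_{\nu'}(\hat{A})$, and of whether $\theta_\nu$ or $\theta_{\nu'}$ multiplies a given sum. All other steps are direct substitutions or estimates of Neumann-type series whose convergence has already been established in \S\ref{s:inv}--\S\ref{s:coeff}, so no new analytic input beyond Lemmas \ref{l:invR}, \ref{l:b1} and \ref{l:b1b} is needed.
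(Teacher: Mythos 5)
Your proposal follows essentially the same route as the paper: write $F-w=\beta_2^{(1)}z+g$, rewrite $\beta_2^{(1)}$ via the identity $(1,i(-1)^\nu)\cdot\hat{A}=-2i\,\theta_{\nu'}(\hat{A})$, apply Lemma \ref{l:b1b} at $(\mu,d')=(\nu,0)$ to get the four-term expansion with $\beta_2^{(1,0)}$ given by \eqref{a10} (hence \eqref{cte1}), set $h=\beta_2^{(1,3)}+g$, and feed $\|\hat{A}\|_{l^1}<2\eps/63$, $\eps<\Lambda/6$ into the constants \eqref{cts} together with \eqref{decg} and $|z|\ge|v|\ge\rho$ for the bounds. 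The only imprecision is cosmetic: the bracket in \eqref{cte1} arises from the leading part of $S=(I-Y_{33})^{-1}$ in Lemma \ref{l:b1b}, not literally from a truncated Neumann series for $R_{G'G'}^{-1}$, but this does not affect the argument.
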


We now derive conditions for the existence of solutions. Suppose that
$F(\eta(y),y)=0$. Then, since $\eta(y) + i(-1)^\nu y=w(\eta(y),y)$ and
$\eps<\Lambda/6$, using the above proposition we obtain
\begin{align*}
|\eta(y) + i(-1)^\nu y| = |w(\eta(y),y)| & = |F(\eta(y),y) -
w(\eta(y),y)| \\ & \leq \frac{\eps^2}{100 \Lambda} + \frac{\eps^3}{40
\Lambda^2} + \frac{\eps^4}{7^4 \Lambda^3} + \frac{C}{\rho} \leq
\frac{\eps^2}{50 \Lambda} + \frac{C}{\rho}.
\end{align*}
Hence, by choosing the constant $\rho$ sufficiently large we find that
$$ |\eta(y) + i(-1)^\nu y| < \frac{\eps^2}{40 \Lambda}. $$
In view of \eqref{inc2}, there is no solution in $M_\nu(y)$ if for some
$d \in \Gdual \setminus \{0\}$ we have
$$ |y+(-1)^\nu \theta_\nu(d)|<\eps \qquad \text{and} \qquad |\eta(y) +
d_1 + i(-1)^{\nu'}(y+d_2) | < \eps. $$
This happens if
$$ |y+(-1)^\nu \theta_\nu(d)| \leq \frac{1}{2}\left( \eps -
\frac{\eps^2}{40\Lambda} \right) $$
because in this case
\begin{align*}
| \eta(y) + d_1 + i(-1)^{\nu'}(y+d_2) | & = |\eta(y) + i(-1)^\nu y -
2i(-1)^\nu y + d_1 - i(-1)^\nu d_2| \\
& \leq |\eta(y) + i(-1)^\nu y| + 2 |y+(-1)^\nu \theta_\nu(d)| < \eps.
\end{align*}
Therefore, the image set of $pr$ is contained in
$$ \Omega_1 \coloneqq \Bigg \{ z \in \C \,\, \Bigg | \,\,
|z+(-1)^\nu\theta_\nu(b)| > \frac{1}{2}\left( \eps -
\frac{\eps^2}{40\Lambda} \right) \, \text{ for all } b \in \Gdual
\setminus \{0\} \Bigg \}. $$
On the other hand, in view of \eqref{inc1}, there is a solution in
$M_\nu(y)$ if $|y+(-1)^\nu \theta_\nu(b)| > \eps$ for all $b \in \Gdual
\setminus \{0\}$. Recall from Proposition \ref{p:order}(a) that $\rho <
|v| < 8|k_2|$. Thus, the image set of $pr$ contains the set
$$ \Omega_2 \coloneqq \Big \{ z \in \C \,\, \Big| \,\, 8 |z| > \rho \,
\text{ and } |z+(-1)^\nu \theta_\nu(b)| > \eps \, \text{ for all } b \in
\Gdual \setminus \{0\} \Big \}. $$

{\tt Step 5.} Summarizing, we have the following
biholomorphic correspondence:
\begin{align*}
\mathcal{M}_\nu \ni k \, \xrightarrow{\qquad pr \qquad} \,\, k_2 & \in
\Omega, \\
\mathcal{M}_\nu \ni (\eta(y),y) \, \xleftarrow{\qquad pr^{-1} \qquad}
\,\, y & \in \Omega,
\end{align*}
where
$$ \Omega_2 \subset \Omega \subset \Omega_1 \qquad \text{and} \qquad
\eta(y) = -\beta_2^{(1,0)} -i(-1)^\nu y - r(y), $$
with the constant $\beta_2^{(1,0)}$ given by \eqref{cte1},
$$ |\beta^{(1,0)}_2| < \frac{\eps^2}{100 \Lambda} \qquad \text{and}
\qquad |r(y)| \leq \frac{\eps^3}{50 \Lambda^2} + \frac{C}{\rho}. $$
This completes the proof of the theorem.
\end{proof}


\begin{proof}[Proof of Proposition \ref{p:imp}]
(a) Recall that $\beta_2 = J_\nu^{00}$. First observe that, by
Proposition \ref{p:newv}, Lemma \ref{l:b1}, and \eqref{alpha}, we have
\begin{equation} \label{beta21}
\beta_2^{(1)}(k) = (J_\nu^{00})^{(1)}(k) = \sum_{b,c \in G'_1}
\frac{(1,i(-1)^\nu) \cdot \hat{A}(-b)}{N_b(k)} \, S_{b,c} \,
(1,-i(-1)^\nu) \cdot \hat{A}(c).
\end{equation}
Thus, by \eqref{rec} and \eqref{estS},
\begin{equation} \label{as1}
|\beta_2^{(1)}(k)| \leq \sqrt{2} \| \hat{A} \|_{l_1} \frac{2}{\Lambda
(2|z(k)|-R)} \, \frac{45}{44} \, \sqrt{2} \| \hat{A} \|_{l_1} \leq
\frac{4}{\Lambda |z(k)|} \, \frac{44}{45} \, \left( \frac{2\eps}{63}
\right)^2 \leq \frac{\eps}{900} \, \frac{1}{|z(k)|}.
\end{equation}
Now recall that $|g(k)| \leq C_{\eps,\Lambda,q,A} \, \frac{1}{\rho}$.
Hence,
$$ |F(k) - w(k)| = |\beta_2^{(1)}(k) z(k) + g(k)| \leq \frac{\eps}{900}
+ C_{\eps,\Lambda,q,A} \, \frac{1}{\rho}. $$
This proves part (a).

(b) We first compute
\begin{equation} \label{ee1}
\begin{aligned}
\frac{\partial g}{\partial k_1} & = \frac{\partial \beta_1}{\partial
k_1} \frac{w^2}{z} + \beta_1 \, \frac{2wz-w^2}{z^2} + \left(
\frac{\partial \beta_2^{(2)}}{\partial k_1} + \frac{\partial
\beta_2^{(3)}}{\partial k_1} \right) z + \beta_2^{(2)} + \beta_2^{(3)} +
\frac{\partial \beta_3}{\partial k_1} w + \beta_3 \\
& \quad + \frac{\partial \beta_4}{\partial k_1} \, \frac{w}{z} + \beta_4
\, \frac{z-w}{z^2} + \frac{\partial \beta_5}{\partial k_1} +
\frac{\partial \beta_6}{\partial k_1} \, \frac{1}{z} -
\frac{\beta_6}{z^2} - \frac{\hat{q}(0)}{z^2}.
\end{aligned}
\end{equation}
Now observe that, since $k \in T_\nu(0) \setminus \mathcal{K}_\rho$ we
have $|w(k)| < \eps$, $3|v| \geq |z|$ and $\rho < |v| \leq |z|$.
Furthermore, by Lemmas \ref{l:b1}(i), \ref{l:der}(i) and
\ref{l:der2}(i), for $1 \leq i \leq 6$ and $1 \leq j \leq 2$,
\begin{equation} \label{ee2}
\begin{aligned}
|\beta_i(k)| & \leq \frac{C}{|z(k)|}, \qquad & |\beta_i^{(j)}(k)| & \leq
\frac{C}{|z(k)|^j}, \qquad & |\beta_i^{(3)}(k)| & \leq \frac{C}{|z(k)|
\rho^2}, \\
\left| \frac{\partial \beta_i(k)}{\partial k_1} \right| & \leq
\frac{C}{|z(k)|}, \qquad & \left| \frac{\partial
\beta_i^{(j)}(k)}{\partial k_1} \right| & \le \frac{C}{|z(k)|^j}, \qquad
& \left| \frac{\partial \beta_i^{(3)}(k)}{\partial k_1} \right| & \le
\frac{C}{|z(k)| \rho^2},
\end{aligned}
\end{equation}
where $C = C_{\eps,\Lambda,q,A}$ in all cases. Hence,
\begin{equation} \label{ee3}
\left| \frac{\partial g(k)}{\partial k_1} \right| \leq
C_{\eps,\Lambda,q,A} \, \frac{1}{\rho}.
\end{equation}
By Lemma \ref{l:der2}(i) with $f=g=(1,-i(-1)^\nu) \cdot \hat{A}$, we
obtain
\begin{equation} \label{ee4}
\left | z(k) \frac{ \partial \beta_2^{(1)}(k)}{\partial k_1} \right|
\leq |z(k)| \frac{13}{\Lambda^2 |z(k)|} \| (1,-i(-1)^\nu) \cdot \hat{A}
\|_{l^1}^2 \leq \frac{26}{\Lambda^2} \| \hat{A} \|_{l^1}^2 < \frac{1}{7
\cdot 3^4}. 
\end{equation}
Therefore,
\begin{align*}
\left| \frac{\partial F}{\partial k_1}(k) - 1 \right| & = \left|
\frac{\partial}{\partial k_1} ( F(k)-w(k) ) \right| = \left|
\frac{\partial}{\partial k_1} ( \beta_2^{(1)}(k) z(k) + g(k) ) \right|\\
& = \left| \frac{ \partial \beta_2^{(1)}}{\partial k_1}(k) z(k) +
\beta_2^{(1)}(k) + \frac{\partial g}{\partial k_1}(k) \right| \leq
\frac{1}{7 \cdot 3^4} + C_{\eps,\Lambda,q,A} \, \frac{1}{\rho}.
\end{align*}
This proves part (b) and completes the proof of the proposition.
\end{proof}


\begin{proof}[Proof of Proposition \ref{p:imp2}]
First observe that
$$ (1,i(-1)^\nu) \cdot A = A_1 + i(-1)^\nu A_2 = A_1 - i(-1)^{\nu'} A_2
= -2i \theta_{\nu'}(A). $$
Thus, recalling \eqref{beta21},
$$ \beta_2^{(1)}(k) = (J_\nu^{00})^{(1)}(k) = \sum_{b,c \in G'_1}
\frac{2i \theta_{\nu'}(\hat{A}(-b))}{N_b(k)} \, S_{b,c} \, 2i
\theta_\nu(\hat{A}(c)). $$
Now, by Lemma \ref{l:b1b} we have
$$ z(k) \beta_2^{(1)}(k) = \beta_2^{(1,0)} + \beta_2^{(1,1)}(w(k)) +
\beta_2^{(1,2)}(k) + \beta_3^{(1,3)}(k), $$
where
$$ \beta_2^{(1,0)} = - 2i \sum_{b,c \in G_1'}
\frac{\theta_{\nu'}(\hat{A}(-b))}{\theta_{\nu'}(b)} \left[ \delta_{b,c}
+ \frac{\theta_{\nu'}(\hat{A}(b-c))}{\theta_{\nu'}(c)} \right]
\theta_{\nu}(\hat{A}(c)) $$
and
$$ |\beta_3^{(1,3)}(k)| \leq C_{\Lambda,A} \, \frac{1}{|z(k)|} <
C_{\Lambda,A} \, \frac{1}{\rho}. $$
Hence,
$$ F(k)-w(k) = z(k) \beta_2^{(1)}(k) + g(k) = \beta_2^{(1,0)} +
\beta_2^{(1,1)}(w(k)) + \beta_2^{(1,2)}(k) + h(k) $$
with $h \coloneqq \beta_3^{(1,3)} + g$. Furthermore, in view of
\eqref{decg},
$$ |h(k)| \leq |\beta_3^{(1,3)}(k)| + |g(k)| < C_{\eps,\Lambda,q,A} \,
\frac{1}{\rho}. $$
This proves the first part of the proposition. Finally, by \eqref{cts},
since $\| \hat{A} \|_{l^1} < 2\eps/63$ and $\eps<\Lambda/6$, we find
that
$$ |\beta_2^{(1,0)}| \leq \frac{1}{2 \Lambda} \left( 1 + \frac{1}{2
\Lambda} \| \theta_{\nu'}(\hat{A}) \|_{l^1} \right) \| 2i
\theta_{\nu'}(\hat{A}) \|_{l^1} \| 2i \theta_{\nu}(\hat{A}) \|_{l^1}
\leq \frac{4}{\Lambda} \| \hat{A} \|_{l^1}^2 < \frac{1}{100 \Lambda} \,
\eps^2, $$
$$ |\beta_2^{(1,1)}| \leq \frac{\eps}{\Lambda^2} \left( 1 + \frac{7}{6
\Lambda} \| \theta_{\nu'}(\hat{A}) \|_{l^1} \right) \| 2i
\theta_{\nu'}(\hat{A}) \|_{l^1} \| 2i \theta_{\nu}(\hat{A}) \|_{l^1}
\leq \frac{8}{\Lambda^2} \, \eps \| \hat{A} \|_{l^1}^2 < \frac{1}{40
\Lambda^2} \, \eps^3 $$
and
$$ |\beta_2^{(1,2)}| \leq \frac{64}{\Lambda^3} \| \theta_{\nu'}(\hat{A})
\|_{l^1}^2 \| 2i \theta_{\nu'}(\hat{A}) \|_{l^1} \| 2i
\theta_{\nu}(\hat{A}) \|_{l^1} \leq \frac{256}{\Lambda^3} \| \hat{A}
\|_{l^1}^4 < \frac{1}{7^4 \Lambda^3} \, \eps^4. $$
This completes the proof.
\end{proof}


\section{The handles} \label{s:hand}

\begin{proof}[Proof of Theorem \ref{t:hand}]
{\tt Step 1 (defining equation).} Let $G=\{0,d\}$ and consider the
region $(T_\nu(0) \cap T_{\nu'}(d)) \setminus \mathcal{K}_\rho$, where
$\rho$ is a constant to be chosen sufficiently large obeying $\rho \ge
R$. Observe that, this requires $d$ being sufficiently large for
$(T_\nu(0) \cap T_{\nu'}(d)) \setminus \mathcal{K}_\rho$ being not
empty. In fact, by Proposition \ref{p:order}(ii), for $k$ in this region
we have $\rho < |v| \le 2|d|$. Now, recall from Proposition
\ref{p:Gprime}(ii) that $G' = \{ b \in \Gdual \; | \;\, |N_b(k)| \geq
\eps |v| \}$, and to simplify the notation write
$$ \mathcal{H}_\nu \coloneqq \Fhat(A,V) \cap (T_\nu(0) \cap T_{\nu'}(d))
\setminus \mathcal{K}_\rho. $$
By Lemma \ref{l:eqn}(ii), a point $k$ is in $\mathcal{H}_\nu$ if and
only if
\begin{equation} \label{eqh}
(N_0(k)+D_{0,0}(k)) (N_d(k)+D_{d,d}(k)) - D_{0,d}(k)D_{d,0}(k) = 0.
\end{equation}
Define
\begin{equation} \label{ch1}
\begin{aligned}
w_1(k) & \coloneqq w_{\nu,0} = k_1 + i (-1)^\nu k_2, \\
z_1(k) & \coloneqq z_{\nu,0} = k_1 - i (-1)^\nu k_2, \\
w_2(k) & \coloneqq w_{\nu',d} = k_1+d_1 + i(-1)^{\nu'} (k_2+d_2), \\
z_2(k) & \coloneqq z_{\nu',d} = k_1+d_1 - i(-1)^{\nu'} (k_2+d_2).
\end{aligned}
\end{equation}
Note that, by Proposition \ref{p:order}(ii),
$$ |v| \le |z_1| \le 3|v|, \qquad |v| \le |z_2| \le 3|v| \qquad
\text{and} \qquad |d| \le |z_2| \le 2|d|. $$
By Proposition \ref{p:newv},
\begin{equation} \label{term1}
\begin{aligned}
N_0 + D_{0,0} & = \beta_1 w_1^2 + \beta_2 z_1^2 + (1 + \beta_3) w_1 z_1
+ \beta_4 w_1 + \beta_5 z_1 + \beta_6 + \hat{q}{(0)}, \\
N_d + D_{d,d} & = \eta_1 w_2^2 + \eta_2 z_2^2 + (1 + \eta_3) w_2 z_2 +
\eta_4 w_2 + \eta_5 z_2 + \eta_6 + \hat{q}{(0)},
\end{aligned}
\end{equation}
where
\begin{alignat*}{3}
\beta_1 & \coloneqq J_{\nu'}^{00}, & \qquad \beta_2 & \coloneqq
J_\nu^{00}, & \qquad \beta_3 & \coloneqq K^{00}, \\
\beta_4 & \coloneqq L_{\nu'}^{00}, & \qquad \beta_5 & \coloneqq
L_\nu^{00}, & \qquad \beta_6 & \coloneqq M^{00} - \hat{q}(0),
\end{alignat*}
and
\begin{alignat*}{3}
\eta_1 & \coloneqq J_\nu^{dd}, & \qquad \eta_2 & \coloneqq
J_{\nu'}^{dd}, & \qquad \eta_3 & \coloneqq K^{dd}, \\
\eta_4 & \coloneqq L_\nu^{dd}, & \qquad \eta_5 & \coloneqq
L_{\nu'}^{dd}, & \qquad \eta_6 & \coloneqq M^{dd} - \hat{q}(0),
\end{alignat*}
with $J_\nu^{d'd'}$, $K^{d'd'}$, $L_\nu^{d'd'}$ and $M^{d'd'}$ given by
Proposition \ref{p:newv}. Observe that all the coefficients $\beta_1,
\dots, \beta_6$ and $\eta_1, \dots, \eta_6$ have exactly the same form
as the function $\Phi_{d',d'}(k)$ of Lemma \ref{l:b1}(ii) (see
\eqref{Phi}). Thus, by this lemma, for $1 \leq i \leq 6$ we have
\begin{equation} \label{coef2}
\beta_i = \beta_i^{(1)} + \beta_i^{(2)} + \beta_i^{(3)} \qquad
\text{and} \qquad \eta_i = \eta_i^{(1)} + \eta_i^{(2)} + \eta_i^{(3)},
\end{equation}
where the functions $\beta_i^{(j)}$ and $\eta_i^{(j)}$ are analytic in
the region under consideration with
\begin{align*}
|\beta_i^{(j)}(k)| \leq \frac{C}{(2|z_1(k)|-\rho)^j} \leq
\frac{C}{|z_1(k)|^j} \quad \text{for} \quad 1 \le j \le 2 \qquad
\text{and} \qquad |\beta_i^{(3)}(k)| \leq \frac{C}{|z_1(k)| \rho^2}, \\
|\eta_i^{(j)}(k)| \leq \frac{C}{(2|z_2(k)|-\rho)^j} \leq
\frac{C}{|z_2(k)|^j} \quad \text{for} \quad 1 \le j \le 2 \qquad
\text{and} \qquad |\eta_i^{(3)}(k)| \leq \frac{C}{|z_2(k)| \rho^2}, 
\end{align*}
where $C = C_{\eps,\Lambda, q, A}$ is a constant. The exact expressions
for $\beta_i^{(j)}$ and $\eta_i^{(j)}$ can be easily obtained from the
definitions and from Lemma \ref{l:b1}(ii). Substituting \eqref{coef2}
into \eqref{term1} yields
\begin{equation} \label{term1b}
\begin{aligned}
\frac{1}{z_1} (N_0 + D_{0,0}) & = w_1 + \beta_2^{(1)} \, z_1 + g_1, \\
\frac{1}{z_2} (N_d + D_{d,d}) & = w_2 + \eta_2^{(1)} \, z_2 + g_2,
\end{aligned}
\end{equation}
where
\begin{equation} \label{defg1g2}
\begin{aligned}
g_1 & \coloneqq \frac{\beta_1 w_1^2}{z_1} + (\beta_2^{(2)} +
\beta_2^{(3)}) z_1 + \beta_3 w_1 + \frac{\beta_4 w_1}{z_1} + \beta_5 +
\frac{\beta_6}{z_1} + \frac{\hat{q}(0)}{z_1}, \\
g_2 & \coloneqq \frac{\eta_1 w_2^2}{z_2} + (\eta_2^{(2)} + \eta_2^{(3)})
z_2 + \eta_3 w_2 + \frac{\eta_4 w_2}{z_2} + \eta_5 + \frac{\eta_6}{z_2}
+ \frac{\hat{q}(0)}{z_2}
\end{aligned}
\end{equation}
obey
\begin{equation} \label{decg1g2}
|g_1(k)| \le \frac{C}{\rho} \qquad \text{and} \qquad |g_2(k)| \le
\frac{C}{\rho},
\end{equation}
with a constant $C = C_{\eps,\Lambda, q,A}$. This gives us more
information about the first term in \eqref{eqh}. We next consider the
second term in that equation.

Write
\begin{equation} \label{term2}
D_{0,d} = c_1(d) + p_1 \qquad \text{and} \qquad D_{d,0} = c_2(d) + p_2
\end{equation}
with
\begin{alignat*}{2}
c_1(d) & \coloneqq \hat{q}(-d) - 2d \cdot \hat{A}(-d), \qquad \quad &
p_1 & \coloneqq D_{0,d} - \hat{q}(-d) + 2d \cdot \hat{A}(-d), \\
c_2(d) & \coloneqq \hat{q}(d) + 2d \cdot \hat{A}(d), \qquad \quad & p_2
& \coloneqq D_{d,0} - \hat{q}(d) - 2d \cdot \hat{A}(d).
\end{alignat*}
We have the following estimates.

\begin{prop} \label{p:p1p2}
Under the hypotheses of Theorem \ref{t:hand} we have, for any integers
$n$ and $m$ with $n+m \ge 0$ and for $1 \le j \le 2$,
$$ \left| \frac{\partial^{n+m}}{\partial k_1^n \partial k_2^m} \, p_j(k)
\right| \le \frac{C_1}{|d|} \qquad \text{and} \qquad |c_j(d)| \le
\frac{C_2}{|d|}, $$
where the constants $C_1$ and $C_2$ depend only on $\eps$, $\Lambda$,
$q$ and $A$.
\end{prop}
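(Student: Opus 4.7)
The plan is to bound $c_j(d)$ by direct pointwise decay of the Fourier coefficients of $q$ and $A$, and to estimate $p_j$ by unfolding the definition of $D_{0,d}$, isolating the explicit linear-in-$k$ term that cancels $c_j(d)$, and controlling the remaining double sum via Lemma \ref{l:derii}.

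The hypotheses $\|b^2\hat q(b)\|_{l^1}<\infty$ and $\|(1+b^2)\hat A(b)\|_{l^1}<\tfrac{2}{63}\eps$ yield the termwise bounds $|\hat q(b)|\le C/|b|^2$ and $|\hat A(b)|\le C/|b|^2$ for $b\neq 0$, since each term of a convergent nonnegative series is dominated by the total sum. These immediately give $|c_1(d)|\le|\hat q(-d)|+2|d|\,|\hat A(-d)|\le C/|d|^2+2C/|d|\le C'/|d|$, and likewise for $c_2(d)$.

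For $p_j$, I would unfold $w_{0,d}=\hat q(-d)-2(d+k)\cdot\hat A(-d)=c_1(d)-2k\cdot\hat A(-d)$ in the definition of $D_{0,d}$, obtaining $p_1=-2k\cdot\hat A(-d)-S_1$ with $S_1\coloneqq\sum_{b,c\in G'}(w_{0,b}/N_b(k))(R_{G'G'}^{-1})_{b,c}w_{c,d}$. On the region under consideration, Proposition \ref{p:order}(ii) combined with the tube estimate $|u|<2|v|$ (derived exactly as in the proof of Lemma \ref{l:eqn}(ii)) gives $|k|\le 3|v|\le 6|d|$, so $|-2k\cdot\hat A(-d)|\le C|d|/|d|^2=C/|d|$, with first $k$-derivatives equal to $-2\hat A_i(-d)=O(1/|d|^2)$ and higher derivatives vanishing. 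For $S_1$, I would expand $w_{0,b}$ and $w_{c,d}$ via their definitions, pull the $c$-independent factor $(d+k)$ out of $w_{c,d}$, and rewrite $(b+k)_i/N_b(k)=\tfrac{1}{2}(1/N_{b,1}(k)+1/N_{b,2}(k))$ for $i=1$ (and analogously for $i=2$). After distribution, $S_1$ becomes a finite linear combination of terms of the form $P(k,d)\,\Phi_{0,d}(k)$, where $P$ is a polynomial of total degree at most $2$ in $(k,d)$ and each $\Phi_{0,d}$ has the shape \eqref{Phi} with $f,g\in\{\hat q,\hat A_1,\hat A_2\}$, possibly with one factor $1/N_b(k)$ replaced by a half-resolvent $1/N_{b,\mu}(k)$.

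Finally I would apply Lemma \ref{l:derii} with $\beta=2$ to each standard $\Phi_{0,d}$-sum: the summability conditions $\||b|^2 f\|_{l^1},\||b|^2 g\|_{l^1}<\infty$ hold for every admissible $f,g$, so the lemma yields $|\partial^{n+m}\Phi_{0,d}/\partial k_1^n\partial k_2^m|\le C/|d|^3$. Combined with $|P(k,d)|\le C|d|^2$ and the product rule, each such contribution (and each of its $k$-derivatives) is bounded by $C|d|^2/|d|^3=C/|d|$. The half-resolvent variant sums are treated analogously using the bounds on $|N_{b,\mu}(k)|$ supplied by Proposition \ref{p:Nb} (valid for $b\in G'$ since $k\notin T_b$). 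Summing the finitely many contributions and combining with the bound on $-2k\cdot\hat A(-d)$ gives $|\partial^{n+m}p_1/\partial k_1^n\partial k_2^m|\le C/|d|$; the argument for $p_2$ is identical after interchanging the roles of $0$ and $d$. The main technical subtlety is the half-resolvent rewriting: the raw factor $-2b\cdot\hat A(-b)$ inside the sum does not fit the hypothesis $\||b|^2 f\|_{l^1}<\infty$ of Lemma \ref{l:derii} when $f(b)=2b\cdot\hat A(b)$ is taken directly (this would require $\||b|^3\hat A\|_{l^1}$), so the decomposition $(b+k)_i/N_b(k)=\tfrac{1}{2}(1/N_{b,1}+1/N_{b,2})$ is essential in order to shift the $b$-factor from $f$ onto the resolvent in a form compatible with the available summability.
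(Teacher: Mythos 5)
Your bound for $c_j(d)$ is the paper's own argument, and your treatment of $p_j$ is a genuinely different route. The paper first passes to the coordinates of Proposition \ref{p:newv}, writing $p_{d',d''}$ as a quadratic polynomial in $w_{\nu,d'},z_{\nu,d'}$ whose coefficients $J^{d'd''}_\nu$, $K^{d'd''}$, $\tilde L^{d'd''}_\nu$, $\tilde M^{d'd''}$ are literally of the form \eqref{Phi}; it then applies Lemma \ref{l:derii} with $\beta=2$ to get $C/|d|^3$ for each coefficient and all its $k$-derivatives, and concludes via $|w|<\eps$, $|z_1|\le 6|d|$, $|z_2|\le 2|d|$ and Leibniz. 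You instead stay in the original variables, peel off the explicit term $-2k\cdot\hat{A}(\mp d)$, and reduce the double sum to $\Phi$-type sums with polynomial prefactors; the price is that the $-2b\cdot\hat{A}(-b)$ part of $w_{0,b}$ forces your half-resolvent rewriting. Your observation that $f(b)=2b\cdot\hat{A}(b)$ fails the hypothesis $\||b|^2 f(b)\|_{l^1}<\infty$ is well taken; note that the paper's own application of Lemma \ref{l:derii} to $\tilde L$ and $\tilde M$ involves $f(b)=\hat{q}(b)+2b\cdot\hat{A}(b)$, so the same moment question is lurking there, and your device (or an extra moment on $\hat{A}$) is one way to make that step airtight.

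The under-argued step is precisely the half-resolvent sums. They are not covered by Lemma \ref{l:derii} as stated, and Proposition \ref{p:Nb} is not the right citation either: it bounds the product $|N_b(k)|$, whereas all you actually have for $b\in G'$ is $|N_{b,\mu}(k)|\ge\eps$ (both factors are at least $\eps$ because $k\notin T_b$, no three tubes having common points). If you rerun the $\sigma$-norm proof of Lemma \ref{l:derii} with the diagonal matrix of entries $1/N_{b,\mu}(k)$ in place of $\Delta_k^{-1}\pi_{G'}$, that diagonal factor contributes only $1/\eps$ instead of $2/(\Lambda|v|)$, so these sums and their $k$-derivatives come out bounded by $C/|d|^{\beta}=C/|d|^2$, not $C/|d|^3$; hence your bookkeeping ``$|P(k,d)|\le C|d|^2$ times $C/|d|^3$'' does not apply to them. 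The argument still closes, but only because after your own decomposition the half-resolvent terms carry a prefactor of degree at most one (namely $1$ or $-2(d+k)_i$, of size $C|d|$ since $|k|\le 3|v|\le 6|d|$), so that $C|d|\cdot C/|d|^2=C/|d|$. To complete the proof you need to state and prove this half-resolvent variant of Lemma \ref{l:derii} (a routine adaptation of its $\sigma$-norm argument and of the derivative machinery of Lemma \ref{l:der}) and redo the degree count explicitly, rather than leaving it at ``treated analogously''.
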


Thus, by dividing both sides of \eqref{eqh} by $z_1z_2$ and substituting
\eqref{term1b} and \eqref{term2} we find that
\begin{equation} \label{eqh2}
\begin{aligned} 0 & = \frac{1}{z_1 z_2} \big[ (N_0+D_{0,0})
(N_d+D_{d,d}) - D_{0,d} D_{d,0} \big] \\
& = (w_1 + \beta_2^{(1)} \, z_1 + g_1)(w_2 + \eta_2^{(1)} \, z_2 + g_2)
- \frac{1}{z_1 z_2} (c_1(d) + p_1)(c_2(d) + p_2).
\end{aligned}
\end{equation}

We now introduce a (nonlinear) change of variables in $\C^2$. Set
\begin{equation} \label{x1x2}
\begin{aligned}
x_1(k) & \coloneqq w_1(k) + \beta_2^{(1)}(k) \, z_1(k) + g_1(k), \\
x_2(k) & \coloneqq w_2(k) + \eta_2^{(1)}(k) \, z_2(k) + g_2(k).
\end{aligned}
\end{equation}
This transformation obeys the following estimates.

\begin{prop} \label{p:jac}
Under the hypotheses of Theorem \ref{t:hand} we have:

\vspace{0.2cm}

\noindent {\rm (i)} For $1 \leq j \leq 2$ and for $\rho$ sufficiently
large,
$$ |x_j(k) - w_j(k)| \le \frac{\eps}{900} + \frac{C}{\rho} <
\frac{\eps}{8}. $$

\noindent {\rm (ii)}
\begin{align*}
\begin{pmatrix}
\frac{\partial x_1}{\partial k_1} & \frac{\partial x_1}{\partial k_2}\\
\frac{\partial x_2}{\partial k_1} & \frac{\partial x_2}{\partial k_2}
\end{pmatrix} & = \begin{pmatrix} 1 & i(-1)^\nu \\ 1 & i(-1)^{\nu'}
\end{pmatrix} ( I + M ) \shortintertext{and}
\begin{pmatrix}
\frac{\partial k_1}{\partial x_1} & \frac{\partial k_1}{\partial x_2}\\
\frac{\partial k_2}{\partial x_1} & \frac{\partial k_2}{\partial x_2}
\end{pmatrix} & = \frac{1}{2} \begin{pmatrix} 1 & 1 \\ i(-1)^{\nu'} &
i(-1)^\nu \end{pmatrix} (I + N)
\end{align*}
with
$$ \| M \| \leq \frac{4}{7 \cdot 3^4} + \frac{C}{\rho} < \frac{1}{2}
\qquad \text{and} \qquad \| N \| \le 4 \| M \|. $$
Furthermore, for all $m,i,j \in \{1,2\}$,
$$ \left| \frac{\partial^2 k_m}{\partial x_i \partial x_j} \right| \le
\frac{3}{\Lambda^3} \, \eps^2 +  \frac{C}{\rho}. $$
Here, all the constants $C$ depend only on $\eps$, $\Lambda$, $q$ and
$A$.  \end{prop}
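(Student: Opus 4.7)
The plan is to follow exactly the template of Propositions \ref{p:imp} and \ref{p:imp2}, now applied to \emph{both} coordinates $x_1$ and $x_2$ simultaneously, and then to deduce the matrix statements from pointwise bounds on each entry.

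For part (i), observe from \eqref{x1x2} that $x_1 - w_1 = \beta_2^{(1)} z_1 + g_1$ and $x_2 - w_2 = \eta_2^{(1)} z_2 + g_2$; these are precisely the quantities bounded in the proof of Proposition \ref{p:imp}(a). The leading piece $|\beta_2^{(1)}(k) z_1(k)| < \eps/900$ follows from Lemma \ref{l:b1}(ii) applied with $f = (1,i(-1)^\nu)\cdot\hat{A}$ and $g=(1,-i(-1)^\nu)\cdot\hat{A}$, combined with the smallness hypothesis $\|\hat{A}\|_{l^1} < 2\eps/63$; the bound for $\eta_2^{(1)}(k) z_2(k)$ is the analogous application with the roles of $\nu$ and $\nu'$ exchanged and with $d' = d$ in place of $d' = 0$. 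The $O(1/\rho)$ remainders are \eqref{decg1g2}.

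For the Jacobian in (ii), note that $w_j$ and $z_j$ are affine in $k$, so the Jacobian of $(x_1, x_2)$ decomposes as $J_0 + P$ with
\[
J_0 \coloneqq \begin{pmatrix} 1 & i(-1)^\nu \\ 1 & i(-1)^{\nu'} \end{pmatrix},
\qquad P_{j,i} = \frac{\partial}{\partial k_i}\bigl(\beta_2^{(1)} z_1 + g_1\bigr) \text{ or } \frac{\partial}{\partial k_i}\bigl(\eta_2^{(1)} z_2 + g_2\bigr).
\]
Each entry of $P$ is bounded by $\tfrac{1}{7 \cdot 3^4} + C/\rho$ by repeating the computation of Proposition \ref{p:imp}(b), using Lemma \ref{l:der2} with the explicit constant $C_{1;f,g,\Lambda,A,1,0}, C_{1;f,g,\Lambda,A,0,1} \le 13 \Lambda^{-2} \|f\|_{l^1}\|g\|_{l^1}$ and the smallness of $\hat{A}$. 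A direct computation gives $J_0 J_0^* = 2I$, hence $\|J_0\| = \sqrt{2}$, $\|J_0^{-1}\| = 1/\sqrt{2}$, and $J_0^{-1} = \tfrac{1}{2}\begin{pmatrix} 1 & 1 \\ i(-1)^{\nu'} & i(-1)^\nu \end{pmatrix}$. Setting $M \coloneqq J_0^{-1} P$ gives the factorization $J_0(I+M)$, and $\|M\| \le \|J_0^{-1}\|\|P\| \le \tfrac{4}{7 \cdot 3^4} + C/\rho < 1/2$ for $\rho$ sufficiently large. Inverting, $(\text{Jac})^{-1} = (I+M)^{-1} J_0^{-1} = J_0^{-1}\bigl(I + J_0((I+M)^{-1}-I)J_0^{-1}\bigr)$, so $N = J_0((I+M)^{-1}-I) J_0^{-1}$. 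The Neumann identity $(I+M)^{-1} - I = -M(I+M)^{-1}$ yields $\|(I+M)^{-1}-I\| \le \|M\|/(1-\|M\|) \le 2\|M\|$; since $J_0/\sqrt{2}$ is unitary, conjugation preserves the operator norm, giving $\|N\| \le 2\|M\| \le 4\|M\|$ as claimed.

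For the second derivatives, the chain-rule identity
\[
\frac{\partial^2 k_m}{\partial x_i \partial x_j} = -\sum_{p,q,s} \frac{\partial k_m}{\partial x_p}\,\frac{\partial^2 x_p}{\partial k_q \partial k_s}\,\frac{\partial k_q}{\partial x_i}\,\frac{\partial k_s}{\partial x_j}
\]
reduces the task to bounding $\partial^2 x_p/\partial k_q \partial k_s$, since the first-derivative factors are $O(1)$ by what we just proved. Because $w_p, z_p$ are affine, the dominant term in $\partial^2(\beta_2^{(1)} z_1)/\partial k_q \partial k_s$ is $z_1\cdot\partial^2 \beta_2^{(1)}/\partial k_q \partial k_s$, which by Lemma \ref{l:der2}(ii) with $n+m = 2$ and the explicit constant $C_{1;f,g,\Lambda,A,1,1} \le 65 \Lambda^{-3}\|f\|_{l^1}\|g\|_{l^1}$ is bounded by $65 \Lambda^{-3} \|\hat{A}\|_{l^1}^2 < 3\eps^2/\Lambda^3$ after invoking $\|\hat{A}\|_{l^1} < 2\eps/63$. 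All remaining contributions---the cross terms $\partial\beta_2^{(1)}/\partial k_q \cdot \partial z_1/\partial k_s$ and the second derivatives of $g_j$ computed from \eqref{defg1g2}---are $O(1/\rho)$ using \eqref{ee2} together with Lemmas \ref{l:der} and \ref{l:der2}. The main obstacle is purely bookkeeping: tracking that the numerical prefactor tightens to $3$ rather than a larger number, and verifying that every sub-leading piece of $g_j$, after two differentiations, still decays like $1/\rho$ (since each term in \eqref{defg1g2} either carries an explicit $1/z_j$ or is itself one of the decaying $\beta_i^{(2)}, \beta_i^{(3)}$ factors).
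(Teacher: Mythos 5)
Your proposal follows the paper's own proof almost step for step: part (i) is the bound $|\beta_2^{(1)}z_1|,\,|\eta_2^{(1)}z_2|\le\eps/900$ (the computation \eqref{as1}, applied also with $\nu,\nu'$ and $d'=d$) together with \eqref{decg1g2}; part (ii) is the same factorization of the Jacobian through the constant matrix with rows $(1,i(-1)^\nu)$ and $(1,i(-1)^{\nu'})$, with each perturbation entry controlled by $|z_j\,\partial\beta_2^{(1)}/\partial k_i|\le 13\Lambda^{-2}\|f\|_{l^1}\|g\|_{l^1}<\tfrac{1}{7\cdot 3^4}$ plus $O(1/\rho)$ terms, exactly as in the proof of Proposition \ref{p:imp}(b). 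Your derivation of the bound on $N$ via $N=J_0\bigl((I+M)^{-1}-I\bigr)J_0^{-1}$ and unitarity of $J_0/\sqrt{2}$ gives $\|N\|\le 2\|M\|$, slightly sharper than the paper's $\|N\|\le 4\|M\|$; that part is fine.

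The one place the argument does not close as written is the second-derivative constant, which is part of the statement being proved. You bound the dominant term $z_1\,\partial^2\beta_2^{(1)}/\partial k_q\partial k_s$ only by $3\eps^2/\Lambda^3$, and then feed it into the chain-rule identity, which (as in the paper) costs a further factor of roughly $4(3/2)^3=27/2$ coming from the sum over terms and the three first-derivative factors $|\partial k/\partial x|\le 3/2$. With your intermediate bound this yields about $40\,\eps^2/\Lambda^3$, not the claimed $3\eps^2/\Lambda^3+C/\rho$; flagging this as ``bookkeeping'' does not discharge it, since the explicit constant is exactly what the proposition asserts. There is also a small slip in the same line: with $f=g=(1,-i(-1)^\nu)\cdot\hat{A}$ one has $\|f\|_{l^1}\|g\|_{l^1}\le 2\|\hat{A}\|_{l^1}^2$, so the correct value is $130\Lambda^{-3}\|\hat{A}\|_{l^1}^2$, not $65\Lambda^{-3}\|\hat{A}\|_{l^1}^2$. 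The repair is immediate from the very constant you cite: $130\Lambda^{-3}(2\eps/63)^2<\eps^2/(5\Lambda^3)$, hence $|\partial^2 x_l/\partial k_q\partial k_s|\le \eps^2/(5\Lambda^3)+C/\rho$, and then $\tfrac{27}{2}\cdot\tfrac{1}{5}=\tfrac{27}{10}<3$ gives the stated bound, provided you also record explicitly (as your part (ii) allows) that $|\partial k_i/\partial x_j|\le\tfrac12(1+\|N\|)\le\tfrac32$.
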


By the inverse function theorem, these estimates imply that the above
transformation is invertible. Therefore, by rewriting the equation
\eqref{eqh2} in terms of these new variables, we conclude that a point
$k$ is in $\mathcal{H}_\nu$ if and only if $x_1(k)$ and $x_2(k)$ satisfy
the equation
\begin{equation} \label{eqh3}
x_1 x_2 + r(x_1,x_2) = 0,
\end{equation}
where
$$ r(x_1,x_2) \coloneqq - \frac{1}{z_1 z_2} (c_1(d) + p_1)(c_2(d) +
p_2). $$
In order to study this defining equation we need some estimates.

{\tt Step 2 (estimates).} Using the above inequalities we have, for
$i,j,l \in \{1,2\}$,
$$ \left| \frac{\partial}{\partial x_i} \, p_j(k(x)) \right| \le
\sum_{m=1}^2 \left| \frac{\partial p_j}{\partial k_m} \, \frac{\partial
k_m}{\partial x_i} \right| \le \frac{C}{|d|} $$
and
$$ \left| \frac{\partial^2}{\partial x_i \partial x_l} \, p_j(k(x))
\right| \le \sum_{m,n=1}^2 \left| \frac{\partial^2 p_j}{\partial k_m
\partial k_n} \, \frac{\partial k_m}{\partial x_i} \, \frac{\partial
k_n}{\partial x_l} \right| + \sum_{m=1}^2 \left| \frac{\partial
p_j}{\partial k_m} \, \frac{\partial^2 k_m}{\partial x_i \partial x_l}
\right| \le \frac{C}{|d|}, $$
so that
$$ |r(x)| \le C \, \frac{1}{|d|^2} \, \frac{1}{|d|} \, \frac{1}{|d|} \le
\frac{C}{|d|^4}, $$
$$ \left| \frac{\partial}{\partial x_i} \, r(x) \right| \le C \,
\frac{1}{|d|^3} \, \frac{1}{|d|} \, \frac{1}{|d|} + C \, \frac{1}{|d|^2}
\, \frac{1}{|d|} \, \frac{1}{|d|} \le \frac{C}{|d|^4} $$
and
$$ \left| \frac{\partial^2}{\partial x_i \partial x_j} \, r(x) \right|
\le \frac{C}{|d|^4}. $$
Here, all the constants depend only on $\eps$, $\Lambda$, $q$ and $A$.

{\tt Step 3 (Morse lemma).} We now apply the quantitative Morse lemma in
Appendix \ref{s:morse} for studying the equation \eqref{eqh3}. We
consider this lemma with $a = b = C/|d|^4$, $\delta = \eps$, and $d$
sufficiently large so that $b < \max \{ \frac{2}{3} \, \frac{1}{55},
\frac{\eps}{4} \}$. Observe that, under this condition we have
$$ (\delta-a)(1-19b) > \frac{\eps}{2} \qquad \text{and} \qquad
(\delta-a)(1-55b) > \frac{\eps}{4}. $$
According to this lemma, there is a biholomorphism $\Phi_\nu$ defined on
$$ \Omega_1 \coloneqq \left \{ (z_1,z_2) \in \C^2 \; \Big | \;\; |z_1| <
\frac{\eps}{2} \; \text{ and } \; |z_2| < \frac{\eps}{2} \right\} $$
with range containing
\begin{equation} \label{range}
\left \{ (x_1,x_2) \in \C^2 \; \Big | \;\; |x_1| < \frac{\eps}{4} \;
\text{ and } \; |x_2| < \frac{\eps}{4} \right \}
\end{equation}
such that
\begin{equation} \label{propPhi}
\begin{aligned}
\| D \Phi_\nu - I \| & \le \frac{C}{|d|^2}, \\
((x_1 x_2 + r) \circ \Phi_\nu)(z_1,z_2) & = z_1 z_2 + t_d, \\
|t_d| & \le \frac{C}{|d|^4}, \\
|\Phi_\nu(0)| & \le \frac{C}{|d|^4},
\end{aligned}
\end{equation}
where $D \Phi_\nu$ is the derivative of $\Phi_\nu$ and $t_d$ is a
constant that depends on $d$. Hence, if for $\nu=1$ we define
$$ \phi_{d,1} \, : \, \Omega_1 \longrightarrow T_1(0) \cap T_2(d) $$
as
$$ \phi_{d,1}(z_1,z_2) \coloneqq
(k_1(\Phi_1(z_1,z_2)),k_2(\Phi_1(z_1,z_2))), $$
where $k(x)$ is the inverse of the transformation \eqref{x1x2}, we
obtain the desired map. Note that the conclusion (ii) of the theorem is
immediate. We next prove (i) and (iii).

{\tt Step 4 (proof of (i)).} By Proposition \ref{p:jac}(i), for $1 \leq
j \le 2$ we have $| x_j(k)-w_j(k) | \le \frac{\eps}{8}$. Now, recall
from \eqref{ch1} the definition of $w_1(k)$ and $w_2(k)$. Then, since
$$ |x_j(k)| \le |x_j(k)-w_j(k)| + |w_j(k)| < \frac{\eps}{8} + |w_j(k)|,
$$
the set
$$ \left \{ (k_1,k_2) \in \C^2 \; \Big | \;\; |w_1(k)| < \frac{\eps}{8}
\; \text{ and } \; |w_2(k)| < \frac{\eps}{8} \right \} $$
is contained in the set \eqref{range}. This proves the first part of
(i). To prove the second part we use Proposition \ref{p:jac} and
\eqref{propPhi}. First observe that
$$ D \phi_{d,1} = \frac{\partial k}{\partial x} D \Phi_1 = \frac{1}{2}
\begin{pmatrix} 1 & 1 \\ i & -i \end{pmatrix} (I + N) (I + D\Phi_1
- I) = \frac{1}{2} \begin{pmatrix} 1 & 1 \\ i & -i \end{pmatrix} (I + N
  + \mathcal{R}), $$
where
$$ \| N \| \le \frac{1}{3^3} + \frac{C}{\rho} \qquad \text{and} \qquad
\| \mathcal{R} \| \le \frac{C}{|d|^2}. $$
Furthermore, from \eqref{ch1} and \eqref{x1x2} we have
$$ k_1 = i \theta_\nu(d) + \frac{1}{2}(w_1+w_2) = i \theta_\nu(d) +
\frac{1}{2} ( x_1 + x_2 + \beta_2^{(1)} z_1 + \eta_2^{(1)} z_2 + g_1 +
g_2) $$
and similarly
$$ k_2 = -(-1)^\nu \theta_\nu(d) + \frac{(-1)^\nu}{2i} ( x_1 - x_2 -
\beta_2^{(1)} z_1 + \eta_2^{(1)} z_2 - g_1 + g_2), $$
so that
$$ \phi_{d,1}(0) = k(\Phi_1(0)) = k \left( O \left( \frac{1}{|d|^4}
\right) \right) = (i \theta_\nu(d), -(-1)^\nu \theta_\nu(d)) + O \left(
\frac{\eps}{900} \right) + O \left( \frac{1}{\rho} \right). $$

{\tt Step 5 (proof of (iii)).} To prove part (iii) it suffices to note
that $T_1(0) \cap T_2(d) \cap \Fhat(A,V)$ is mapped to $T_1(-d) \cap
T_2(0) \cap \Fhat(A,V)$ by translation by $d$ and define $\phi_{d,2}$ by
$$ \phi_{d,2}(z_1,z_2) \coloneqq \phi_{d,1}(z_2,z_1) + d. $$
This completes the proof of the theorem.
\end{proof}


\begin{proof}[Proof of Proposition \ref{p:p1p2}]
It suffices to estimate
$$ c_{d',d''} \coloneqq \hat{q}(d'-d'') - 2(d'-d'') \cdot
\hat{A}(d'-d'') \qquad \text{and} \qquad p_{d',d''} \coloneqq
D_{d',d''}- c_{d',d''} $$
for $d',d'' \in \{0,d\}$ with $d' \neq d''$. Define $l^{d'd''}_\nu
\coloneqq (1,i(-1)^\nu) \cdot \hat{A}(d'-d'')$. Observe that, since
$$ |\hat{q}(d'-d'')| = \frac{1}{|d'-d''|^2} |d'-d''|^2 \,
|\hat{q}(d'-d'')| \leq \frac{1}{|d'-d''|^2} \sum_{b \in \Gdual} |b|^2 \,
|\hat{q}(b)| \leq \| b^2 \hat{q}(b) \|_{l^1} \frac{1}{|d|^2}, $$
and similarly
$$ |\hat{A}(d'-d'')| \leq \| b^2 \hat{A}(b) \|_{l^1} \frac{1}{|d|^2}, $$
it follows that
$$ |c_{d',d''}| \leq \frac{C_{A,q}}{|d|} \qquad \text{and} \qquad
|l^{d'd''}_\nu| \leq \frac{C_{A}}{|d|^2}. $$
This gives the desired bounds for $c_1$ and $c_2$.

Now, by Proposition \ref{p:newv} we have
$$ p = J^{d'd''}_{\nu'} w_{\nu,d'}^2 + J^{d'd''}_\nu z_{\nu,d'}^2 +
K^{d'd''} w_{\nu,d} z_{\nu,d'} + (\tilde{L}^{d'd''}_{\nu'} -
l^{d'd''}_{\nu'}) w_{\nu,d'} + (\tilde{L}^{d'd''}_\nu - l^{d'd''}_\nu)
z_{\nu,d'} + \tilde{M}^{d'd''} $$
with $\tilde{L}^{d'd''}_\nu \coloneqq L^{d'd''}_\nu + l^{d'd''}_\nu$ and
$\tilde{M}^{d'd''} \coloneqq M^{d'd''} - c$. Observe that all the
coefficients $J^{d'd''}_\nu$, $K^{d'd''}$,  $\tilde{L}^{d'd''}_\nu$ and
$\tilde{M}^{d'd''}$ have exactly the same form as the function
$\Phi_{d',d''}(k)$ of Lemma \ref{l:derii} (see Proposition \ref{p:newv}
and \eqref{Phi}). Thus, by this lemma with $\beta=2$, for any integers
$n$ and $m$ with $n+m \ge 0$, the absolute value of the
$\frac{\partial^{n+m}}{\partial k_1^n \partial k_2^m}$-derivative of
each of these functions is bounded above by $C_{\eps, \Lambda, A, q,m,n}
\, \frac{1}{|d|^3}$. Hence, if we recall from Proposition
\ref{p:order}(ii) that $|z_1(k)| \le 6|d|$ and $|z_2(k)| \le 2|d|$, and
apply the Leibniz rule we find that
$$ \left| \frac{\partial^{n+m}}{\partial k_1^n \partial k_2^m} \,
p_{d',d''}(k) \right| \leq C_{m,n} \, \frac{C}{|d|}. $$
This yields the desired bounds for $p_1$ and $p_2$ and completes the proof.
\end{proof}


\begin{proof}[Proof of Proposition \ref{p:jac}]
(i) Similarly as in \eqref{as1} we have
$$ |\beta_2^{(1)}(k)| \le \frac{\eps}{900} \, \frac{1}{|z_1(k)|} \qquad
\text{and} \qquad |\eta_2^{(1)}(k)| \le \frac{\eps}{900} \,
\frac{1}{|z_2(k)|}. $$
Thus, in view of \eqref{decg1g2}, and by choosing $\rho$ sufficiently
large,
$$ |x_1(k)-w_1(k)| \le |\beta_2^{(1)}(k) \, z_1(k) + g_1(k)| \le
\frac{\eps}{900} + \frac{C}{\rho} < \frac{\eps}{8}, $$
and similarly $|x_2(k)-w_2(k)| < \eps/8$. This proves part (i).

(ii) Recall \eqref{ch1} and \eqref{x1x2}. Then, for $1 \le j \le 2$,
\begin{align*}
\frac{\partial x_1}{\partial k_j} & = \frac{\partial}{\partial k_j} (w_1
+ z_1 \beta_2^{(1)} + g_1) = \frac{\partial w_1}{\partial k_j} + z_1
\frac{\partial \beta_2^{(1)}}{\partial k_j} + \frac{\partial
z_1}{\partial k_j} \beta_2^{(1)} + \frac{\partial g_1}{\partial k_j}, \\
\frac{\partial x_2}{\partial k_j} & = \frac{\partial}{\partial k_j} (w_2
+ z_2 \eta_2^{(1)} + g_2) = \frac{\partial w_2}{\partial k_j} + z_2
\frac{\partial \eta_2^{(1)}}{\partial k_j} + \frac{\partial
z_2}{\partial k_j} \eta_2^{(1)} + \frac{\partial g_2}{\partial k_j}.
\end{align*}
First observe that the functions $g_1$ and $g_2$ are similar to the
function $g$ (see \eqref{defg1g2} and \eqref{defg}). Thus, it is easy to
see that $\frac{\partial g_1}{\partial k_j}$ and $\frac{\partial
g_2}{\partial k_j}$ are given by expressions similar to \eqref{ee1}.
Since $k \in T_\nu(0) \cap T_{\nu'}(d)$ we have $|w_1(k)| < \eps$ and
$|w_2(k)| < \eps$. Recall also the inequalities in Proposition
\ref{p:order}(ii). Hence, by Lemmas \ref{l:b1}(ii), \ref{l:der}(ii) and
\ref{l:der2}(ii), we obtain \eqref{ee2} with $k_1$ and $z(k)$ replaced
by $k_j$ and $z_1(k)$, respectively, and for $k_1$, $z(k)$ and $\beta$
replaced by $k_j$, $z_2(k)$ and $\eta$, respectively. Consequently,
similarly as in \eqref{ee3} and using again Lemma \ref{l:b1}(ii), for $1
\le j \le 2$ we have
$$ \left| \frac{\partial z_1}{\partial k_j} \beta_2^{(1)} +
\frac{\partial g_1}{\partial k_j} \right| \leq C_{\eps,\Lambda,q,A} \,
\frac{1}{\rho} \qquad \text{and} \qquad \left| \frac{\partial
z_2}{\partial k_j} \eta_2^{(1)} + \frac{\partial g_2}{\partial k_j}
\right| \leq C_{\eps,\Lambda,q,A} \, \frac{1}{\rho}. $$ 
Now recall that $\beta_2 = J_\nu^{00}$ and $\eta_2 = J_{\nu'}^{dd}$.
Then, by Proposition \ref{p:newv}, Lemma \ref{l:b1}(ii), and
\eqref{alpha}, it follows that
\begin{align*}
\beta_2^{(1)}(k) & = (J_\nu^{00})^{(1)}(k) = \sum_{b,c \in G'_1}
\frac{(1,i(-1)^\nu) \cdot \hat{A}(-b)}{N_b(k)} \, S_{b,c} \,
(1,-i(-1)^\nu) \cdot \hat{A}(c), \\
\eta_2^{(1)}(k) & = (J_{\nu'}^{dd})^{(1)}(k) = \sum_{b,c \in G'_1}
\frac{(1,i(-1)^{\nu'}) \cdot \hat{A}(d-b)}{N_b(k)} \, S_{b,c} \,
(1,-i(-1)^{\nu'}) \cdot \hat{A}(c-d).
\end{align*}
Hence, by Lemma \ref{l:der2}(ii), similarly as in \eqref{ee4}, for $1
\le j \le 2$,
$$ \left | z_1(k) \frac{ \partial \beta_2^{(1)}(k)}{\partial k_j}
\right| \leq \frac{13}{\Lambda^2} \| (1,-i(-1)^\nu) \cdot \hat{A}
\|_{l^1}^2 < \frac{1}{7 \cdot 3^4} \qquad \text{and} \qquad \left |
z_2(k) \frac{ \partial \eta_2^{(1)}(k)}{\partial k_j} \right| <
\frac{1}{7 \cdot 3^4}. $$
Therefore,
\begin{align*}
\begin{pmatrix}
\frac{\partial x_1}{\partial k_1} & \frac{\partial x_1}{\partial k_2} \\
\frac{\partial x_2}{\partial k_1} & \frac{\partial x_2}{\partial k_2}
\end{pmatrix} & = \begin{pmatrix} 1 & i(-1)^\nu \\ 1 & i(-1)^{\nu'}
\end{pmatrix} + \begin{pmatrix} z_1(k) \frac{ \partial
\beta_2^{(1)}(k)}{\partial k_1} & z_1(k) \frac{\partial
\beta_2^{(1)}(k)}{\partial k_2} \\ z_2(k) \frac{ \partial
\eta_2^{(1)}(k)}{\partial k_1} & z_2(k) \frac{ \partial
\eta_2^{(1)}(k)}{\partial k_2} \end{pmatrix} \\ & \qquad +
\begin{pmatrix} \beta_2^{(1)} & - i(-1)^\nu \beta_2^{(1)} \\
\eta_2^{(1)} & - i(-1)^{\nu'} \eta_2^{(1)} \end{pmatrix} +
\begin{pmatrix} \frac{\partial g_1}{\partial k_1} & \frac{\partial
g_1}{\partial k_2} \\ \frac{\partial g_2}{\partial k_1} & \frac{\partial
g_2}{\partial k_2} \end{pmatrix} \\
& \eqqcolon \begin{pmatrix} 1 & i(-1)^\nu \\ 1 & i(-1)^{\nu'}
\end{pmatrix} ( I + M_1 + M_2 + M_3),
\end{align*}
where
$$ \| M_1 \| \leq 2 \, \frac{2}{7 \cdot 3^4} \qquad \text{and} \qquad \|
M_2 + M_3 \| \leq C_{\eps,\Lambda,q,A} \, \frac{1}{\rho}. $$
Set $M \coloneqq M_1 + M_2 + M_3$. This proves the first claim.

Now, by choosing $\rho$ sufficiently large we can make $\| M \| <
\frac{1}{2}$. Write
$$ P \coloneqq \begin{pmatrix} 1 & i(-1)^\nu \\ 1 & i(-1)^{\nu'}
\end{pmatrix}. $$
Then, by the inverse function theorem and using the Neumann series,
\begin{align*}
\begin{pmatrix} \frac{\partial k_1}{\partial x_1} & \frac{\partial
k_1}{\partial x_2} \\ \frac{\partial k_2}{\partial x_1} & \frac{\partial
k_2}{\partial x_2} \end{pmatrix} & = \begin{pmatrix} \frac{\partial
x_1}{\partial k_1} & \frac{\partial x_1}{\partial k_2} \\ \frac{\partial
x_2}{\partial k_1} & \frac{\partial x_2}{\partial k_2}
\end{pmatrix}^{-1} = (I + M)^{-1} P^{-1} = (I + \tilde{M}) P^{-1} \\ &
\eqqcolon P^{-1} (I + P \tilde{M} P^{-1}) = \frac{1}{2} \begin{pmatrix}
1 & 1 \\ i(-1)^{\nu'} & i(-1)^\nu \end{pmatrix} (I + P \tilde{M}
P^{-1}),
\end{align*}
with
$$ \| P \tilde{M} P^{-1} \| \leq 2 \| \tilde{M} \| 1 \le \frac{2 \| M \|
}{1 - \| M \|} \le 4 \| M \|. $$
Set $N \coloneqq P \tilde{M} P^{-1}$. This proves the second claim.

Differentiating the matrix identity $T T^{-1}=I$ and applying the chain
rule we find that
$$ \frac{\partial^2 k_m}{\partial x_i \partial x_j} = - \sum_{l,p=1}^2
\frac{\partial k_m}{\partial x_l} \, \frac{\partial}{\partial x_i}
\left( \frac{\partial x_l}{\partial k_p} \right) \frac{\partial
k_p}{\partial x_j} = - \sum_{l,p=1}^2 \frac{\partial k_m}{\partial x_l}
\, \frac{\partial^2 x_l}{\partial k_r \partial x_p} \, \frac{\partial
k_r}{\partial x_i} \, \frac{\partial k_p}{\partial x_j}. $$
Furthermore, in view of the above calculations we have
$$ \left| \frac{\partial k_i}{\partial x_j} \right| \leq \frac{1}{2} ( 1
+ \| N \| ) \leq \frac{1}{2} ( 1 + 4 \| M \| ) \le \frac{1}{2} \left( 1
+ 4 \, \frac{1}{2} \right) < \frac{3}{2}. $$
Thus,
$$ \left| \frac{\partial^2 k_m}{\partial x_i \partial x_j} \right| \leq
4 \left( \frac{3}{2} \right)^3 \sup_{l,r,p} \left| \frac{\partial^2
x_l}{\partial k_r \partial x_p} \right|. $$

We now estimate
$$ \frac{\partial^2 x_1}{\partial k_i \partial k_j} = \frac{\partial
z_1}{\partial k_i} \, \frac{\partial \beta_2^{(1)}}{\partial k_j} + z_1
\frac{\partial^2 \beta_2^{(1)}}{\partial k_i \partial k_j} +
\frac{\partial z_1}{\partial k_j} \, \frac{\partial
\beta_2^{(1)}}{\partial k_i} + \frac{\partial^2 g_1}{\partial k_i
\partial k_j} \qquad \text{and} \qquad \frac{\partial^2 x_2}{\partial
k_i \partial k_j}. $$
From \eqref{ee1} with $g$, $w$ and $z$ replaced by $g_1$, $w_1$ and
$z_1$, respectively, we obtain
\begin{align*}
\frac{\partial^2 g_1}{\partial k_1^2} & = \frac{\partial^2
\beta_1}{\partial k_1^2} \frac{w_1^2}{z_1} + 2 \frac{\partial
\beta_1}{\partial k_1} \frac{2w_1 z_1 - w_1^2}{z_1^2} + \beta_1 \,
\frac{2 z_1^2 - 6w_1z_1 + 4 w_1^2}{z_1^3} + \left( \frac{\partial^2
\beta_2^{(2)}}{\partial k_1^2} + \frac{\partial^2
\beta_2^{(3)}}{\partial k_1^2} \right) z_1 \\
& \quad + 2 \left( \frac{\partial \beta_2^{(2)}}{\partial k_1} +
\frac{\partial \beta_2^{(3)}}{\partial k_1} \right) + \frac{\partial^2
\beta_3}{\partial k_1^2} w_1 + 2 \frac{\partial \beta_3}{\partial k_1} +
\frac{\partial^2 \beta_4}{\partial k_1^2} \, \frac{w_1}{z_1} + 2
\frac{\partial \beta_4}{\partial k_1} \, \frac{z_1-w_1}{z_1^2} \\ &
\quad + \beta_4 \, \frac{2(w_1-z_1)}{z_1^3} + \frac{\partial^2
\beta_5}{\partial k_1^2} + \frac{\partial^2 \beta_6}{\partial k_1^2} \,
\frac{1}{z_1} - 2 \frac{\partial \beta_6}{\partial k_1} \,
\frac{1}{z_1^2} + 2 \frac{\beta_6}{z_1^3} + \frac{2 \hat{q}(0)}{z_1^3}.
\end{align*}
Hence, by Lemmas \ref{l:b1}(ii), \ref{l:der}(ii) and \ref{l:der2}(ii),
$$ \left| \frac{\partial^2 g_1}{\partial k_1^2} \right| \le C_{\eps,
\Lambda, q, A} \, \frac{1}{\rho}. $$
Similarly we prove that
$$ \left| \frac{\partial^2 g_l}{\partial k_i \partial k_j} \right| \le
C_{\eps, \Lambda, q, A} \, \frac{1}{\rho} $$
for all $l,i,j \in \{1,2\}$ because all the derivatives acting on $g_l$
are essentially the same up to constant factors (see \cite{deO}).
Furthermore, again by Lemma \ref{l:der2}(ii),
$$ \left| \frac{\partial \beta_2^{(1)}}{\partial k_j} \right| \le
C_{\eps, \Lambda, q, A} \, \frac{1}{\rho} , \qquad \left| \frac{\partial
\eta_2^{(1)}}{\partial k_j} \right| \le C_{\eps, \Lambda, q, A} \,
\frac{1}{\rho}, $$
and
$$ \left | z_1(k) \frac{ \partial^2 \beta_2^{(1)}(k)}{\partial k_1
\partial k_j} \right| \leq \frac{65}{\Lambda^3} \| (1,-i(-1)^\nu) \cdot
\hat{A} \|_{l^1}^2 < \frac{1}{5 \Lambda^3} \, \eps^2, \qquad \left |
z_2(k) \frac{ \partial^2 \eta_2^{(1)}(k)}{\partial k_i \partial k_j}
\right| < \frac{1}{5 \Lambda^3} \, \eps^2. $$
Hence,
$$ \left| \frac{\partial^2 x_l}{\partial k_i \partial k_j} \right| \leq
\frac{1}{5 \Lambda^3} \, \eps^2 + C_{\eps, \Lambda, q, A} \,
\frac{1}{\rho}. $$
Therefore,
$$ \left| \frac{\partial^2 k_m}{\partial x_i \partial x_j} \right| \leq
4 \left( \frac{3}{2} \right)^3 \sup_{l,r,p} \left| \frac{\partial^2
x_l}{\partial k_r \partial x_p} \right| \le \frac{3}{\Lambda^3} \,
\eps^2 +  C_{\eps, \Lambda, q, A} \, \frac{1}{\rho}. $$
This completes the proof of the proposition.
\end{proof}


\appendix

\section{Quantitative Morse lemma} \label{s:morse}

\begin{lem}[Quantitative Morse lemma \cite{deO}] \label{l:morse}
Let $\delta$ be a constant with $0 < \delta < 1$ and assume that 
$$ f(x_1,x_2) = x_1 x_2 + r(x_1,x_2) $$
is an holomorphic function on $D_\delta = \{ (x_1,x_2) \in \C^2 \;\; |
\;\; |x_1| \le \delta \text{ and } \, |x_2| \le \delta \}$. Suppose
further that, for all $x \in D_\delta$ and $1 \le i \le 2$, the function
$r$ satisfies
\[ \left| \frac{\partial r}{\partial x_i}(x) \right| \leq a < \delta
\qquad \text{and} \qquad \left\| \left[ \frac{\partial^2 r}{\partial x_i
\partial x_j}(x) \right]_{i,j \in \{1,2\}} \right \| \le b <
\frac{1}{55}, \]
where $a$ and $b$ are constants. Then $f$ has a unique critical point
$\xi = (\xi_1,\xi_2) \in D_\delta$ with $|\xi_1| \le a$ and $|\xi_2|
\leq a$. Furthermore, let $s = \max \{ |\xi_1|, |\xi_2| \}$. Then there
is a biholomorphic map $\Phi$ from the domain $D_{(\delta-s)(1-19b)}$ to
a neighbourhood of $\xi \in D_\delta$ that contains
$$ \{ (z_1,z_2) \in \C^2 \; | \;\; |z_i-\xi_i| < (\delta-s)(1-55b)
\text{ for } 1 \le i \le 2 \} $$
such that $(f \circ \Phi)(z_1,z_2) = z_1 z_2 + c$, where $c \in \C$ is a
constant fulfilling $|c-r(0,0)| \le a^2$. The differential $D \Phi$
obeys $\| D \Phi - I \| \leq 18 b$. If $\frac{\partial r}{\partial
x_1}(0,0) = 0$ and $\frac{\partial r}{\partial x_2}(0,0) = 0$, then
$\xi=0$ and $s=0$. 
\end{lem}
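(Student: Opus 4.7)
The plan is to prove the lemma in three stages: locate the critical point $\xi$ via Banach contraction, translate it to the origin and apply a Hadamard-type factorization, and then construct the normalizing biholomorphism $\Phi$ through a quantitative Picard iteration. For the first stage, observe that $x$ is critical for $f$ iff $T(x) \coloneqq (-\partial_2 r(x), -\partial_1 r(x)) = x$. Since $|\partial_i r| \le a < \delta$, the map $T$ sends $D_\delta$ into the smaller polydisc $D_a$, and since its Jacobian is a permutation of off-diagonal entries of $-\mathrm{Hess}(r)$ while the full Hessian has operator norm $\le b < 1$, $T$ is a strict contraction on $D_a$ in the sup norm. Banach's fixed-point theorem then supplies a unique critical point $\xi \in D_a$, which immediately gives $|\xi_1|, |\xi_2| \le a$; when additionally $Dr(0)=0$ we have $T(0)=0$, forcing $\xi = 0$ and $s = 0$.

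Next I would set $y \coloneqq x - \xi$ and $\tilde f(y) \coloneqq f(y+\xi) - f(\xi)$, a holomorphic function on $D_{\delta - s}$ with $\tilde f(0)=0$ and $D\tilde f(0)=0$. Hadamard's formula writes
$$\tilde f(y) = y^{\top} H(y)\, y, \qquad H_{ij}(y) \coloneqq \int_0^1 (1-t)\, \partial_i \partial_j f(ty+\xi)\, dt,$$
and since the Hessian of $x_1 x_2$ is $H_0 \coloneqq \bigl(\begin{smallmatrix} 0 & 1 \\ 1 & 0 \end{smallmatrix}\bigr)$ and $\int_0^1 (1-t)\,dt = 1/2$, one obtains $\| 2 H(y) - H_0 \| \le b$ throughout $D_{\delta-s}$. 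The additive constant in the eventual normal form is $c \coloneqq f(\xi) = \xi_1 \xi_2 + r(\xi)$; combining the fixed-point identity $\xi_i = -\partial_{3-i} r(\xi)$ with a first-order expansion of $r$ about $0$ and the Hessian bound $\|\mathrm{Hess}(r)\| \le b$ produces enough cancellation between the two summands of $c - r(0,0)$ to yield the stated bound $|c - r(0,0)| \le a^2$.

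The core of the argument is the third stage, where I look for $\Phi(z) = (I + B(z))\,z$ with a holomorphic $2{\times}2$-matrix-valued $B$ satisfying the quadratic-form equation $(I+B(z))^{\top}\, 2H(\Phi(z))\,(I+B(z)) = H_0$. Using $H_0^{-1} = H_0$ and rearranging yields a fixed-point equation $B + H_0 B^{\top} H_0 = E(B,z)$ whose right-hand side is of size $O(b)$ by the previous paragraph. A Picard iteration on the Banach space of sup-norm-bounded holomorphic matrix functions, combined with Cauchy estimates on nested polydiscs of radii $(\delta - s)(1 - c_n b)$, produces a unique small solution $B$ with $\|B\| \le 18 b$ on $D_{(\delta-s)(1-19b)}$; this gives simultaneously $\|D\Phi - I\| \le 18 b$ and, by construction, $\tilde f(\Phi(z)) = z_1 z_2$, so that $f(\Phi(z)) = z_1 z_2 + c$ as required. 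The main obstacle is precisely the bookkeeping of these constants: the factor $1-19b$ in the domain absorbs the radius losses in the Picard iteration, while the further shrinkage to $1-55b$ in the range statement comes from a Neumann-series argument inverting $I + B$ to certify that $\Phi$ covers the polydisc $\{|z_i - \xi_i| < (\delta-s)(1-55b)\}$. The numerical hypothesis $b < 1/55$ is exactly what keeps both chains of estimates convergent.
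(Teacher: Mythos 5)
You should first be aware that the paper contains no proof of Lemma \ref{l:morse}: it is quoted verbatim from the thesis \cite{deO}, so there is nothing in the text to compare your route against, and your outline has to stand on its own. Its first stage does: writing the critical-point equation as the fixed-point problem $x=T(x)$ with $T(x)=(-\partial_2 r(x),-\partial_1 r(x))$, noting $T(D_\delta)\subset D_a$ and $\|DT\|\le\|\mathrm{Hess}\,r\|\le b<1$, gives existence, uniqueness, $|\xi_i|\le a$, and $\xi=0$ when $Dr(0)=0$. The problem is that the entire content of this lemma is its explicit constants ($a^2$, $18b$, $1-19b$, $1-55b$, $b<1/55$), and in your second and third stages these are asserted rather than derived; in at least two places the step you describe cannot deliver the stated constant.

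Concretely: (i) for $c=f(\xi)$ the cancellation you invoke gives the identity $c-r(0,0)=-\xi_1\xi_2-\int_0^1 s\,\xi^{\top}\mathrm{Hess}\,r(s\xi)\,\xi\,ds$, hence only $|c-r(0,0)|\le a^2+b\,a^2$; any segment-based Taylor argument stalls at $(1+b)a^2$, and reaching the clean bound $a^2$ requires an extra mechanism (for instance a Schwarz--Pick type trade-off on $D_\delta$ showing that $|\partial_i r(\xi)|$ close to $a$ forces the Hessian contribution along $[0,\xi]$ to be correspondingly small), which your sketch does not contain. (ii) With $\Phi(z)=\xi+(I+B(z))z$ one has $D\Phi=I+B+(DB)\,z$, so $\|B\|\le 18b$ does not "simultaneously" give $\|D\Phi-I\|\le 18b$; controlling $DB$ needs Cauchy estimates on a smaller polydisc, and that radius bookkeeping is exactly where the factors $1-19b$ and $1-55b$ must come from, none of which is carried out. (iii) The linear map $B\mapsto B+H_0B^{\top}H_0$ at the heart of your Picard scheme is not invertible (it annihilates the direction $\mathrm{diag}(1,-1)$), so the iteration is not well-posed as written; you must reduce to the symmetric part of the equation and impose a normalization on $B$ to fix the redundant degree of freedom. (iv) The claim that the image of $\Phi$ contains the polydisc of radius $(\delta-s)(1-55b)$ about $\xi$ concerns the nonlinear map $z\mapsto(I+B(z))z$; inverting the matrix $I+B(z)$ at fixed $z$ by a Neumann series does not prove it — you need a quantitative inverse-function/contraction argument driven by the (not yet established) bound on $D\Phi-I$. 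So the architecture is a legitimate way to prove a quantitative Morse lemma, but the quantitative heart of the statement is left unproved, and point (i) as described would actually fail to give the constant claimed.
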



\section{Asymptotics for the coeficients: proofs} \label{s:app2}

\begin{proof}[Proof of Proposition \ref{p:order}]
We first derive a more general inequality and then we prove parts (i)
and (ii). First observe that, if $k \in T_\mu(d') \setminus
\mathcal{K}_R$ then
$$ |v + (-1)^\mu(u+d')^\perp| = |N_{d',\mu}(k)| < \eps < |v|. $$
Hence,
$$ |v| \leq |2v - (v + (-1)^\mu(u+d')^\perp)| \leq 3|v|. $$
But
\begin{align*}
|2v - (v + (-1)^\mu(u+d')^\perp)| & = |v-(-1)^\mu(u+d')^\perp| \\
& = |k_1+d'_1 - i(-1)^\mu (k_2 + d'_2)| = |z_{\mu,d'}(k)|.
\end{align*}
Therefore,
\begin{equation} \label{pre}
\frac{1}{|z_{\mu,d'}(k)|} \leq \frac{1}{|v|} \leq
\frac{3}{|z_{\mu,d'}(k)|}.  \end{equation}
We now prove parts (i) and (ii).

(i) The first inequality of part (i) follows from the above estimate
setting $(\mu,d')=(\nu,0)$. To prove the second inequality observe that,
since $|v|>R \geq 2\Lambda > 12 \eps$ by hypothesis and $|v| \leq
|z_{\nu,0}(k)|$ by \eqref{pre}, on the one hand we have
\begin{align*}
\tfrac{1}{4} |v| \leq \tfrac{11}{12} |v| = |v| - \tfrac{1}{12}|v| \leq
|v| - \tfrac{1}{6} \Lambda \leq |v| - \eps & \leq |z_{\nu,0}(k)| -
|k_1+i(-1)^\nu k_2| \\ & \leq |z_{\nu,0}(k)-k_1-i(-1)^\nu k_2| = 2|k_2|.
\end{align*}
On the other hand, since $|z_{\nu,0}(k)| < 3|v|$ by \eqref{pre},
\begin{align*}
|k_2| = |2i(-1)^\nu k_2| & = |k_1+i(-1)^\nu k_2 -( k_1 -i(-1)^\nu k_2)|
\\ & = |k_1+i(-1)^\nu k_2 - z_{\nu,0}(k)| \leq \eps + 3|v| \leq 4|v|.
\end{align*}
Combining these estimates we obtain the second inequality of part (i).

(ii) Similarly, in view of \eqref{pre}, if $k \in T_\mu(d') \setminus
\mathcal{K}_R$ for $(\mu,d') \in \{ (\nu,0), (\nu',d) \}$ then
\begin{equation} \label{zv}
\frac{1}{|z_{\nu,0}(k)|} \leq \frac{1}{|v|} \leq
\frac{3}{|z_{\nu,0}(k)|} \qquad \text{and} \qquad
\frac{1}{|z_{\nu',d}(k)|} \leq \frac{1}{|v|} \leq
\frac{3}{|z_{\nu',d}(k)|}.
\end{equation}
These are the first two inequalities of part (ii). Now, since
\begin{align*}
z_{\nu',d}(k) & = k_1 - i(-1)^{\nu'} k_2 + d_1 - i(-1)^{\nu'} d_2 \\ & =
z_{\nu',0}(k) + d_1 - i(-1)^{\nu'} d_2 = w_{\nu,0}(k) + d_1 -
i(-1)^{\nu'} d_2,
\end{align*}
$|w_{\nu,0}(k)| < \eps$, and $|d_1 - i(-1)^{\nu'} d_2| = |d|$, it
follows that
$$ |z_{\nu',d}(k)| - \eps \leq |d| \leq |z_{\nu',d}(k)| + \eps. $$
Furthermore, by \eqref{zv},
$$ \eps < \frac{\Lambda}{6} \leq \frac{|v|}{12} \leq
\frac{|z_{\nu',d}(k)|}{12}. $$
Thus,
$$ \frac{1}{2} |z_{\nu',d}(k)| \leq |d| \leq 2 |z_{\nu',d}(k)|. $$
This yields the third inequality of part (ii) and completes the proof.
\end{proof}


\begin{proof}[Proof of Lemma \ref{l:b1}]
We consider all cases at the same time. Therefore, we have either
hypothesis (i) with $(\mu,d')=(\nu,0)$ or hypothesis (ii) with $(\mu,d')
\in \{ (\nu,0), (\nu',d) \}$. Observe that either $(\nu,\nu')=(1,2)$ or
$(\nu,\nu')=(2,1)$. {\tt Step 1.} Recall the change of variables
\eqref{change} and set
$$ G'_1 \coloneqq \big \{ b \in G' \; | \;\; |b-d'| < \tfrac{1}{4} R
\big \}, \qquad G'_2 \coloneqq \big \{ b \in G' \; | \;\; |b-d'| \geq
\tfrac{1}{4} R \big \}. $$
Then $G' = G'_1 \cup G'_2$ and $G'_1, G'_2 \subset \{ b \in \Gdual \; |
\;\; |N_b(k)| \geq \eps |v| \}$ by Proposition \ref{p:Gprime}.
Furthermore, by Proposition \ref{p:order}, for $(\mu,d')=(\nu,0)$ if (i)
or $(\mu,d') \in \{ (\nu,0), (\nu',d) \}$ if (ii) we have $|z_{\mu,d'}|
\le 3|v|$. Thus, observing the definition of $G'_2$,
\begin{align}
|\mathcal{R}_1(k)| & \coloneqq \Bigg| \sum_{b \in G'_1} \sum_{c \in
G'_2} \frac{f(d'-b)}{N_b(k)} (R_{G'G'}^{-1})_{b,c} \, g(c-d') \Bigg|
\label{RR1} \\
& \leq \frac{1}{\eps |v|} \| R^{-1}_{G'G'} \| \sum_{b \in G'_1}
|f(d'-b)| \sum_{c \in G'_2} \frac{|c-d'|^2}{|c-d'|^2} |g(c-d')| \notag\\
& \leq \frac{1}{\eps |v|} \| R^{-1}_{G'G'} \| \, \| f \|_{l^1}
\frac{16}{R^2} \| c^2 g(c) \|_{l^1} \notag \leq
\frac{C_{\eps,f,g}}{|z_{\mu,d'}| R^2}, \notag
\end{align}
and similarly
\begin{equation} \label{RR2}
|\mathcal{R}_2(k)| \leq \frac{C_{\eps,f,g}}{|z_{\mu,d'}| R^2}.
\end{equation}
Hence,
\begin{equation} \label{exp1}
\begin{split}
\Phi_{d',d'}(k) & = \Bigg[ \; \sum_{b,c \in G'_1} + \sum_{\substack{b
\in G'_1 \\ c \in G'_2}} + \sum_{\substack{b \in G'_2 \\ c \in G'}} \;
\Bigg] \; \frac{f(d'-b)}{N_b(k)} (R_{G'G'}^{-1})_{b,c} \, g(c-d')\\ 
& = \sum_{b,c \in G'_1} \frac{f(d'-b)}{N_b(k)} (R_{G'G'}^{-1})_{b,c} \,
g(c-d') + \mathcal{R}_1(k) + \mathcal{R}_2(k)
\end{split}
\end{equation}
with
\begin{equation} \label{decR1R2}
|\mathcal{R}_1(k) + \mathcal{R}_2(k)| \le
\frac{C_{\eps,f,g}}{|z_{\mu,d'}|R^2}.
\end{equation}

Now, if we set $T_{G'G'} \coloneqq \pi_{G'} - R_{G'G'}$ and recall the
convergent series expansion
$$ R_{G'G'}^{-1} = (\pi_{G'} - T_{G'G'})^{-1} = \sum_{j=0}^\infty
T_{G'G'}^j, $$
we can write
\begin{equation} \label{exp2}
\sum_{b,c \in G'_1} \frac{f(d'-b)}{N_b(k)} (R_{G'G'}^{-1})_{b,c} \,
g(c-d') = \sum_{j=0}^\infty \sum_{b,c \in G'_1} \frac{f(d'-b)}{N_b(k)}
(T^j_{G'G'})_{b,c} \, g(c-d').
\end{equation}
Note, the above equality is fine because $G'_1$ is finite set. Let
$$ G'_3 \coloneqq \{ b \in G' \; | \;\; |b-d'| < \tfrac{1}{2} R \},
\qquad G'_4 \coloneqq \{ b \in G' \; | \;\; |b-d'| \geq \tfrac{1}{2} R
\}. $$
Again, observe that $G' = G'_3 \cup G'_4$. Thus, we can break $T_{G'G'}$
into
$$ T_{G'G'} = \pi_{G'} T \pi_{G'} = (\pi_{G'_3} + \pi_{G'_4}) T
(\pi_{G'_3} + \pi_{G'_4}) = T_{33} + T_{43} + T_{34} + T_{44}, $$
where $T_{ij} \coloneqq \pi_{G_i} T \pi_{G_j}$ for $i,j \in \{3,4\}$.
Using this decomposition we prove the following.

\begin{prop} \label{p:2}
Under the hypotheses of Lemma \ref{l:b1} we have
$$ \sum_{j=0}^\infty \sum_{b,c \in G'_1} \frac{f(d'-b)}{N_b(k)}
(T^j_{G'G'})_{b,c} \, g(c-d') = \sum_{j=0}^\infty \sum_{b,c \in G'_1}
\frac{f(d'-b)}{N_b(k)} (T^j_{33})_{b,c} \, g(c-d') + \mathcal{R}_3(k) $$
with $\mathcal{R}_3(k)$ given by \eqref{RR3} and
\begin{equation} \label{decR3}
|\mathcal{R}_3(k)| \leq \frac{C_{\Lambda,f,g}}{|z_{\mu,d'}| R^2}.
\end{equation}
\end{prop}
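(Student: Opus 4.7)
The plan is a single-step resolvent-identity argument. Decompose $T_{G'G'} = T_{33} + E$ with $E \coloneqq T_{34} + T_{43} + T_{44}$, and invoke the operator telescoping identity
\[
T_{G'G'}^{\,j} - T_{33}^{\,j} \;=\; \sum_{\ell=1}^{j}\, T_{33}^{\,\ell-1}\, E\, T_{G'G'}^{\,j-\ell}.
\]
Both Neumann series $\sum_j T_{G'G'}^{\,j} = R_{G'G'}^{-1}$ and $\sum_j T_{33}^{\,j} = R_{33}^{-1}$ converge in operator norm, since $\|T_{G'G'}\|, \|T_{33}\| \le 17/18 < 1$ by Lemma~\ref{l:invR} (applied to $G'_3$, whose hypotheses are inherited from those of $G'$). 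Summing the identity over $j \ge 0$ and interchanging orders of summation gives
\[
\sum_{j\ge0}\bigl(T_{G'G'}^{\,j} - T_{33}^{\,j}\bigr) \;=\; R_{33}^{-1}\, E\, R_{G'G'}^{-1},
\]
which upon pairing with $f(d'-b)/N_b(k)$ and $g(c-d')$ for $b,c \in G_1'$ defines $\mathcal R_3(k)$ and produces the formula~\eqref{RR3}.

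A key simplification: after the outer projections $\pi_{G_1'}$ are applied, only the $T_{34}$ summand of $E$ survives. The operators $T_{43}$ and $T_{44}$ both output into $L^2_{G_4'}$, which is preserved by the subsequent $R_{33}^{-1}$ (this acts as the identity on $L^2_{G_4'}$, since $T_{33}$ vanishes there), and is then annihilated by $\pi_{G_1'}$, since $G_1'\subset G_3'$ is disjoint from $G_4'$. Hence $\mathcal R_3(k)$ is governed by the operator $R_{33}^{-1}\, T_{34}\, R_{G'G'}^{-1}$. The operator norms $\|R_{33}^{-1}\|, \|R_{G'G'}^{-1}\| \le 18$ are immediate from Lemma~\ref{l:invR}, and the factor $1/|z_{\mu,d'}|$ comes from the outer denominator: since $b\in G_1'$ lies close to $d'$ it cannot be the exceptional index of Proposition~\ref{p:Nb}(b)--(c), so Proposition~\ref{p:Nb}(a) gives $|N_b(k)| \ge \tfrac{\Lambda}{2}|v| \ge \tfrac{\Lambda}{6}|z_{\mu,d'}(k)|$.

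The crucial $R^{-2}$ comes from the geometric observation that the intermediate index $q\in G_4'$ picked out by $T_{34}$ satisfies $|q-d'| \ge R/2$, while $c \in G_1'$ satisfies $|c-d'| < R/4$; hence the matrix element $(R_{G'G'}^{-1})_{q,c}$ must bridge a gap of width $\ge R/4$. Expanding $R_{G'G'}^{-1}$ as a Neumann series of convolutions with $\hat A, \hat q$ and invoking the tail estimates
\[
\sum_{|r|\ge R/4}|\hat A(r)| \le \frac{16}{R^{2}}\,\|b^2\hat A(b)\|_{l^1},
\qquad
\sum_{|r|\ge R/4}|\hat q(r)| \le \frac{16}{R^{2}}\,\|b^2\hat q(b)\|_{l^1},
\]
both finite by Hypothesis~\ref{h:small}, together with the Young-type $l^1$-inequality of Proposition~\ref{p:ineq}, delivers the $R^{-2}$ factor. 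The main obstacle is the bookkeeping of this final step: retaining the $R^{-2}$ uniformly after summing the tail bound over all orders of the Neumann expansion for $R_{G'G'}^{-1}$. This is carried out in the same spirit as the simpler estimates~\eqref{RR1}--\eqref{RR2} for $\mathcal R_1,\mathcal R_2$, using the geometric decay $\|T_{G'G'}\|^j \le (17/18)^j$ to sum the Neumann series absolutely while the "long hop" of length $\ge R/4$ supplies the $R^{-2}$.
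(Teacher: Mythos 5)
Your algebraic reduction is correct and genuinely different from (and tidier than) the paper's: the paper establishes the block recursion \eqref{Tj}--\eqref{XYWZ}, identifies $\mathcal R_3$ with the $X^{(j)}_{33}$-contributions as in \eqref{RR3}, and only later re-expands $X^{(j)}_{33}=\sum_{m=0}^{j-1}T_{33}^m T_{34} W^{(j-m-1)}_{43}$ in \eqref{itera}, whereas your telescoping identity, summed over $j$, gives $\sum_{j\ge0}\bigl(T^j_{G'G'}-T^j_{33}\bigr)=\bigl(\sum_{\ell\ge0}T^\ell_{33}\bigr)(T_{34}+T_{43}+T_{44})R^{-1}_{G'G'}$ in one stroke, and your observation that the $T_{43}$ and $T_{44}$ pieces die after the outer projections is right, so everything collapses to $\pi_{G'_1}\bigl(\sum_\ell T^\ell_{33}\bigr)T_{34}\,\bigl(\pi_{G'_4}R^{-1}_{G'G'}\pi_{G'_1}\bigr)$. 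Since both your expression and \eqref{RR3} are, by construction, the difference of the two absolutely convergent double series in the statement, the identification with \eqref{RR3} is automatic. The soft ingredients are also fine: Lemma \ref{l:invR} applies with $S=G'_3$, $\|T_{34}\|\le\|T_{G'G'}\|<1$, and your lower bound $|N_b(k)|\ge\frac{\Lambda}{2}|v|\ge\frac{\Lambda}{6}|z_{\mu,d'}(k)|$ for $b\in G'_1$ via Proposition \ref{p:Nb} is a legitimate substitute for the paper's \eqref{b4}.

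The gap is at the decisive step, the factor $R^{-2}$ carried by $\pi_{G'_4}R^{-1}_{G'G'}\pi_{G'_1}$. Your mechanism --- the entry $(R^{-1}_{G'G'})_{q,c}$ must bridge a gap of width $\ge R/4$, so a ``long hop'' of length $\ge R/4$ meets the tail bound $\sum_{|r|\ge R/4}|\hat A(r)|\le 16R^{-2}\|b^2\hat A(b)\|_{l^1}$ --- is not correct as stated: in the $m$-th Neumann term $(T^m_{G'G'})_{q,c}$ the displacement $|q-c|\ge R/4$ may be split over $m$ hops none of which exceeds $R/(4m)$, so the single-hop tail estimate degrades to $O(m^2/R^2)$, with a further factor $m$ for the choice of which hop is long. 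What you actually need, uniformly enough in $m$ to sum the series, is a bound of the type $\|\pi_{G'_4}T^m_{G'G'}\pi_{G'_1}\|\le C(1+m)\bigl(\tfrac{17}{18}\bigr)^m R^{-2}$; proving it requires distributing the weight $|q-c|^2$ along the chain ($|q-c|^2\le 2m\sum_i|\xi_{i-1}-\xi_i|^2$, the case $\beta=2$ of \eqref{bminusc}) and verifying that the \emph{weighted} one-hop sums $\sum_b(1+|b-c|^2)|T_{b,c}|$ still stay below $\tfrac{17}{18}$ --- this is where Hypothesis \ref{h:small} with $\|(1+b^2)\hat A(b)\|_{l^1}<2\eps/63$ and the choice $R\ge\frac{4}{\eps}\|(1+b^2)\hat q(b)\|_{l^1}$, $|v|>R$ from \eqref{R1} enter, not merely $\|\hat A\|_{l^1}$ and $\|\hat q\|_{l^1}$. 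That statement is precisely the paper's Proposition \ref{p:decayT} (applied with $\beta=2$), and it is the substantive content of the proof of Proposition \ref{p:2}; your analogy with \eqref{RR1}--\eqref{RR2} is too quick, because there the $R^{-2}$ is paid by a single factor $f$ or $g$, not by a product of arbitrarily many hop kernels. Your polynomial-in-$m$ losses would indeed be absorbed by $(17/18)^m$, so the plan is salvageable, but until this weighted bound on the powers of $T_{G'G'}$ is actually proved, \eqref{decR3} is not established.
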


This proposition will be proved below. Combining this with \eqref{exp1}
and \eqref{exp2} we obtain
\begin{equation} \label{exp3}
\Phi_{d',d'}(k) = \sum_{j=0}^\infty \sum_{b,c \in G'_1}
\frac{f(d'-b)}{N_b(k)} (T^j_{33})_{b,c} \, g(c-d') + \sum_{j=1}^3
\mathcal{R}_j(k).
\end{equation}

{\tt Step 2.} We now look in detail to the operator $T_{33}$ and its
powers $T_{33}^j$. Recall that $\theta_\mu(b) = \frac{1}{2} ((-1)^\mu
b_2 + ib_1)$ and set $\mu' \coloneqq \mu-(-1)^\mu$ so that $(-1)^\mu =
-(-1)^{\mu'}$. Then,
\begin{align*}
N_b(k) & = N_{b,\mu}(k) N_{b,\mu'}(k) \\
& = (w_{\mu,d'} - 2i\theta_{\mu'}(b-d'))(z_{\mu,d'} 
- 2i \theta_\mu(b-d')).
\end{align*}
Extend the definition of $\theta_\mu(y)$ to any $y \in \C^2$. Thus,
$$ 2(k+d') \cdot \hat{A}(b-c) = -2i \theta_\mu (\hat{A}(b-c)) \,
w_{\mu,d'} - 2i \theta_{\mu'} (\hat{A}(b-c)) \, z_{\mu,d'}. $$
Hence,
\begin{equation} \label{TXY0}
\begin{split}
T_{b,c} & = \frac{1}{N_c(k)} (2(c + k) \cdot \hat{A}(b-c)-\hat{q}(b-c) )
\\ 
& = \frac{2 (c-d') \cdot \hat{A}(b-c) - \hat{q}(b-c) + 2 (k+d') \cdot
\hat{A}(b-c)}{(w_{\mu,d'} - 2i \theta_{\mu'}(c-d')) (z_{\mu,d'}-2i
\theta_\mu(c-d'))} = X_{b,c} + Y_{b,c},
\end{split}
\end{equation}
where
\begin{align}
X_{b,c} & \coloneqq \frac{2 (c-d') \cdot \hat{A}(b-c) - \hat{q}(b-c) -2i
\theta_\mu (\hat{A}(b-c)) \, w_{\mu,d'}}{(w_{\mu,d'} - 2i
\theta_{\mu'}(c-d')) (z_{\mu,d'}- 2i \theta_\mu(c-d'))}, \label{Xbc} \\
Y_{b,c} & \coloneqq \frac{-2i \theta_{\mu'} (\hat{A}(b-c)) \,
z_{\mu,d'}}{(w_{\mu,d'} - 2i \theta_{\mu'}(c-d')) (z_{\mu,d'} - 2i
\theta_\mu(c-d'))}. \label{Ybc}
\end{align}
Let $X$ and $Y$ be the operators whose matrix elements are,
respectively, $X_{b,c}$ and $Y_{b,c}$. Set
$$ X_{33} \coloneqq \pi_{G'_3} X \pi_{G'_3} \qquad \text{and} \qquad Y_{33}
\coloneqq \pi_{G'_3} Y \pi_{G'_3}. $$
We next prove the following estimates,
\begin{equation} \label{est2}
\begin{aligned}
\| X_{33} \| & \leq \left( 20 \| \hat{A} \|_{l^1} + \frac{4}{\Lambda}
\| \hat{q} \|_{l^1} \right) \frac{1}{|z_{\mu,d'}|_R} < \frac{1}{3},\\
\| Y_{33} \| & \leq \frac{8}{\Lambda} \| \theta_{\mu'}(\hat{A})
\|_{l^1} < \frac{1}{14},
\end{aligned}
\end{equation}
where
$$ |z_{\mu,d'}|_R \coloneqq 2|z_{\mu,d'}|-R. $$

First observe that the ``vector'' $b \in \Gdual$ has the same length
as the complex number $2i \theta_\mu(b)$:
\begin{equation} \label{est5/2}
|b| = |(b_1,b_2)| = |b_1 + i(-1)^\mu b_2| = |2i \theta_\mu(b)|.
\end{equation}
Thus, for $b \in G'_3$,
$$ \frac{|2i \theta_\mu(b-d')|}{R} = \frac{|b-d'|}{R} < \frac{1}{2}. $$
Consequently,
\begin{equation} \label{est3}
\frac{1}{|z_{\mu,d'}-2i \theta_\mu(b-d')|} \leq
\frac{1}{|z_{\mu,d'}|-|2i\theta_\mu(b-d')|} < \frac{1}{|z_{\mu,d'}| -
\tfrac{1}{2} R} = \frac{2}{|z_{\mu,d'}|_R}.
\end{equation}
Furthermore, for $b \in G'$,
\begin{align}
\frac{1}{|w_{\mu,d'}-2i \theta_{\mu'}(b-d')|} \leq
\frac{1}{|b-d'|-|w_{\mu,d'}|} & \leq\frac{1}{|b-d'|-\eps}\label{est4} \\
& \leq \frac{1}{2\Lambda - \Lambda} = \frac{1}{\Lambda}. \label{est5}
\end{align}
Here we have used that $|w_{\mu,d'}|<\eps<\Lambda$ and $|b-d'| \geq 2
\Lambda$ for all $b \in G'$. Using again that $\eps < \Lambda \leq
|c-d'|/2$ for all $c \in G'$ we have
\begin{equation} \label{est6}
\frac{|c-d'|}{|c-d'|-\eps} < 2.
\end{equation}
Finally recall that
\begin{equation} \label{est7}
\frac{\eps}{\Lambda} < \frac{1}{6} \qquad \text{and} \qquad
\frac{1}{|z_{\mu,d'}|} \leq \frac{1}{|v|} < \frac{1}{R},
\end{equation}
where the last inequality follows from Proposition \ref{p:order} since
$|v|>R$ by hypothesis. Then, using the above inequalities and
Proposition \ref{p:ineq}, the bounds \eqref{est2} for $\| X_{33}\|$ and
$\| Y_{33}\|$ follow from the estimates
\begin{align*}
& \left[ \sup_{c \in G'_3} \sum_{b \in G'_3} + \sup_{b \in G'_3} \sum_{c
\in G'_3} \right] |X_{b,c}| \\
& \leq \left[ \sup_{c \in G'_3} \sum_{b \in G'_3} + \sup_{b \in G'_3}
\sum_{c \in G'_3} \right] \frac{2|c-d'| \, |\hat{A}(b-c)| +
|\hat{q}(b-c)| + |2i \theta_\mu (\hat{A}(b-c))| \,
|w_{\mu,d'}|}{|w_{\mu,d'} - 2i \theta_{\mu'}(c-d')| \, |z_{\mu,d'} - 2i
\theta_\mu(c-d')|} \\
& \leq \frac{2}{|z_{\mu,d'}|_R} \left[ \sup_{c \in G'_3} \sum_{b \in
G'_3} + \sup_{b \in G'_3} \sum_{c \in G'_3} \right] \left[ \frac{2|c-d'|
\, |\hat{A}(b-c)|}{|w_{\mu,d'} - 2i \theta_{\mu'}(c-d')|} +
\frac{|\hat{q}(b-c)| + \eps \sqrt{2} \, |\hat{A}(b-c)|}{|w_{\mu,d'} - 2i
\theta_{\mu'}(c-d')|} \right]\\
& \leq \frac{2}{|z_{\mu,d'}|_R} \left[ \sup_{c \in G'_3} \sum_{b \in
G'_3} + \sup_{b \in G'_3} \sum_{c \in G'_3} \right] \left[ \frac{2|c-d'|
\, |\hat{A}(b-c)|}{|c-d'|-\eps} + \frac{|\hat{q}(b-c)| + \eps \sqrt{2}
\, |\hat{A}(b-c)|}{\Lambda} \right] \\
& \leq \frac{2}{|z_{\mu,d'}|_R} \, 2 \left[ \left[ 4 + \frac{\eps
\sqrt{2}}{\Lambda} \right] \| \hat{A} \|_{l^1} + \frac{ \| \hat{q}
\|_{l^1}}{\Lambda} \right] \leq \left[ 20 \| \hat{A} \|_{l^1} + \frac{ 4
\| \hat{q} \|_{l^1}}{\Lambda} \right] \frac{1}{|z_{\mu,d'}|_R} \\ & \leq
\left[ 20 \| \hat{A} \|_{l^1} + \frac{ 4 \| \hat{q} \|_{l^1}}{\Lambda}
\right] \frac{1}{R} < \frac{1}{7} + \frac{1}{4} = \frac{1}{3}
\end{align*}
and similarly
$$ \left[ \sup_{c \in G'_3} \sum_{b \in G'_3} + \sup_{b \in G'_3}
\sum_{c \in G'_3} \right] |Y_{b,c}| \leq \frac{8}{\Lambda} \|
\theta_{\mu'} (\hat{A}) \|_{l^1} < \frac{1}{14}. $$

{\tt Step 3.} We now look in detail to $T_{33}^j$. For each integer $j
\geq 1$ write
\begin{equation} \label{TXY}
T_{33}^j = ( X_{33} + Y_{33} )^j = Z_j + W_j + Y_{33}^j,
\end{equation}
where $W_j$ is the sum of the $j$ terms containing only one factor
$X_{33}$ and $j-1$ factors $Y_{33}$,
$$ W_j \coloneqq \sum_{m=1}^j (Y_{33})^{m-1} X_{33} (Y_{33})^{j-m}, $$
%
%
%
%
$$ Z_j \coloneqq (X_{33}+Y_{33})^j - W_j - Y_{33}^j. $$
In view of \eqref{est2} we have
\begin{align*}
\| Y_{33} \|^j & \leq \left( \frac{1}{14} \right)^j, \\ \| W_j \| & \leq
j \| X_{33} \| \, \| Y_{33} \|^{j-1} \leq
\frac{C_{\Lambda,A,q}}{|z_{\mu,d'}|_R} \; j \left( \frac{1}{14}
\right)^{j-1}, \\
\| Z_j \| & \leq (2^j-j-1) \, \| X_{33} \|^2 \left( \frac{1}{3}
\right)^{j-2} \leq \frac{C_{\Lambda,A,q}}{|z_{\mu,d'}|_R^2} \left(
\frac{2}{3} \right)^j.
\end{align*}
Hence, the series
\begin{equation} \label{WS}
S \coloneqq \sum_{j=0}^\infty Y_{33}^j = (I - Y_{33})^{-1}, \qquad W
\coloneqq \sum_{j=1}^\infty W_j \qquad \text{and} \qquad Z \coloneqq
\sum_{j=2}^\infty Z_j
\end{equation}
converge, and the operator norm of $W$ and $Z$ decay with respect to
$|z_{\mu,d'}|$. Indeed,
\begin{align*}
\| S \| & \leq \sum_{j=0}^\infty \| Y_{33}\|^j \leq \sum_{j=0}^\infty
\left( \frac{1}{14} \right)^j < C, \\
\| W \| & \leq \sum_{j=1}^\infty \| W_j \| \leq \frac{C'_{\Lambda, A,
q}}{2|z_{\mu,d'}|-R} \sum_{j=1}^\infty \, j \left( \frac{1}{14}
\right)^{j-1} < \frac{C_{\Lambda, A, q}}{|z_{\mu,d'}|_R}, \\
\| Z \| & \leq \sum_{j=2}^\infty \| Z_j \| \leq
\frac{C_{\Lambda,A,q}'}{|z_{\mu,d'}|_R^2} \sum_{j=2}^\infty \left(
\frac{2}{3} \right)^j \leq \frac{C_{\Lambda,A,q}}{|z_{\mu,d'}|_R^2}.
\end{align*}
Thus, we have the expansion
$$ \sum_{j=0}^\infty T_{33}^j = S + W + Z. $$

{\tt Step 4.} Consequently,
\begin{equation} \label{exp4}
\begin{split}
\sum_{j=0}^\infty \sum_{b,c \in G'_1} \frac{f(d'-b)}{N_b(k)}
(T^j_{33})_{b,c} \, g(c-d') & = \sum_{b,c \in G'_1} \frac{f(d'-b) \, (S
+ W + Z)_{b,c} \, g(c-d') }{(w_{\mu,d'} -2i
\theta_{\mu'}(b-d'))(z_{\mu,d'} -2i \theta_\mu(b-d'))} \\
& = \alpha^{(1)}_{\mu,d'} + \alpha^{(2)}_{\mu,d'} + \mathcal{R}_4,
\end{split}
\end{equation}
where
\begin{equation} \label{alpha}
\begin{aligned}
\alpha^{(1)}_{\mu,d'}(k) & \coloneqq \sum_{b,c \in G'_1} \frac{f(d'-b)
\, S_{b,c}(k) \, g(c-d') }{(w_{\mu,d'}(k) -2i
\theta_{\mu'}(b-d'))(z_{\mu,d'}(k) -2i \theta_\mu(b-d'))}, \\
\alpha^{(2)}_{\mu,d'}(k) & \coloneqq \sum_{b,c \in G'_1} \frac{f(d'-b)
\, W_{b,c}(k) \, g(c-d') }{(w_{\mu,d'}(k) -2i
\theta_{\mu'}(b-d'))(z_{\mu,d'}(k) -2i \theta_\mu(b-d'))}
\end{aligned}
\end{equation}
and
\begin{equation} \label{RR4}
\mathcal{R}_4(k) \coloneqq \sum_{b,c \in G'_1} \frac{f(d'-b) \,
Z_{b,c}(k) \, g(c-d') }{(w_{\mu,d'}(k) -2i
\theta_{\mu'}(b-d'))(z_{\mu,d'}(k) -2i \theta_\mu(b-d'))}.
\end{equation}
By a short calculation as in \eqref{done}, using \eqref{est3} and
\eqref{est5} we find that
\begin{equation} \label{dec}
\begin{split}
|\alpha^{(1)}_{\mu,d'}(k)| & \leq \frac{1}{\Lambda} \,
\frac{2}{2|z_{\mu,d'}|-R} \| f \|_{l^1} \| g \|_{l^1} \| S \| \leq
\frac{C_{\Lambda,f,g}}{|z_{\mu,d'}|_R}, \\
|\alpha^{(2)}_{\mu,d'}(k)| & \leq \frac{1}{\Lambda} \,
\frac{2}{2|z_{\mu,d'}|-R} \| f \|_{l^1} \| g \|_{l^1} \| W \| \leq
\frac{C_{\Lambda,A,q,f,g}}{|z_{\mu,d'}|_R^2}, \\
|\mathcal{R}_4(k)| & \leq \frac{1}{\Lambda} \,
\frac{2}{2|z_{\mu,d'}|-R} \| f \|_{l^1} \| g \|_{l^1} \| Z \| \leq
\frac{C_{\Lambda,A,q,f,g}}{|z_{\mu,d'}|_R^3}.
\end{split}
\end{equation}
Hence, recalling \eqref{exp3} we conclude that
$$ \Phi_{d',d'} = \alpha^{(1)}_{\mu,d'} + \alpha^{(2)}_{\mu,d'} +
\alpha^{(3)}_{\mu,d'}, $$
where
\begin{equation} \label{alpha3}
\alpha^{(3)}_{\mu,d'}(k) \coloneqq \sum_{j=1}^4 \mathcal{R}_j(k).
\end{equation}
Furthermore, in view of \eqref{decR1R2}, \eqref{decR3} and \eqref{dec},
since
$$\frac{1}{|z_{\mu,d'}|_R^3} = \frac{1}{(2|z_{\mu,d'}|-R)^3} <
\frac{1}{|z_{\mu,d'}| R^2}, $$
for $1 \leq j \leq 2$ we have
$$ |\alpha^{(j)}_{\mu,d'}(k)| \leq \frac{C_j}{|z_{\mu,d'}(k)|_R^j}
\qquad \text{and} \qquad |\alpha^{(3)}_{\mu,d'}(k)| \leq
\frac{C_3}{|z_{\mu,d'}(k)| R^2}, $$
where $C_j = C_{j;\Lambda,A,q,f,g}$ and $C_3 =
C_{3;\eps,\Lambda,A,q,f,g}$ are constants. This proves the main
statement of the lemma. Finally observe that, since $G'_3$ is a finite
set, the matrices $X_{33}$ and $Y_{33}$ are analytic in $k$ because
their matrix elements are analytic functions of $k$. (Note, the
functions $w_{\mu,d'}(k)$ and $z_{\mu,d'}(k)$ are analytic.)
Consequently, the matrices $W_j$ and $Z_j$ are also analytic and so are
$S_{b,c}$, $W_{b,c}$ and $Z_{b,c}$ because the series \eqref{WS}
converge uniformly with respect to $k$. Thus, all the functions
$\alpha^{(j)}_{\mu,d'}(k)$ are analytic in the region under
consideration. This completes the proof of the lemma.
\end{proof}


\begin{proof}[Proof of Proposition \ref{p:2}]
{\tt Step 1.} Recall that $T_{G'G'} = T_{33} + T_{34} + T_{43} + T_{44}$
with $T_{ij} = \pi_{G'_i} T \pi_{G'_j}$ and set $X^{(0)}_{33} \coloneqq
0$, $Y^{(0)}_{34} \coloneqq T_{34}$, $W^{(0)}_{43} \coloneqq T_{43},$
and $Z^{(0)}_{44} \coloneqq T_{44}$. It is straightforward to verify
that, for any integer $j \ge 0$,
\begin{equation} \label{Tj}
T_{G'G'}^{j+1} = T_{33}^{j+1} + X^{(j)}_{33} + Y^{(j)}_{34} +
W^{(j)}_{43} + Z^{(j)}_{44},
\end{equation}
where
\begin{equation} \label{XYWZ}
\begin{aligned}
X^{(j)}_{33} & \coloneqq T_{33} X^{(j-1)}_{33} + T_{34} W^{(j-1)}_{43} &
: \, L^2_{G'_3} \to L^2_{G'_3}, \\
Y^{(j)}_{34} & \coloneqq T_{33} Y^{(j-1)}_{34} + T_{34} Z^{(j-1)}_{44} &
: \, L^2_{G'_3} \to L^2_{G'_4}, \\
W^{(j)}_{43} & \coloneqq T_{43} T_{33}^j + T_{43} X^{(j-1)}_{33} +
T_{44} W^{(j-1)}_{43} & : \, L^2_{G'_4} \to L^2_{G'_3}, \\
Z^{(j)}_{44} & \coloneqq T_{43} Y^{(j-1)}_{34} + T_{44} Z^{(j-1)}_{44} &
: \, L^2_{G'_4} \to L^2_{G'_4}.
\end{aligned}
\end{equation}

{\tt Step 2.} Since $\pi_{G'_1} \pi_{G'_4} = \pi_{G'_4} \pi_{G'_1} = 0$
and $\pi_{G'_1} \pi_{G'_3} = \pi_{G'_3} \pi_{G'_1} = \pi_{G'_1}$,
substituting \eqref{Tj} into the sum below for the terms where $j \ge 1$
we have, recalling that $X^{(0)}_{33}=0$,
\begin{multline} \label{split}
\sum_{j=0}^\infty \sum_{b,c \in G'_1} \frac{f(d'-b)}{N_b(k)}
(T^j_{G'G'})_{b,c} \, g(c-d') \\
= \sum_{j=0}^\infty \sum_{b,c \in G'_1} \frac{f(d'-b)}{N_b(k)}
(T^j_{33})_{b,c} \, g(c-d') + \sum_{j=1}^\infty \sum_{b,c \in G'_1}
\frac{f(d'-b)}{N_b(k)} (X^{(j)}_{33})_{b,c} \, g(c-d').
\end{multline}
Now recall from \eqref{est3} and \eqref{est5} that, for all $b \in
G'_3$,
\begin{equation} \label{b4}
\frac{1}{|N_b(k)|} \leq \frac{2}{\Lambda} \, \frac{1}{|z_{\mu,d'}|_R},
\end{equation}
and observe that $G'_1 \subset G'_3$. Let $\mathcal{M}$ be either
$T_{G'G'}$ or $T_{33}$. Then, the estimate
\begin{equation} \label{done}
\begin{aligned}
\Bigg| \sum_{b,c \in G'_1} \frac{f(d'-b)}{N_b(k)} (\mathcal{M}^j)_{b,c}
\, g(c-d') \Bigg| & = \Bigg| \sum_{b \in G'_1} \frac{f(d'-b)}{N_b(k)}
\sum_{c \in G'_1} \left \la \frac{e^{i b \cdot x}}{|\Gamma|^{1/2}}, \,
\mathcal{M}^j \, \frac{e^{i c \cdot x}}{|\Gamma|^{1/2}} \right \ra
g(c-d') \Bigg| \\
& \leq \frac{2}{\Lambda} \, \frac{1}{|z_{\mu,d'}|_R} \| f \|_{l_1} \| g
\|_{l_1} \| \mathcal{M} \|^j
\end{aligned}
\end{equation}
implies that the left hand side and the first term on the right hand
side of \eqref{split} converge because $\| \mathcal{M} \| < 17/18$.
Thus, the last term in \eqref{split} also converges. Hence, we are left
to show that
\begin{equation} \label{RR3}
\mathcal{R}_3(k) \coloneqq \sum_{j=1}^\infty \sum_{b,c \in G'_1}
\frac{f(d'-b)}{N_b(k)} (X^{(j)}_{33})_{b,c} \, g(c-d')
\end{equation}
obeys
$$ |\mathcal{R}_3(k)| \leq \frac{C_{\Lambda,f,g}}{|z_{\mu,d'}| \, R^2}.
$$
In order to do this we need the following inequality, which we prove
later.

\begin{prop} \label{p:decayT}
Consider a constant $\beta \geq 0$ and suppose that $\| (1+|b|^\beta)
\hat{q}(b) \|_{l^1} < \infty$ and $\| (1+|b|^\beta) \hat{A}(b) \|_{l^1}
< 2 \eps / 63$. Suppose further that $|v| > \frac{2}{\eps} \|
(1+|b|^\beta) \hat{A}(b) \|_{l^1}$. Then, for any $B,C \subset G'$ and
$m \geq 1$,
\[ \| \pi_B T_{G'G'}^m \pi_C \| \leq (1 + (2\Lambda)^{\beta-\lceil \beta
\rceil} \lceil \beta \rceil m^{\lceil \beta \rceil-1}) \left(
\frac{17}{18} \right)^m \sup_{\substack{b \in B \\ c \in C}} \frac{1}{1
+ |b-c|^\beta}, \]
where $\lceil \beta \rceil$ is the smallest integer greater or equal
than $\beta$.
\end{prop}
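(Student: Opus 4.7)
The plan is to expand $T_{G'G'}^m$ as a path sum in $\Gdual$ and distribute the weight $|b-c|^\beta$ across the edges of each path via the telescoping identity $b-c=\sum_{i=1}^m(b_{i-1}-b_i)$ together with the hypothesized Fourier decay of $A$ and $q$. I would first verify, under the present hypotheses, that the Schur argument of Proposition~\ref{p:bdR} and Lemma~\ref{l:invR} still yields $\|T_{G'G'}\|\le 17/18$, and that the same bound holds for the positive operator $|T|$ with entries $|T_{b,c}|$.

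Iterating, $(T^m)_{b,c}=\sum_{b_1,\ldots,b_{m-1}\in G'}\prod_{i=1}^m T_{b_{i-1},b_i}$, so $|(T^m)_{b,c}|\le\sum_{\mathrm{path}}\prod_i|T_{b_{i-1},b_i}|$. The triangle inequality gives $|b-c|\le\sum_{i=1}^m|b_{i-1}-b_i|$; for $b\ne c$ the dual-lattice spacing $|b-c|\ge 2\Lambda$ yields the pointwise interpolation $|b-c|^\beta\le(2\Lambda)^{\beta-\lceil\beta\rceil}|b-c|^{\lceil\beta\rceil}$, and Jensen's inequality with integer exponent $\lceil\beta\rceil$ yields $|b-c|^{\lceil\beta\rceil}\le m^{\lceil\beta\rceil-1}\sum_{i=1}^m|b_{i-1}-b_i|^{\lceil\beta\rceil}$. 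Distributing the weight over which of the $m$ edges it decorates produces
\[|b-c|^\beta|(T^m)_{b,c}|\le(2\Lambda)^{\beta-\lceil\beta\rceil}m^{\lceil\beta\rceil-1}\sum_{j=1}^m\bigl(|T|^{j-1}|T|_\beta|T|^{m-j}\bigr)_{b,c},\]
where $|T|_\beta$ denotes the positive operator with entries $|b-c|^{\lceil\beta\rceil}|T_{b,c}|$. A Schur bound (Proposition~\ref{p:ineq}) identical in form to that of Proposition~\ref{p:bdR}, now with the $|b-c|^{\lceil\beta\rceil}$ weight absorbed into $\sum_b|b-c|^{\lceil\beta\rceil}|\hat A(b-c)|$ and the analogous $\hat q$ sum, together with the hypothesis on $|v|$, yields $\||T|_\beta\|\le 17/18$.

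Taking operator norms, summing the $m$ terms (each bounded by $(17/18)^m$), and combining with the trivial $\|\pi_B T^m\pi_C\|\le(17/18)^m$ shows that the operator norm of the entrywise-weighted matrix $[(1+|b-c|^\beta)(T^m)_{b,c}]_{b\in B,c\in C}$ is at most $(1+(2\Lambda)^{\beta-\lceil\beta\rceil}\lceil\beta\rceil m^{\lceil\beta\rceil-1})(17/18)^m$; pulling $\sup_{b\in B,c\in C}(1+|b-c|^\beta)^{-1}$ out of this Schur bound delivers the claim. The main obstacle is the clean execution of the $|T|_\beta$ Schur estimate: one must reproduce the proof of Proposition~\ref{p:bdR} with a $|b-c|^{\lceil\beta\rceil}$ weight on every matrix entry and carefully track how the interpolation factors $(2\Lambda)^{\beta-\lceil\beta\rceil}$ and $(2\Lambda)^{\lceil\beta\rceil-\beta}$ combine with an implicit use of $|u|\le 2|v|$ (inherited from Lemma~\ref{l:invR}) to produce exactly the constants stated.
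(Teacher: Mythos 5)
Your strategy is the same as the paper's: expand $T_{G'G'}^m$ along paths in $\Gdual$, control the entrywise-weighted matrix by the Schur test of Proposition~\ref{p:ineq}, use the lattice spacing $|b-c|\ge 2\Lambda$ to pass from the exponent $\beta$ to the integer $\lceil\beta\rceil$, and bound weighted one-step sums by a weighted analogue of Proposition~\ref{p:bdR}. But your bookkeeping does not deliver the stated constant. After the power-mean step you carry the prefactor $m^{\lceil\beta\rceil-1}$, and your ``decorate one edge and sum over the $m$ edges'' step produces $m$ operator products $|T|^{j-1}|T|_\beta|T|^{m-j}$, each of norm at most $(17/18)^m$; this gives $\bigl(1+(2\Lambda)^{\beta-\lceil\beta\rceil}m^{\lceil\beta\rceil}\bigr)(17/18)^m$, not the stated $\bigl(1+(2\Lambda)^{\beta-\lceil\beta\rceil}\lceil\beta\rceil m^{\lceil\beta\rceil-1}\bigr)(17/18)^m$, and the former is strictly larger once $m>\lceil\beta\rceil$. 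The paper avoids the extra factor of $m$ by the elementary step $\sum_{i=1}^m|\xi_{i-1}-\xi_i|^{\lceil\beta\rceil}\le\prod_{i=1}^m(1+|\xi_{i-1}-\xi_i|^{\lceil\beta\rceil})$ in \eqref{bminusc}: the additive edge weight becomes a multiplicative per-edge weight, so the whole weighted path sum stays a single product of $m$ one-step weighted Schur sums, each at most $17/18$, and the polynomial prefactor $\lceil\beta\rceil m^{\lceil\beta\rceil-1}$ is carried outside as a pure scalar. (Your weaker polynomial factor would still suffice for every later use of this proposition, since only a fixed polynomial in $m$ against $(17/18)^m$ is needed, but it is not the inequality you are asked to prove.)

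A second point: you assert $\||T|_\beta\|\le 17/18$ for the weight $|b-c|^{\lceil\beta\rceil}$, whereas the hypotheses only control the $\beta$-weighted norms $\|(1+|b|^\beta)\hat q(b)\|_{l^1}$ and $\|(1+|b|^\beta)\hat A(b)\|_{l^1}$; for non-integer $\beta$ the ratio $|b|^{\lceil\beta\rceil-\beta}$ is unbounded on $\Gdual$, so that Schur estimate does not follow from the assumptions. The paper's weighted estimate \eqref{Qbc1} is for $Q_{b,c}=(1+|b-c|^\beta)T_{b,c}$, i.e.\ with the same exponent that appears in the hypotheses (it is exactly the proof of Proposition~\ref{p:bdR} with $\hat q$, $\hat A$ replaced by their $\beta$-weighted versions, using the lower bound on $|v|$), and the integer exponent enters only through the scalar manipulation of $|b-c|^\beta$. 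If you keep your $|T|_{\lceil\beta\rceil}$ route you must either strengthen the hypotheses to $\lceil\beta\rceil$-weights or restrict to integer $\beta$ (which is all that is used later, namely $\beta=2$); otherwise rerun your weighted Schur bound with the $\beta$-weight.
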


{\tt Step 3.} Now observe that, if $b \in G'_1$ and $c \in G'_4$ then
$$ |b-c| = |b-d' - (c-d')| \geq |c-d'| - |b-d'| \geq \frac{R}{2} -
\frac{R}{4} = \frac{R}{4}. $$
Thus, applying the last proposition with $\beta=2$ and recalling that
$G'_3 \subset G'$, for $m \geq 0$ we have
$$ \| \pi_{G'_1} T_{33}^m T_{34} \| \leq \| \pi_{G'_1} T_{G'G'}^m
T_{G'G'_4} \| = \| \pi_{G'_1} T_{G'G'}^{m+1} \pi_{G'_4} \| \leq
\frac{3(m+1)}{1 + \tfrac{1}{16} R^2} \left( \frac{17}{18} \right)^{m+1}.
$$
Furthermore, since $\pi_{G'_4} \pi_{G'_3} = \pi_{G'_4} \pi_{G'_1} = 0$
and $\pi_{G'_3} \pi_{G'_1} = \pi_{G'_1}$, from \eqref{Tj} we obtain
$$ W^{(j)}_{43} \pi_{G'_1} = \pi_{G'_4} T_{G'G'}^{j+1} \pi_{G'_3}
\pi_{G'_1} = \pi_{G'_4} T_{G'G'}^{j+1} \pi_{G'_1}. $$
Hence,
$$ \| W^{(j)}_{43} \pi_{G'_1} \| = \| \pi_{G'_4} T_{G'G'}^{j+1}
\pi_{G'_1} \| \leq \| T_{G'G'} \|^{j+1} < \left( \frac{17}{18}
\right)^{j+1}. $$
Therefore, for $0 \leq m < j$,
\[ \| \pi_{G'_1} T_{33}^m T_{34} W^{(j-m-1)} \pi_{G'_1} \| \leq \|
\pi_{G'_1} T_{33}^m T_{34} \| \, \| W^{(j-m-1)}_{43} \pi_{G'_1} \| \leq
\frac{3(m+1)}{1 + \tfrac{1}{16} R^2} \left( \frac{17}{18} \right)^{j+1}.
\]
Iterating the first expression in \eqref{XYWZ} we find that
\begin{equation} \label{itera}
\begin{split}
X^{(j)}_{33} & = T_{34} W^{(j-1)}_{43} + T_{33} X^{(j-1)}_{33} \\ & =
T_{34} W^{(j-1)}_{43} + T_{33} T_{34} W^{(j-2)}_{43} + T_{33}^2
X^{(j-2)}_{33} \\
& \;\; \vdots \\
& = T_{34} W^{(j-1)}_{43} + T_{33} T_{34} W^{(j-2)}_{43} + \cdots +
T_{33}^{j-2} T_{34} W_{43}^{(1)} + T_{33}^{j-1} T_{34} W_{43}^{(0)} \\ &
= \sum_{m=0}^{j-1} T_{33}^m T_{34} W^{(j-m-1)}_{43}.
\end{split}
\end{equation}
Thus, using the above inequality,
\begin{align*}
\| \pi_{G'_1} X^{(j)}_{33} \pi_{G'_1} \| & = \left \| \sum_{m=0}^{j-1}
\pi_{G'_1} T_{33}^m T_{34} W^{(j-m-1)}_{43} \pi_{G'_1} \right \| \leq
\sum_{m=0}^{j-1} \| \pi_{G'_1} T_{33}^m T_{34} W^{(j-m-1)}_{43}
\pi_{G'_1} \| \\
& \leq \frac{3}{1 + \tfrac{1}{16} R^2} \left( \frac{17}{18}
\right)^{j+1} \sum_{m=0}^{j-1} (m+1) = \frac{3}{2 + \tfrac{1}{8} R^2}
(j^2+j) \left( \frac{17}{18} \right)^{j+1}.
\end{align*}
Consequently,
$$ \left \| \pi_{G'_1} \left[ \sum_{j=1}^\infty X^{(j)}_{33} \right]
\pi_{G'_1} \right \| \leq \sum_{j=1}^\infty \| \pi_{G'_1} X^{(j)}_{33}
\pi_{G'_1} \| \leq \frac{3}{2 + \tfrac{1}{8} R^2} \sum_{j=1}^\infty
(j^2+j) \left( \frac{17}{18} \right)^{j+1} \leq \frac{C}{R^2}, $$
where $C$ is an universal constant. Finally, using this and \eqref{b4},
since $|z_{\mu,d'}| \leq 3|v|$ we have
$$ |\mathcal{R}_3(k)| = \left| \sum_{b,c \in G'_1}
\frac{f(d'-b)}{N_b(k)} \left[ \sum_{j=1}^\infty X^{(j)}_{33}
\right]_{b,c} \, g(c-d') \right| \leq \frac{6 C}{\Lambda} \| f \|_{l^1}
\| g \|_{l^1} \frac{1}{|z_{\mu,d'}| \, R^2}. $$
In view of \eqref{split} and \eqref{RR3} this completes the proof.
\end{proof}


\begin{proof}[Proof of Proposition \ref{p:decayT}]
For any $b,c \in \Gdual$ set $Q_{b,c} \coloneqq (1+|b-c|^\beta)
T_{b,c}$. We first claim that, for any $B,C \subset G'$,
\begin{equation} \label{Qbc1}
\sup_{b \in B} \sum_{c \in C} |Q_{b,c}| < \frac{17}{18} \qquad
\text{and} \qquad \sup_{c \in C} \sum_{b \in B} |Q_{b,c}| <
\frac{17}{18}.
\end{equation}
In fact, using the bounds \eqref{b5}, \eqref{b6} and $|k| \leq 3|v|$, it
follows that
\begin{align*}
\sup_{b \in B} \sum_{c \in C} |Q_{b,c}| & = \sup_{b \in B} \sum_{c \in
C} (1+|b-c|^\beta) \left| \frac{\hat{q}(b-c)}{N_c(k)} - \frac{2c \cdot
\hat{A}(b-c)}{N_c(k)} - \frac{2 k \cdot \hat{A}(b-c)}{N_c(k)} \right| \\
& \leq \| (1+|b|^\beta) \hat{q}(b) \|_{l^1} \frac{1}{\eps |v|} +
\frac{14}{\eps} \| (1+|b|^\beta) \hat{A}(b) \|_{l^1} < \frac{1}{2} +
\frac{4}{9} = \frac{17}{18},
\end{align*}
and similarly we prove the second bound in \eqref{Qbc1}. Furthermore,
since $|T_{b,c}| \leq |Q_{b,c}|$ for all $b,c \in \Gdual$, for any
integer $m \geq 1$ we have
$$ \sup_{b \in B} \sum_{c \in C} |(T_{BC}^m)_{b,c}| < \left(
\frac{17}{18} \right)^m \qquad \text{and} \qquad \sup_{c \in C} \sum_{b
\in B} |(T_{BC}^m)_{b,c}| < \left( \frac{17}{18} \right)^m. $$

Now, let $p$ be the smallest integer greater or equal than $\beta$, and
for any integer $m \geq 1$ and any $\xi_0, \xi_1, \dots, \xi_m \in
\Gdual$, let $b=\xi_0$ and $c=\xi_m$. Then,
\begin{equation} \label{bminusc}
\begin{split}
|b-c|^\beta & = (2\Lambda)^\beta \left[ \frac{|b-c|}{2\Lambda}
\right]^\beta \le (2\Lambda)^\beta \left[ \frac{|b-c|}{2\Lambda}
\right]^p = \frac{(2\Lambda)^\beta}{(2\Lambda)^p} \sum_{i_1,\dots,i_p
=1}^m |\xi_{i_1-1}-\xi_{i_1}| \cdots \, |\xi_{i_p-1}-\xi_{i_p}| \\
& \leq (2\Lambda)^{\beta-p} \sum_{i_1,\dots,i_p =1}^m (
|\xi_{i_1-1}-\xi_{i_1}|^p + \cdots + |\xi_{i_p-1}-\xi_{i_p}|^p) \\ & =
(2\Lambda)^{\beta-p} p \, m^{p-1} \sum_{i=1}^m |\xi_{i-1}-\xi_i|^p \le
(2\Lambda)^{\beta-p} p \, m^{p-1} \prod_{i=1}^m (1+|\xi_{i-1}-\xi_i|^p).
\end{split}
\end{equation}
To simplify the notation write $s \coloneqq \sup_{b \in B, \, c \in C}
\frac{1}{1 + |b-c|^\beta}$. Hence,
\begin{align*}
& \sup_{b \in B} \sum_{c \in C} |(T_{G'G'}^m)_{b,c}| \leq
\sup_{\substack{b \in B \\ c \in C}} \frac{1}{1 + |b-c|^\beta} \;
\sup_{b \in B} \sum_{c \in C} (1+|b-c|^\beta) |(T_{G'G'}^m)_{b,c}| \\
& \qquad \leq s \left[ \sup_{b \in B} \sum_{c \in C}
|(T_{G'G'}^m)_{b,c}| + (2\Lambda)^{\beta-p} p \, m^{p-1} \; \sup_{b \in
B} \, \sum_{\xi_1 \in G'} (1+|b-\xi_1|^\beta) |T_{b,\xi_1}| \right. \\
& \qquad \qquad \left. \times \sum_{\xi_2 \in G'} (1+|\xi_1-\xi_2|^2)
|T_{\xi_1,\xi_2}| \cdots \sum_{c \in C} (1+|\xi_{m-1} -c|^2)
|T_{\xi_{m-1},c}| \right] \\
& \qquad \leq s \left[ \left( \frac{17}{18} \right)^m +
(2\Lambda)^{\beta-p} p \, m^{p-1} \; \sup_{b \in B} \, \sum_{\xi_1 \in
G'} (1+|b-\xi_1|^2) |T_{b,\xi_1}| \right. \\
& \qquad \qquad \left. \times \sup_{\xi_1 \in G'} \sum_{\xi_2 \in G'}
(1+|\xi_1-\xi_2|^2) |T_{\xi_1,\xi_2}| \cdots \sup_{\xi_{m-1} \in G'}
\sum_{c \in C} (1+|\xi_{m-1} -c|^2) |T_{\xi_{m-1},c}| \right] \\
& \qquad \leq s \, ( 1 + (2\Lambda)^{\beta-p} p \, m^{p-1} ) \left(
\frac{17}{18} \right)^m,
\end{align*}
and similarly we prove the other inequality. Therefore, by Proposition
\ref{p:ineq},
\[ \| \pi_B T_{G'G'}^m \pi_C \| \leq (1 + (2\Lambda)^{\beta-\lceil \beta
\rceil} \lceil \beta \rceil m^{\lceil \beta \rceil-1}) \left(
\frac{17}{18} \right)^m \sup_{\substack{b \in B \\ c \in C}} \frac{1}{1
+ |b-c|^\beta}, \]
where $\lceil \beta \rceil$ is the smallest integer greater or equal
than $\beta$. This is the desired estimate.
\end{proof}


\begin{proof}[Proof of Lemma \ref{l:b1b}]
To simplify the notation write $w = w_{\mu,d'}$, $z = z_{\mu,d'}$, and
$|z|_R = 2|z|-R$. First observe that
$$ \frac{1}{w - 2i \theta_{\mu'}(c-d')} = \frac{-1}{2i
\theta_{\mu'}(c-d')} + \frac{w}{2i \theta_{\mu'}(c-d')(w - 2i
\theta_{\mu'}(c-d'))}, $$
so that
\begin{align*}
\frac{z}{N_c(k)} & = \frac{-1}{2i \theta_{\mu'}(c-d')} + \frac{w}{2i
\theta_{\mu'}(c-d')(w - 2i \theta_{\mu'}(c-d'))} + \frac{2i
\theta_\mu(c-d')}{w
- 2i \theta_{\mu'}(c-d')} \; \frac{1}{z - 2i \theta_\mu(c-d')} \\ &
\eqqcolon \eta_c^{(0)} + \eta_c^{(w)} + \eta_c^{(z)},
\end{align*}
where, in view of \eqref{est3} to \eqref{est6}, since $|w|<\eps$,
$$ |\eta_c^{(0)}| \leq \frac{1}{2 \Lambda}, \qquad |\eta_c^{(w)}| \leq
\frac{\eps}{2\Lambda^2} \qquad \text{and} \qquad |\eta_c^{(z)}| \leq
\frac{4}{|z|_R}. $$
Hence,
\begin{align*}
Y_{b,c} = \frac{-2i \theta_{\mu'}(\hat{A}(b-c)) z}{N_c(k)} & = -2i
\theta_{\mu'}(\hat{A}(b-c)) \eta_c^{(0)} -2i
\theta_{\mu'}(\hat{A}(b-c))\eta_c^{(w)} -2i
\theta_{\mu'}(\hat{A}(b-c))\eta_c^{(z)} \\
& \eqqcolon Y_{b,c}^{(0)} + Y_{b,c}^{(w)} + Y_{b,c}^{(z)}.
\end{align*}

Let $Y^{(\,\cdot \,)}$ be the operator whose matrix elements are $Y^{(\,
\cdot \,)}_{b,c}$ and set $Y_{33}^{(\, \cdot \,)} \coloneqq \pi_{G'_3}
Y^{(\, \cdot \,)} \pi_{G'_3} $. Then, similarly as we estimated $\|
Y_{33} \|$, using \eqref{est3} to \eqref{est6} and Proposition
\ref{p:ineq}, it follows easily that
$$ \| Y^{(0)}_{33} \| \leq \frac{1}{2 \Lambda} \| \theta_{\mu'}(\hat{A})
\|_{l^1}, \qquad \| Y^{(w)}_{33} \| \leq \frac{\eps}{2\Lambda^2} \|
\theta_{\mu'}(\hat{A}) \|_{l^1}, \qquad \| Y^{(z)}_{33} \| \leq
\frac{4}{|z|_R} \| \theta_{\mu'}(\hat{A}) \|_{l^1}. $$
Furthermore,
\begin{align*}
S & = (I-Y_{33})^{-1} = 1 + (1-Y_{33})^{-1} Y_{33} = 1 + SY_{33} \\
& = 1 + (1 + SY_{33}) Y_{33} = 1 + Y_{33}^{(0)} + Y_{33}^{(w)} +
Y_{33}^{(z)} + S Y_{33}^2,
\end{align*}
where, recalling \eqref{est2},
$$ \| S Y_{33}^2 \| \leq \| (1-Y_{33})^{-1} \| \, \| Y_{33} \|^2 \leq
\frac{\| Y_{33} \|^2}{1-\|Y_{33}\|} < \frac{14}{13} \left(
\frac{8}{\Lambda} \right)^2 \| \theta_{\mu'}(\hat{A}) \|_{l^1}^2. $$
Combining all this we have
\begin{align*}
\frac{z \, S_{b,c}}{N_b(k)} & = (\eta_b^{(0)} +
\eta_b^{(w)})(\delta_{b,c} + Y_{b,c}^{(0)} + Y_{b,c}^{(w)} +
Y_{b,c}^{(z)} + (SY_{33}^2)_{b,c} ) + \eta_b^{(z)} S_{b,c} \\
& = \Big[ \eta_b^{(0)} (\delta_{b,c} + Y_{b,c}^{(0)}) \Big] + \Big[
\eta_b^{(0)} Y_{b,c}^{(w)} + \eta_b^{(w)} (\delta_{b,c} + Y_{b,c}^{(0)}
+ Y_{b,c}^{(w)}) \Big] \\
& \qquad + \Big[ (\eta_b^{(0)} + \eta_b^{(w)}) (SY_{33}^2)_{b,c} \Big] +
\Big[ (\eta_b^{(0)} + \eta_b^{(w)}) Y_{b,c}^{(z)} + \eta_b^{(z)} S_{b,c}
\Big] \\ & \eqqcolon K_{b,c}^{(0)} + K_{b,c}^{(1)} + K_{b,c}^{(2)} +
K_{b,c}^{(3)}
\end{align*}
with
\begin{align*}
|K_{b,c}^{(0)}| & \leq \frac{1}{2 \Lambda} \left( 1 + \frac{1}{2
\Lambda} \| |\theta_{\mu'}(\hat{A}) \|_{l^1} \right), \\
|K_{b,c}^{(1)}| & \leq \frac{\eps}{4 \Lambda^3} \|
\theta_{\mu'}(\hat{A}) |\|_{l^1} + \frac{\eps}{2\Lambda^2} \left( 1 +
\frac{1}{2 \Lambda} \| |\theta_{\mu'}(\hat{A}) \|_{l^1} +
\frac{\eps}{\Lambda^2} \| |\theta_{\mu'}(\hat{A}) \|_{l^1} \right) \\
& < \frac{\eps}{2\Lambda^2} \left(1 + \frac{7}{6 \Lambda} \|
\theta_{\mu'}(\hat{A}) \|_{l^1} \right), \\
|K_{b,c}^{(2)}| & \leq \frac{1}{\Lambda} \left( \frac{8}{\Lambda}
\right)^2 \| \theta_{\mu'}(\hat{A}) \|_{l^1}^2 < \frac{64}{\Lambda^3} \,
\| \theta_{\mu'}(\hat{A}) \|_{l^1}^2, \\
|K_{b,c}^{(3)}| & \leq \frac{3}{2 \Lambda} \| \theta_{\mu'}(\hat{A})
\|_{l^1} \frac{4}{|z|_R} + \frac{14}{13} \, \frac{4}{|z|_R} <
\frac{C_{\Lambda,A}}{|z|_R}
\end{align*}
for all $b,c \in G_3'$. Here, to estimate $|K_{b,c}^{(1)}|$ we have used
that $\eps < \Lambda/6$.

Finally, recalling \eqref{alpha} and using the above estimates we find
that
\begin{equation} \label{a1j}
\begin{aligned}
z_{\mu,d'}(k) \alpha^{(1)}_{\mu,d'}(k) & = \sum_{b,c \in G'_1} f(d'-b)
\frac{z \, S_{b,c}}{N_b(k)} g(c-d') \\
& = \sum_{b,c \in G'_1} f(d'-b) \Bigg[ \sum_{j=0}^3 K_{b,c}^{(j)} 
\Bigg] g(c-d') \\
& \eqqcolon \alpha^{(1,0)}_{\mu,d'} + \alpha^{(1,1)}_{\mu,d'}(w(k)) +
\alpha^{(1,2)}_{\mu,d'}(k) + \alpha^{(1,3)}_{\mu,d'}(k),
\end{aligned}
\end{equation}
where, in particular,
\begin{equation} \label{a10}
\alpha^{(1,0)}_{\mu,d'} = - \sum_{b,c \in G'_1} \frac{f(d'-b)}{2i
\theta_{\mu'}(b-d')} \left[ \delta_{b,c} +
\frac{\theta_{\mu'}(\hat{A}(b-c))}{ \theta_{\mu'}(c-d')} \right]
g(c-d').
\end{equation}
Furthermore, for $0 \le j \le 2$, it follows easily from \eqref{a1j}
that $|\alpha^{(1,j)}_{\mu,d'}| \leq C_j$ with
\begin{equation} \label{cts}
\begin{aligned}
C_0 & \coloneqq \frac{1}{2 \Lambda} \left( 1 + \frac{1}{2 \Lambda} \|
\theta_{\mu'}(\hat{A}) \|_{l^1} \right) \| f \|_{l^1} \| g \|_{l^1}, \\
C_1 & \coloneqq \frac{\eps}{2\Lambda^2} \left(1 + \frac{7}{6 \Lambda} \|
\theta_{\mu'}(\hat{A}) \|_{l^1} \right) \| f \|_{l^1} \| g \|_{l^1}, \\
C_2 & \coloneqq \frac{64}{\Lambda^3} \, \| \theta_{\mu'}(\hat{A})
\|_{l^1}^2 \| f \|_{l^1} \| g \|_{l^1},
\end{aligned}
\end{equation}
while for $j=3$,
\[ |\alpha^{(1,3)}_{\mu,d'} | \leq C_{\Lambda,A,f,g} \frac{1}{|z|_R}. \]
This completes the proof of the lemma.
\end{proof}


\begin{proof}[Proof of Lemma \ref{l:b2}]
To prove this lemma we apply the following (well-known) inequality (see
\cite{deO} for a proof).

\begin{prop} \label{p:decay}
Let $\alpha$ and $\delta$ be constants with $1 < \alpha \leq 2$ and $1 <
\delta \le 2$. Suppose that $f$ is a function on $\Gdual$ obeying $\|
|b|^\alpha f(b) \|_{l^1} < \infty$. Then, for any $\xi_1, \xi_2 \in
\Gdual$ with $\xi_1 \neq \xi_2$,
$$ \sum_{b \in \Gdual \setminus \{\xi_1,\xi_2\}}
\frac{|f(b-\xi_1)|}{|b-\xi_2|^\delta} \leq \frac{C}{|\xi_1 -
\xi_2|^{\alpha + \delta -2}} \times \begin{cases} 1 & \text{\rm if}
\quad \alpha, \delta < 2, \\ \ln |\xi_1-\xi_2| & \text{\rm if} \quad
\alpha = 2 \text{ or } \delta = 2, \end{cases} $$
where $C = C_{\Gdual, \alpha, \delta, f}$ is a constant.
\end{prop}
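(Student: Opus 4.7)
My plan is to reduce the sum to a standard convolution-type lattice sum by the substitution $c = b - \xi_1$. Letting $d \coloneqq \xi_1 - \xi_2 \neq 0$, the sum becomes
\[
\Sigma(d) \;\coloneqq\; \sum_{\substack{c \in \Gdual \\ c \neq 0, \, c \neq -d}} \frac{|f(c)|}{|c+d|^\delta}.
\]
The task is then to bound $\Sigma(d)$ by the right-hand side, with the constant allowed to depend on $\Gdual, \alpha, \delta, f$. Two facts from the hypothesis will be used repeatedly: first, since $|c| \ge 2\Lambda$ for all nonzero $c \in \Gdual$, the norm $\| |b|^\alpha f(b)\|_{l^1}<\infty$ gives a finite bound on $\sum_{c\ne 0}|f(c)|$; second, on the discrete set $\Gdual$ the $\ell^1$ norm dominates the $\ell^\infty$ norm, so $\sup_c |c|^\alpha |f(c)| \le \| |b|^\alpha f(b)\|_{l^1} < \infty$, which yields the pointwise decay $|f(c)| \le C_{f,\alpha}/|c|^\alpha$.

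The heart of the proof is a three-region partition of the sum, chosen so that in each region one of the two denominators (or $|f(c)|$ via pointwise decay) is controlled by a positive power of $|d|$. Namely, I split $\Sigma(d) = \Sigma_{\mathrm{I}} + \Sigma_{\mathrm{II}} + \Sigma_{\mathrm{III}}$ according to:
\begin{itemize}
\item[(I)] $|c| \le |d|/2$, in which case $|c+d| \ge |d|/2$, so $\Sigma_{\mathrm{I}} \le (2/|d|)^\delta \sum_{c \neq 0}|f(c)| \le C |d|^{-\delta}$;
\item[(II)] $|c+d| \le |d|/2$, which forces $|c| \ge |d|/2$; substituting $c' = c+d$ and invoking the pointwise bound $|f(c)| \le C/|c|^\alpha \le C\, 2^\alpha/|d|^\alpha$ gives $\Sigma_{\mathrm{II}} \le C |d|^{-\alpha}\!\!\sum_{0 < |c'| \le |d|/2} |c'|^{-\delta}$;
\item[(III)] $|c| > |d|/2$ and $|c+d| > |d|/2$; using $|c+d|^\delta \ge (|d|/2)^\delta$ together with $|f(c)| \le 2^\alpha |c|^\alpha |f(c)|/|d|^\alpha$ gives $\Sigma_{\mathrm{III}} \le C |d|^{-(\alpha+\delta)} \| |b|^\alpha f\|_{l^1}$.
\end{itemize}
The regions (I) and (III) already produce bounds no worse than $|d|^{-(\alpha+\delta-2)}$ since $\alpha \le 2$ forces $\delta \ge \alpha+\delta-2$ and likewise $\alpha+\delta \ge \alpha+\delta-2$. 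The $\ln$ factor, when it appears, comes entirely from region (II).

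The main technical step will therefore be the asymptotic evaluation of the lattice sum $\sum_{0 < |c'| \le R} |c'|^{-\delta}$ as $R \to \infty$ with $R = |d|/2$. Using a standard lattice-counting argument (the number of $c' \in \Gdual$ in an annulus of radii $r$, $r+1$ is bounded by a constant $C_\Gdual\,(r+1)$), one compares the sum to an integral over $\R^2$:
\[
\sum_{0 < |c'| \le R} \frac{1}{|c'|^\delta} \;\le\; C_\Gdual \int_{2\Lambda}^{R+1} r^{1-\delta}\,dr + O(1) \;\le\; \begin{cases} C\, R^{2-\delta}, & 1<\delta<2,\\[2pt] C\, \ln R, & \delta = 2. \end{cases}
\]
Plugging this back into the region-(II) estimate yields $\Sigma_{\mathrm{II}} \le C|d|^{-(\alpha+\delta-2)}$ if $\delta < 2$ and $\Sigma_{\mathrm{II}} \le C|d|^{-\alpha}\ln|d| = C|d|^{-(\alpha+\delta-2)}\ln|d|$ if $\delta = 2$. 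Combining with the (stronger) bounds from regions (I) and (III) gives the stated inequality; the statement's extra $\ln|d|$ factor in the $\alpha=2$ case is then a loose but valid upper bound (regions (I) and (III) already give no log, while in region (II) the log appears only at $\delta=2$).

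The only genuine obstacle is the lattice-sum asymptotics; everything else is bookkeeping once the three-region decomposition is in place. I would also check at the end that the implicit constants depend only on $\Gdual$, $\alpha$, $\delta$ and the two quantities $\|f\|_{\ell^1(\Gdual\setminus\{0\})}$ and $\||b|^\alpha f(b)\|_{\ell^1}$, both of which are controlled by the hypothesis.
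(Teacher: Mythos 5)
Your argument is correct and self-contained. The paper itself does not prove Proposition~\ref{p:decay}; it is stated as a ``well-known inequality'' and deferred to the thesis~\cite{deO}, so there is no in-document proof to compare against. Your three-region split (small~$|c|$, small~$|c+d|$, both large) together with the pointwise decay $|f(c)|\le C/|c|^{\alpha}$ extracted from the hypothesis and the shell-counting estimate for $\sum_{0<|c'|\le R}|c'|^{-\delta}$ is exactly the standard way such discrete convolution estimates are established, and all the steps check out: regions~(I) and~(III) give $|d|^{-\delta}$ and $|d|^{-(\alpha+\delta)}$, both dominated by $|d|^{-(\alpha+\delta-2)}$ because $\alpha\le 2$ and $|d|\ge 2\Lambda$, while region~(II) contributes the $R^{2-\delta}$ or $\ln R$ factor that matches the stated asymptotics. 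Two cosmetic remarks: the three regions as written overlap on boundaries, so $\Sigma=\Sigma_{\mathrm I}+\Sigma_{\mathrm{II}}+\Sigma_{\mathrm{III}}$ should be $\Sigma\le\Sigma_{\mathrm I}+\Sigma_{\mathrm{II}}+\Sigma_{\mathrm{III}}$ (harmless, since you only need an upper bound); and, as you noticed, your argument actually shows there is no logarithm when $\alpha=2$ and $\delta<2$, so you prove slightly more than is claimed --- the Proposition's extra $\ln|\xi_1-\xi_2|$ in that sub-case is loose but of course still a valid upper bound (and the applications in Lemma~\ref{l:b2} are unaffected, since the log is absorbed into an $\epsilon$-loss anyway).
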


First observe that $\| \pi_{\{b\}} T_{G'G'}^m \pi_{\{c\}} \| =
|(T_{G'G'}^m)_{b,c} |$. Hence, by Proposition \ref{p:decayT} with
$\beta=2$, for all $b,c \in G'$ and $m \geq 1$,
$$ |(T_{G'G'}^m)_{b,c} | = \| \pi_{\{b\}} T_{G'G'}^m \pi_{\{c\}} \| \leq
(1 + 2m) \left( \frac{17}{18} \right)^m \frac{1}{1 + |b-c|^2}. $$
Note that this inequality is also valid for $m=0$. Thus,
\begin{equation} \label{e1}
\begin{split}
| \Phi_{d',d''}(k) | & = \left| \sum_{m=0}^\infty \sum_{b,c \in G'}
\frac{f(d'-b)}{N_{b}(k)} (T_{G'G'}^m)_{b,c} \, g(c-d'') \right| \\ &
\leq \frac{1}{\eps |v|} \left[ \sum_{m=0}^\infty (1 + 2m) \left(
\frac{17}{18} \right)^m \right] \sum_{b \in G'} |f(d'-b)| \sum_{c \in
G'} \frac{|g(c-d'')|}{1 + |b-c|^2} \\
& \leq \frac{C}{\eps |v|} \sum_{b \in G'} |f(d'-b)| \left[ |g(b-d'')| +
\sum_{c \in G' \setminus \{b\}} \frac{|g(c-d'')|}{|b-c|^2} \right],
\end{split}
\end{equation}
where $C$ is an universal constant.

Now, by the triangle inequality, H\"older's inequality, and since $\|
\cdot \|_{l_2} \leq \| \cdot \|_{l_1}$,
\begin{equation} \label{fg}
\begin{split}
& \sum_{b \in G'} |f(d'-b)| \, |g(b-d'')| \\
& = \sum_{b \in G' } \frac{|d'-d''|^2}{|d'-d''|^2} \, |f(d'-b)| \, 
|g(b-d'')| \\ 
& \leq \frac{4}{|d'-d''|^2} \sum_{b \in G'} ( |d'-b|^2 + |b-d''|^2) \,
|f(d'-b)| \, |g(b-d'')| \\
& \leq \frac{4}{|d'-d''|^2} ( \| b^2 f(b) \|_{l^2} \| g \|_{l^2} + \| f
\|_{l^2} \| b^2 g(b) \|_{l^2} ) \\
& \leq \frac{4}{|d'-d''|^2} ( \| b^2 f(b) \|_{l^1} \| g \|_{l^1} + \| f
\|_{l^1} \| b^2 g(b) \|_{l^1} ) \leq \frac{C_{f,g}}{|d'-d''|^2}.
\end{split}
\end{equation}
Furthermore, by Proposition \ref{p:decay} with $\alpha=\delta=2$, for
any $0 < \epsilon_1 < 2$,
$$ \sum_{c \in G' \setminus \{b\}} \frac{|g(c-d'')|}{|b-c|^2} \leq
C_{\Gdual,g} \frac{\ln |b-d''|}{|b-d''|^2} \leq
\frac{C_{\Gdual,g,\epsilon_1}}{|b-d''|^{2-\epsilon_1}}. $$
Applying this inequality and \eqref{fg} to \eqref{e1} we obtain
$$ | \Phi_{d',d''}(k) | \leq \frac{C}{\eps |v|} \left[
\frac{C_{f,g}}{|d'-d''|^2} + C_{\Gdual,g,\epsilon_1} \sum_{b \in G'}
\frac{|f(d'-b)|}{|b-d''|^{2-\epsilon_1}} \right]. $$
Again, by Proposition \ref{p:decay} with $\alpha=2$ and
$\delta=2-\epsilon_1$ we conclude that, for any $0 < \epsilon_2 <
2-\epsilon_1$,
\[ | \Phi_{d',d''}(k) | \leq \frac{C}{\eps |v|} \left[
\frac{C_{f,g}}{|d'-d''|^2} + C_{\Gdual,f,g,\epsilon_1}
\frac{\ln|d'-d''|}{|d'-d''|^{2-\epsilon_1}} \right] \leq
\frac{C_{\eps,\Gdual,f,g,\epsilon_1, \epsilon_2}}{|v| \,
|d'-d''|^{2-\epsilon_1-\epsilon_2}}. \]
Finally, recall from Proposition \ref{p:order}(ii) that $|z_{\nu',d}| <
3|d|$ and $|z_{\nu',d}| < 3|v|$, observe that $|d'-d''|=|d|$, and set
$\epsilon = \epsilon_1 + \epsilon_2$. Then, for any $0 < \epsilon < 2$,
$$ | \Phi_{d',d''}(k) | \leq \frac{C_{\eps,\Gdual,f,g,\epsilon_1,
\epsilon_2}}{|d| \, |d|^{2-\epsilon_1-\epsilon_2}} \leq
\frac{C_{\eps,\Gdual,f,g,\epsilon}}{|z_{\nu',d}|^{3-\epsilon}}. $$
Choosing $\epsilon = 10^{-1}$ we obtain the desired inequality.
\end{proof}


\section{Bounds on the derivatives: proofs} \label{s:app3}

\begin{proof}[Proof of Lemma \ref{l:der}]
{\tt Step 0.} When there is no risk of confusion we shall use the same
notation to denote an operator or its matrix. Define
$$ \mathcal{F}_{BC} \coloneqq \left[ f(b-c) \right]_{b \in B, c \in C},
\qquad \mathcal{G}_{BC} \coloneqq [ g(b-c) ]_{b \in B, c \in C}, \qquad
\Phi_G(k) \coloneqq \left[ \Phi_{d',d''}(k;G) \right]_{d',d'' \in G}. $$
Here $\mathcal{F}_{BC}$ and $\mathcal{G}_{BC}$ are $|B| \times |C|$
matrices and $\Phi_G(k)$ is a $|G| \times |G|$ matrix. First observe
that
$$ \Phi_G(k) = \left[ \sum_{b,c \in G'} \frac{f(d'-b)}{N_b(k)}
(R_{G'G'}^{-1})_{b,c} \, g(c-d'') \right]_{d',d'' \in G} $$
can be written as the product of matrices $\F_{GG'} \Delta_k^{-1}
R_{G'G'}^{-1} \G_{G'G}$. Furthermore, since on $L^2_{G'}$ we have
$\Delta_k^{-1} R_{G'G'}^{-1} = ( R_{G'G'} \Delta_k )^{-1} = H_k^{-1}$,
we can write $\Phi_G(k)$ as $\F_{GG'} H_k^{-1} \G_{G'G}$. Hence,
\begin{equation} \label{drop1}
\frac{\partial^{n+m}}{\partial k_1^n \partial k_2^m} \Phi_G(k) =
\F_{GG'} \frac{\partial^{n+m} H_k^{-1}}{\partial k_1^n \partial k_2^m}
\G_{G'G}.
\end{equation}
This is the quantity we want to estimate.

{\tt Step 1.} Let $T=T(k)$ be an invertible matrix. Then applying
$\frac{\partial^{m_0}}{ \partial k_i^{m_0}}$ to the identity $T
T^{-1}=I$ and using the Leibniz rule for $\frac{\partial^{m_0}}{
\partial k_i^{m_0}} (T T^{-1})$ we find that 
$$ \frac{\partial^{m_0} T^{-1}}{\partial k_i^{m_0}} = -T^{-1} \sum_{m_1
= 0}^{m_0-1} \binom{m_0}{m_1} \frac{\partial^{m_0 - m_1} T}{\partial
k_i^{m_0 - m_1}} \; \frac{\partial^{m_1} T^{-1}}{\partial k_i^{m_1}}. $$
Iterating this formula $m_0-1$ times we obtain
\begin{equation} \label{rule}
\begin{split}
\frac{\partial^{m_0} T^{-1}}{\partial k_i^{m_0}} & = \left [
\prod_{j=1}^{m_0} \sum_{m_j=0}^{m_{j-1}-1} \binom{m_{j-1}}{m_j}
(-T^{-1}) \frac{\partial^{m_{j-1}-m_j} T}{\partial k_i^{m_{j-1} - m_j}}
\right ] \frac{\partial^{m_{m_0}} T^{-1}}{\partial k_i^{m_{m_0}}} \\
& = \left [ \prod_{j=1}^{m_0-1} \sum_{m_j=0}^{m_{j-1}-1}
\binom{m_{j-1}}{m_j} (-T^{-1}) \frac{\partial^{m_{j-1}-m_j} T}{\partial
k_i^{m_{j-1} - m_j}} \right ] \\
& \qquad \qquad \times \sum_{m_{m_0}=0}^{m_{m_0-1}-1}
\binom{m_{m_0-1}}{m_{m_0}} (-T^{-1}) \frac{\partial^{m_{m_0-1}-m_{m_0}}
T}{\partial k_i^{m_{m_0-1} - m_{m_0}}} \; \frac{\partial^{m_{m_0}}
T^{-1}}{\partial k_i^{m_{m_0}}}\\
& = (-1)^{m_0} \left [ \prod_{j=1}^{m_0-1} \sum_{m_j=0}^{m_{j-1}-1}
\binom{m_{j-1}}{m_j} T^{-1} \frac{\partial^{m_{j-1}-m_j} T}{\partial
k_i^{m_{j-1} - m_j}} \right ] T^{-1} \frac{\partial^{m_{m_0}-1}
T}{\partial k_i^{m_{m_0-1}}} T^{-1}.
\end{split}
\end{equation}

{\tt Step 2.} In view of \eqref{rule}, it is not difficult to see that
$\frac{\partial^m H_k^{-1}}{\partial k_2^m}$ is given by a finite linear
combination of terms of the form
\begin{equation} \label{form1}
\left [ \prod_{j=1}^{m} \, H_k^{-1} \frac{\partial^{n_j} H_k}{\partial
k_2^{n_j}} \right ] H_k^{-1},
\end{equation}
where $\sum_{j=1}^{m} n_j = m$. Thus, when we compute
$\frac{\partial^n}{\partial k_1^n} \, \frac{\partial^m
H_k^{-1}}{\partial k_2^m}$, the derivative $\frac{\partial^n}{\partial
k_1^n}$ acts either on $H_k^{-1}$ or $\frac{\partial^{n_j} H_k}{\partial
k_2^{n_j}}$. However, since $\big( \frac{\partial H_k}{\partial k_2}
\big)_{b,c} = 2(k_2 + b_2) \delta_{b,c} - 2 \hat{A}_2(b-c)$, we have
$\frac{\partial^n}{\partial k_1^n} \frac{\partial^{n_j} H_k}{\partial
k_2^{n_j}} = 0$ if $n_j \geq 1$ and $\frac{\partial^n}{\partial k_1^n}
\frac{\partial^{n_j} H_k}{\partial k_2^{n_j}} = \frac{\partial^n
H_k}{\partial k_1^n}$ if $n_j=0$. Similarly, using again \eqref{rule}
one can see that $\frac{\partial^n H_k^{-1}}{\partial k_1^n}$ is given
by a finite linear combination of terms of the form \eqref{form1}, with
$m$ and $k_2$ replaced by $n$ and $k_1$, respectively, and
$\sum_{j=1}^{n} n_j = n$. Therefore, combining all this we conclude that
$\frac{\partial^{n+m} H_k^{-1}}{\partial k_1^n \partial k_2^m}$ is given
by a finite linear combination of terms of the form
\begin{equation} \label{prod} 
\left [ \prod_{j=1}^{n+m} \, \Delta_k^{-1} R_{G'G'}^{-1}
\frac{\partial^{n_j} H_k}{\partial k_{i_j}^{n_j}} \right ] \Delta_k^{-1}
R_{G'G'}^{-1},
\end{equation}
where $\sum_{j=1}^{n+m} n_j \delta_{2,i_j} = m$ and $\sum_{j=1}^{n+m}
n_j \delta_{1,i_j} = n$, that is, where the sum of $n_j$ for which
$i_j=2$ is equal to $m$, and the sum of $n_j$ for which $i_j=1$ is equal
to $n$.

{\tt Step 3.} The first step in bounding \eqref{prod} is to estimate
$\Big \| \frac{\partial^{n_j}H_k}{\partial k_{i_j}^{n_j}} \Delta_k^{-1}
\pi_{G'} \Big \|$. A simple calculation shows that
$$ \left( \frac{\partial^{n_j} H_k}{\partial k_{i_j}^{n_j}}
\Delta_k^{-1} \right)_{b,c} = \frac{1}{N_c(k)} \times \begin{cases}
2(k_{i_j} + b_{i_j}) \delta_{b,c} + 2 \hat{A}_{i_j}(b-c) & \text{if}
\quad n_j = 1, \\ 2 \delta_{b,c} & \text{if} \quad n_j = 2, \\ 0 &
\text{if} \quad n_j \geq 3. \end{cases} $$
Furthermore, by Proposition \ref{p:Gprime},
$$ \frac{1}{|N_b(k)|} \leq \frac{1}{\eps |v|} $$
for all $b \in G'$, while by Proposition \ref{p:Nb} we have
\begin{equation} \label{1/N}
\frac{1}{|N_b(k)|} \leq \frac{2}{\Lambda |v|}
\end{equation}
and
$$ |k_i + b_i| \leq |u_i + b_i| + |v_i| \leq |v| + |u+b| \leq
\frac{2}{\Lambda} |N_b(k)| $$
for all $b \in G'$ if $G=\{0,d\}$, and for all $b \in G' \setminus
\{\tilde{b}\}$ if $G=\{0\}$. Furthermore,
\begin{equation} \label{tildeb}
|\tilde{b}| \leq \Lambda + |u| + |v| < \Lambda + 3|v|,
\end{equation}
since $|u| < 2|v|$ because $k \in T_0$. Now, let $\mathbf{1}_B(x)$ be
the characteristic function of the set $B$. Then, using the above
estimates we have
%
%
\begin{align*}
& \sup_{c \in G'} \sum_{b \in G'} \Big| \Big(
\frac{\partial^{n_j}H_k}{\partial k_{i_j}^{n_j}} \Delta_k^{-1} \pi_{G'}
\Big)_{b,c} \Big| \\
\displaybreak
& \leq \sup_{c \in G'} \sum_{b \in G'} \left[ \frac{2|k_{i_j} + b_{i_j}|
\delta_{n_j,1} + 2 \delta_{n_j,2}}{|N_b(k)|} \delta_{b,c} + \frac{2
|\hat{A}_{i_j}(b-c)|}{|N_b(k)|} \delta_{n_j,1} \right] \\
& \leq \sup_{c \in G'} \left[ \frac{2|k_{i_j} + \tilde{b}_{i_j}| +
2}{|N_{\tilde{b}}(k)|} \delta_{\tilde{b},c} + \frac{2
|\hat{A}_{i_j}(\tilde{b}-c)|}{|N_{\tilde{b}}(k)|} \right]
\mathbf{1}_{G'}(\tilde{b}) \\
& \qquad \qquad \qquad \qquad \qquad \qquad \qquad \qquad + \sup_{c \in
G'} \sum_{b \in G'\setminus\{\tilde{b}\}} \left[ \frac{2|k_{i_j} +
b_{i_j}| + 2}{|N_b(k)|} \delta_{b,c} + \frac{2
|\hat{A}_{i_j}(b-c)|}{|N_b(k)|} \right] \\
& \leq \frac{2|k_{i_j} + \tilde{b}_{i_j}| + 2 + 2\| \hat{A}
\|_{l^1}}{\eps |v|} \mathbf{1}_{G'}(\tilde{b}) + \sup_{c \in G'} \sum_{b
\in G' \setminus \{ \tilde{b} \}} \left[ \left[ \frac{4}{\Lambda} +
\frac{2}{|N_b(k)|} \right] \delta_{b,c} + \frac{2
|\hat{A}_{i_j}(b-c)|}{|N_b(k)|} \right] \\
& \leq \frac{2}{\eps |v|}(2(|u| + |v| + |\tilde{b}|) + 2 + 2 \| \hat{A}
\|_{l^1}) \mathbf{1}_{G'}(\tilde{b}) + \frac{4}{\Lambda} +
\frac{4}{\Lambda |v|} + \frac{4}{ \Lambda |v|} \| \hat{A} \|_{l_1} \\
& \leq \frac{2}{\eps |v|}(12|v| + 2 \Lambda + 2 + 2 \| \hat{A} \|_{l_1}
) \mathbf{1}_{G'}(\tilde{b}) + \frac{4}{\Lambda} + \frac{4}{\Lambda |v|}
+ \frac{4}{ \Lambda |v|} \| \hat{A} \|_{l_1} \leq
\mathbf{1}_{G'}(\tilde{b}) \, \eps^{-1} C_{\Lambda, A} + C_{\Lambda, A}.
\end{align*}
%
%
Similarly,
$$ \sup_{b \in G'} \sum_{c \in G'} \Big| \Big(
\frac{\partial^{n_j}H_k}{\partial k_{i_j}^{n_j}} \Delta_k^{-1} \pi_{G'}
\Big)_{b,c} \Big| \leq \mathbf{1}_{G'}(\tilde{b}) \, \eps^{-1}
C_{\Lambda, A} + C_{\Lambda, A}. $$
Hence, by Proposition \ref{p:ineq},
$$ \left\| \frac{\partial^{n_j}H_k}{\partial k_{i_j}^{n_j}}
\Delta_k^{-1} \pi_{G'} \right \| \leq \mathbf{1}_{G'}(\tilde{b}) \,
\eps^{-1} C_{\Lambda, A} + C_{\Lambda, A}. $$

{\tt Step 4.} By a similar (and much simpler) calculation (using
Proposition \ref{p:ineq}) we get
\begin{equation} \label{FG}
\begin{split}
\| \F_{GG'} \| & \leq \| f \|_{l^1}, \\
\| \G_{GG'} \| & \leq \| g \|_{l^1}, \\
\| \Delta_k^{-1} \pi_{G'} \| & \leq \mathbf{1}_{G'}(\tilde{b}) \,
\frac{1}{\eps |v|} + (1-\mathbf{1}_{G'}(\tilde{b})) \frac{2}{\Lambda
|v|}.
\end{split}
\end{equation}
From Lemma \ref{l:invR} we have $\| (R_{G'G'})^{-1} \| \leq 18$. Thus,
the operator norm of \eqref{prod} is bounded by
$$ \left \| \left[ \prod_{j=1}^{n+m} \, \Delta_k^{-1} R_{G'G'}^{-1}
\frac{\partial^{n_j} H_k}{\partial k_{i_j}^{n_j}} \right] \Delta_k^{-1}
R_{G'G'}^{-1} \right \| \leq \| \Delta_k^{-1} \| \, \| R_{G'G'}^{-1} \|
\left[ \prod_{j=1}^{n+m} \left\| \frac{\partial^{n_j}H_k}{\partial
k_{i_j}^{n_j}} \Delta_k^{-1} \pi_{G'} \right \| \| R_{G'G'}^{-1} \|
\right], $$
which is bounded either by
$$ \frac{1}{\eps |v|} \, 18 \left[ \prod_{j=1}^{n+m} (\eps^{-1}
C_{\Lambda, A} + C_{\Lambda,A}) \, 18 \right] \leq
\eps^{-(n+m+1)}C_{\Lambda, A,n,m} \, \frac{1}{|v|} $$
if $G = \{0\}$, or by
$$ \frac{1}{\Lambda |v|} \, 18 \left[ \prod_{j=1}^{n+m} C_{\Lambda,A} \,
18 \right] \| g \|_{l^1} \leq  C_{\Lambda, A,n,m} \, \frac{1}{|v|} $$
if $G = \{0, d\}$.
Therefore,
\begin{equation} \label{drop2}
\left \| \frac{\partial^{n+m} H_k^{-1}}{\partial k_1^n \partial k_2^m}
\right \| \leq \sum_{\substack{\text{finite sum where} \\ \text{\# of
terms depend} \\ \text{ on $n$ and $m$}}} \frac{C'}{|v|} \leq C_{n,m}
\frac{C'}{|v|} \leq \frac{C}{|v|},
\end{equation}
with $C = C_{\eps,\Lambda, A,n,m}$ if $G=\{0\}$ or $C = C_{\Lambda, A,
n,m}$ if $G = \{0, d\}$. Finally, recalling \eqref{drop1} and \eqref{FG}
we have
$$ \left| \frac{\partial^{n+m}}{\partial k_1^n \partial k_2^m} \Phi_G(k)
\right| = \left| \F_{GG'} \frac{\partial^{n+m} H_k^{-1}}{\partial k_1^n
\partial k_2^m} \G_{G'G} \right| \leq \| \F_{GG'} \| \left\|
\frac{\partial^{n+m} H_k^{-1}}{\partial k_1^n \partial k_2^m} \right \|
\| \G_{G'G} \| \leq \frac{C}{|v|}, $$
where
$C = C_{\eps,\Lambda, A,n,m,f,g}$ if $G=\{0\}$ or $C=C_{\Lambda,
A,n,m,f,g}$ if $G = \{0, d\}$. This is the desired inequality. The proof
of the lemma is complete.
\end{proof}


\begin{proof}[Proof of Lemma \ref{l:derii}] 
Let $\R^+$ be the set of non-negative real numbers and let $\sigma$ be a
real-valued function on $\R^+$ such that:
\begin{itemize}
\item[]
\begin{itemize}
\item[\rm (i)] $\sigma(t) \ge 1$ for all $t \in \R^+$ with
$\sigma(0)=1$;

\item[\rm (ii)] $\sigma(s)\sigma(t) \ge \sigma(s+t)$ for all $s,t \in
\R^+$;

\item[\rm (iii)] $\sigma$ increases monotonically.
\end{itemize}
\end{itemize}
For example, for any $\beta \geq 0$ the functions $t \mapsto e^{\beta
t}$ and $t \mapsto (1+t)^\beta$ satisfy these properties. Now, let $T$
be a linear operator from $L^2_C$ to $L^2_B$ with $B, C \subset \Gdual$
(or a matrix $T=[T_{b,c}]$ with $b \in B$ and $c \in C$) and consider
the $\sigma$-norm
$$ \| T \|_\sigma \coloneqq \max \left \{ \sup_{b \in B} \sum_{c \in C}
|T_{b,c}| \sigma(|b-c|), \; \sup_{c \in C} \sum_{b \in B} |T_{b,c}|
\sigma(|b-c|) \right \}. $$
In \cite{deO} we prove that this norm has the following properties.

\begin{prop}[Properties of $\| \cdot \|_\sigma$] \label{p:nsigma}
Let $S$ and $T$ be linear operators from $L^2_C$ to $L^2_B$ with $B, C
\subset \Gdual$. Then:
\begin{itemize}
\item[\rm (a)] $\| T \| \le \| T \|_{\sigma \equiv 1} \le \| T
\|_\sigma$;

\item[\rm (b)] If $B=C$, then $\| S \, T \|_\sigma \leq \|S \|_\sigma \|
T \|_\sigma$;

\item[\rm (c)] If $B=C$, then $\| (I+T)^{-1} \|_\sigma \le
(1-\|T\|_\sigma)^{-1}$ if $\| T \|_\sigma < 1$;

\item[\rm (d)] $|T_{b,c}| \leq \frac{1}{\sigma(|b-c|)} \| T \|_\sigma$
for all $b \in B$ and all $c \in C$.
\end{itemize}
\end{prop}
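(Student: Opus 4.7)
The plan is to verify the four properties in order, each by direct manipulation of the definition of $\|\cdot\|_\sigma$, invoking the three hypotheses on $\sigma$ essentially once each: positivity ($\sigma \ge 1$) for (a), submultiplicativity together with monotonicity for (b), and the normalization $\sigma(0)=1$ for the Neumann-series argument in (c).

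For part (a), I would first invoke Proposition \ref{p:ineq} applied to $T$: the two quantities being maximized there are exactly the row and column sums that define $\|T\|_{\sigma \equiv 1}$, so $\|T\| \le \|T\|_{\sigma \equiv 1}$. The second inequality is immediate term-by-term: since $\sigma(|b-c|) \ge 1$, each summand $|T_{b,c}|$ is bounded by $|T_{b,c}|\sigma(|b-c|)$, so the sums in $\|T\|_{\sigma \equiv 1}$ are dominated by the corresponding sums in $\|T\|_\sigma$.

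Part (b) is the only step with any substance, and this is the heart of the proposition. The key estimate is that the ordinary triangle inequality $|b-c| \le |b-d|+|d-c|$, combined with monotonicity of $\sigma$ and then submultiplicativity $\sigma(s)\sigma(t) \ge \sigma(s+t)$, yields
\[
\sigma(|b-c|) \;\le\; \sigma(|b-d|+|d-c|) \;\le\; \sigma(|b-d|)\,\sigma(|d-c|).
\]
Writing $(ST)_{b,c} = \sum_d S_{b,d} T_{d,c}$ and distributing this factorization, I get
\[
|(ST)_{b,c}|\,\sigma(|b-c|) \;\le\; \sum_{d} |S_{b,d}|\sigma(|b-d|)\,|T_{d,c}|\sigma(|d-c|).
\]
Summing over $c$ and interchanging the order of summation bounds the $b$-row sum of $ST$ by $\sum_d |S_{b,d}|\sigma(|b-d|)\cdot \|T\|_\sigma \le \|S\|_\sigma\|T\|_\sigma$, and taking the supremum over $b$ gives one of the two bounds. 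The symmetric argument on columns gives the other.

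Parts (c) and (d) then follow easily. For (c), note that $\|I\|_\sigma = \sigma(0) = 1$ since $I_{b,c} = \delta_{b,c}$, and the Neumann series $(I+T)^{-1} = \sum_{n\ge 0} (-T)^n$ converges in $\|\cdot\|_\sigma$ by iterating (b) to get $\|T^n\|_\sigma \le \|T\|_\sigma^n$ and then applying the obvious triangle inequality for $\|\cdot\|_\sigma$. For (d), the single quantity $|T_{b,c}|\sigma(|b-c|)$ is one summand in the column sum $\sum_{b'\in B}|T_{b',c}|\sigma(|b'-c|) \le \|T\|_\sigma$, so $|T_{b,c}| \le \|T\|_\sigma/\sigma(|b-c|)$. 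No step presents a genuine obstacle; the proposition is essentially bookkeeping, and the three hypotheses on $\sigma$ were chosen precisely so that the factorization underlying (b) goes through.
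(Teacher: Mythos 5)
Your proof is correct, and since the paper itself defers the proof of this proposition to the thesis \cite{deO} rather than including it, there is nothing in the present text to compare against; the argument you give is the standard one and uses each of the three hypotheses on $\sigma$ exactly where one would expect. The only point you gloss over is in part (c): having bounded the partial sums $S_N = \sum_{n=0}^N (-T)^n$ by $\sum_{n=0}^N \|T\|_\sigma^n$, you still need to pass to the limit $N \to \infty$ in the $\sigma$-norm, which is not automatic from operator-norm convergence alone; the clean way is to note that the matrix elements $(S_N)_{b,c}$ converge pointwise to those of $(I+T)^{-1}$ and apply Fatou's lemma to each weighted row and column sum. This is a minor omission, not a gap in the argument.
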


Now, by using these properties we prove Lemma \ref{l:derii}. We follow
the same notation as above. First observe that, similarly as in the last
proof we can write
$$ \Phi_{d',d''}(k) = \mathcal{F}_{\{d'\} G'} \Delta_k^{-1}
R_{G'G'}^{-1} \mathcal{G}_{G' \{d''\}} = \mathcal{F}_{\{d'\} G' }
H_k^{-1} \mathcal{G}_{G' \{d''\}}. $$
Now, let $\sigma(|b|) = (1+|b|)^\beta$, and observe that there is a
positive constant $C_\beta$ such that $\sigma(|b|) \le C_\beta
(1+|b|^\beta)$ for all $b \in \Gdual$. Then, it is easy to see that
\begin{align*}
\| \mathcal{F}_{\{d'\} G' } \|_\sigma & = \| f \|_\sigma \le C_\beta \|
(1+|b|^\beta) f(b) \|_{l^1}, \\
\| \mathcal{G}_{G'\{d''\} } \|_\sigma & = \| g \|_\sigma \le C_\beta\|
(1+|b|^\beta) g(b) \|_{l^1}.
\end{align*}
Furthermore, by \eqref{Qbc1} and Proposition \ref{p:ineq},
\begin{equation} \label{Rsigma}
\| R_{G'G'}^{-1} \|_\sigma = \| (I+T_{G'G'})^{-1} \|_\sigma \le
\sum_{j=0}^\infty \| T_{G'G} \|^j_\sigma < 18,
\end{equation}
and since for diagonal operators the $\sigma$-norm and the operator norm
agree, from \eqref{FG} we have
$$ \| \Delta_k^{-1} \pi_{G'} \|_\sigma \le \frac{2}{\Lambda |v|}. $$
Hence, in view of Proposition \ref{p:nsigma}(b) and Proposition
\ref{p:order}(ii),
$$ |\Phi_{d',d''}(k)| \le \| \mathcal{F}_{\{d'\} G'} \Delta_k^{-1}
R_{G'G'}^{-1} \mathcal{G}_{G' \{d''\}} \| \le
C_{\beta,f,g,\Lambda,A,m,n} \, \frac{1}{|d|}, $$
and by repeating the proof of Lemma \ref{l:der} with the operator norm
replaced by the $\sigma$-norm we obtain 
$$ \left \| \frac{\partial^{n+m}}{\partial k_1^n \partial k_2^m}
\Phi_{d',d''}(k) \right \|_\sigma \le C_{\beta,f,g,\Lambda,A,m,n} \,
\frac{1}{|d|}. $$
Therefore, by Proposition \ref{p:nsigma}(d), for any integers $n$ and
$m$ with $n+m \ge 0$,
$$ \left| \frac{\partial^{n+m}}{\partial k_1^n \partial k_2^m}
\Phi_{d',d''}(k) \right| \le \frac{1}{1+|d'-d''|^\beta} \left \|
\frac{\partial^{n+m}}{\partial k_1^n \partial k_2^m} \Phi_{d',d''}(k)
\right \|_\sigma \leq C_{\beta,f,g,\Lambda,A,m,n} \,
\frac{1}{|d|^{1+\beta}}. $$
This is the desired inequality.
\end{proof}


\subsection*{Proof of Lemma \ref{l:der2}}

Define the operator $M^{(j)} : L^2_{G'_3} \to L^2_{G'_3}$ as
$$ M^{(j)} \coloneqq \begin{cases} S & \text{if} \quad j=1, \\ W &
\text{if} \quad j=2, \\ Z & \text{if} \quad j=3, \end{cases} $$
where $S$, $W$ and $Z$ are given by \eqref{WS}. In order to prove Lemma
\ref{l:der2} we first prove the following proposition.

\begin{prop} \label{p:dSWZ}
Assume the same hypotheses of Lemma \ref{l:der2}. Then, for any integers
$n$ and $m$ with $n+m \ge 1$ and for $1 \le j \le 3$,
$$ \left \| \frac{\partial^{n+m}}{\partial k_1^n \partial k_2^m}
\Delta_k^{-1} M^{(j)} \right \| \leq \frac{C_j}{(2|z_{\mu,d'}(k)|-R)^j},
$$
where $C_1 = C_{1;\Lambda, A,n,m}$ and $C_j = C_{j;\Lambda,A,q,n,m}$ for
$2 \le j \le 3$ are constants. Furthermore,
$$ C_{1;\Lambda, A, 1, 0} \leq \frac{13}{\Lambda^2}, \qquad
C_{1;\Lambda, A, 0, 1} \leq \frac{13}{\Lambda^2} \qquad \text{and}
\qquad C_{1;\Lambda, A, 1, 1} \leq \frac{65}{\Lambda^3}. $$
\end{prop}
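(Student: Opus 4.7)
The plan is to differentiate the operator series defining $S$, $W$, and $Z$ term-by-term, derive bounds matching the corresponding undifferentiated operator bounds established at the end of Step~3 in the proof of Lemma~\ref{l:b1}, and then combine them with the derivative bounds on $\Delta_k^{-1}\pi_{G'_3}$ via the Leibniz rule. First I would bound $\partial^{n+m} X_{33}$ and $\partial^{n+m} Y_{33}$ in operator norm. Each matrix element $X_{b,c}$ and $Y_{b,c}$ (see \eqref{Xbc}, \eqref{Ybc}) is a rational function in the linear expressions $w_{\mu,d'}(k)$ and $z_{\mu,d'}(k)$, whose $k_i$-partials have modulus $1$. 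A $k$-derivative therefore produces a sum of terms with at most one extra factor of $(w_{\mu,d'}-2i\theta_{\mu'}(c-d'))^{-1}$ (bounded by $1/\Lambda$ via \eqref{est5}) or of $(z_{\mu,d'}-2i\theta_\mu(c-d'))^{-1}$ (bounded by $2/|z_{\mu,d'}|_R$ via \eqref{est3}) in the denominator. Repeating the $\sup_{b}\sum_{c}$/$\sup_c\sum_b$ argument of Step~2 of the proof of Lemma~\ref{l:b1} and invoking Proposition~\ref{p:ineq} yields $\|\partial^{n+m} Y_{33}\|\le C_{\Lambda,A,n,m}$ and $\|\partial^{n+m} X_{33}\|\le C_{\Lambda,A,q,n,m}/|z_{\mu,d'}|_R$.

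Second, I would propagate these to $S$, $W$, $Z$. For $S=(I-Y_{33})^{-1}$, differentiating $(I-Y_{33})S=I$ gives $\partial_{k_i}S=S(\partial_{k_i}Y_{33})S$; iterating as in \eqref{rule} writes $\partial^{n+m}S$ as a finite sum of products of copies of $S$ and derivatives of $Y_{33}$. Using $\|S\|\le 14/13$ together with the Step~1 bounds yields $\|\partial^{n+m} S\|\le C_{\Lambda,A,n,m}$. For $W=\sum_{j\ge 1} W_j$ and $Z=\sum_{j\ge 2} Z_j$ (see \eqref{WS}), I would differentiate each term via the Leibniz rule applied to the explicit polynomial expressions for $W_j$ and $Z_j$ in $X_{33}$ and $Y_{33}$. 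Each $W_j$ contains exactly one $X_{33}$-factor and $j-1$ factors of $Y_{33}$, and differentiation produces at most $(j(n+m+1))^{n+m}$ terms. The remaining $Y_{33}$-type factors furnish the geometric decay $\|Y_{33}\|^{j-1}<(1/14)^{j-1}$, and the single $X_{33}$-type factor supplies one power of $1/|z_{\mu,d'}|_R$, so after summing in $j$ we obtain $\|\partial^{n+m}W\|\le C_{\Lambda,A,q,n,m}/|z_{\mu,d'}|_R$. Each $Z_j$ has at least two $X_{33}$-type factors, giving the stronger bound $\|\partial^{n+m} Z\|\le C_{\Lambda,A,q,n,m}/|z_{\mu,d'}|_R^2$.

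Third, I would handle the factor $\Delta_k^{-1}\pi_{G'_3}$. Since $(\Delta_k^{-1})_{b,b}=[(w_{\mu,d'}-2i\theta_{\mu'}(b-d'))(z_{\mu,d'}-2i\theta_\mu(b-d'))]^{-1}$, estimates \eqref{est3}--\eqref{est5} combined with $|z_{\mu,d'}|_R\ge|z_{\mu,d'}|\ge|v|>R\ge 2\Lambda$ show that $|\partial^{n+m}(1/N_b(k))|\le C_{\Lambda,n,m}/|z_{\mu,d'}|_R$ uniformly for $b\in G'_3$; since $\Delta_k^{-1}\pi_{G'_3}$ is diagonal this immediately gives $\|\partial^{n+m}\Delta_k^{-1}\pi_{G'_3}\|\le C_{\Lambda,n,m}/|z_{\mu,d'}|_R$. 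Applying the Leibniz rule to $\partial^{n+m}(\Delta_k^{-1}M^{(j)})$ and using the bounds of Step~2 produces the claimed estimate $C_j/(2|z_{\mu,d'}|-R)^j$.

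The main obstacle is extracting the sharp explicit constants $13/\Lambda^2$ and $65/\Lambda^3$ for the $j=1$ case with $(n,m)\in\{(1,0),(0,1),(1,1)\}$. For those I would decompose $\Delta_k^{-1}S=\Delta_k^{-1}+\Delta_k^{-1}Y_{33}S$ and identify the single dominant term in $\partial^{n+m}$: for a first derivative this is $(\partial_{k_i}\Delta_k^{-1})\cdot I$, which contributes two factors of $1/\Lambda$ (one from the bound $|1/(w-2i\theta_{\mu'}(b-d'))|\le 1/\Lambda$ on the original denominator and one from the additional such factor produced by the derivative), times $\|S\|\le 14/13$; for the mixed second derivative one accumulates a third such factor from the additional differentiation. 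Keeping track of these factors together with the explicit norm estimates $\|Y_{33}\|\le (8/\Lambda)\|\theta_{\mu'}(\hat A)\|_{l^1}<1/14$ from \eqref{est2} pins down the constants. The bookkeeping is delicate because any slack in \eqref{est3}--\eqref{est6} would inflate these constants past $13$ and $65$.
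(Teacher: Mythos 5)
Your proposal is correct and follows essentially the same route as the paper's proof: bound the $k$-derivatives of $\Delta_k^{-1}\pi_{G'_3}$, $X_{33}$ and $Y_{33}$ using that each $w$-denominator contributes $1/\Lambda$ and each $z$-denominator contributes $2/|z_{\mu,d'}|_R$, propagate to $S$ via differentiation of $S=(I-Y_{33})^{-1}$ and to $W,Z$ by term-by-term differentiation of the series with geometric decay in the number of undifferentiated $Y_{33}$-factors, then combine by Leibniz; for the explicit constants you isolate the dominant term $\partial\Delta_k^{-1}$ and absorb all $Y_{33}$-corrections using $\|\hat A\|_{l^1}<2\eps/63<\Lambda/189$, exactly as in the paper's Step 3. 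The only (harmless) difference is that the paper works in $(w,z)$-coordinates and records the finer decay $|z|_R^{-(r+j)}$ in the number of $z$-derivatives with explicit factorial constants, whereas you track only the total power $|z|_R^{-j}$, which suffices for the stated bounds.
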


\begin{proof}
{\tt Step 0.} To simplify the notation write $w = w_{\mu,d'}$, $z =
z_{\mu,d'}$ and $|z|_R = 2|z|-R$. First observe that, for any analytic
function of the form $h(k)=\tilde{h}(w(k),z(k))$ we have
$$ \frac{\partial}{\partial k_1} h = \left( \frac{\partial}{\partial w}
+ \frac{\partial}{\partial z} \right) \tilde{h}, \qquad
\frac{\partial}{\partial k_2} h = i(-1)^\nu \left(
\frac{\partial}{\partial w} - \frac{\partial}{\partial z} \right)
\tilde{h}. $$
Thus,
\begin{align*}
\left\| \frac{\partial^{n+m}}{\partial k_1^n \partial k_2^m}
\Delta_k^{-1} M^{(j)} \right \| & = \left\| (i(-1)^\nu)^m \sum_{p=0}^m
\sum_{r=0}^n \binom{m}{p} \binom{n}{r} (-1)^{m-p}
\frac{\partial^{n-r+m-p}}{\partial z^{n-r+m-p}}
\frac{\partial^{r+p}}{\partial w^{r+p}} \Delta_k^{-1} M^{(j)} \right\|
\\
& \leq 2^{n+m} \sup_{p \le r} \sup_{r \le n} \left\|
\frac{\partial^{n-r+m-p}}{\partial z^{n-r+m-p}}
\frac{\partial^{r+p}}{\partial w^{r+p}} \Delta_k^{-1} M^{(j)} \right\|. 
\end{align*}
Now, by the Leibniz rule,
\begin{align*}
\left \| \frac{\partial^n}{\partial z^n} \frac{\partial^m}{\partial w^m}
\Delta_k^{-1} M^{(j)} \right \| & = \left \| \sum_{p=0}^m \sum_{r=0}^n
\binom{m}{p} \binom{n}{r} \frac{\partial^{n-r+m-p} \Delta_k^{-1}}{
\partial z^{n-r} \partial w^{m-p}} \, \frac{\partial^{r+p}
M^{(j)}}{\partial z^r \partial w^p} \right \| \\
& \leq 2^{n+m} \, \sup_{p \leq m} \sup_{r \leq n} \, \Bigg \|
\frac{\partial^{n-r+m-p} \Delta_k^{-1}}{ \partial z^{n-r} \partial
w^{m-p}} \Bigg \| \, \Bigg \| \frac{\partial^{r+p} M^{(j)}}{\partial z^r
\partial w^p} \Bigg \|.
\end{align*}
Furthermore, we shall prove below that
\begin{equation} \label{der}
\sup_{p \leq m} \sup_{r \leq n} \Bigg \| \frac{\partial^{n-r+m-p}
\Delta_k^{-1}}{ \partial z^{n-r} \partial w^{m-p}} \Bigg \| \, \Bigg \|
\frac{\partial^{r+p} M^{(j)}}{\partial z^r \partial w^p} \Bigg \| \leq
\frac{C_{j,n,m}}{|z|_R^{n+j}},
\end{equation}
with constants $C_{1,n,m} = C_{1,n,m;\Lambda,A}$ and $C_{j,n,m} =
C_{j,n,m;\Lambda,A,q}$ for $2 \le j \le 3$. Hence,
$$ \left \| \frac{\partial^n}{\partial z^n} \frac{\partial^m}{\partial
w^m} \Delta_k^{-1} M^{(j)} \right \| \leq 2^{n+m}
\frac{C_{j,n,m}}{|z|_R^{n+j}}. $$
Therefore, being careful with the indices,
$$ \left\| \frac{\partial^{n+m}}{\partial k_1^n \partial k_2^m}
\Delta_k^{-1} M^{(j)} \right \| \leq 2^{n+m} \sup_{p \leq m} \sup_{r
\leq n} \, 2^{n-r+m-p+r+p} \frac{C_{j,n-r+m-p,r+p}}{|z|_R^{n-r+m-p+j}}
\leq \frac{C_j}{|z|_R^j}, $$
where $C_1 = C_{1;\Lambda,A,n,m}$ and $C_j = C_{j;\Lambda,A,q,n,m}$ for
$2 \le j \le 3$. This is the desired inequality. We are left to prove
\eqref{der} and estimate the constants $C_{1;\Lambda,A,i,j}$ for $i,j\in
\{0,1\}$ to finish the proof of the proposition.

{\tt Step 1.} The first step for obtaining \eqref{der} is to estimate
$\left \| \frac{\partial^{r+p} \Delta_k^{-1}}{ \partial z^{r} \partial
w^{p}} \pi_{G'_3} \right \|$. Observe that
\begin{align*}
\left| \left( \frac{\partial^{r+p} \Delta_k^{-1}}{ \partial z^r \partial
w^p} \right)_{b,c} \right| & = \left| \frac{\partial^{r+p}
(\Delta_k^{-1})_{b,c}}{ \partial z^r \partial w^p} \right| = \left|
\frac{\partial^p}{\partial w^p} \, \frac{1}{w - 2i \theta_{\mu'}(b-d')}
\, \frac{\partial^r}{\partial z^r} \, \frac{\delta_{b,c}}{z - 2i
\theta_\mu(b-d')} \right| \\
& = \left| \frac{(-1)^p \, p!}{(w-2i \theta_{\mu'}(b-d'))^{p+1}} \,
\frac{(-1)^r \, r! \, \delta_{b,c}}{(z-2i \theta_\mu(b-d'))^{r+1}}
\right|\\ & \leq \frac{p! \, r! \, \delta_{b,c} }{|w - 2i
\theta_{\mu'}(b-d')|^{p+1} |z - 2i \theta_\mu(b-d')|^{r+1}},
\end{align*}
and recall from \eqref{est3} and \eqref{est4} that, for all $b \in
G'_3$,
\begin{equation} \label{rec}
\frac{1}{|z-2i \theta_\mu(b-d')|} \leq \frac{2}{|z|_R} \qquad \text{and}
\qquad \frac{1}{|w - 2i \theta_{\mu'}(b-d')|} \leq \frac{1}{\Lambda}.
\end{equation}
Then,
$$ \left| \left( \frac{\partial^{r+p} \Delta_k^{-1}}{ \partial z^r
\partial w^p} \right)_{b,c} \right| \leq  \frac{p! \, r! \, 2^{r+1} \,
\delta_{b,c} }{\Lambda^{p+1} |z|_R^{r+1}}, $$
and consequently,
\begin{align*}
\left[ \sup_{b \in G'_3} \sum_{c \in G'_3} + \sup_{c \in G'_3}
\sum_{b\in G'_3} \right] \, \left| \left( \frac{\partial^{r+p}
\Delta_k^{-1}}{ \partial z^r \partial w^p} \right)_{b,c} \right| & \leq
\frac{p! \, r! \, 2^{r+1} }{\Lambda^{p+1} |z|_R^{r+1}} \left[ \sup_{b
\in G'_3} \sum_{c \in G'_3} + \sup_{c \in G'_3} \sum_{b \in G'_3}
\right] \delta_{b,c} \\
& = \frac{p! \, r! \, 2^{r+2} }{\Lambda^{p+1} |z|_R^{r+1}}.
\end{align*}
Therefore, by Proposition \ref{p:ineq},
\begin{equation} \label{estD}
\left \| \frac{\partial^{r+p} \Delta_k^{-1}}{ \partial z^{r} \partial
w^{p}} \pi_{G'_3} \right \| \leq \frac{p! \, r! \, 2^{r+2}
}{\Lambda^{p+1}} \, \frac{1}{|z|_R^{r+1}}.
\end{equation}

{\tt Step 2.} We now estimate the second factor in \eqref{der}. Let us
first consider the case $j=1$, that is, $M^{(1)} = S$. Since
$S=(I-Y_{33})^{-1}$, the operator $S$ is clearly invertible. Thus, by
applying \eqref{rule} with $T=S^{-1}$, one can see that
$\frac{\partial^p S}{\partial w^p}$ is given by a finite linear
combination of terms of the form
\begin{equation} \label{prodS0}
\left [ \prod_{j=1}^{p} \, S \frac{\partial^{n_j} S^{-1}}{\partial
w^{n_j}} \right ] S,
\end{equation}
where $\sum_{j=1}^{p} n_j = p$. Hence, when we compute
$\frac{\partial^r}{\partial z^r} \frac{\partial^p S}{\partial w^p}$, the
derivative $\frac{\partial^r}{\partial z^r}$ acts either on $S$ or
$\frac{\partial^{n_j} S^{-1}}{\partial w^{n_j}}$. Similarly, using again
\eqref{rule} with $T=S^{-1}$, one can see that $\frac{\partial^r
S}{\partial z^r}$ is given by a finite linear combination of terms of
the form \eqref{prodS0}, with $p$ and $w$ replaced by $r$ and $z$,
respectively, and $\sum_{j=1}^{r} m_j = r$. Thus, we conclude that
$\frac{\partial^{r+p} S}{\partial z^r \partial w^p}$ is given by a
finite linear combination of terms of the form
\begin{equation} \label{prodS}
\left [ \prod_{j=1}^{r+p} \, S \frac{\partial^{m_j + n_j}
S^{-1}}{\partial z^{m_j} \partial w^{n_j}} \right ] S,
\end{equation}
where $\sum_{j=1}^{r+p} m_j = r$ and $\sum_{j=1}^{r+p} n_j = p$. Indeed,
observe that the general form of the terms \eqref{prodS} follows
directly from \eqref{rule} because that identity is also valid for mixed
derivatives.

Since $S = (I-Y_{33})^{-1}$ with $\| Y_{33} \| < 1/14$ and
\begin{equation} \label{Ybc2}
Y_{b,c} = \frac{-2i \theta_{\mu'} (\hat{A}(b-c)) \, z}{(w - 2i
\theta_{\mu'}(c-d')) (z - 2i \theta_\mu(c-d'))},
\end{equation}
we have
\begin{equation} \label{estS}
\| S \| = \| (I-Y_{33})^{-1} \| \leq \frac{1}{1-\| Y_{33} \|} \leq
\frac{14}{13}
\end{equation}
and
\begin{align*}
\left| \left( \frac{\partial^{j + l}}{\partial z^j \partial w^l} S^{-1}
\right)_{b,c} \right| & = \left| \frac{\partial^{j + l}}{\partial z^j
\partial w^l} Y_{b,c} \right| \\
& = \left| \frac{\partial^j}{\partial z^j}
\, \frac{-2i \theta_{\mu'}(\hat{A}(b-c)) \, z}{z -2i \theta_\mu(c-d')}
\, \frac{\partial^l}{\partial w^l} \, \frac{1}{ w - 2i
\theta_{\mu'}(c-d')} \right|.
\end{align*}
Furthermore,
\begin{alignat*}{2}
\frac{\partial^j}{\partial z^j} \, \frac{-2i \theta_{\mu'}(\hat{A}(b-c))
\, z}{z -2i \theta_\mu(c-d')} & = \frac{(-1)^{j-1} j! \, 2i
\theta_{\mu'}(\hat{A}(b-c)) \, 2 i \theta_{\nu}(c-d')}{(z - 2 i
\theta_{\nu}(c-d'))^{j+1}} & \qquad \text{for} \quad j \geq 1, \\
\frac{\partial^l}{\partial w^l} \, \frac{1}{ w - 2i \theta_{\mu'}(c-d')}
& = \frac{(-1)^l \, l!}{ ( w - 2i \theta_{\mu'}(c-d'))^{l+1}} & \qquad
\text{for} \quad l \geq 0.
\end{alignat*}
Recall from \eqref{est4} and \eqref{est6} that, for all $c \in G'$,
\begin{equation} \label{rec2}
\frac{ |c-d'| }{ |w - 2i \theta_{\mu'}(c-d')| } \leq \frac{ |c-d'| }{
|c-d'| - \eps} \leq 2.
\end{equation}
Then, using this and \eqref{rec}, for $j \geq 1$ and $l \geq 0$,
\begin{equation} \label{estYbc}
\begin{aligned}
\left| \left( \frac{\partial^{j + l}}{\partial z^j \partial w^l} S^{-1}
\right)_{b,c} \right| & \leq \frac{j! \, l! \, |\hat{A}(b-c)|}{ |z -2i
\theta_\mu(c-d')|^{j+1} |w - 2i \theta_{\mu'}(c-d')|^l} \, \frac{ |c-d'|
}{ |w - 2i \theta_{\mu'}(c-d')| } \\
& \leq \frac{2^{j+2} j! \, l! \, |\hat{A}(b-c)|}{ \Lambda^l |z|_R^{j+1}
}, 
\end{aligned}
\end{equation}
while for $j=0$ and $l \geq 0$,
\begin{equation} \label{estYbc2}
\left| \left( \frac{\partial^{j + l}}{\partial z^j \partial w^l} S^{-1}
\right)_{b,c} \right| \leq \frac{l! \, | \hat{A}(b-c) | \, |z|}{ |z -2i
\theta_\mu(c-d')| \, |w - 2i \theta_{\mu'}(c-d')|^{l+1}} \leq \frac{2 \,
l! \, | \hat{A}(b-c) |}{\Lambda^{l+1} }.
\end{equation}
Consequently,
\begin{align*}
& \left[ \sup_{b \in G'_3} \sum_{c \in G'_3} + \sup_{c \in G'_3} \sum_{b
\in G'_3} \right] \left| \left( \frac{\partial^{j + l}}{\partial z^j
\partial w^l} S^{-1} \right)_{b,c} \right| \\
& \qquad \qquad \leq \left( 1 - \delta_{0,j} + \frac{|z|_R}{2 \Lambda}
\delta_{0,j} \right) \frac{2^{j+2} j! \, l!}{ \Lambda^l |z|_R^{j+1} }
\left[ \sup_{b \in G'_3} \sum_{c \in G'_3} + \sup_{c \in G'_3} \sum_{b
\in G'_3} \right] | \hat{A}(b-c) | \\
& \qquad \qquad \leq \left( 1 - \delta_{0,j} + \frac{|z|_R}{2 \Lambda}
\delta_{0,j} \right) \frac{2^{j+3} j! \, l!}{ \Lambda^l |z|_R^{j+1} } \,
\| \hat{A} \|_{l^1}.
\end{align*}
Therefore, by Proposition \eqref{p:ineq},
\begin{equation} \label{Sinv}
\left \| \frac{\partial^{j + l}}{\partial z^j \partial w^l} S^{-1}
\right \| \leq \left( 1 - \delta_{0,j} + \frac{|z|_R}{2 \Lambda}
\delta_{0,j} \right) \frac{2^{j+3} j! \, l!}{ \Lambda^l |z|_R^{j+1} } \,
\| \hat{A} \|_{l^1}.
\end{equation}
Thus, for $r \geq 1$, in view of \eqref{prodS} where $\sum_{j=1}^{r+p}
m_j = r$,
\begin{align*}
& \left \| \frac{\partial^{r+p}}{\partial z^r \partial w^p} S \right \|
\leq C_{r,p} \left [ \prod_{j=1}^{r+p} \, \| S \| \, \left\|
\frac{\partial^{m_j + n_j}}{\partial z^{m_j} \partial w^{n_j}} S^{-1}
\right \| \, \right ] \| S \| \\
& \leq C_{r,p} \left [ \prod_{j=1}^{r+p} \, C_{\Lambda,A} \,
\frac{2^{m_j+3} m_j! \, n_j!}{\Lambda^{n_j}} \| \hat{A} \|_{l^1} \right
] C_{\Lambda,A} \, \prod_{j=1}^{r+p} \left( 1 - \delta_{0,m_j} +
\frac{|z|_R}{2 \Lambda} \delta_{0,m_j} \right) \frac{1}{|z|_R^{m_j+1}}
\\
& \leq C_{\Lambda, A, r, p} \, \frac{1}{|z|_R^{r+1}},
\end{align*}
since $m_j \geq 1$ for at least one $1 \leq j \leq r+p$. Similarly, if
$r = 0$ then
$$ \left \| \frac{\partial^{r+p}}{\partial z^r \partial w^p} S \right \|
\leq C_{\Lambda, A, r, p}. $$
Hence, in view of \eqref{estD},
\begin{align*}
& \sup_{p \leq m} \sup_{r \leq n} \left \| \frac{\partial^{n-r+m-p}
\Delta_k^{-1}}{ \partial z^{n-r} \partial w^{m-p}} \right \| \left\|
\frac{\partial^{r+p} M^{(1)}}{\partial z^r \partial w^p} \right \| \\
& \leq \sup_{p \leq m} \sup_{r \leq n} \frac{(m-p)! \, (n-r)! \,
2^{n-r+2} }{\Lambda^{m-p+1} |z|_R^{n-r+1}} \, C_{\Lambda, A, r,p} \|
\hat{A} \|_{l^1} \left( 1 - \delta_{0,r} + \frac{|z|_R}{2 \Lambda}
\delta_{0,r} \right) \frac{1}{|z|_R^{r+1}} \\
& \leq C_{\Lambda, A, n,m} \, \frac{1}{|z|_R^{n+1}}.
\end{align*}
This proves \eqref{der} for $j=1$.

{\tt Step 3.} We now estimate the constant $C_{1;\Lambda, A,i, j}$ for
$i,j \in \{0,1\}$. First observe that
$$ \left| \frac{\partial w}{\partial k_j} \right| = | \delta_{1,j} +
i(-1)^\nu \delta_{2,j} | = 1 \qquad \text{and} \qquad \left|
\frac{\partial z}{\partial k_j} \right| = | \delta_{1,j} - i(-1)^\nu
\delta_{2,j} | = 1. $$
Thus, in view of \eqref{estS} and \eqref{Sinv}, since $|z| \ge |v| > R
\ge 2\Lambda$,
\begin{align*}
\left \| \frac{\partial S}{\partial k_j} \right \| & = \left \| -S
\frac{\partial S^{-1}}{\partial k_j} S \right \| = \left \| - S \left(
\frac{\partial w}{\partial k_j} \frac{\partial S^{-1}}{\partial w} +
\frac{\partial z}{\partial k_j} \frac{\partial S^{-1}}{\partial z}
\right) S \right \| \\
& \leq \| S \|^2 \left( \left\| \frac{\partial S^{-1}}{\partial w}
\right \| + \left \| \frac{\partial S^{-1}}{\partial z} \right \|
\right) \leq \left( \frac{3}{2} \right)^2 \left( \frac{2^4 \| \hat{A}
\|_{l^1}}{|z|_R^2} + \frac{2^2 \| \hat{A} \|_{l^1}}{\Lambda^2} \right)
\\
& \le \frac{18 \| \hat{A} \|_{l^1}}{\Lambda^2}.
\end{align*}
Similarly,
\begin{align*}
\frac{\partial^2 S}{\partial k_i \partial k_j} & = - \frac{\partial
S}{\partial k_i} \left( \frac{\partial w}{\partial k_j} \frac{\partial
S^{-1}}{\partial w} + \frac{\partial z}{\partial k_j} \frac{\partial
S^{-1}}{\partial z} \right) S - S \left( \frac{\partial w}{\partial k_j} 
\frac{\partial S^{-1}}{\partial w} + \frac{\partial z}{\partial k_j} 
\frac{\partial S^{-1}}{\partial z} \right) 
\frac{\partial S}{\partial k_i} \\
& \quad - S \left( \frac{\partial w}{\partial k_j} \left( \frac{\partial
w}{\partial k_i} \frac{\partial^2 S^{-1}}{\partial w^2} + \frac{\partial
z}{\partial k_i} \frac{\partial^2 S^{-1}}{\partial z \partial w} \right)
+ \frac{\partial z}{\partial k_j} \left( \frac{\partial w}{\partial k_i}
\frac{\partial^2 S^{-1}}{\partial w \partial z} + \frac{\partial
z}{\partial k_i} \frac{\partial^2 S^{-1}}{\partial z^2} \right) \right)
S,
\end{align*}
so that, using the above inequality as well,
\begin{align*}
\left \| \frac{\partial^2 S}{\partial k_i \partial k_j} \right \| & \leq
2 \| S \| \left \| \frac{\partial S}{\partial k_i} \right \| \left(
\left\| \frac{\partial S^{-1}}{\partial w} \right \| + \left \|
\frac{\partial S^{-1}}{\partial z} \right \| \right) \\ 
& \qquad + \| S \|^2 \left( \left \| \frac{\partial^2 S^{-1}}{\partial
w^2} \right \| + 2 \left \| \frac{\partial^2 S^{-1}}{\partial z \partial
w} \right \| + \left \| \frac{\partial^2 S^{-1}}{\partial z^2} \right \|
\right) \\
& \leq 2 \, \frac{3}{2} \, \frac{18 \| \hat{A} \|_{l^1}}{\Lambda^2} \,
\frac{8 \| \hat{A} \|_{l^1}}{\Lambda^2} + \left( \frac{3}{2} \right)^2
\left( \frac{2^3 \| \hat{A} \|_{l^1}}{\Lambda^3} + \frac{2^5 \| \hat{A}
\|_{l^1}}{\Lambda |z|_R^2} + \frac{2^6 \| \hat{A} \|_{l^1}}{|z|_R^3}
\right) \\
& \leq \frac{432}{\Lambda^4} \| \hat{A} \|_{l^1}^2 +
\frac{54}{\Lambda^3} \| \hat{A} \|_{l^1} \leq \frac{55 \| \hat{A}
\|_{l^1}}{\Lambda^3} \left( \frac{8 \| \hat{A} \|_{l^1}}{\Lambda} + 1
\right).
\end{align*}
Furthermore, by \eqref{estD},
\begin{align*}
\left \| \frac{\partial \Delta_k^{-1}}{\partial k_j} \right \| & \leq
\left \| \frac{\partial \Delta_k^{-1}}{\partial w} \right \| + \left \|
\frac{\partial \Delta_k^{-1}}{\partial z} \right \| \leq
\frac{2^2}{\Lambda^2 |z|_R} + \frac{2^3}{\Lambda |z|_R^2} \leq
\frac{8}{\Lambda^2 |z|_R}
\end{align*}
and
\begin{align*}
\left \| \frac{\partial^2 \Delta_k^{-1}}{\partial k_i \partial k_j}
\right \| & \leq \left \| \frac{\partial^2 \Delta_k^{-1}}{\partial w^2}
\right \| + 2 \left \| \frac{\partial^2 \Delta_k^{-1}}{\partial z
\partial w} \right \| + \left \| \frac{\partial^2
\Delta_k^{-1}}{\partial z^2} \right \| \\
& \leq \frac{2^3}{\Lambda^3 |z|_R} + \frac{2^4}{\Lambda^2 |z|_R^2} +
\frac{2^6}{\Lambda |z|_R^3} < \frac{5 \cdot 2^3}{\Lambda^3} \,
\frac{1}{|z|_R}.
\end{align*}
Hence, since $\| \hat{A} \|_{l_1} < 2\eps/63$ and $\eps<\Lambda/6$,
\begin{align*}
\left \| \frac{\partial}{\partial k_j} \Delta_k^{-1} S \right \| & \leq
\left \| \frac{\partial \Delta_k^{-1}}{\partial k_j} \right \| \| S \| +
\| \Delta_k^{-1} \| \left \| \frac{\partial S}{\partial k_j} \right \|\\
& \leq \frac{8}{\Lambda^2 |z|_R} \, \frac{3}{2} + \frac{2}{\Lambda |z|_R}
\, \frac{18 \| \hat{A} \|_{l^1}}{\Lambda^2} \leq \frac{13}{\Lambda^2} \,
\frac{1}{|z|_R}
\end{align*}
and
\begin{align*}
& \left \| \frac{\partial^2}{\partial k_i \partial k_j} \Delta_k^{-1} S
\right \| \\
& \leq \left \| \frac{\partial^2 \Delta_k^{-1}}{\partial k_i
\partial k_j} \right \| \| S \| + \left \| \frac{\partial
\Delta_k^{-1}}{\partial k_j} \right \| \, \left \| \frac{\partial
S}{\partial k_i} \right \| + \left \| \frac{\partial
\Delta_k^{-1}}{\partial k_i} \right \| \, \left \| \frac{\partial
S}{\partial k_j} \right \| + \| \Delta_k^{-1} \| \left \|
\frac{\partial^2 S}{\partial k_i \partial k_j} \right \| \\
& \leq \frac{1}{|z|_R} \Bigg( \frac{5 \cdot 2^3}{\Lambda^3} \,
\frac{3}{2} + 2 \, \frac{8}{\Lambda^2} \, \frac{18 \| \hat{A}
\|_{l^1}}{\Lambda^2} + \frac{2}{\Lambda} \, \frac{55 \| \hat{A}
\|_{l^1}}{\Lambda^3} \Big( \frac{8 \| \hat{A} \|_{l^1}}{\Lambda} + 1
\Big) \Bigg) < \frac{65}{\Lambda^3} \, \frac{1}{|z|_R}.
\end{align*}
Therefore,
$$ C_{1;\Lambda, A, 1, 0} \leq \frac{13}{\Lambda^2}, \qquad
C_{1;\Lambda, A, 0, 1} \leq \frac{13}{\Lambda^2} \qquad \text{and}
\qquad C_{1;\Lambda, A, 1, 1} \leq \frac{65}{\Lambda^3}, $$
as was to be shown.

{\tt Step 4.} To prove \eqref{der} for $j=2$ we need to bound $\Big \|
\frac{\partial^{r+p} M^{(2)}}{\partial z^r \partial w^p} \Big \| =
\left\| \frac{\partial^{r+p} W }{\partial z^r \partial w^p} \right \|$.
Recall from \eqref{WS} that
$$ W = \sum_{j=1}^\infty W_j = \sum_{j=1}^\infty \sum_{m=1}^j
(Y_{33})^{m-1} X_{33} (Y_{33})^{j-m}, $$
where $Y_{b,c}$ is given above by \eqref{Ybc2} and $\| X_{33} \| \leq
C/|z| < 1/3$ with
$$ X_{b,c} = \frac{(c-d') \cdot \hat{A}(b-c) - \hat{q}(b-c) -2i
\theta_\mu(\hat{A}(b-c)) \, w}{(w - 2i \theta_{\mu'}(c-d')) (z - 2i
\theta_\mu(c-d'))}. $$
First observe that
$$ \frac{\partial^{r+p}}{\partial z^r w^p} \, (Y_{33})^{m-1} X_{33}
(Y_{33})^{j-m} $$
is given by a sum of $j^{r+p}$ terms of the form
$$ \frac{\partial^{l_1 + n_1} Y_{33}}{\partial z^{l_1} \partial w^{n_1}}
\; \cdots \; \frac{\partial^{l_{m-1} + n_{m-1}} Y_{33}}{\partial
z^{l_{m-1}} \partial w^{n_{m-1}}} \; \frac{\partial^{l_m + n_m}
X_{33}}{\partial z^{l_m} \partial w^{n_m}} \; \frac{\partial^{l_{m+1} +
n_{m+1}} Y_{33}}{\partial z^{l_{m+1}} \partial w^{n_{m+1}}} \; \cdots \;
\frac{\partial^{l_j + n_j} Y_{33}}{\partial z^{l_j} \partial w^{n_j}},
$$
where there are $j$ factors ordered as in the product $(Y_{33})^{m-1}
X_{33} (Y_{33})^{j-m}$. Furthermore, for each term in the sum we have
$\sum_{i=1}^j l_i = r$ and $\sum_{i=1}^j n_i = p$. Thus,
\begin{align}
& \left \| \frac{\partial^{r+p}}{\partial z^r w^p} \, W \right \| =
\left \| \sum_{j=1}^\infty \frac{\partial^{r+p}}{\partial z^r w^p} \,
W_j \right \| \label{derW1} \\
& = \left \| \sum_{j=1}^\infty \sum_{m=1}^j
\frac{\partial^{r+p}}{\partial z^r w^p} \, (Y_{33})^{m-1} X_{33}
(Y_{33})^{j-m} \right \| \notag \\
& \leq \sum_{j=1}^\infty \sum_{m=1}^j \left \|
\frac{\partial^{r+p}}{\partial z^r w^p} \, (Y_{33})^{m-1} X_{33}
(Y_{33})^{j-m} \right \| \notag \\
& \leq \sum_{j=1}^\infty j^{r+p} \sum_{m=1}^j \sup_{\mathcal{I}} \left
\| \frac{\partial^{l_1 + n_1} Y_{33}}{\partial z^{l_1} \partial w^{n_1}}
\cdots \frac{\partial^{l_m + n_m} X_{33}}{\partial z^{l_m} \partial
w^{n_m}} \cdots \frac{\partial^{l_j + n_j} Y_{33}}{\partial z^{l_j}
\partial w^{n_j}} \right \| \notag \\ 
& \leq \sum_{j=1}^\infty j^{r+p} \sum_{m=1}^j \sup_{\mathcal{I}} \left
\| \frac{\partial^{l_1 + n_1} Y_{33}}{\partial z^{l_1} \partial w^{n_1}}
\right \| \cdots \left \| \frac{\partial^{l_m + n_m} X_{33}}{\partial
z^{l_m} \partial w^{n_m}} \right \| \cdots \left \| \frac{\partial^{l_j
+ n_j} Y_{33}}{\partial z^{l_j} \partial w^{n_j}} \right \|,
\label{derW2}
\end{align}
where
\begin{equation} \label{I}
\mathcal{I} \coloneqq \left \{ (l_i, n_i) \; \Bigg| \; l_i \leq r \,
\text{ and } \, n_i \leq p \, \text{ for } \, 1 \leq i \leq j \text{
with } \sum_{i=1}^j l_i = r \, \text{ and } \, \sum_{i=1}^j n_i = p
\right \}.
\end{equation}
Note, we can differentiate the series \eqref{derW1} term-by-term because
the sum $\sum_{j=1}^\infty W_j$ converges uniformly and the sum
$\sum_{m=1}^j$ is finite. We next estimate the factors in \eqref{derW2}.

Combining \eqref{estYbc} and \eqref{estYbc2} we have
\begin{equation} \label{dYbc}
\left| \frac{\partial^{l_i + n_i}}{\partial z^{l_i} \partial w^{n_i}}
Y_{b,c} \right| \leq \left( 1 - \delta_{0,l_i} + \frac{|z|_R}{2 \Lambda}
\delta_{0,l_i} \right) \frac{2^{l_i+2} l_i! \, n_i!}{ \Lambda^{n_i}
|z|_R^{l_i+1} } \, |\hat{A}(b-c)|.
\end{equation}
Furthermore, using \eqref{rec} and \eqref{rec2},
\begin{equation} \label{dXbc}
\begin{split}
& \left| \frac{\partial^{l_i+n_i}}{\partial z^{l_i} \partial w^{n_i}}
X_{b,c} \right| \\
& = \left| \frac{\partial^{l_i}}{\partial z^{l_i}} \, \frac{1}{z -2i
\theta_\mu(c-d')} \, \frac{\partial^{n_i}}{\partial w^{n_i}} \, \frac{
(c-d') \cdot \hat{A}(b-c) - \hat{q}(b-c) -2i \theta_\mu (\hat{A}(b-c))
\, w}{ w - 2i \theta_{\mu'}(c-d')} \right| \\
& = \left| \frac{(-1)^{l_i} \, l_i! \, (-1)^{n_i} \, n_i! \, ( 2
\theta_\mu(\hat{A}(b-c)) \, 2 \theta_{\mu'}(c-d') - (c-d') \cdot
\hat{A}(b-c) - \hat{q}(b-c)) }{(z -2i \theta_\mu(c-d'))^{l_i+1} ( w - 2i
\theta_{\mu'}(c-d'))^{n_i+1}} \right| \\
& \leq \frac{l_i! \, n_i! \, ( 2 |\hat{A}(b-c)| \, |c-d'| +
|\hat{q}(b-c)|) }{ |z -2i \theta_\mu(c-d')|^{l_i+1} |w - 2i
\theta_{\mu'}(c-d')|^{n_i+1}} \\
& \leq \frac{2^{l_i+1} l_i! \, n_i!}{
\Lambda^{n_i} |z|_R^{l_i+1} } \, \frac{2 |\hat{A}(b-c)| \, |c-d'| +
|\hat{q}(b-c)|}{ |w - 2i \theta_{\mu'}(c-d')| } \\
& \leq \frac{2^{l_i+1} l_i! \, n_i!}{ \Lambda^{n_i} |z|_R^{l_i+1} } \,
\Big( 4 |\hat{A}(b-c)| + \frac{1}{\Lambda} |\hat{q}(b-c)| \Big).
\end{split}
\end{equation}
Hence,
\begin{align*}
& \left[ \sup_{b \in G'_3} \sum_{c \in G'_3} + \sup_{c \in G'_3} \sum_{b
\in G'_3} \right] \left| \frac{\partial^{l_i + n_i}}{\partial z^{l_i}
\partial w^{n_i}} Y_{b,c} \right| \\ 
& \qquad \leq \left( 1 - \delta_{0,l_i} + \frac{|z|_R}{2 \Lambda}
\delta_{0,l_i} \right) \frac{2^{l_i+2} l_i! \, n_i!}{ \Lambda^{n_i}
|z|_R^{l_i+1} } \left[ \sup_{b \in G'_3} \sum_{c \in G'_3} + \sup_{c \in
G'_3} \sum_{b \in G'_3} \right] |\hat{A}(b-c)| \\
& \qquad \leq \left( 1 - \delta_{0,l_i} + \frac{|z|_R}{2 \Lambda}
\delta_{0,l_i} \right) \frac{2^{l_i+3} l_i! \, n_i!}{ \Lambda^{n_i}
|z|_R^{l_i+1} } \| \hat{A} \|_{l^1}
\end{align*}
and similarly
$$ \left[ \sup_{b \in G'_3} \sum_{c \in G'_3} + \sup_{c \in G'_3}
\sum_{b \in G'_3} \right] \left| \frac{\partial^{l_i+n_i}}{\partial
z^{l_i} \partial w^{n_i}} X_{b,c} \right| \leq \frac{2^{l_i+2} l_i! \,
n_i!}{ \Lambda^{n_i} |z|_R^{l_i+1} } \left( 4 \| \hat{A} \|_{l^1} +
\frac{ \| \hat{q} \|_{l^1} }{\Lambda} \right). $$
Thus, by Proposition \eqref{p:ineq}, since $|z| \ge |v| > R \ge 2
\Lambda$,
\begin{equation} \label{dY}
\begin{split}
\left \| \frac{\partial^{l_i + n_i}}{\partial z^{l_i} \partial w^{n_i}}
Y_{33} \right \| & \leq \left( 1 - \delta_{0,l_i} + \frac{|z|_R}{2
\Lambda} \delta_{0,l_i} \right) \frac{2^{l_i+3} l_i! \, n_i!}{
\Lambda^{n_i} |z|_R^{l_i+1} } \| \hat{A} \|_{l^1} \\ 
& \leq \left( \frac{1}{|z|_R} + \frac{1}{2 \Lambda} \right)
\frac{2^{l_i+3} l_i! \, n_i!}{ \Lambda^{n_i} |z|_R^{l_i} } \| \hat{A}
\|_{l^1} \leq \frac{2^{l_i+3} l_i! \, n_i!}{ \Lambda^{n_i+1}
|z|_R^{l_i}} \| \hat{A} \|_{l^1}
\end{split}
\end{equation}
and
\begin{equation} \label{dX}
\begin{aligned}
\left \| \frac{\partial^{l_i+n_i}}{\partial z^{l_i} \partial w^{n_i}}
X_{33} \right \| & \leq \frac{2^{l_i+2} l_i! \, n_i!}{ \Lambda^{n_i}
|z|_R^{l_i+1} } \left( 4 \| \hat{A} \|_{l^1} + \frac{ \| \hat{q}
\|_{l^1} }{\Lambda} \right) \\
& = \left( 2 \Lambda + \frac{ \| \hat{q} \|_{l^1}}{2 \| \hat{A} \|_{l^1}} 
\right) \frac{1}{|z|_R} \, \frac{2^{l_i+3} l_i! \, n_i!}{\Lambda^{n_i+1} 
|z|_R^{l_i} } \| \hat{A} \|_{l^1}.
\end{aligned}
\end{equation}
Applying these estimates to \eqref{derW2} and recalling that
$\sum_{i=1}^j l_i = r$ and $\sum_{i=1}^j n_i = p$ we have 
\begin{align*}
& \left \| \frac{\partial^{r+p}}{\partial z^r w^p} W \right \| \\
& \leq
\sum_{j=1}^\infty j^{r+p} \sum_{m=1}^j \sup_{\mathcal{I}} \left \|
\frac{\partial^{l_1 + n_1} Y_{33}}{\partial z^{l_1} \partial w^{n_1}}
\right \| \cdots \left \| \frac{\partial^{l_m + n_m} X_{33}}{\partial
z^{l_m} \partial w^{n_m}} \right \| \cdots \left \| \frac{\partial^{l_j
+ n_j} Y_{33}}{\partial z^{l_j} \partial w^{n_j}} \right \| \\
& \leq \sum_{j=1}^\infty j^{r+p} \sum_{m=1}^j \sup_{\mathcal{I}} \left
\{ \left( 2 \Lambda + \frac{ \| \hat{q} \|_{l^1}}{2 \| \hat{A} \|_{l^1}}
\right) \frac{1}{|z|_R} \prod_{i=1}^j \frac{2^{l_i+3} l_i! \, n_i!}{
\Lambda^{n_i+1} |z|_R^{l_i} } \| \hat{A} \|_{l^1} \right \} \\
& = \left( 2 \Lambda + \frac{ \| \hat{q} \|_{l^1}}{2 \| \hat{A}
\|_{l^1}} \right) \frac{1}{|z|_R} \frac{2^r}{\Lambda^p |z|_R^r}
\sum_{j=1}^\infty j^{r+p} \left( \frac{ 8 \| \hat{A} \|_{l^1}}{\Lambda}
\right)^j \sup_{\mathcal{I}} \left \{ \prod_{i=1}^j  l_i! \prod_{m=1}^j
n_m! \right \} \sum_{m=1}^j 1 \\
& \leq \left( 2 \Lambda + \frac{ \| \hat{q} \|_{l^1}}{2 \| \hat{A}
\|_{l^1}} \right) \frac{2^r r! p!}{\Lambda^p |z|_R^{r+1}}
\sum_{j=1}^\infty j^{r+p+1} \left( \frac{1}{21} \right)^j \leq
C'_{\Lambda, A, q, r, p} \, \frac{1}{|z|_R^{r+1}}. 
\end{align*}
This is the inequality we needed to prove \eqref{der} for $j=2$. In
fact, using \eqref{estD} we obtain
\begin{align*}
\sup_{p \leq m} \sup_{r \leq n} \left \| \frac{\partial^{n-r+m-p}
\Delta_k^{-1}}{ \partial z^{n-r} \partial w^{m-p}} \right \| \left \|
\frac{\partial^{r+p} M^{(2)}}{\partial z^r \partial w^p} \right \| &
\leq \sup_{p \leq m} \sup_{r \leq n} \frac{(m-p)! \, (n-r)! \, 2^{n-r+2}
}{\Lambda^{m-p+1} |z|_R^{n-r+1}} \; \frac{C'_{\Lambda, A, q, r,
p}}{|z|_R^{r+1}} \\
& \leq C_{\Lambda, A, q, m, n} \, \frac{1}{|z|_R^{n+2}}.
\end{align*}

{\tt Step 5.} To prove \eqref{der} for $j=3$ we need to estimate $\Big
\| \frac{\partial^{r+p} M^{(3)}}{\partial z^r \partial w^p} \Big \| =
\left\| \frac{\partial^{r+p} Z }{\partial z^r \partial w^p} \right \|$,
where
$$ Z = \sum_{j=2}^\infty Z_j = \sum_{j=2}^\infty (X_{33}+Y_{33})^j - W_j
- Y_{33}^j. $$
First observe that
$$ \frac{\partial^{r+p}}{\partial z^r \partial w^p} Z_j =
\frac{\partial^{r+p}}{\partial z^r \partial w^p} ( (X_{33}+Y_{33})^j -
W_j - Y_{33}^j )$$
is given by a sum of $(2^j-j-1) \cdot j^{r+p}$ terms of the form
\begin{equation} \label{XXY}
\underbrace{\frac{\partial^{l_1 + n_1} Y_{33}}{\partial z^{l_1} \partial
w^{n_1}} \; \cdots \; \frac{\partial^{l_m + n_m} X_{33}}{\partial
z^{l_m} \partial w^{n_m}} \; \cdots \; \frac{\partial^{l_j + n_j}
Y_{33}}{\partial z^{l_j} \partial w^{n_j}}}_{j \text{ factors}},
\end{equation}
where there are $j-2$ factors involving $X_{33}$ or $Y_{33}$ and two
factors containing $X_{33}$. Furthermore, for each term in the sum we
have $\sum_{i=1}^j l_i = r$ and $\sum_{i=1}^j n_i = p$. Thus,
$$ \left \| \frac{\partial^{r+p}}{\partial z^r \partial w^p} Z_j \right
\| \leq (2^j-j-1) \, j^{r+p} \sup_{\mathcal{I}} \left \|
\frac{\partial^{l_1 + n_1} Y_{33}}{\partial z^{l_1} \partial w^{n_1}}
\right \| \cdots \left \| \frac{\partial^{l_m + n_m} X_{33}}{\partial
z^{l_m} \partial w^{n_m}} \right \| \cdots \left \| \frac{\partial^{l_j
+ n_j} Y_{33}}{\partial z^{l_j} \partial w^{n_j}} \right \|, $$
where the set $\mathcal{I}$ is given above by \eqref{I}. Now observe
that, the estimate for the derivatives of $X_{33}$ in \eqref{dX} is
better then the estimate for the derivatives of $Y_{33}$ in \eqref{dY}
because the former has an extra factor $C_{\Lambda,A,q}/|z|_R < 1$.
Since the product \eqref{XXY} has at least two factors containing
$X_{33}$, we can estimate any of these products by considering the worst
case. This happens when there are exactly two factors involving
$X_{33}$. Hence, by proceeding in this way, for each $j \geq 2$ we have
\begin{align*}
\left \| \frac{\partial^{r+p}}{\partial z^r \partial w^p} Z_j \right \|
& \leq (2^j-j-1) \, j^{r+p} \sup_{\mathcal{I}} \left \{ \left( 2\Lambda
+ \frac{ \|\hat{q} \|_{l^1}}{ 2 \| \hat{A} \|_{l^1}} \right)^2
\frac{1}{|z|_R^2} \prod_{i=1}^j \frac{2^{l_i+3} l_i! n_i!}{
\Lambda^{n_i+1} |z|_R^{l_i}} \| \hat{A} \|_{l^1} \right\}\\
& \leq 2^j j^{r+p} \left( 2\Lambda + \frac{ \|\hat{q} \|_{l^1}}{ 2 \|
\hat{A} \|_{l^1}} \right)^2 \frac{1}{|z|_R^2} \, \frac{2^r r!
p!}{\Lambda^p |z|_R^r} \left( \frac{8 \| \hat{A} \|_{l^1}}{\Lambda}
\right)^j \\
& \leq C'_{\Lambda, A, q, r,p} \, j^{r+p} \left( \frac{2}{21} \right)^j
\frac{1}{|z|_R^{r+2}},
\end{align*}
since $\| A \|_{l^1} \leq 2\eps/63$ and $\eps < \Lambda/6$. Thus,
$$ \left \| \frac{\partial^{r+p}}{\partial z^r \partial w^p} Z \right \|
\leq \sum_{j=2}^\infty \left \| \frac{\partial^{r+p}}{\partial z^r
\partial w^p} Z_j \right \| \leq \frac{C'_{\Lambda, A, q,
r,p}}{|z|_R^{r+2}} \sum_{j=2}^\infty j^{r+p} \left( \frac{2}{21}
\right)^j \leq \frac{C_{\Lambda, A, q, r,p}}{|z|_R^{r+2}}. $$
Therefore, recalling \eqref{estD},
\begin{align*}
\sup_{p \leq m} \sup_{r \leq n} \left \| \frac{\partial^{n-r+m-p}
\Delta_k^{-1}}{ \partial z^{n-r} \partial w^{m-p}} \right \| \left \|
\frac{\partial^{r+p} M^{(3)}}{\partial z^r \partial w^p} \right \| &
\leq \sup_{p \leq m} \sup_{r \leq n} \frac{(m-p)! \, (n-r)! \, 2^{n-r+2}
}{\Lambda^{m-p+1} |z|_R^{n-r+1}} \; \frac{C'_{\Lambda, A, q,
r,p}}{|z|_R^{r+2}} \\
& \leq C_{\Lambda, A, q, m, n} \, \frac{1}{|z|_R^{n+3}}.
\end{align*}
This is the desired inequality for $j=3$. The proof of the proposition
is complete.
\end{proof}


We can now prove Lemma \ref{l:der2}. We first prove it for $1 \le j \le
2$ and then for $j=3$ separately.

\begin{proof}[Proof of Lemma \ref{l:der2} for $1 \le j \le 2$]
Define the $|B| \times |C|$ matrices
$$ \mathcal{F}_{BC} \coloneqq \left[ f(b-c) \right]_{b \in B, c \in C}
\qquad \text{and} \qquad \mathcal{G}_{BC} \coloneqq [ g(b-c) ]_{b \in b,
c \in C}, $$
and write $w = w_{\mu,d'}$, $z = z_{\mu,d'}$ and $|z|_R = 2|z|-R$. First
observe that, for $1 \le j \le 2$, the functions
$$ \big[ \alpha_{\mu,d'}^{(j)}(k) \big]_{d' \in G} = \left[ \sum_{b,c
\in G'_1} \frac{f(d'-b) M^{(j)}_{b,c} g(c-d')}{(w - 2i
\theta_{\mu'}(b-d')) (z - 2i \theta_{\mu}(b-d'))} \right]_{d' \in G} $$
are the diagonal entries of the matrix $\F_{GG'_1} \Delta_k^{-1} M^{(j)}
\,\G_{G'_1 G}$. Thus, similarly as in the proof of Lemma \ref{l:der}, by
Proposition \ref{p:dSWZ}, for $1 \le j \le 2$,
$$ \Bigg| \frac{\partial^{n+m}}{\partial k_1^n \partial k_2^m}
\alpha_{\mu,d'}^{(j)}(k) \Bigg | \leq \| \F_{GG'_1} \| \left \|
\frac{\partial^n}{\partial k_1^n} \frac{\partial^m}{\partial k_2^m}
\Delta_k^{-1} M^{(j)} \right \| \| \G_{G'_1 G} \| \leq
\frac{C_j}{|z|_R^j}, $$
where $C_1 = C_{1;\Lambda,A,n,m,f,g}$ and $C_2 =
C_{2;\Lambda,A,q,n,m,f,g}$ are constants. Furthermore,
\begin{gather*}
C_{1;\Lambda,A,1,0,f,g} \leq \frac{13}{\Lambda^2} \| f \|_{l^1} \| g
\|_{l^1}, \qquad C_{1;\Lambda,A,0,1,f,g} \leq \frac{13}{\Lambda^2} \| f
\|_{l^1} \| g \|_{l^1} \\
\text{and} \qquad C_{1;\Lambda,A,1,1,f,g} \leq \frac{65}{\Lambda^3} \| f
\|_{l^1} \| g \|_{l^1}.
\end{gather*}
This proves the lemma for $1 \le j \le 2$.
\end{proof}


\begin{proof}[Proof of Lemma \ref{l:der2} for $j=3$]
We need to estimate
$$ \frac{\partial^{n+m}}{\partial k_1^n \partial k_2^m}
\alpha^{(3)}_{\mu,d'}(k) = \sum_{j=1}^4 \frac{\partial^{n+m}}{\partial
k_1^n \partial k_2^m} \mathcal{R}_j(k), $$
where $\mathcal{R}_1, \dots, \mathcal{R}_4$ are given by \eqref{RR1},
\eqref{RR2}, \eqref{RR3} and \eqref{RR4}, respectively.

{\tt Step 1.} We begin with the terms involving $\mathcal{R}_1$ and
$\mathcal{R}_2$, which are easier. We follow the same notation as above.
First observe that, similarly as in the proof of Lemma \ref{l:der},
since $\Delta_k^{-1} R_{G'G'}^{-1} = H_k^{-1}$ on $L^2_{G'}$, we have
\begin{align*}
\left| \frac{\partial^{n+m}}{\partial k_1^n \partial k_2^m}
\mathcal{R}_1(k) \right| & = \left \| \mathcal{F}_{\{d'\} G'_1}
\frac{\partial^{n+m} H_k^{-1}}{\partial k_1^n \partial k_2^m} \;
\mathcal{G}_{G'_2 \{d'\}} \right\| \leq \| \mathcal{F}_{\{d'\} G'_1} \|
\left\| \frac{\partial^{n+m} H_k^{-1}}{\partial k_1^n \partial k_2^m}
\right \| \| \mathcal{G}_{G'_2 \{d'\}} \|, \\
\left| \frac{\partial^{n+m}}{\partial k_1^n \partial k_2^m}
\mathcal{R}_2(k) \right| & = \left \| \mathcal{F}_{\{d'\} G'_2}
\frac{\partial^{n+m} H_k^{-1}}{\partial k_1^n \partial k_2^m} \;
\mathcal{G}_{G' \{d'\}} \right\| \leq \| \mathcal{F}_{\{d'\} G'_2} \|
\left\| \frac{\partial^{n+m} H_k^{-1}}{\partial k_1^n \partial k_2^m}
\right \| \| \mathcal{G}_{G' \{d'\}} \|.
\end{align*}
Furthermore, we have already proved that $\| \mathcal{F}_{\{d'\} G'_1}
\| \leq \| f \|_{l^1}$ and $\| \mathcal{G}_{G' \{d'\}} \| \leq \|
g\|_{l^1}$ (see \eqref{FG} and \eqref{drop2}),  and since $|z| \leq
3|v|$, by Proposition \ref{p:order},
$$ \left \| \frac{\partial^{n+m} H_k^{-1}}{\partial k_1^n \partial
k_2^m} \right \| \le \eps^{-(n+m+1)} C_{\Lambda,A, n,m} \frac{1}{|z|}.
$$
Now recall that $G'_2 = \{ b \in G' \; | \; |b-d'| > \tfrac{1}{4} R \}$.
Then,
\begin{align*}
\sup_{b \in \{d'\}} \sum_{c \in G'_2} |f(b-c)| & \leq \sum_{c \in G'_2}
\frac{|d'-c|^2}{|d'-c|^2} |f(d'-c)| \leq \| b^2 f(b) \|_{l^1} \sup_{c
\in G'_2} \frac{1}{|d'-c|^2} \leq \frac{16}{R^2} \| b^2 f(b) \|_{l^1},
\\
\sup_{c \in G'_2} \sum_{b \in \{d'\}} |f(b-c)| & \leq \sup_{c \in G'_2}
\frac{|d'-c|^2}{|d'-c|^2} |f(d'-c)| \leq \| b^2 f(b) \|_{l^1} \sup_{c
\in G'_2} \frac{1}{|d'-c|^2} \leq \frac{16}{R^2} \| b^2 f(b) \|_{l^1}.
\end{align*}
Hence, by Proposition \eqref{p:ineq},
$$ \| \mathcal{F}_{\{d'\} G'_2} \| \leq 16 \| b^2 f(b) \|_{l^1}
\frac{1}{R^2}. $$
Similarly,
$$ \| \mathcal{G}_{G'_2 \{d'\}} \| \leq 16 \| b^2 f(b) \|_{l^1}
\frac{1}{R^2}. $$
Therefore, combining all this, for $1 \le j \le 2$ we obtain
$$ \left| \frac{\partial^{n+m}}{\partial k_1^n \partial k_2^m}
\mathcal{R}_j(k) \right| \leq \eps^{-(n+m+1)} C_{\Lambda, A,n,m,f,g}
\frac{1}{|z| R^2}. $$

{\tt Step 2.} Recall from \eqref{RR4} the expression for
$\mathcal{R}_4$. Then, similarly as above, by applying Proposition
\ref{p:dSWZ} for $j=3$ we find that
$$ \left| \frac{\partial^{n+m}}{\partial k_1^n \partial k_2^m}
\mathcal{R}_4(k) \right| \leq \| \mathcal{F}_{\{d'\} G'_1} \| \left\|
\frac{\partial^{n+m}}{\partial k_1^n \partial k_2^m} \Delta_k^{-1} Z
\right \| \| \mathcal{G}_{G'_1 \{d'\}} \| \leq \| f \|_{l^1} \| g
\|_{l^1} C_{\Lambda,A,q,n,m} \frac{1}{|z|_R^3}. $$

{\tt Step 3.} To bound the derivatives of $\mathcal{R}_3$ (which is
given by \eqref{RR3}) we need a few more estimates. Recall from
\eqref{Tj} that $W^{(j)}_{43} = \pi_{G'_4} T_{G'G'}^{j+1} \pi_{G'_3}$.
First observe that
$$ \frac{\partial^{r+p}}{\partial k_1^r \partial k_2^p} \, \pi_{G'_1}
\Delta_k^{-1} T_{33}^{m} T_{34} W^{(j-m-1)}_{43} =
\frac{\partial^{r+p}}{\partial k_1^r \partial k_2^p} \, \Delta_k^{-1}
\pi_{G'_1} T_{33}^{m} T_{34} T_{G'G'}^{j-m} \pi_{G'_3} $$
is given by a sum of $(j+2)^{r+p}$ terms of the form
$$ \frac{\partial^{l_1 + n_1} \Delta_k^{-1}}{\partial k_1^{l_1} \partial
k_2^{n_1}} \; \pi_{G'_1} \frac{\partial^{l_2 + n_2} T_{33}}{\partial
k_1^{l_2} \partial k_2^{n_2}} \; \cdots \; \frac{\partial^{l_{m+2} +
n_{m+2}} T_{34}}{\partial k_1^{l_{m+2}} \partial k_2^{n_{m+2}}} \;
\frac{\partial^{l_{m+3} + n_{m+3}} T_{G'G'}}{\partial k_2^{l_{m+3}}
\partial k_2^{n_{m+3}}} \cdots \frac{\partial^{l_{j+2} + n_{j+2}}
T_{G'G'}}{\partial k_1^{l_{j+2}} \partial k_2^{n_{j+2}}} \pi_{G'_3}. $$
Moreover, for each term in the sum we have $\sum_{i=1}^{j+2} l_i = r$
and $\sum_{i=1}^{j+2} n_i = p$. Thus,
\begin{equation} \label{above3}
\left \| \frac{\partial^{r+p}}{\partial k_1^r \partial k_2^p} \,
\pi_{G'_1} \Delta_k^{-1} T_{33}^{m} T_{34} W^{(j-m-1)}_{43} \right \|
\leq (j+2)^{r+p} \sup_{\mathcal{I}'} \left \| \left( \prod_{i=1}^{j+2}
\frac{\partial^{l_i + n_i} T_{(i)}}{\partial k_1^{l_i} \partial
k_2^{n_i}} \right) \pi_{G'_3} \right \|,
\end{equation}
where the set $\mathcal{I}'$ is given by \eqref{I} with $j$ replaced by
$j+2$ and
\begin{equation} \label{T(i)}
T_{(i)} \coloneqq \begin{cases} \Delta_k^{-1} \pi_{G'_1} & \text{for}
\quad i=1, \\ T_{33} & \text{for} \quad 2 \le i \le m+1, \\ T_{34} &
\text{for} \quad i=m+2, \\ T_{G'G'} & \text{for} \quad m+3 \le i \le
j+2. \end{cases}
\end{equation}

{\tt Step 3a.} The first step in bounding \eqref{above3} is to estimate
$\left \| \frac{\partial^{r+p} \Delta_k^{-1}}{\partial k_1^r \partial
k_2^p} \pi_{G'_1} \right \|$. We follow the same argument that we have
used in the proof of Lemma \ref{l:der} to bound $\left\|
\frac{\partial^{n+m} H_k^{-1}}{\partial k_1^n \partial k_2^m} \right
\|$. In fact, in view of \eqref{rule} one can see that
\begin{equation} \label{sum1}
\frac{\partial^p \Delta_k^{-1}}{\partial k_2^p} =
\sum_{\substack{\text{finite sum} \\ \text{where \# of terms} \\
\text{depend on $p$}}} \left[ \prod_{j=1}^p \Delta_k^{-1}
\frac{\partial^{n_j} \Delta_k}{\partial k_2^{n_j}} \right]
\Delta_k^{-1},
\end{equation}
where $\sum_{j=1}^p n_j =p$. Hence, when we compute
$\frac{\partial^r}{\partial k_1^r} \frac{\partial^p \Delta_k^{-1}
}{\partial k_2^p}$, the derivative $\frac{\partial^r}{\partial k_1^r}$
acts either on $\Delta_k^{-1}$ or
$\frac{\partial^{n_j}\Delta_k}{\partial k_2^{n_j}}$. However, since
$\left( \frac{\partial \Delta_k}{\partial k_2} \right)_{b,c} =
2(k_2+c_2) \delta_{b,c}$, we have $\frac{\partial^r}{ \partial k_1^r}
\frac{\partial^{n_j}}{\partial k_2^{n_j}} \Delta_k = 0$ if $n_j \geq 1$
and $\frac{\partial^r}{ \partial k_1^r} \frac{\partial^{n_j}}{\partial
k_2^{n_j}} \Delta_k = \frac{\partial^r}{ \partial k_1^r} \Delta_k$ if
$n_j=0$. Similarly, using again \eqref{rule} one can see that
$\frac{\partial^r \Delta_k^{-1}}{\partial k_1^r}$ is given by a finite
sum as in \eqref{sum1}, with $p$ and $k_2$ replaced by $r$ and $k_1$,
respectively, and $\sum_{j=1}^r n_j =r$. Thus, combining all this we
conclude that
\begin{equation} \label{drpd}
\frac{\partial^{r+p} \Delta_k^{-1}}{\partial k_1^r \partial k_2^p} =
\sum_{\substack{\text{finite sum where} \\ \text{\# of terms depend} \\
\text{on $r$ and $p$}}} \left[ \prod_{j=1}^{r+p} \Delta_k^{-1}
\frac{\partial^{n_j} \Delta_k}{\partial k_{i_j}^{n_j}} \right]
\Delta_k^{-1},
\end{equation}
where $\sum_{j=1}^{r+p} n_j \delta_{2,i_j}=p$ and $\sum_{j=1}^{r+p} n_j
\delta_{1,i_j}=r$.
If we observe that
$$ \left( \frac{\partial^{n_j} \Delta_k}{\partial k_{i_j}^{n_j}}
\right)_{b,c} = \begin{cases} 2(k_{i_j}+c_{i_j}) \delta_{b,c} &
\text{if} \quad n_j=1, \\ 2 \delta_{b,c} & \text{if} \quad n_j=2, \\ 0 &
\text{if} \quad n_j \ge 3, \end{cases} $$
and extract the ``leading term'' from the summation in \eqref{drpd}, in
a sense that will be clear below, we can rewrite \eqref{drpd} in terms
of matrix elements as
\begin{align*}
\frac{\partial^{r+p}}{\partial k_1^r \partial k_2^p} \, \frac{1}{N_c(k)}
& = \frac{(-1)^{r+p}(r+p)!}{N_c(k)} \left[ \frac{2(k_1+c_1)}{N_c(k)}
\right]^r \left[\frac{2(k_2+c_2)}{N_c(k)} \right ]^p \\ & \qquad \qquad
+ \sum_{\substack{\text{finite sum where} \\ \text{\# of terms depend}
\\ \text{ on $r$ and $p$}}} \frac{(2(k_1+c_1))^{\alpha_j}
(2(k_2+c_2))^{\beta_j}}{N_c(k)^{r+p+1}}, 
\end{align*}
where $\alpha_j + \beta_j < r+p$ for every $j$ in the summation. Recall
from \eqref{1/N} and \eqref{tildeb} that, for all $c \in G' \setminus \{
\tilde{c} \}$,
\begin{equation} \label{k/N}
\frac{|k_i+c_i|}{|N_c(k)|} \leq \frac{2}{\Lambda} < \frac{1}{3 \eps} <
\frac{7}{2 \eps} \qquad \text{and} \qquad
\frac{|k_i+\tilde{c}_i|}{|N_{\tilde{c}}(k)|} \leq \frac{\Lambda +
3|v|}{\eps |v|} \leq \frac{7}{2 \eps}.
\end{equation}
Hence,
\begin{equation} \label{d1/N}
\begin{split}
\left| \frac{\partial^{r+p}}{\partial k_1^r \partial k_2^p} \,
\frac{1}{N_c(k)} \right | & \leq \frac{(r+p)!}{|N_c(k)|} \left(
\frac{7}{\eps} \right)^{r+p} + \sum_{\substack{\text{finite sum where}
\\ \text{\# of terms depend} \\ \text{ on $r$ and $p$}}} \left(
\frac{7}{\eps} \right)^{\alpha_j+\beta_j} \frac{1}{|N_c(k)|^2} \\ 
& \leq \frac{(r+p)!}{|N_c(k)|} \left( \frac{7}{\eps} \right)^{r+p} +
C_{\eps,r,p} \, \frac{1}{|N_c(k)|^2}.
\end{split}
\end{equation}
Thus, by Proposition \ref{p:ineq}, since $|N_c(k)| \geq \eps |v| \geq
\eps |z|/3$ for all $c \in G'$, we have
\begin{equation} \label{ddk}
\left \| \frac{\partial^{r+p} \Delta_k^{-1}}{\partial k_1^r \partial
k_2^p} \pi_{G'_1} \right \| \leq \frac{7^{r+p} (r+p)!}{\eps^{r+p+1}} \,
\frac{3}{|z|} + \frac{C_{\eps,r,p}}{|z|^2}.
\end{equation}
Now, let $\rho_1 = \rho_{1;\eps,r,p}$ be the constant
$$ \rho_{1;\eps,r,p} \coloneqq \max_{\substack{l_1 \le r \\ n_1 \le p}}
\; \frac{\eps^{l_1+n_1+1} C_{\eps,l_1,n_1}}{4 (l_1+n_1)! \,
7^{l_1+n_1}}, $$
where $C_{\eps,l_1,n_1}$ is the constant in \eqref{ddk}. Then, for $|z|
> \rho_1$ and for any $l_1 \le r$ and any $n_1 \le p$,
\begin{equation} \label{delta}
\begin{aligned}
\left \| \frac{\partial^{l_1+n_1} \Delta_k^{-1}}{\partial k_1^{l_1}
\partial k_2^{n_1}} \pi_{G'_1} \right \| & \leq \frac{7^{l_1+n_1}
(l_1+n_1)!}{\eps^{l_1+n_1+1}} \, \frac{3}{|z|} + \frac{7^{l_1+n_1}
(l_1+n_1)!}{\eps^{l_1+n_1+1}} \, \frac{4}{|z|} = (l_1+n_1)! \left(
\frac{7}{\eps} \right)^{l_1+n_1+1} \frac{1}{|z|}.
\end{aligned}
\end{equation}
This is the first inequality we need to bound \eqref{above3}. We next
estimate the other factors in that expression.

{\tt Step 3b.} Recall from \eqref{TXY0} that
$$ T_{b,c} = \frac{1}{N_c(k)}( 2(c+k) \cdot \hat{A}(b-c) - \hat{q}(b-c)
). $$
By direct calculation we have
\begin{align*}
\frac{\partial^{r+p} \, T_{b,c}}{\partial k_1^r \partial k_2^p} & =
\left( \frac{\partial^{r+p}}{\partial k_1^r \partial k_2^p} \,
\frac{1}{N_c(k)} \right) ( 2(c+k) \cdot \hat{A}(b-c) - \hat{q}(b-c)) \\
& \qquad + r \left( \frac{\partial^{r-1+p}}{\partial k_1^{r-1} \partial
k_2^p} \, \frac{1}{N_c(k)} \right) 2 \hat{A}_j(b-c) + p \left(
\frac{\partial^{r+p-1}}{\partial k_1^{r} \partial k_2^{p-1}} \,
\frac{1}{N_c(k)} \right) 2 \hat{A}_j(b-c).
\end{align*}
Hence, using \eqref{k/N} and \eqref{d1/N}, since $|N_c(k)| \geq \eps|v|
\geq \eps|z|/3$ for all $c \in G'$ and $|v| > 1$,
\begin{equation} \label{drpTbc}
\begin{split}
\left| \frac{\partial^{r+p} \, T_{b,c}}{\partial k_1^r \partial k_2^p}
\right| & \leq \left( (r+p)! \left( \frac{7}{\eps} \right)^{r+p} +
\frac{C_{\eps,r,p}}{\eps |v|} \right) \left( \frac{7}{\eps}
|\hat{A}(b-c)| + \frac{|\hat{q}(b-c)|}{\eps |v|} \right) +
\frac{C_{\eps,r,p}}{|v|} |\hat{A}(b-c)| \\
& \leq (r+p)! \left( \frac{7}{\eps} \right)^{r+p+1} |\hat{A}(b-c)| +
\frac{C_{\eps,r,p}}{|z|} (|\hat{A}(b-c)| + |\hat{q}(b-c)|). 
\end{split}
\end{equation}
Therefore, by Proposition \ref{p:ineq},
\begin{equation} \label{Theta}
\left \| \frac{\partial^{r+p} \, T_{G'G'}}{\partial k_1^r \partial
k_2^p} \right\| \leq \Theta_{r,p},
\end{equation}
where
\begin{equation} \label{Theta2}
\Theta_{r,p} \coloneqq (r+p)! \left( \frac{7}{\eps} \right)^{r+p+1} \|
\hat{A} \|_{l^1} + C_{\eps,A,q,r,p} \frac{1}{|z|}.
\end{equation}
This is the second estimate we need to bound \eqref{above3}. We next
derive one more inequality.

{\tt Step 3c.} Set
\[ Q^{r,p}_{b,c} \coloneqq (1+|b-c|^2) \frac{\partial^{r+p} \,
T_{b,c}}{\partial k_1^r \partial k_2^p}. \]
We first prove that, for any $B,C \subset G'$,
$$ \sup_{b \in B} \sum_{c \in C} |Q^{r,p}_{b,c}| \leq \Omega_{r,p}
\qquad \text{and} \qquad \sup_{c \in C} \sum_{b \in B} |Q^{r,p}_{b,c}|
\leq \Omega_{r,p}, $$
where
\begin{equation} \label{Omega}
\Omega_{r,p} \coloneqq (r+p)! \left( \frac{7}{\eps} \right)^{r+p+1} \|
(1+b^2) \hat{A}(b) \|_{l^1} + C_{\eps,A,q,r,p} \frac{1}{|z|}.
\end{equation}
In fact, in view of \eqref{drpTbc} we have
\begin{align*}
\sup_{b \in B} \sum_{c \in C} |Q^{r,p}_{b,c}| & = \sup_{b \in B} \sum_{c
\in C} (1+|b-c|^2) \left| \frac{\partial^{r+p} T_{b,c}}{\partial k_1^r
\partial k_2^p} \right| \\
& \quad \leq \sup_{b \in B} \sum_{c \in C} (1+|b-c|^2) \\ 
& \qquad \quad \times \left[ (r+p)! \left( \frac{7}{\eps}
\right)^{r+p+1} |\hat{A}(b-c)| + \frac{C_{\eps,r,p}}{|z|}
(|\hat{A}(b-c)| + |\hat{q}(b-c)|) \right] \\
& \quad \leq (r+p)! \left( \frac{7}{\eps} \right)^{r+p+1} \| (1+b^2)
\hat{A}(b) \|_{l^1} + C_{\eps,A,q,r,p} \frac{1}{|z|},
\end{align*}
and similarly we estimate $\sup_{c \in C} \sum_{b \in B}
|Q^{r,p}_{b,c}|$. Now observe that, as in \eqref{bminusc}, for any
integer $m \geq 0$ and for any $\xi_0,\xi_1,\dots,\xi_{m+2} \in \Gdual$,
let $b = \xi_0$ and $c = \xi_{m+2}$. Then,
$$ |b-c|^2 \leq 2(m+2) \sum_{i=1}^{m+2} |\xi_{i-1}-\xi_i|^2. $$
To simplify the notation write $\partial^{l_i,n_i} = \frac{\partial^{l_i
+ n_i}}{\partial k_1^{l_i} \partial k_2^{n_i}}$,
and recall from \eqref{T(i)} and \eqref{Omega} the definition of
$T_{(i)}$ and $\Omega_{r,p}$. Hence, similarly as in the proof of
Proposition \ref{p:decayT}, since $|b-c| \geq R/4$ for all $b \in G'_1$
and $c \in G'_4$,
\begin{align*}
& \sup_{b \in G'_1} \sum_{c \in G'_4} \left| \left( \prod_{i=2}^{m+2}
\partial^{l_i,n_i} T_{(i)} \right)_{b,c} \right| \leq \sup_{\substack{b
\in G'_1 \\ c \in G'_4}} \frac{1}{1 + |b-c|^2} \,\, \sup_{b \in G'_1}
\sum_{c \in G'_4} (1+|b-c|^2) \left| \left( \prod_{i=2}^{m+2}
\partial^{l_i,n_i} T_{(i)} \right)_{b,c} \right| \\
& \leq \frac{2(m+2)}{1 + \frac{1}{16} R^2} \,\, \sup_{b \in G'_1} \,
\sum_{\xi_1 \in G'_3} (1+|b-\xi_1|^2) \left| \partial^{l_2,n_2}
T_{b,\xi_1} \right| \\ 
& \qquad \times \sum_{\xi_2 \in G'_3} (1+|\xi_1-\xi_2|^2) \left|
\partial^{l_3,n_3} T_{\xi_1,\xi_2} \right| \cdots \sum_{c \in G'_4}
(1+|\xi_{m+1} -c|^2) \left| \partial^{l_{m+2},n_{m+2}}
T_{\xi_{m+1},c}\right| \\
& \leq \frac{2(m+2)}{1 + \frac{1}{16} R^2} \,\, \sup_{b \in G'_1} \,
\sum_{\xi_1 \in G'_3} (1+|b-\xi_1|^2) \left| \partial^{l_2,n_2}
T_{b,\xi_1} \right| \sup_{\xi_1 \in G'_3} \sum_{\xi_2 \in G'_3}
(1+|\xi_1-\xi_2|^2) \left| \partial^{l_3,n_3} T_{\xi_1,\xi_2} \right| \\
& \qquad \times \sup_{\xi_{m+1} \in G'_3} \sum_{c \in G'_4}
(1+|\xi_{m+1} -c|^2) \left| \partial^{l_{m+2},n_{m+2}} T_{\xi_{m+1},c}
\right| \\
& = \frac{2(m+2)}{1 + \frac{1}{16} R^2} \,\, \sup_{b \in G'_1} \,
\sum_{\xi_1 \in G'_3} |Q^{l_2,n_2}_{b,\xi_1}| \cdots \sup_{\xi_{m+1} \in
G'_3} \sum_{c \in G'_4} |Q^{l_{m+2},n_{m+2}}_{\xi_{m+1},c}| \leq
\frac{2(m+2)}{1 + \frac{1}{16} R^2} \prod_{i=2}^{m+2} \Omega_{l_i,n_i}
\end{align*}
and similarly
$$ \sup_{c \in G'_4} \sum_{b \in G'_1} \left| \left( \prod_{i=2}^{m+2}
\partial^{l_i,n_i} T_{(i)} \right)_{b,c} \right| \leq \frac{2(m+2)}{1 +
\frac{1}{16} R^2} \prod_{i=2}^{m+2} \Omega_{l_i,n_i}. $$
Therefore, by Proposition \ref{p:ineq},
$$ \left\| \pi_{G'_1} \prod_{i=2}^{m+2} \frac{\partial^{l_i+n_i}
T_{(i)}}{\partial k_1^{l_i} \partial k_2^{n_i}} \right \| \leq
\frac{2(m+2)}{1 + \frac{1}{16} R^2} \prod_{i=2}^{m+2} \Omega_{l_i,n_i}.
$$
We have all we need to bound \eqref{above3}.

{\tt Step 3d.} From \eqref{Theta} and \eqref{delta} it follows that
$$ \left\| \prod_{i=m+3}^{j+2} \frac{\partial^{l_i+n_i}
T_{(i)}}{\partial k_1^{l_i} \partial k_2^{n_i}} \right \| \leq
\prod_{i=m+3}^{j+2} \Theta_{l_i,n_i} $$
and
$$ \left\| \frac{\partial^{l_1+n_1} T_{(1)}}{\partial k_1^{l_1} \partial
k_2^{n_1}} \right \| \leq (r+p)! \left( \frac{7}{\eps} \right)^{r+p+1}
\frac{1}{|z|}. $$
Thus, recalling \eqref{above3} we get
\begin{align*}
& \left \| \frac{\partial^{r+p}}{\partial k_1^r \partial k_2^p} \,
\Delta_k^{-1} \pi_{G'_1} T_{33}^{m} T_{34} W^{(j-m-1)}_{43} \right \|
\leq (j+2)^{r+p} \sup_{\mathcal{I}'} \left \| \left( \prod_{i=1}^{j+2}
\frac{\partial^{l_i + n_i} T_{(i)}}{\partial k_1^{l_i} \partial
k_2^{n_i}} \right) \pi_{G'_3} \right \| \\ 
& \qquad \le (j+2)^{r+p} \, \sup_{\mathcal{I}'} \left\{ \frac{1}{|z|} \,
\frac{2(m+2)}{1 + \frac{1}{16} R^2} (r+p)! \left( \frac{7}{\eps}
\right)^{r+p+1} \left[ \prod_{i=2}^{m+2} \Omega_{l_i,n_i} \right]
\prod_{i=m+3}^{j+2} \Theta_{l_i,n_i} \right\} \\ 
& \qquad \le (j+2)^{r+p} (m+2) \, \frac{C}{|z|R^2} \,
\sup_{\mathcal{I}'} \left\{ (l_1+n_1)! \left( \frac{7}{\eps}
\right)^{l_1+n_1+1} \left[ \prod_{i=2}^{m+2} \Omega_{l_i,n_i} \right]
\prod_{i=m+3}^{j+2} \Theta_{l_i,n_i} \right\},
\end{align*}
where $C$ is an universal constant. Now, recall the definition of
$\Theta_{r,p}$ and $\Omega_{r,p}$ in \eqref{Theta2} and \eqref{Omega},
observe that $\| \hat{A} \|_{l^1} < \| (1+b^2) \hat{A} \|_{l^1}$, and
let $\rho_2 = \rho_{2;\eps,A,q,r,p}$ be a sufficiently large constant
such that, for $|z| > \rho_2$ and for any $l_i \leq r$ and any $n_i \leq
p$,
$$ \Theta_{l_i,n_i}, \; \Omega_{l_i,n_i} \leq 2 (l_i+n_i)! \left(
\frac{7}{\eps} \right)^{l_i+n_i+1} \| (1+b^2) \hat{A}(b) \|_{l^1}. $$
Then,
\begin{align*}
& \left \| \frac{\partial^{r+p}}{\partial k_1^r \partial k_2^p} \,
\Delta_k^{-1} \pi_{G'_1} T_{33}^{m} T_{34} W^{(j-m-1)}_{43} \right \| \\
& \le (j+2)^{r+p} (m+2) \frac{C}{|z| R^2} \, \sup_{\mathcal{I}'} \left\{
(l_1+n_1)! \left( \frac{7}{\eps} \right)^{l_1+n_1+1} \left[
\prod_{i=2}^{m+2} \Omega_{l_i,n_i} \right] \prod_{i=m+3}^{j+2}
\Theta_{l_i,n_i} \right\} \\
& \le (j+2)^{r+p} \frac{(m+2)C}{|z| R^2} (2 \| (1+b^2) \hat{A}(b)
\|_{l^1})^{j+1} \left( \frac{7}{\eps} \right)^{j+2} \sup_{\mathcal{I}'}
\left \{ \left( \frac{7}{\eps} \right)^{\sum_{i=1}^{j+2} (l_i+n_i)}
\prod_{i=1}^{j+2} \, (l_i+n_i)! \right \} \\
& \text{(since $\textstyle \sum_{i=1}^{j+2} l_i = r$, $\textstyle
\sum_{i=1}^{j+2} n_i = p$ and $\textstyle \prod_{i=1}^{j+2} \,
(l_i+n_i)! < (r+p)!$)} \\ 
& \leq C (r+p)! \left( \frac{7}{\eps} \right)^{r+p+1} (m+2) (j+2)^{r+p}
\left( \frac{14}{\eps} \| (1+b^2) \hat{A}(b) \|_{l^1} \right)^{j+1}
\frac{1}{|z| R^2} \\ 
& \leq \frac{C_{\eps,r,p}}{|z| R^2} (m+2) (j+2)^{r+p} \left( \frac{4}{9}
\right)^{j+1},
\end{align*}
since $\| (1+b^2) \hat{A}(b) \|_{l^1} < 2\eps/63$. This establishes a
bound for \eqref{above3}.

{\tt Step 4.} We now apply the last inequality for deriving an estimate
for the derivatives of $\mathcal{R}_3$ and complete the proof of the
lemma for $j=3$. Recall from \eqref{itera} that 
$$ X^{(j)}_{33} = \sum_{m=0}^{j-1} T_{33}^m T_{34} W^{(j-m-1)}_{43}. $$
Then,
\begin{align*}
& \left \| \frac{\partial^{r+p}}{\partial k_1^r \partial k_2^p}
\pi_{G'_1} \Delta_k^{-1} X^{(j)}_{33} \right \| \leq \sum_{m=0}^{j-1}
\left \| \frac{\partial^{r+p}}{\partial k_1^r \partial k_2^p}
\Delta_k^{-1} \pi_{G'_1} T_{33}^m T_{34} W^{(j-m-1)}_{43} \right \| \\
& \qquad \leq \sum_{m=0}^{j-1} \frac{C_{\eps,r,p}}{|z| R^2} (m+2)
(j+2)^{r+p} \left( \frac{4}{9} \right)^{j+1} \leq
\frac{C_{\eps,r,p}}{|z|R^2} (j+2)^{r+p} \left( \frac{4}{9} \right)^{j+1}
\sum_{m=0}^{j-1} (m+2) \\
& \qquad = \frac{C_{\eps,r,p}}{|z| R^2} (j+2)^{r+p} \left( \frac{4}{9}
\right)^{j+1} \frac{1}{2}(j^2 + 3j).
\end{align*}
Thus, since $G'_1 \subset G'_3$,
\begin{align*}
& \left \| \pi_{G'_1} \frac{\partial^{r+p}}{\partial k_1^r \partial
k_2^p} \left[   \Delta_k^{-1}\sum_{j=1}^\infty X^{(j)}_{33} \right]
\pi_{G'_1} \right \| \leq \sum_{j=1}^\infty \left \|
\frac{\partial^{r+p}}{\partial k_1^r \partial k_2^p} \pi_{G'_1}
\Delta_k^{-1} X^{(j)}_{33} \right \| \\
& \qquad \qquad \qquad \leq \frac{C_{\eps,r,p}}{|z| R^2}
\sum_{j=1}^\infty (j+2)^{r+p}  \left( \frac{4}{9} \right)^{j+1}
\frac{1}{2}(j^2 + 3j) \leq C C_{\eps,r,p} \frac{1}{|z| R^2},
\end{align*}
where $C$ is an universal constant. Therefore,
\begin{align*}
\left| \frac{\partial^{r+p}}{\partial k_1^r \partial k_2^p}
\mathcal{R}_3(k) \right| & = \left| \mathcal{F}_{ \{d'\} G'_1}
\frac{\partial^{r+p}}{\partial k_1^r \partial k_2^p} \left[
\Delta_k^{-1} \sum_{j=1}^\infty X^{(j)}_{33} \right] \mathcal{G}_{G'_1
\{d'\}} \right| \\
& \leq \| \mathcal{F}_{ \{d'\} G'_1} \| \left \| \pi_{G'_1}
\frac{\partial^{r+p}}{\partial k_1^r \partial k_2^p} \left[
\Delta_k^{-1}\sum_{j=1}^\infty X^{(j)}_{33} \right] \pi_{G'_1} \right \|
\| \mathcal{G}_{G'_1 \{d'\}} \| \\
& \leq C C_{\eps,r,p} \| f \|_{l^1} \| g \|_{l^1} \frac{1}{|z| R^2}.
\end{align*}
Finally, combining all the estimates we have
\begin{align*}
\left \| \frac{\partial^{n+m}}{\partial k_1^n \partial k_2^m}
\alpha^{(3)}_{\mu,d'}(k) \right \| & \leq \sum_{j=1}^4 \left \|
\frac{\partial^{n+m}}{\partial k_1^n \partial k_2^m} \mathcal{R}_j(k)
\right \| \\
& \leq 3 \, \frac{C}{|z|R^2} + \frac{C}{|z|_R^3} 
\leq \frac{4C}{|z| R^2},
\end{align*}
where $C = C_{\eps,\Lambda,A,q,f,g,m,n}$ is a constant. Set
$\rho_{\eps,A,q,m,n} \coloneqq \max \{ \rho_{1;\eps,m,n}, \;
\rho_{2;\eps,A,q,m,n}\}$. The proof of the lemma for $j=3$ is complete.
\end{proof}


\end{document}